\newtheorem{theorem}{Theorem}[section]
\newtheorem{lemma}[theorem]{Lemma}
\newtheorem{corollary}[theorem]{Corollary}
\newtheorem{claim}{Claim}
\newtheorem{definition}{Definition}
\tikzset{every loop/.style={}}
\tikzstyle{vertex} = [draw, circle, fill, inner sep=1pt]
\tikzstyle{bndvertex} = [draw, rectangle, minimum width=2pt, minimum height=2pt]
\tikzstyle{essvertex} = [bndvertex, fill=red]
\tikzstyle{essedge} = [line width=1pt, red]
\newcommand{\Inline}[1]{#1\xspace}
\newcommand{\ProblemFormat}[1]{\textsc{#1}}
\newcommand{\ProblemName}[1]{\Inline{\ProblemFormat{#1}}}
\newcommand{\probFVS}{\ProblemName{Feedback Vertex Set}}
\newcommand{\probConnectedFVS}{\ProblemName{Connected Feedback Vertex Set}}
\newcommand{\probIndependentFVS}{\ProblemName{Independent Feedback Vertex Set}}
\newcommand{\probTreeDeletion}{\ProblemName{Tree Deletion Set}}
\newcommand{\quo}[2]{\sfrac{#1}{#2}}
\newcommand{\msotwo}{\Inline{\ensuremath{\mathsf{CMSO}_2}}}
\newcommand{\fvs}[1]{\mathrm{fvs}(#1)}
\newcommand{\lemmaA}{\Inline{Contraction Lemma}}
\newcommand{\lemmaB}{\Inline{Downgrade Lemma}}
\newcommand{\floor}[1]{\lfloor #1 \rfloor}
\newcommand{\mpara}[1]{\bigskip\noindent\underline{#1}\hspace{0.2cm}}
\newcommand{\cal}[1]{\mathcal{#1}}
\newcommand{\Oh}[2][]{{\mathcal{O}}_{#1}(#2)}
\newcommand{\A}{\mathbb{A}}
\newcommand{\B}{\mathbb{B}}
\newcommand{\C}{\mathbb{C}}
\newcommand{\D}{\mathbb{D}}
\newcommand{\F}{\mathbb{F}}
\newcommand{\N}{\mathbb{N}}
\newcommand{\X}{\mathbb{X}}
\newcommand{\Y}{\mathbb{Y}}
\newcommand{\Yf}{\Y}
\newcommand{\Cc}{\mathcal{C}}
\newcommand{\Fc}{\mathcal{F}}
\newcommand{\Ff}{\Fc}
\newcommand{\Pc}{\mathcal{P}}
\newcommand{\Tc}{\mathcal{T}}
\newcommand{\Path}{\rightsquigarrow}
\newcommand{\wt}[1]{\widetilde{#1}}
\newcommand{\wh}[1]{\widehat{#1}}
\newcommand{\Contract}{\mathsf{Contract}}
\newcommand{\Downgrade}{\mathsf{Downgrade}}
\def\cqedsymbol{\ifmmode$\lrcorner$\else{\unskip\nobreak\hfil
\penalty50\hskip1em\null\nobreak\hfil$\lrcorner$
\parfillskip=0pt\finalhyphendemerits=0\endgraf}\fi} 
\newcommand{\cqed}{\renewcommand{\qed}{\cqedsymbol}}
\newcommand{\bnd}{\partial}
\newcommand{\Types}{\mathsf{Types}}
\newcommand{\tp}{\mathsf{tp}}
\newcommand{\forget}{\mathsf{forget}}
\renewcommand{\leq}{\leqslant}
\renewcommand{\geq}{\geqslant}
\renewcommand{\le}{\leqslant}
\begin{document}

\title{Maintaining \msotwo Properties on Dynamic Structures With Bounded Feedback Vertex Number}

\thanks{This work is a part of project BOBR that has received funding from the European Research Council (ERC) under the European Union's Horizon 2020 research and innovation programme (grant agreement No. 948057).

An~extended abstract of this article has appeared at the 40th International Symposium on Theoretical Aspects of Computer Science (STACS 2023).}

\author{Konrad Majewski}
\email{k.majewski@mimuw.edu.pl}
\orcid{0000-0002-3922-7953}
\author{Michał Pilipczuk}
\email{michal.pilipczuk@mimuw.edu.pl}
\orcid{0000-0001-7891-1988}
\affiliation{%
	\institution{Institute of Informatics, University of Warsaw}
	\streetaddress{Banacha 2}
	\postcode{02-097}
	\city{Warsaw}
	\country{Poland}
}

\author{Marek Sokołowski}
\email{msokolow@mpi-inf.mpg.de}
\email{marek.sokolowski@mimuw.edu.pl}
\orcid{0000-0001-8309-0141}
\affiliation{%
	\institution{Max Planck Institute for Informatics, Saarland Informatics Campus}
	\streetaddress{Campus, Stuhlsatzenhausweg E1 4}
	\postcode{66123}
	\city{Saarbr\"{u}cken}
	\country{Germany}
}
\authornote{This work was created during this author's work at the Institute of Informatics at the University of Warsaw.}

\begin{CCSXML}
	<ccs2012>
		<concept>
			<concept_id>10003752.10003809.10003635.10010038</concept_id>
			<concept_desc>Theory of computation~Dynamic graph algorithms</concept_desc>
			<concept_significance>500</concept_significance>
		</concept>
		<concept>
			<concept_id>10003752.10003809.10010052.10010053</concept_id>
			<concept_desc>Theory of computation~Fixed parameter tractability</concept_desc>
			<concept_significance>500</concept_significance>
		</concept>
		<concept>
			<concept_id>10003752.10003790</concept_id>
			<concept_desc>Theory of computation~Logic</concept_desc>
			<concept_significance>500</concept_significance>
		</concept>
		<concept>
			<concept_id>10003752.10003809.10010031</concept_id>
			<concept_desc>Theory of computation~Data structures design and analysis</concept_desc>
			<concept_significance>500</concept_significance>
		</concept>
	</ccs2012>
\end{CCSXML}
	
\ccsdesc[500]{Theory of computation~Dynamic graph algorithms}
\ccsdesc[500]{Theory of computation~Fixed parameter tractability}
\ccsdesc[500]{Theory of computation~Logic}
\ccsdesc[500]{Theory of computation~Data structures design and analysis}

\keywords{model checking, feedback vertex number, monadic second-order logic, dynamic forests}

\begin{abstract}
Let $\varphi$ be a sentence of $\msotwo$ (monadic second-order logic with quantification over edge subsets and counting modular predicates) over the signature of graphs. We present a dynamic data structure that for a given graph $G$ that is updated by edge insertions and edge deletions, maintains whether $\varphi$ is satisfied in $G$.
The data structure is required to correctly report the outcome only when the feedback vertex number of $G$ does not exceed a fixed constant $k$, otherwise it reports that the feedback vertex number is too large. With this assumption, we guarantee amortized update time $\Oh[\varphi,k]{\log n}$.
If we additionally assume that the feedback vertex number of $G$ never exceeds $k$, this update time guarantee is worst-case.

By combining this result with a classic theorem of Erd\H{o}s and P\'osa, we give a fully dynamic data structure that maintains whether a graph contains a packing of $k$ vertex-disjoint cycles with amortized update time $\Oh[k]{\log n}$. Our data structure also works in a larger generality of relational structures over binary signatures. 

\end{abstract}

\maketitle
\begin{textblock}{20}(-1.6, 4.5)
	\includegraphics[width=40px]{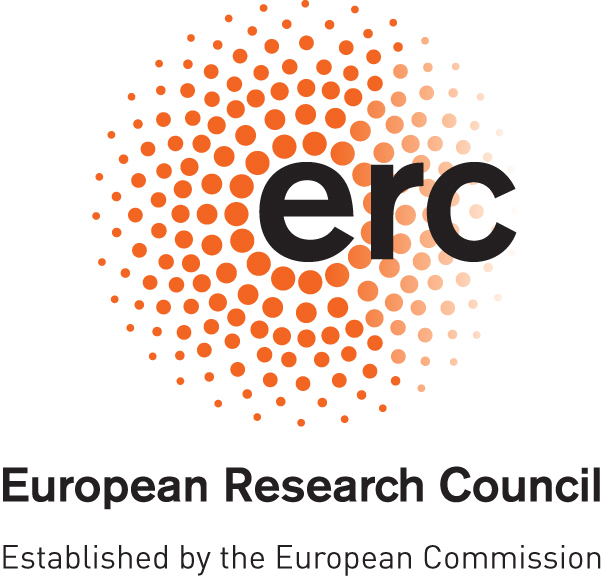}%
\end{textblock}
\begin{textblock}{20}(-2.05, 4.9)
	\includegraphics[width=60px]{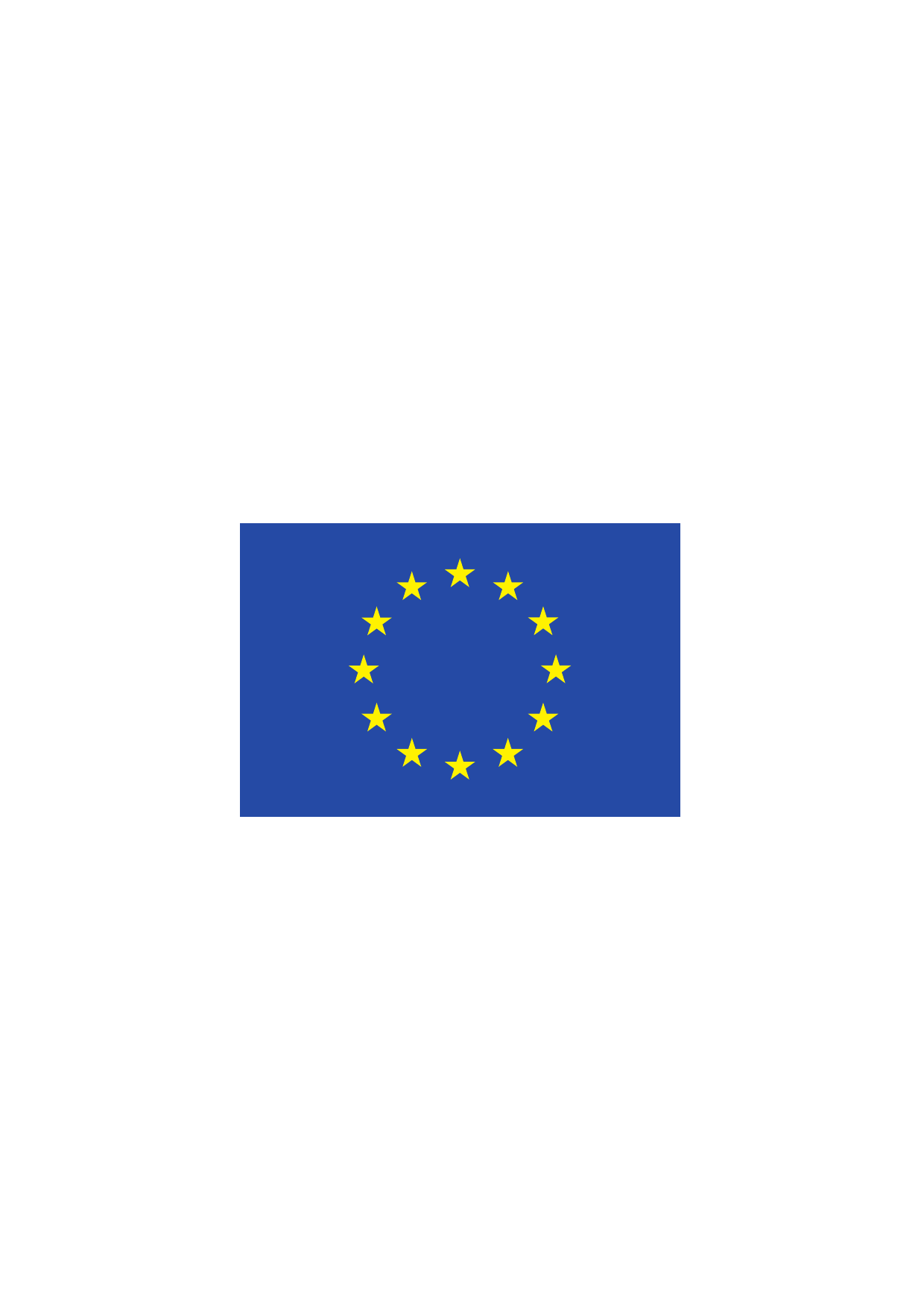}%
\end{textblock}

\section{Introduction}%
\label{sec:introduction}

We consider data structures for graphs in a fully dynamic model, where the considered graph can be updated by the following operations: add an edge, remove an edge, add an isolated vertex, and remove an isolated vertex. Most of the contemporary work on data structures for graphs focuses on problems that in the static setting are polynomial-time solvable, such as connectivity or distance computation. In this work we follow a somewhat different direction and consider {\em{parameterized problems}}. That is, we consider problems that are $\mathsf{NP}$-hard in the classic sense, even in the static setting, and we would like to design efficient dynamic data structures for them. The update time guarantees will typically depend on the size of the graph $n$ and a parameter of interest $k$, and the goal is obtain as good dependence on $n$ as possible while allowing exponential (or worse) dependence on $k$. The idea behind this approach is that the data structure will perform efficiently on instances where the parameter $k$ is small, which is exactly the principle assumed in the field of parameterized complexity.

The systematic investigation of such {\em{parameterized dynamic data structures}} was initiated by Alman et al.~\cite{AlmanMW20}, though a few earlier results of this kind can be found in the literature, e.g.~\cite{DvorakT13,DvorakKT14,IwataO14}. Alman et al. revisited several techniques in parameterized complexity and developed their dynamic counterparts, thus giving suitable parameterized dynamic data structures for a number of classic problems, including {\sc{Vertex Cover}}, {\sc{Hitting Set}}, {\sc{$k$-Path}}, and {\sc{Feedback Vertex Set}}. The last example is important for our motivation. Recall that a {\em{feedback vertex set}} in an (undirected) graph $G$ is a subset of vertices that intersects every cycle in $G$, and the {\em{feedback vertex number}} of $G$ is the smallest size of a feedback vertex set in $G$. The data structure of Alman et al. monitors whether the feedback vertex number of a dynamic graph $G$ is at most $k$ (and reports a suitable witness, if so) with amortized update time $2^{\Oh{k\log k}}\cdot \log n$.

Dvo\v{r}\'ak et al.~\cite{DvorakKT14} and, more recently, Chen et al.~\cite{ChenCDFHNPPSWZ21} studied parameterized dynamic data structures for another graph parameter {\em{treedepth}}. Formally, the treedepth of a graph $G$ is the least possible height of an {\em{elimination forest}} $G$: a rooted forest on the vertex set of $G$ such that every edge of $G$ connects a vertex with its ancestor. Intuitively, that a graph $G$ has treedepth $d$ means that $G$ has a tree decomposition whose {\em{height}} is $d$, rather than width. Chen et al.~\cite{ChenCDFHNPPSWZ21} proved that in a dynamic graph of treedepth at most $d$, an optimum-height elimination forest can be maintained with update time $2^{\Oh{d^2}}$ (worst case, under the promise that the treedepth never exceeds $d$). This improved upon the earlier result of Dvo\v{r}\'ak et al.~\cite{DvorakKT14}, who for the same problem achieved update time $f(d)$ for a non-elementary function $f$.

As already observed by Dvo\v{r}\'ak et al.~\cite{DvorakKT14}, such a data structure can be used not only to the concrete problem of computing the treedepth, but more generally to maintaining satisfiability of any property that can be expressed in the {\em{Monadic Second-Order logic}} $\mathsf{MSO}_2$. This logic extends standard First-Order logic $\mathsf{FO}$ by allowing quantification over subsets of vertices and subsets of edges, so it is able to express through constant-size sentences $\mathsf{NP}$-hard problems such at Hamiltonicity or $3$-colorability. More precisely, the following result was proved by Dvo\v{r}\'ak et al.~\cite{DvorakKT14} (see Chen et al.~\cite{ChenCDFHNPPSWZ21} for lifting the promise of boundedness of treedepth).

\begin{theorem}[\cite{DvorakKT14,ChenCDFHNPPSWZ21}]\label{thm:td-mso}
 Given an $\mathsf{MSO}_2$ sentence $\varphi$ over the signature of graphs and $d\in \N$, one can construct a dynamic data structure that maintains whether a given dynamic graph $G$ satisfies $\varphi$. The data structure is obliged to report a correct answer only when the treedepth of $G$ does not exceed $d$, and otherwise it reports {\em{Treedepth too large}}. The updates work in amortized time $f(\varphi,d)$ for a computable function $f$, under the assumption that one is given access to a dictionary on the edges of $G$ with constant-time operations.
\end{theorem}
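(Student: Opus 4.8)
The plan is to maintain, beneath the Boolean answer, a dynamic elimination forest $F$ of $G$ together with a Courcelle-style annotation of each node by a bounded-size \emph{type}, so that ``$G\models\varphi$'' can be read off by combining the types stored at the roots of $F$. For the forest itself I would use the data structure of Chen et al.~\cite{ChenCDFHNPPSWZ21}: under the promise that the treedepth of $G$ never exceeds $d$ (and it is this data structure that flags \emph{Treedepth too large} once the promise is about to fail), it maintains an elimination forest $F$ of minimum height, and --- this is the property I rely on --- processes each edge update by performing only $f_0(d)$ elementary edits to $F$ (pointer changes), each one detaching a subtree and reattaching it elsewhere, and each coming with an explicit description of how the ancestor path of the affected subtree is changed. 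Since $F$ has height at most $d$, every node has at most $d$ ancestors and they form a path; this path is the only interface through which the subtree hanging below a node sees the rest of $G$.

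Next I would fix the type algebra behind Courcelle's theorem, all of whose data is computable from $\varphi$ and $d$. Writing $q$ for the quantifier rank of $\varphi$, define $\tp(v)$, for a node $v$ of $F$ with $\ell\le d$ ancestors, to be the $\mathsf{MSO}_2$ $q$-type of the graph induced by $v$ and its descendants, expanded by $\ell$ unary predicates, the $i$-th of which marks the vertices of that subgraph adjacent to the $i$-th ancestor of $v$; free edge-set variables are carried along exactly as in the static bounded-treewidth algorithm, which is the only place the $\mathsf{MSO}_2$ --- rather than $\mathsf{MSO}_1$ --- machinery is needed. There are only $g(\varphi,d)$ possible types, and one checks, via Feferman--Vaught-style composition, that $\tp(v)$ is a fixed function of the set of ancestor-slots to which $v$ itself is adjacent together with the multiset of the types of the children of $v$ (each read with one extra predicate, for $v$, which is then projected away). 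As this composition is associative and commutative, the multiset of children's types may be replaced losslessly by the vector of its multiplicities, each coordinate capped at a threshold depending only on $q$ --- the standard pumping argument --- which yields an $\Oh[\varphi,d]{1}$-size summary at $v$ that can be updated in constant time whenever one child changes its type.

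The data structure then stores $\tp(v)$ and this capped summary at every node $v$. An edge update is handled by first invoking Chen et al.'s update, obtaining the $f_0(d)$ structural edits; then, for each edit, I would recompute --- bottom-up --- the types of the nodes whose type could have changed (because their ancestor path, their adjacency to it, or a child's type changed), at each such node refreshing the summary stored at its parent and continuing upward along a root-path of length at most $d$. Each individual recomputation is $\Oh[\varphi,d]{1}$, and the underlying edge dictionary is used exactly where Chen et al.'s structure uses it, to locate edges and test adjacencies in constant time. Were it not for one subtlety, this would already give $\Oh[\varphi,d]{1}$ touched nodes per update, and the answer would be obtained by combining the root types.

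The subtlety, which I expect to be the main obstacle, is that an elementary edit may reattach a \emph{large} subtree $S$ underneath a different ancestor path, whereupon every node of $S$ has, in principle, a changed type, because its slot predicates are re-indexed. The point that rescues the bound is that this re-indexing is a single fixed operation on the finite type set: since only one edge of $G$ changed, any neighbour outside $S$ of a vertex of $S$ must remain an ancestor of it, so the new ancestor path of $S$ differs from the old one only by a monotone remapping of the slots that $S$ actually touches, plus some vacuous new slots --- hence the new type of each node of $S$ equals one fixed ``boundary-remapping'' map applied to its old type. I would therefore propagate these remappings \emph{lazily}, storing a pending remapping on the reattached subtree and pushing it down (composed with any remapping already pending there, which is again a constant-time operation since these maps form a finite monoid) only when the structure later descends into $S$; since depths are bounded by $d$ and only $\Oh[\varphi,d]{1}$ nodes are visited per update, the amortized number of types materially recomputed per update stays $\Oh[\varphi,d]{1}$. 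Together with the amortized guarantees of the elimination-forest structure this gives amortized update time $f(\varphi,d)$; alternatively, one can absorb the cost directly into the amortized analysis already carried out for $F$ in~\cite{DvorakKT14,ChenCDFHNPPSWZ21}.
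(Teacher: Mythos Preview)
The paper does not prove this theorem: it is stated as a cited result from \cite{DvorakKT14,ChenCDFHNPPSWZ21}, and the only discussion the paper offers is a two-sentence sketch immediately following the statement, to the effect that one maintains an optimum-height elimination forest together with a run of a bottom-up $\mathsf{MSO}_2$ dynamic program on it. Your proposal is exactly an elaboration of that sketch --- elimination forest from \cite{ChenCDFHNPPSWZ21}, nodes annotated by bounded-size types via Feferman--Vaught composition, updates by recomputing along ancestor paths --- so at the level of comparison the paper permits, your approach coincides with what is described.

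One remark on the content of your sketch: the ``subtlety'' you flag, and the lazy-remapping fix you propose for it, are plausible but are really assertions about the internals of the data structure of \cite{ChenCDFHNPPSWZ21}, not something you can derive here. In particular, the claim that each update causes only $f_0(d)$ elementary subtree moves, each with a controlled change to the ancestor path, is a property you are importing from that reference; whether the remapping monoid trick is needed, or whether those papers already arrange the type maintenance so that no large-subtree relabeling arises, is a matter for their proofs rather than this one. Since the present paper treats the theorem as a black box, this is not a gap in your comparison, but you should be aware that your sketch commits to implementation details that go beyond what the paper itself asserts.
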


The proof of Theorem~\ref{thm:td-mso} is based on the following idea. If a graph $G$ is supplied with an elimination forest of bounded depth, then, by the finite-state properties of $\mathsf{MSO}_2$, whether $\varphi$ is satisfied in $G$ can be decided using a suitable bottom-up dynamic programming algorithm. Then it is shown that when $G$ is updated by edge insertions and removals, one is able to maintain not only an optimum-height elimination forest $F$ of $G$, but also a run of this dynamic programming algorithm on $F$. This blueprint brings the classic work on algorithmic meta-theorems in parameterized complexity to the setting of dynamic data structures, by showing that dynamic maintenance of a suitable decomposition is a first step to maintaining all properties that can be efficiently computed using this decomposition.

Notably, Chen et al.~\cite{ChenCDFHNPPSWZ21} apply this principle to two specific problems of interest: detection of $k$-paths and $(\geq k)$-cycles in undirected graphs. Using known connections between these objects and treedepth, they gave dynamic data structures for the detection problems that have update time $2^{\Oh{k^2}}$ for $k$-paths (assuming a dictionary on edges) and $2^{\Oh{k^4}}\cdot \log n$ for $(\geq k)$-cycles.

One of the main questions left open by the work of Dvo\v{r}\'ak et al.~\cite{DvorakKT14} and by Chen et al.~\cite{ChenCDFHNPPSWZ21} was whether in a dynamic graph of treewidth at most $k$ it is possible to maintain a tree decomposition of width at most $g(k)$ with polylogarithmic update time. Note here that the setting of tree decompositions is the natural context in which $\mathsf{MSO}_2$ on graphs is considered, due to Courcelle's Theorem~\cite{Courcelle90}, while the treedepth of a graph is always an upper bound on its treewidth. Thus, the works of Dvo\v{r}\'ak et al.~\cite{DvorakKT14} and of Chen et al.~\cite{ChenCDFHNPPSWZ21} can be regarded as partial progress towards resolving this question, where a weaker (larger) parameter treedepth is considered.

\paragraph*{Our contribution.} We approach the question presented above from another direction, by considering feedback vertex number --- another parameter that upper-bounds the treewidth. As mentioned, Alman et al.~\cite{AlmanMW20} have shown that there is a dynamic data structure that monitors whether the feedback vertex number is at most $k$ with update time $2^{\Oh{k\log k}}\cdot \log n$. We extend this result by showing that in fact, every $\msotwo$-expressible property can be efficiently maintained in graphs of bounded feedback vertex number. Here is our main result.

\begin{theorem}\label{thm:main-graph}
 Given a sentence $\varphi$ of $\msotwo$ over the signature of graphs and $k\in \N$, one can construct a data structure that maintains whether a given dynamic graph $G$ satisfies $\varphi$. The data structure is obliged to report a correct answer only if the feedback vertex number of $G$ is at most $k$, otherwise it reports {\em{Feedback vertex number too large}}. The graph is initially empty and the amortized update time is $f(\varphi,k)\cdot \log n$, for some computable function $f$.
\end{theorem}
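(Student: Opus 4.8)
The plan is to adapt the ``maintain a decomposition, then run finite-state dynamic programming on it'' strategy behind Theorem~\ref{thm:td-mso}, replacing the elimination forest by the structure witnessed by a bounded feedback vertex set: a spanning forest together with $\Oh{1}$ apices. The first ingredient is a sub-data-structure that maintains a feedback vertex set $X$ of $G$ with $|X| \le k$ (and reports that the feedback vertex number is too large otherwise); this is provided by Alman et al.~\cite{AlmanMW20} with amortized update time $2^{\Oh{k\log k}}\cdot\log n$, comfortably within budget, although --- as discussed below --- the way $X$ changes over time will be the crux. Given $X$, write $F \coloneqq G - X$, which is a forest, and observe that $G$ is completely described by $F$, the labelling $\lambda\colon V(F)\to 2^X$ with $\lambda(v) = N_G(v)\cap X$, and the graph $G[X]$ on at most $k$ vertices. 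The task thus reduces to maintaining an $\msotwo$-relevant digest of the \emph{decorated forest} $(F,\lambda)$ under edge insertions, edge deletions and relabellings, together with the constant-size side information $G[X]$ (for which one fixes $k$ ``slots'', possibly inactive, so that the part of every interface coming from $X$ has size exactly $k$).

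The static heart is a Feferman--Vaught / Courcelle-style composition. Fix the quantifier rank $q$ of $\varphi$. For a subtree $T_v$ of $F$ rooted at $v$, define its \emph{profile} as the rank-$q$ $\msotwo$-type of the structure $G[V(T_v)\cup X]$ with the edges inside $X$ removed, relative to the distinguished tuple formed by $v$ and the $k$ slots of $X$. Since the interface and the rank are bounded, there are only $f(\varphi,k)$ possible profiles; quantifier elimination over disjoint unions and single-vertex extensions shows the profile of $v$ to be a computable function of the multiset of children profiles, and --- since $\msotwo$ counting (modular and threshold) predicates are finite-state --- only the number of children of each profile, taken up to a fixed threshold and modulus, is needed, so this is a finite computation. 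Crucially the profile never refers to $E(G[X])$: the answer ``$G\models\varphi$'' is obtained by composing the profile of the root (or the disjoint union of the root profiles, if $F$ is disconnected) with the bounded-size graph $G[X]$ via a finite table. Consequently an edge update \emph{inside} $X$ costs only $\Oh[\varphi,k]{1}$, and the same composition scheme runs over any balanced cluster decomposition: maintain $F$ with a top-tree decomposition of height $\Oh{\log n}$ in which every cluster has an $\Oh{1}$-size boundary, and annotate every cluster with the type of (its vertices together with $X$, edges inside $X$ deleted) relative to (its boundary together with $X$); these annotations compose along the cluster tree exactly as profiles do along $F$.

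With this in place the easy updates are routine. An edge insertion or deletion within $F$ is a top-tree \texttt{link}/\texttt{cut}, affecting $\Oh{\log n}$ clusters, each re-annotation costing $\Oh[\varphi,k]{1}$; an edge insertion or deletion between $F$ and $X$ changes $\lambda(v)$ for a single $v$ and triggers re-annotation of the $\Oh{\log n}$ clusters on the root-path through $v$; an edge update inside $X$, or the insertion/deletion of an isolated vertex, is trivial. In each case the maintained bit ``$G\models\varphi$'' is refreshed in $\Oh[\varphi,k]{1}$ from the top annotation and $G[X]$. To keep the non-forest edges of $G$ (those between $F$ and $X$, and those inside $X$) from interfering with the decomposition, it is convenient to insist that the spanning forest underlying the top trees always contains all of $F = G-X$; then every edge of $G$ outside that forest is incident to $X$, so each vertex outside $X$ carries at most $k$ such edges.

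The main obstacle --- and, I expect, the bulk of the technical work --- is that $X$ itself changes over time, and a change of $X$ is not a local edit of the decorated forest: moving a vertex $u$ from $F$ into $X$ deletes $u$ from the spanning forest and re-encodes all of its (now non-forest) incident edges as relabellings, while moving a vertex out of $X$ does the reverse, potentially turning many former non-forest edges into forest edges. Naively this costs $\deg(u)\cdot\Oh{\log n}$, which is unaffordable, and an adversary will try to force it repeatedly. Taming it requires (i) maintaining $X$ so that it changes by only $\Oh[k]{1}$ vertices per update and, more importantly, so that these changes are ``forest-compatible'', i.e.\ do not trigger large-scale restructuring of the spanning forest --- which seems to need both a contraction step that shrinks the parts of $F$ irrelevant to the feedback-vertex structure (so that structural recomputation acts on a small core, with the suppressed pieces stored as digests in the top trees) and a careful strategy for which vertex to add to or remove from $X$; and (ii) an amortized analysis, via a potential on the current decomposition, charging the cost of re-encoding edges and of repairing --- rather than rebuilding --- the decomposition. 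Handling the regime where the feedback vertex number temporarily exceeds $k$ and later drops back, so that one must re-establish a size-$\le k$ witness on the fly without ever letting the interface blow up, is a further subtlety in the same vein. Once these components --- which it is natural to isolate as separate lemmas --- are in place, the theorem follows by assembling the forest maintenance, the profile composition, and the constant-size handling of $G[X]$.
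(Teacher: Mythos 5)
There is a genuine gap, and you have in fact pointed at it yourself: everything hinges on maintaining the apex set $X$, and your proposal defers exactly that step. The static part of your plan (top trees over $F=G-X$ whose clusters are annotated with bounded-rank $\msotwo$-types relative to the cluster boundary plus $k$ apex slots, with adjacency to $X$ encoded in vertex labels and $G[X]$ composed in at the end via a finite table) is sound and closely parallels machinery in the paper. But the dynamic step is not solved by anything you cite or sketch. The structure of Alman et al.\ monitors whether $\fvs{G}\le k$, yet the witness it implicitly carries can change wholesale between updates (it lives in a branching tree of sub-structures that are periodically reset), so ``maintain $X$ so that it changes by $\Oh[k]{1}$ vertices per update and in a forest-compatible way'' is not something that data structure provides, and it is unclear it is achievable at all: even a single change of $X$ at a vertex $u$ forces $\deg(u)$ re-encodings, an adversary can keep the identity of any small FVS unstable, and no potential function is exhibited that pays for this. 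Since this is precisely the obstacle you flag as ``the bulk of the technical work,'' the proposal is a plan plus an unresolved core rather than a proof.

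It is worth noting that the paper resolves this obstacle by abandoning the idea of maintaining a true feedback vertex set altogether. Instead it alternates two operations $k$ times: contracting \emph{ferns} (the pieces produced by exhaustively dissolving degree-$\le 2$ vertices), storing their $\msotwo$-types of sufficiently high rank on the contracted edges/vertices (Contraction Lemma, justified by the Replacement Lemma), and then deleting not a cleverly chosen FVS vertex but the set $B$ of the $12k$ highest-degree vertices of the contracted multigraph, whose adjacencies are recorded in colors and flags (Downgrade Lemma). A sparsity argument guarantees $B$ meets every size-$\le k$ feedback vertex set, so each level strictly decreases the feedback vertex number, and after $k$ levels the remaining structure is trivial. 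The stability problem you face never arises because $B$ is not updated locally at all: the slack from taking $12k$ rather than $3k$ vertices guarantees $B$ remains valid for $\Theta(m/k)$ further updates, so $B$ and all levels below it are simply rebuilt from scratch every $\Theta(m/k)$ updates at cost $\Oh{m}$, which amortizes to $\Oh[\varphi,k]{\log n}$ per update (this also forces the batch-update interface and the recursion over $k$, since deleting $B$ leaves a graph of smaller feedback vertex number, not a forest). Without either this rebuild-and-amortize mechanism or a genuinely new argument for locally stable, forest-compatible maintenance of an explicit FVS, your outline does not yet yield the theorem.
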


Here, $\msotwo$ is an extension of $\mathsf{MSO}_2$ by modular counting predicates; this extends the generality slightly.

Similarly as noted by Chen et al.~\cite{ChenCDFHNPPSWZ21}, the appearance of the $\log n$ factor in the update time seems necessary: a data structure like the one in Theorem~\ref{thm:main-graph} could be easily used for connectivity queries in dynamic forests, for which there is an $\Omega(\log n)$ lower bound in the cell-probe model~\cite{patrascu_lower_2004}.

With an~additional promise of boundedness of feedback vertex number of the maintained graph, we actually prove the worst-case complexity bounds:
\begin{theorem}\label{thm:main-graph-promise}
 Given a sentence $\varphi$ of $\msotwo$ over the signature of graphs and $k\in \N$, one can construct a data structure that maintains whether a given dynamic graph $G$ satisfies $\varphi$.
 Assuming that at all times, $G$ has feedback vertex number at most $k$, the worst-case update time is $f(\varphi,k)\cdot \log n$, for some computable function $f$.
\end{theorem}

We prove Theorems~\ref{thm:main-graph} and~\ref{thm:main-graph-promise} in a larger generality of relational structures over binary signatures, see Theorem~\ref{thm:main_theorem} for a formal statement. More precisely, we consider relational structures over signatures consisting of relation symbols of arity at most $2$ that can be updated by adding and removing tuples from the relations, and by adding and removing isolated elements of the universe. In this language, graphs correspond to structures over a signature consisting of one binary relation signifying adjacency. As feedback vertex number we consider the feedback vertex number of the Gaifman graph of the structure. Generalization to relational structures is not just a mere extension of Theorem~\ref{thm:main-graph}, it is actually a formulation that appears naturally in the inductive strategy that is employed in the proof.

As for this proof, we heavily rely on the approach used by Alman et al.~\cite{AlmanMW20} for monitoring the feedback vertex number. This approach is based on applying two types of simplifying operations, in alternation and a bounded number of times:
\begin{itemize}[nosep]
 \item contraction of subtrees in the graph; and
 \item removal of high-degree vertices.
\end{itemize}
We prove that in both cases, while performing the simplification it is possible to remember a bounded piece of information about each of the simplified parts, thus effectively enriching the whole data structure with information from which the satisfaction of $\varphi$ can be inferred. Notably, for the contracted subtrees, this piece of information is the $\msotwo$-type of appropriately high rank. To maintain these types in the dynamic setting, we use the top trees data structure of Alstrup et al.~\cite{AlstrupHLT05}.
All in all, while our data structure is based on the same combinatorics of the feedback vertex number, it is by no means a straightforward lift of the work of Alman et al.~\cite{AlmanMW20}: enriching the data structure with information about types requires several new ideas and insights, both on the algorithmic and on the logical side of the reasoning. A~more extensive discussion can be found in Section~\ref{sec:overview}.

\paragraph*{Applications.} Similarly as in the work of Chen et al.~\cite{ChenCDFHNPPSWZ21}, we observe that Theorem~\ref{thm:main-graph} can be used to obtain dynamic data structures for specific parameterized problems through a win/win approach. Consider the {\em{cycle packing number}} of a graph $G$: the maximum number of vertex-disjoint cycles that can be found in $G$. A classic theorem of Erd\H{o}s and P\'osa~\cite{erdosPosa} states that there exists a universal constant $c$ such that if the feedback vertex number of a graph $G$ is larger than $c\cdot p\log p$, then the cycle packing number of $G$ is at least $p$. We can use this result to establish the following.

\begin{theorem}\label{thm:EP}
 For a given $p\in \N$ one can construct a dynamic data structure that for a dynamic graph $G$ (initially empty) maintains whether the cycle packing number of $G$ is at least $p$. The amortized update time is $f(p)\cdot \log n$, for a computable function $f$. 
\end{theorem}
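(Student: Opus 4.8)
The plan is to combine Theorem~\ref{thm:main-graph} with the Erd\H{o}s--P\'osa theorem in a standard win/win fashion. First I would fix the universal constant $c$ from the Erd\H{o}s--P\'osa theorem~\cite{erdosPosa} and set $k := \ceil{c \cdot p \log p}$. The key observation is that the property ``$G$ contains $p$ vertex-disjoint cycles'' is expressible by a single sentence $\varphi_p$ of $\mathsf{MSO}_2$ (indeed of $\mathsf{FO}$ extended with set quantification): one quantifies existentially over $p$ pairwise disjoint vertex subsets $X_1, \dots, X_p$ and asserts that each $G[X_i]$ contains a cycle, which in turn is $\mathsf{MSO}_2$-expressible (e.g.\ each vertex of $X_i$ has at least two neighbours inside $X_i$, and $X_i$ is ``connected'' in the sense that it has no proper nonempty subset with no edges leaving it; or more directly, quantify over an edge subset forming a cycle). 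The size of $\varphi_p$ depends only on $p$.

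Next I would instantiate the data structure of Theorem~\ref{thm:main-graph} with this sentence $\varphi_p$ and the parameter $k$ defined above. This yields a dynamic data structure with amortized update time $f(\varphi_p, k) \cdot \log n = g(p) \cdot \log n$ for a computable function $g$, which maintains whether $\varphi_p$ holds in $G$ \emph{provided} the feedback vertex number of $G$ is at most $k$, and otherwise reports ``Feedback vertex number too large''. Now I would argue how to answer the query ``is the cycle packing number of $G$ at least $p$?''. When the data structure reports a concrete truth value for $\varphi_p$, we return it directly: $\varphi_p$ holds in $G$ if and only if $G$ has $p$ vertex-disjoint cycles. When the data structure reports ``Feedback vertex number too large'', we know that the feedback vertex number of $G$ exceeds $k = \ceil{c \cdot p \log p}$, so by the Erd\H{o}s--P\'osa theorem the cycle packing number of $G$ is at least $p$, and we answer ``yes''. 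In both cases the answer is correct, and the update and query time is $\Oh[p]{\log n}$ amortized, as required.

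There is essentially no serious obstacle here; the only points requiring a little care are (i) writing down $\varphi_p$ and checking it is a genuine $\mathsf{MSO}_2$ sentence of size bounded in terms of $p$, and (ii) confirming that Theorem~\ref{thm:main-graph} permits us to distinguish, from the reported output, the case of a decided truth value from the ``feedback vertex number too large'' case --- which it does by construction. If one wants to be slightly more careful about the query model, one can note that the current truth value of $\varphi_p$ (or the ``too large'' flag) can be stored and updated in $\Oh{1}$ additional time per update, so the cycle packing query is answered in constant time. This completes the proof of Theorem~\ref{thm:EP}.
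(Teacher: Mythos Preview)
Your proposal is correct and follows essentially the same argument as the paper: instantiate Theorem~\ref{thm:main-graph} with an $\msotwo$ sentence $\varphi_p$ expressing the existence of $p$ vertex-disjoint cycles and with $k = c\cdot p\log p$, and use the Erd\H{o}s--P\'osa theorem to interpret the ``Feedback vertex number too large'' output as a ``yes''. The paper's proof is slightly terser but identical in structure.
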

\begin{proof}
 For a given $p$, it is straightforward to write a $\msotwo$ sentence $\varphi_p$ that holds in a graph $G$ if and only if $G$ contains $p$ vertex-disjoint cycles. Then we may use the data structure of Theorem~\ref{thm:main-graph} for $\varphi_p$ and $k=c\cdot p\log p$, where $c$ is the constant given by the theorem of Erd\H{o}s and P\'osa~\cite{erdosPosa}. Note that if this data structure reports that {\em{Feedback vertex number too large}}, then the cycle packing number is at least $p$, so this outcome can be reported. 
\end{proof}

The same principle can be applied to other problems related to constrained variants of feedback vertex sets, e.g. {\sc{Connected Feedback Vertex Set}}, {\sc{Independent Feedback Vertex Set}}, and {\sc{Tree Deletion Set}}. 
We say that a feedback vertex set $S$ is
\begin{itemize}[nosep]
  \item {\em{connected}} if the induced subgraph $G[S]$ is connected;
  \item {\em{independent}} if the induced subgraph $G[S]$ is edgeless; and
  \item a {\em{tree deletion set}} if $G-S$ is connected (i.e., is a tree).
\end{itemize}
The parameterized complexity of corresponding problems \probConnectedFVS, \probIndependentFVS, and \probTreeDeletion was studied in~\cite{GiannopoulouLSS16,MisraConnected,MisraPRS12,RamanSS13,LiP20}.

Similarly as in~Theorem~\ref{thm:EP}, we can focus on the dynamic versions of the above problems, where the graph~$G$ is initially empty, and at each step we either insert an~edge or an~isolated vertex to~$G$, or an edge or an isolated vertex from~$G$.

\begin{theorem}
 For a given $p\in \N$ one can construct a dynamic data structure that for a dynamic graph $G$ (initially empty) maintains whether $G$ contains the following objects:
 \begin{itemize}[nosep]
  \item a connected feedback vertex set of size at most $p$;
  \item an independent feedback vertex set of size at most $p$; and
  \item a tree deletion set of size at most $p$.
 \end{itemize}
 The amortized update time is $f(p)\cdot \log n$, for a computable function $f$. 
\end{theorem}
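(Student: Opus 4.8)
The plan is to follow the win/win strategy used in the proof of Theorem~\ref{thm:EP}. The key observation is that a connected feedback vertex set, an independent feedback vertex set, and a tree deletion set are, in each case, in particular a feedback vertex set of $G$. Consequently, if $G$ contains any of these three objects of size at most $p$, then $\fvs{G}\le p$. This turns the boundedness of the parameter, which is exactly what the data structure of Theorem~\ref{thm:main-graph} needs, into a triviality in the only case where the property can hold.

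First, for each of the three target properties I would write an $\msotwo$ sentence $\varphi_p$ (in fact, a plain $\mathsf{MSO}_2$ sentence) expressing it. Since $p$ is a fixed constant, one existentially quantifies $p$ individual vertex variables $x_1,\dots,x_p$ and treats $S\coloneqq\{x_1,\dots,x_p\}$ as the candidate set; allowing equalities among the $x_i$ captures all vertex subsets of size at most $p$. It is standard that each of the following is $\mathsf{MSO}_2$-expressible with $S$ as a parameter: ``$G-S$ is a forest'' (there is no nonempty edge subset $F$, all of whose endpoints avoid $S$, that forms a cycle, where ``$F$ forms a cycle'' is expressed by saying that every vertex incident to an edge of $F$ is incident to exactly two edges of $F$, and the subgraph spanned by $F$ is connected); ``$G[S]$ is connected'' and ``$G-S$ is connected'' (there is no partition of $S$, respectively of $V(G)\setminus S$, into two nonempty parts with no edge of $G$ between them); and ``$G[S]$ is edgeless'' (no edge of $G$ has both endpoints in $S$). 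Then $\varphi_p$ is taken to be, respectively, ``$G-S$ is a forest and $G[S]$ is connected'', ``$G-S$ is a forest and $G[S]$ is edgeless'', and ``$G-S$ is a forest and $G-S$ is connected''. In each case $\varphi_p$ is computable from $p$ and has bounded quantifier rank.

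Next I would instantiate the data structure of Theorem~\ref{thm:main-graph} (equivalently, Theorem~\ref{thm:main_theorem}) for $\varphi_p$ with parameter $k\coloneqq p$, keeping three such instances running in parallel, one per property. After each update, if an instance reports \emph{Feedback vertex number too large}, then $\fvs{G}>p$, hence the corresponding object of size at most $p$ cannot exist, so we answer ``no''; otherwise $\fvs{G}\le p$, the data structure is obliged to be correct, and it reports exactly whether $\varphi_p$ holds in $G$, which is the sought answer. The amortized update time is $f(\varphi_p,p)\cdot\log n=f'(p)\cdot\log n$ for a computable $f'$, since $\varphi_p$ is computable from $p$ and the function $f$ of Theorem~\ref{thm:main-graph} is computable.

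The only genuine care required here is in the logical encoding: one must fix conventions for the degenerate cases (the empty graph, a single vertex, and in particular whether a tree deletion set is allowed to leave $G-S$ empty or is required to leave a nonempty tree) and adjust the sentences accordingly; none of this affects $\mathsf{MSO}_2$-expressibility or the complexity bound. Thus I do not expect any substantial obstacle: all the heavy lifting is done by Theorem~\ref{thm:main-graph}, and the reduction rests only on the elementary fact that each of the three objects of size at most $p$ is a feedback vertex set of size at most $p$, forcing $\fvs{G}\le p$ whenever the answer is ``yes''.
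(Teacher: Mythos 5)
Your proposal is correct and follows essentially the same argument as the paper: express each of the three properties as an $\msotwo$ sentence, run three instances of the data structure of Theorem~\ref{thm:main-graph} with parameter $k=p$, and note that the report \emph{Feedback vertex number too large} implies $\fvs{G}>p$, hence none of the three objects of size at most $p$ can exist, so a negative answer is safe. The extra detail you give on how to write the sentences (and on degenerate-case conventions) is fine but is treated as standard and omitted in the paper.
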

\begin{proof}
 Observe that for a given $p\in \N$, we can write $\msotwo$ sentences $\varphi_1,\varphi_2,\varphi_3$ over the signature of graphs that respectively express the properties of having a connected feedback vertex set of size at most $p$, having an independent feedback vertex set of size at most $p$, and having a tree deletion set of size at most $p$. Hence, we can use three instances of the data structure of Theorem~\ref{thm:main-graph}, applied to sentences $\varphi_1,\varphi_2,\varphi_3$, respectively, and each with parameter $p$. Note that if these data structures report that {\em{Feedback vertex number too large}}, then there is no feedback vertex set of size at most $p$, so in particular no connected or independent feedback vertex set or a tree deletion set of size at most $p$. So a negative answer to all three problems can be reported then.
\end{proof}

\paragraph*{Follow-up work: dynamic treewidth.}
After the publication of the conference version of this work~\cite{DBLP:conf/stacs/MajewskiPS23}, Korhonen et al.~\cite{DBLP:conf/focs/KorhonenMNP023}, devised an~efficient analog of Theorems~\ref{thm:td-mso} and~\ref{thm:main-graph} for treewidth:
\begin{theorem}[\cite{DBLP:conf/focs/KorhonenMNP023}]\label{thm:tw-mso}
 Given a $\msotwo$ sentence $\varphi$ over the signature of graphs and $k\in \N$, one can construct a dynamic data structure that maintains whether a given dynamic graph $G$ of size $n$ satisfies $\varphi$. The data structure is obliged to report a correct answer only when the treewidth of $G$ does not exceed $k$, and otherwise it reports {\em{Treewidth too large}}. The updates work in amortized time $f(\varphi,k) \cdot n^{o(1)}$ for a computable function $f$.
\end{theorem}

The proof of Theorem~\ref{thm:tw-mso} is structured similarly to its treedepth counterpart: whenever we have an~access to a~tree decomposition of $G$ of near-optimum width, we can verify the satisfaction of $\varphi$ using a~finite-state bottom-up dynamic programming algorithm due to a~seminal work of Courcelle~\cite{Courcelle90}.
The authors then show that such a~decomposition of width at most $6k + 5$, together with a~run of the bottom-up dynamic programming scheme on the decomposition, can indeed be maintained in amortized time $f(\varphi, k) \cdot 2^{\Oh{\sqrt{\log n} \log \log n}} \in f(\varphi,k) \cdot n^{o(1)}$ per update.

Since both treedepth and feedback vertex number are upper bounds for treewidth of a~graph, the result of Theorem~\ref{thm:tw-mso} can be seen as a~generalization of Theorems~\ref{thm:td-mso} and~\ref{thm:main-graph}. However, this generalization comes at an~expense of a~noticeably worse update time, depending heavily on the size $n$ of the graph.
It still remains an~open question whether this dependence can be improved to $\log^{\Oh{1}} n$ or even $\log n$.
Currently, this is only known to be possible for graphs of treewidth at most $2$: Bodlaender showed that a~near-optimum-width tree decomposition of such graphs can be maintained in worst-case logarithmic time~\cite{Bodlaender93a}.
Note however that the result of Bodlaender is incomparable to Theorems~\ref{thm:main-graph}, \ref{thm:main-graph-promise}: there exist graphs of treewidth $2$ and arbitrarily large feedback vertex number, and there already exist graphs of feedback vertex number $2$ whose treewidth is strictly larger than $2$.


\newcommand{\MSO}{\mathsf{MSO}}

\section{Overview}\label{sec:overview}

In this section we present an overview of the proof of Theorem~\ref{thm:main-graph}. We deliberately keep the description high-level in order to convey the main ideas. In particular, we focus on the graph setting and delegate the notation-heavy aspects of relational structures to the full exposition.

Let $G$ be the given dynamic graph. 
We focus on the model where we have a promise that the feedback vertex number of $G$ is at most $k$ at all times. If we are able to construct a data structure in this promise model, then it is easy to lift this to the full model described in Theorem~\ref{thm:main-graph} using the standard technique of {\em{postponing invariant-breaking insertions}}. This technique was also used by Chen et al.~\cite{ChenCDFHNPPSWZ21} and dates back to the work of Eppstein et al.~\cite{EppsteinGIS96}.

\paragraph*{Colored graphs.} We will be working with edge- and vertex-colored graphs. That is, if $\Sigma$ is a finite set of colors (a {\em{palette}}), then a {\em{$\Sigma$-colored graph}} is a graph where every vertex and edge is assigned a color from $\Sigma$. In our case, all the palettes will be of size bounded by functions of $k$ and the given formula $\varphi$, but throughout the reasoning we will use different (and rapidly growing) palettes. For readers familiar with relational structures, in general we work with relational structures over binary signatures (involving symbols of arity $0,1,2$), which are essentially colored graphs supplied with flags.

Thus, we assume that the maintained dynamic graph $G$ is also a $\Sigma$-colored graph for some initial palette $\Sigma$. When $G$ is updated by a vertex or edge insertion, we assume that the color of the new feature is provided with the update.

\paragraph*{Monadic Second-Order Logic.} $\MSO_2$ is the Monadic Second-Order logic with quantification over vertex subsets and edge subsets. This is a standard logic considered in parameterized complexity in connection with treewidth and Courcelle's Theorem. We refer to~\cite[Section~7.4]{platypus} for a thorough introduction, and explain here only the main features. There are four types of variables: {\em{individual}} vertex/edge variables that evaluate to single vertices/edges, and {\em{monadic}} vertex/edge variables that evaluate to vertex/edge subsets. These can be quantified both existentially and universally. One can check equality of vertices/edges, incidence between an edge and a vertex, and membership of a vertex/edge to a vertex/edge subset. In case of colored graphs, one can also check colors of vertices/edges using unary predicates. Negation and all boolean connectives are allowed.

Note that in Theorem~\ref{thm:main-graph} we consider the variant $\msotwo$ of Monadic Second-Order logic, which is an extension of the above by modular counting predicates that can be applied to monadic variables. For simplicity, we ignore this extension for the purpose of this overview.

\paragraph*{Types.}
The key technical ingredient in our reasoning are {\em{types}}, which is a standard tool in model theory. Let $G$ be a $\Sigma$-colored graph and $q$ be a nonnegative integer. With $G$ we can associate its {\em{rank-$q$ type}} $\tp^q(G)$, which is a finite piece of data that contains all information about the satisfaction of $\MSO_2$ sentences of quantifier rank at most $q$ in $G$ (i.e., with quantifier nesting bounded by $q$). More precisely:
\begin{itemize}[nosep]
 \item For every choice of $q$ and $\Sigma$ there is a finite set $\Types^{q,\Sigma}$ containing all possible rank-$q$ types of $\Sigma$-colored graphs. The size of $\Types^{q,\Sigma}$ depends only on $q$ and $\Sigma$.
 \item For every $\MSO_2$ sentence $\psi$ of quantifier rank at most $q$, the type $\tp^q(G)$ uniquely determines whether $\psi$ holds in $G$. 
\end{itemize}

In addition to the above, we also need an understanding that types are {\em{compositional}} under gluing of graphs along small boundaries. For this, we work with the notion of a {\em{boundaried graph}}, which is a graph $G$ together with a specified subset of vertices $\bnd G$, called the {\em{boundary}}. Typically, these boundaries will be of constant size.
We extend the notion of a type to boundaried graphs, where the rank-$q$ type $\tp^q(G)$ of a boundaried graph $G$ contains information not only about all rank-$q$ $\MSO_2$ sentences satisfied in $G$, but also about all such sentences that in addition can use the vertices of $\bnd G$ as parameters (one can also think that vertices of $\bnd G$ are given through free variables). Again, for every finite set $D$, there is a finite set of possible types $\Types^{q,\Sigma}(D)$ of boundaried $\Sigma$-colored graphs with boundary $D$, and the size of $\Types^{q,\Sigma}(D)$ depends only on $q$, $\Sigma$, and $|D|$.

Now, on boundaried graphs there are two natural operations. First, if $G$ is a boundaried graph and $u\in \bnd G$, then one can {\em{forget}} $u$ in $G$. This yields a boundaried graph $\forget(G,u)$ obtained from $G$ by removing $u$ from the boundary (otherwise the graph remains intact). Second, if $G$ and $H$ are two boundaried graphs and $\xi$ is a partial bijection between $\bnd G$ and $\bnd H$, then the {\em{join}} $G\oplus_\xi H$ is the boundaried graph obtained from the disjoint union of $G$ and $H$ by identifying vertices that correspond to each other in $\xi$; the new boundary is the union of the old boundaries (with identification applied). 

With these notions in place, the compositionality of types can be phrased as follows:
\begin{itemize}[nosep]
 \item Given $\tp^q(G)$ and $u\in \bnd G$, one can uniquely determine $\tp^q(\forget(G,u))$.
 \item Given $\tp^q(G)$ and $\tp^q(H)$ and a partial bijection $\xi$ between the boundaries of $G$ and $H$, one can uniquely determine $\tp^q(G\oplus_\xi H)$.
\end{itemize}
The determination described above is effective, that is, can be computed by an algorithm.
 
\paragraph*{Top trees.} We now move to the next key technical ingredient: the {\em{top trees}} data structure of Alstrup et al.~\cite{AlstrupHLT05}. Top trees work over a dynamic forest $F$, which is updated by edge insertions and deletions (subject to the promise that no update breaks acyclicity) and insertions and deletions of isolated vertices. For each connected component $T$ of $F$ one maintains a {\em{top tree}}~$\Delta_T$, which is a hierarchical decomposition of $T$ into {\em{clusters}}. Each cluster $S$ is a subtree of $T$ with at least one edge that is assigned a boundary $\bnd S\subseteq V(S)$ of size at most $2$ with the following property: every vertex of $S$ that has a neighbor outside of $S$ belongs to $\bnd S$. Formally, the top tree $\Delta_T$ is a binary tree whose nodes are assigned clusters in $T$ so that:
\begin{itemize}[nosep]
 \item the root of $\Delta_T$ is assigned the cluster $(T,\bnd T)$, where $\bnd T$ is a choice of at most two vertices in $T$;
 \item the leaves of $\Delta_T$ are assigned single-edge clusters;
 \item for every internal node $x$ of $\Delta_T$, the edge sets of clusters in the children of $x$ form a partition of the edge set of the cluster at $x$.
\end{itemize}
Note that the last property implies that the cluster at $x$, treated as a boundaried graph, can be obtained from the two clusters at the children of $x$ by applying the join operation, possibly followed by forgetting a subset of the boundary. We will then say that the cluster at $x$ is obtained by {\em{joining}} the two clusters at its children.

In~\cite{AlstrupHLT05}, Alstrup et al. showed how to maintain, for a dynamic forest $F$, a forest of top trees $\{\Delta_T\colon T\textrm{ is a component of }F
\}$ so that each tree $\Delta_T$ has depth $\Oh{\log n}$ and every operation is performed in worst-case time $\Oh{\log n}$. Moreover, they showed that the top trees data structure can be robustly enriched with various kinds of auxiliary information about clusters, provided this information can be efficiently composed upon joining clusters. More precisely, suppose that with each cluster $C$ we can associate a piece of information ${\cal I}(C)$ so that
\begin{itemize}[nosep]
 \item ${\cal I}(C)$ can be computed in constant time when $C$ has one edge; and
 \item if $C$ is obtained by joining two clusters $C_1$ and $C_2$, then from ${\cal I}(C_1)$ and ${\cal I}(C_2)$ one can compute ${\cal I}(C)$ in constant time.
\end{itemize}
 Then, as shown in~\cite{AlstrupHLT05}, with each cluster $C$ one can store the corresponding piece of information ${\cal I}(C)$, and still perform updates in time $\Oh{\log n}$.
 
 In our applications, we work with top trees over dynamic $\Sigma$-colored forests, where with each cluster $C$ we store information on its type:
 $${\cal I}(C)=\tp^p(C)$$
 for a suitably chosen $p\in \N$. Here, for technical reasons we need to be careful about the colors: the type $\tp^p(C)$ takes into account the colors of all the edges of $C$ and all the vertices of $C$ {\em{except}} the vertices of $\bnd C$ (formally, we consider the type of $C$ with colors stripped from boundary vertices). The rationale behind this choice is that a single vertex $u$ can participate in the boundary of multiple clusters, hence in the dynamic setting we cannot afford to update the type of each of them upon updating the color of $u$. Rather, every cluster $C$ stores its type with the colors on $\bnd C$ stripped, and if we wish to compute the type of $C$ with these colors included, it suffices to look up those colors and update the stripped type (using compositionality).
 
 Brushing these technical details aside, after choosing the definitions right, the compositionality of types explained before perfectly fits the properties required from an enrichment of top trees. This means that with each cluster $C$ we can store $\tp^p(C)$ while guaranteeing worst-case update time $\Oh[p,\Sigma]{\log n}$. We remark that the combination of top trees and $\MSO_2$ types appears to be a novel contribution of this work; we hope that it can be reused in the~future.
 
 So if $F$ is a dynamic $\Sigma$-colored forest and $p$ is a parameter, then for each tree $T$ in $F$ we can maintain a top tree $\Delta_T$ whose root is supplied with the type $\tp^p(T)$. Knowing the multiset of rank-$p$ types of trees in $F$, we can use standard compositionality and idempotence of types to compute the type $\tp^p(F)$, from which in turn one can infer which rank-$p$ sentences  are satisfied in $F$. By taking $p$ to be the quantifier rank of a given sentence $\varphi$, we~obtain:
 
 \begin{theorem}[folklore] \label{thm:forests}
  Let $\Sigma$ be a finite palette and $\varphi$ be an $\MSO_2$ sentence over $\Sigma$-colored graphs. Then there is a dynamic data structure that for a dynamic $\Sigma$-colored forest $F$ maintains whether $\varphi$ holds in $F$. The worst-case update time is $\Oh[\varphi,\Sigma]{\log n}$.
 \end{theorem}

 Note that the statement of Theorem~\ref{thm:forests} matches (the colored version of) the statement of Theorem~\ref{thm:main-graph-promise} for $k=0$ and should be considered standard, see e.g.~\cite{CohenT97, Frederickson98} for similar results. In fact, a~work of Bodlaender~\cite{Bodlaender93a} implies Theorem~\ref{thm:forests} above and even extends it to the setting of dynamic graphs of treewidth at most $2$.
This work implements the data structure of Theorem~\ref{thm:forests} via a~different dynamic tree data structure, adapted from the parallel tree contraction algorithm by Miller and Reif~\cite{DBLP:journals/acr/MillerR89} and similar in design to topology trees of Frederickson~\cite{DBLP:journals/siamcomp/Frederickson85,DBLP:journals/siamcomp/Frederickson97,DBLP:journals/jal/Frederickson97}.
Bodlaender then amends his data structure to additionally support \emph{cactus graphs} -- graphs with all simple cycles being pairwise edge-disjoint -- and proceeds to show that graphs of treewidth $2$ \emph{interpret} in cactus graphs, and this interpretation can be maintained efficiently under edge updates.
Unfortunately, his approach does not seem to apply to graphs of larger treewidth, or even graphs of sufficiently large feedback vertex number.
 
%
 The problem of maintenance of $\MSO$ queries over dynamic forests has also been considered in the databases literature, see~\cite{Niewerth18} and references therein, however under a different (and somewhat orthogonal) set of allowed updates. 
 
\paragraph*{The data structure of Alman et al.~\cite{AlmanMW20}.}
Our goal now is to lift Theorem~\ref{thm:forests} to the case of $k>0$. For this we rely on the approach of Alman et al.~\cite{AlmanMW20} for monitoring the feedback vertex number, which is based on a sparsity-based strategy that is standard in parameterized complexity, see e.g.~\cite[Section~3.3]{platypus}. 

The approach is based on two lemmas. The first one concerns the situation when the graph contains a vertex $u$ of degree at most $2$. In this case, it is safe to dissolve $u$: either remove it, in case it has degree $0$ or $1$, or replace it with a new edge connecting its neighbors, in case it has degree $2$. 
Note that dissolving a degree-$2$ vertex naturally can create a multigraph. This creates technical issues both in~\cite{AlmanMW20} and in this work, but we shall largely ignore them for the purpose of this overview.
Formally, we have the following.

\begin{lemma}[folklore]\label{lem:diss}
 Dissolving a vertex of degree at most $2$ in a multigraph does not change the feedback vertex number.
\end{lemma}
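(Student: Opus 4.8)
The plan is to prove Lemma~\ref{lem:diss} by a direct case analysis on the degree of the dissolved vertex $u$, in each case exhibiting a bijection-like correspondence between feedback vertex sets of the original multigraph $G$ and of the multigraph $G'$ obtained after dissolution. The guiding principle is that dissolution preserves the cycle structure of the graph: every cycle of $G$ either avoids $u$ entirely (and survives unchanged in $G'$) or passes through $u$ (and corresponds to a cycle in $G'$ via the new edge), and conversely. Once this correspondence is nailed down, the equality $\fvs{G}=\fvs{G'}$ follows by checking that a set hits all cycles of $G$ if and only if the ``same'' set (with $u$ removed or irrelevant) hits all cycles of $G'$.

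First I would dispose of the easy cases. If $u$ has degree $0$ or $1$, then $u$ lies on no cycle, so $G$ and $G' = G - u$ have exactly the same cycles; moreover no minimum feedback vertex set of $G$ needs to contain $u$ (if one did, removing $u$ from it still leaves a feedback vertex set, since $u$ hits nothing), so $\fvs{G} = \fvs{G'}$ immediately. The substantive case is $\deg(u) = 2$, where $u$ has neighbors, say via edges $e_1$ and $e_2$; write $v_1$ and $v_2$ for the other endpoints (possibly $v_1 = v_2$, and possibly $e_1, e_2$ are already parallel, or one of them is a loop at $u$ only if $v_1=u$, which is excluded by degree $2$ unless it is a loop — a loop at $u$ would contribute $2$ to the degree, and then dissolving means just deleting $u$ and the loop, and $u$ was on a length-one cycle, so one must be slightly careful; I would handle this degenerate subcase separately or note that ``dissolving'' a degree-$2$ vertex carrying a loop means deleting it, which decreases $\fvs{}$ by... actually it does change it, so the standard convention is that $u$ has two distinct incident non-loop edges — I would state this convention). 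In the main subcase, $G'$ is obtained from $G - u$ by adding a new edge $e$ between $v_1$ and $v_2$.

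The core of the argument is the cycle correspondence. Cycles of $G$ not using $u$ are precisely cycles of $G'$ not using $e$. Cycles of $G$ using $u$ must use both $e_1$ and $e_2$ (since $\deg(u)=2$), hence are of the form $v_1 \,e_1\, u \,e_2\, v_2 \Path v_1$ where the $v_2 \Path v_1$ part is a path in $G - u$; replacing the detour through $u$ by the single edge $e$ yields a cycle of $G'$ using $e$, and this map is a bijection between $u$-cycles of $G$ and $e$-cycles of $G'$. Now for the feedback vertex numbers: given a feedback vertex set $X$ of $G$, I claim $X \setminus \{u\}$ is a feedback vertex set of $G'$ — any cycle of $G'$ not using $e$ is a cycle of $G$ hit by $X$, hence by $X\setminus\{u\}$ since $u$ is not on it; any cycle of $G'$ using $e$ corresponds to a $u$-cycle $C$ of $G$ hit by $X$ at some vertex $w\ne u$ (if $X$ hit $C$ only at $u$, replace $u$ in $X$ by $v_1$ — this still hits $C$ and every other cycle $u$ was needed for, since any such cycle passes through $v_1$ too), and $w$ survives on the corresponding $e$-cycle of $G'$. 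This gives $\fvs{G'} \le \fvs{G}$. Conversely, given a feedback vertex set $Y$ of $G'$, the same set $Y$ is a feedback vertex set of $G$: $u$-free cycles of $G$ are $e$-free cycles of $G'$ hit by $Y$; $u$-cycles of $G$ map to $e$-cycles of $G'$ hit by $Y$ at a vertex $w$, and $w \ne u$ automatically (as $u \notin V(G')$), so $w$ lies on the original $u$-cycle. Hence $\fvs{G} \le \fvs{G'}$, and equality follows.

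I expect the main obstacle to be purely bookkeeping: correctly handling the degenerate configurations that arise because we are in multigraphs — parallel edges $e_1, e_2$ (so that dissolving creates a new parallel copy, or even a double edge forming a $2$-cycle), the case $v_1 = v_2$ (dissolving creates a loop at $v_1$, i.e.\ a $1$-cycle, which must be hit, and indeed $v_1$ was on the corresponding $u$-cycle), and the loop-at-$u$ convention mentioned above. The argument in each case is the same two-inclusion comparison, but one must be careful that the ``swap $u$ for $v_1$'' trick is always available, which it is precisely because every cycle through a degree-$2$ vertex also passes through each of its two neighbors. None of this requires real calculation, only a disciplined enumeration of cases, so I would present the degree $\le 1$ cases in one line, set up the cycle bijection for degree $2$ once, and then observe that both inequalities $\fvs{G}\le\fvs{G'}$ and $\fvs{G'}\le\fvs{G}$ drop out, remarking that the stated conventions on loops make the degenerate subcases go through unchanged.
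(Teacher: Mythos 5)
The paper gives no proof of Lemma~\ref{lem:diss} at all --- it is invoked as folklore --- so there is nothing to compare line by line; your argument is the standard one and it is correct. The cycle correspondence (cycles avoiding $u$ survive unchanged; cycles through a degree-$2$ vertex $u$ must use both incident edges and hence both neighbours, so they map bijectively to cycles through the new edge) together with the ``swap $u$ for $v_1$'' step, which is legitimate precisely because every cycle through $u$ also passes through $v_1$, yields both inequalities $\fvs{G'}\le\fvs{G}$ and $\fvs{G}\le\fvs{G'}$. Your caveat about a degree-$2$ vertex whose two incidences come from a single self-loop is genuinely needed (deleting such a vertex does change the feedback vertex number, e.g.\ a lone looped vertex), and the convention you state --- dissolution in the degree-$2$ case applies only to a vertex with two non-loop incidences, a looped vertex being forced into every feedback vertex set anyway --- is exactly how the lemma is meant to be read; your handling of the remaining degeneracies ($v_1=v_2$ creating a loop, parallel edges creating a $2$-cycle) is also sound.
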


The second lemma concerns the situation when the graph has minimum degree at least~$3$. Then a sparsity-based argument shows that every feedback vertex set of size at most $k$ intersects the set of $\Oh{k}$ vertices with highest degrees.

\begin{lemma}[Lemma 3.3 in~\cite{platypus}]\label{lem:high}
 Let $G$ be a multigraph with minimum degree $3$ and let $B$ be the set of $3k$ vertices with highest degrees in $G$. Then every feedback vertex set of size at most $k$ in $G$ intersects $B$.
\end{lemma}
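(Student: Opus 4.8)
The plan is a standard counting argument by contradiction. Suppose $X$ is a feedback vertex set of size at most $k$ that is disjoint from $B$, the set of $3k$ highest-degree vertices. The graph $G-X$ is a forest, so it has at most $|V(G)|-1 < |V(G)|$ edges; hence $\sum_{v\in V(G)\setminus X}\deg_{G-X}(v) < 2|V(G)|$. On the other hand, every vertex of $G$ has degree at least $3$ in $G$, and deleting the $|X|\le k$ vertices of $X$ decreases the degree of any vertex by at most $|X| \le k$ in the worst case, but more usefully: the edges of $G$ incident to $V(G)\setminus X$ but lost in $G-X$ are exactly the edges with at least one endpoint in $X$, of which there are at most $\sum_{v\in X}\deg_G(v)$.

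So the key inequality to set up is: $\sum_{v\in V(G)\setminus X}\deg_G(v) - \sum_{v\in X}\deg_G(v) \le \sum_{v\in V(G)\setminus X}\deg_{G-X}(v) \le 2(|V(G)\setminus X|)$, using that $G-X$ is a forest on $|V(G)\setminus X|$ vertices. Combining with $\deg_G(v)\ge 3$ for all $v$, the left side is at least $3|V(G)\setminus X| - \sum_{v\in X}\deg_G(v)$, giving $3|V(G)\setminus X| - \sum_{v\in X}\deg_G(v) \le 2|V(G)\setminus X|$, i.e. $|V(G)\setminus X| \le \sum_{v\in X}\deg_G(v)$. Now I use the assumption that $X\cap B=\emptyset$: each $v\in X$ has degree at most the minimum degree over $B$, and since $|B|=3k$ while $|X|\le k$, averaging gives $\sum_{v\in X}\deg_G(v) \le |X|\cdot \min_{b\in B}\deg_G(b) \le \tfrac{|X|}{|B|}\sum_{b\in B}\deg_G(b) \le \tfrac{1}{3}\sum_{b\in B}\deg_G(b) \le \tfrac{1}{3}\sum_{v\in V(G)}\deg_G(v)$. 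But $\sum_{v\in V(G)}\deg_G(v) = 2|E(G)|$, and since $G$ has minimum degree $3$ it is not a forest, yet this does not immediately contradict anything — so I instead bound $\sum_{b\in B}\deg_G(b)$ against $\sum_{v\notin B}\deg_G(v)$ more carefully, or simply note $\sum_{v\in X}\deg_G(v) \le \tfrac13(2|E(G)|)$ and $|V(G)\setminus X| \ge |V(G)| - k$, then derive $3|V(G)| - 6k \le 3|V(G)\setminus X| \le \ldots$; the cleanest route is: $|V(G)| - k \le |V(G)\setminus X| \le \sum_{v\in X}\deg_G(v) \le |X|\cdot d_{\min B}$ where each vertex outside $B$ has degree $\le d_{\min B}$ too, so $2|E(G)| \ge 3|V(G)|$ combined with the handshake bound forces a contradiction once the constants are tracked.

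I expect the main obstacle to be purely bookkeeping: choosing the cleanest chain of inequalities so that the constant $3k$ (rather than, say, $2k$ or $4k$) falls out exactly, and handling the multigraph subtlety that "forest" means the edge count of $G-X$ is strictly less than its vertex count (with equality issues when $G-X$ is disconnected, which only helps). No deep idea is needed — the whole argument is the observation that a forest is sparse while minimum-degree-$3$ graphs are dense, and that a size-$k$ set disjoint from the $3k$ top-degree vertices is too light to absorb the excess density.
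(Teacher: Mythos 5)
Your counting strategy is the standard one and it does work, but the chain of inequalities you actually write down does not close, and the claim that the constants ``force a contradiction once tracked'' is false for those quantities. You derive $|V(G)\setminus X|\le\sum_{v\in X}\deg_G(v)$ and, by averaging over $B$, $\sum_{v\in X}\deg_G(v)\le\tfrac{1}{3}\cdot 2|E(G)|$. But $|V(G)\setminus X|\le|V(G)|\le\tfrac{2}{3}|E(G)|$ already holds in every multigraph of minimum degree~$3$, regardless of what $X$ is, so these two bounds can never contradict each other; and adding $|V(G)\setminus X|\ge|V(G)|-k$ and $\sum_{v\in X}\deg_G(v)\le k\,d_{\min B}$ still leaves a nonempty feasible parameter range. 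Both sides of your sandwich must be pushed to the single threshold $|E(G)|/2$ before they collide.

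The two sharpenings are small but necessary. For the lower bound, compare $\sum_{v\in X}\deg_G(v)$ against the \emph{edge} count of the forest rather than its vertex count: $G-X$ has $|V(G)|-|X|$ vertices, so $|E(G-X)|\le|V(G)|-|X|-1<|V(G)\setminus X|\le\sum_{v\in X}\deg_G(v)$; and since every edge of $G$ either lies in $G-X$ or is incident to $X$, and the latter number at most $\sum_{v\in X}\deg_G(v)$, we get $|E(G)|\le|E(G-X)|+\sum_{v\in X}\deg_G(v)<2\sum_{v\in X}\deg_G(v)$, i.e., $\sum_{v\in X}\deg_G(v)>|E(G)|/2$. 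For the upper bound, keep the disjointness of $B$ and $X$ inside the handshake sum rather than discarding it: $\sum_{b\in B}\deg_G(b)+\sum_{v\in X}\deg_G(v)\le 2|E(G)|$, so your averaging inequality $\sum_{v\in X}\deg_G(v)\le\tfrac{1}{3}\sum_{b\in B}\deg_G(b)$ yields $\sum_{v\in X}\deg_G(v)\le\tfrac{1}{3}\bigl(2|E(G)|-\sum_{v\in X}\deg_G(v)\bigr)$, hence $\sum_{v\in X}\deg_G(v)\le|E(G)|/2$, and the two bounds now contradict one another. (The paper itself only cites this lemma; the textbook proof in \cite{platypus} carries out exactly these two steps.)
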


Lemmas~\ref{lem:diss} and~\ref{lem:high} can be used to obtain an FPT algorithm for {\sc{Feedback Vertex Set}} with running time $(3k)^k\cdot (n+m)$ (see~\cite[Theorem~3.5]{platypus}): apply the reduction of Lemma~\ref{lem:diss} exhaustively, and then branch on which of the $3k$ vertices with highest degrees should be included in the solution. This results in a recursion tree of total size at most $(3k)^k$.

The data structure of Alman et al.~\cite{AlmanMW20} is based on dynamization of the branching algorithm presented above. There are two main challenges:
\begin{itemize}[nosep]
 \item dynamic maintenance of the sequence of dissolutions given by Lemma~\ref{lem:diss}; and
 \item dynamic maintenance of the set of high degree vertices.
\end{itemize}

For the first issue, it is explanatory to imagine performing the dissolutions not one by one iteratively, but all at once. It is not hard to see that the result of applying Lemma~\ref{lem:diss} exhaustively is that the input multigraph $G$ gets contracted to a multigraph $\Contract(G)$ in the following way: the edge set of $G$ is partitioned into disjoint trees, and each of them either disappears or is contracted into a single edge in $\Contract(G)$; see Figure~\ref{fig:fern-decomposition} for a visualization. (There may be some corner cases connected to loops in $\Contract(G)$ that result from contracting not trees, but unicyclic graphs; we ignore this issue in this overview.)
We call the elements of this partition {\em{ferns}}, and the corresponding decomposition of $G$ into ferns is called the {\em{fern decomposition}} of $G$. Importantly, the order of performing the contractions has no effect on the outcome, yielding always the same fern decomposition of $G$. 

With each fern of $S$ we can associate its boundary $\bnd S$, which is the set of vertices of $S$ incident to edges that lie outside of $S$. It is not hard to see that this boundary will always be of size $0$, $1$, or $2$. The ferns that correspond to edges in $\Contract(G)$ are the ferns with boundary of size $2$ (each such fern gets contracted to an edge connecting the two vertices of the boundary) and non-tree ferns with boundary of size $1$ (each such fern gets contracted to a loop at the unique vertex of the boundary).

\begin{figure}[t]
\centering
{
  \includegraphics[width=0.8\textwidth]{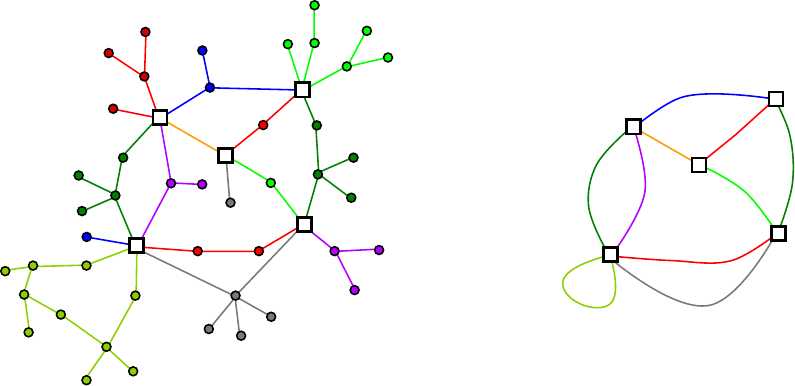}
}
\caption{Left: A graph $G$ together with its fern decomposition. Different ferns are depicted with different colors; these should not be confused with the coloring of edges of $G$ with colors from $\Sigma$. Right: The multigraph $\Contract(G)$ obtained by contracting each fern. Note that in the construction of $\Contract^p(G)$ described in the discussion of the Contraction Lemma, we would not have parallel edges or loops. Instead, each pack of parallel edges would be replaced by a single one, colored with the joint type of the whole pack. Similarly, loops on a vertex would be removed and their joint type would be stored in the color of the vertex.}
\label{fig:fern-decomposition}
\end{figure}

The idea of Alman et al.\ is to maintain the ferns in the fern decomposition using link-cut trees. It is shown that each update in $G$ affects the fern decomposition only slightly, in the sense that it can be updated using a constant number of operations on link-cut trees. In this way, the fern decomposition and the graph $\Contract(G)$ can be maintained with worst-case $\Oh{\log n}$ time per update in $G$. This resolves the first challenge.

For the second challenge, Alman et al.\ observe that if in Lemma~\ref{lem:high} one increases the number of highest degree vertices included in $B$ from $3k$ to $12k$, then the set remains ``valid'' --- in the sense of satisfying the conclusion of the lemma --- even after $\Oh{m/k}$ updates are applied to the graph. Here, $m$ denotes the number of edges of the graph on which Lemma~\ref{lem:high} is applied, which is $\Contract(G)$ in our case.
This means that it remains correct to perform a recomputation of the set $B$ only every $\Theta(m/k)$ updates. Since such a recomputation takes time $\Oh{m}$, the \emph{amortized} update time is $\Oh{k}$.
The work of Alman et al.\ only proves the amortized time complexity guarantee, but the \emph{worst-case} $\Oh{k}$ update time can actually be achieved in a~black-box way through the framework of \emph{global rebuilding} by Overmars and van~Leeuwen~\cite{DBLP:conf/wg/Overmars81,DBLP:journals/ipl/OvermarsL81a}: suppose that $B$ needs to be recomputed every $T \in \Theta(m/k)$ updates.
Then after $\frac12 T$ updates, take the current snapshot of $G$ --- call it $G_{\frac12 T}$ --- and find the set $B_{\frac12 T}$ of $12k$ highest degree vertices in it \emph{in background}.
This process requires $\Oh{m}$ time, but we distribute this computation over the following sequence of $\frac12 T$ queries, taking only $\Oh{m / T} = \Oh{k}$ additional time per update.
After a~total of $T$ queries, we dispose of the set $B$ and replace it with $B_{\frac12 T}$.
Then the new value of $B$ still remains ``valid'' for the following $\frac12 T$ queries, and, ignoring some technical details, every update takes worst-case $\Oh{k}$ time.


Once $\Contract(G)$ and $B\subseteq V(\Contract(G))$ are known, Lemma~\ref{lem:high} asserts that if the feedback vertex number of $G$ is at most $k$, there exists a vertex $b\in B$ whose deletion decreases the feedback vertex number. Therefore, the idea of Alman et al. is to construct a recursive copy of the data structure for each $b\in B$: the copy maintains the graph $\Contract(G)-b$ and uses parameter $k-1$ instead of $k$. Note that when $B$ gets recomputed, all these data structures need to be reset, but the framework of global rebuilding can be used to handle this step with worst-case time guarantees as well.

All in all, once one unravels the recursion, the whole construction is a tree of data structures of depth $k$ and branching $12k$, which is maintained with worst-case time $2^{\Oh{k\log k}}\cdot \log n$. The graph has feedback vertex number at most $k$ if and only if this tree contains at least one leaf with an empty graph.

\paragraph*{Our data structure.}
We now describe the high-level idea of our data structure.

Lemmas~\ref{lem:diss} and~\ref{lem:high} can be used not only to design an FPT algorithm for {\sc{Feedback Vertex Set}}, but also an approximation algorithm. Consider the following procedure: apply the reduction of Lemma~\ref{lem:diss} exhaustively, then greedily take {\em{all}} the $3k$ vertices with highest degrees to the constructed feedback vertex set, and iterate these two steps in alternation until the graph becomes empty. Lemma~\ref{lem:high} guarantees that provided the feedback vertex number was at most $k$ in the first place, the iteration terminates after at most $k$ steps; the $3k^2$ selected vertices form a feedback vertex set. We note that this application of Lemmas~\ref{lem:diss} and~\ref{lem:high} for feedback vertex set approximation is not new, for instance it was recently used by Kammer and Sajenko~\cite{KammerS20} in the context of space-efficient kernelization.

Our data structure follows the design outlined above. That is, instead of a tree of data structures, we maintain a sequence of $2k+2$ data structures, respectively for multigraphs
$$G_0,H_0,G_1,H_1,\ldots,G_k,H_k.$$
These multigraphs essentially satisfy the following:
\begin{itemize}[nosep]
 \item $G_0=G$;
 \item $H_i=\Contract(G_i)$ for $i=0,1,\ldots,k$; and
 \item $G_{i+1}=H_i-B_i$ for $i=0,1,\ldots,k-1$, where $B_i$ is a set that satisfies the conclusion of Lemma~\ref{lem:high} for $G_i$.
\end{itemize}
Note that these invariants imply that provided the feedback vertex number of $G$ is at most~$k$, the feedback vertex number of $G_i$ and of $H_i$ is at most $k-i$ for each $i\in \{0,1,\ldots,k\}$, implying that $G_k$ is a forest and $H_k$ is the empty graph.

The precise definitions of $\Contract(\cdot)$ and of deleting vertices used in the sequence above will be specified later. More precisely,
graphs $G_0,H_0,\ldots,G_k,H_k$ will be colored with palettes $\Sigma_0,\Gamma_0,\ldots,\Sigma_k,\Gamma_k$ in order, where $\Sigma_0=\Sigma$. These palettes will grow (quite rapidly) in sizes, but each will be always of size bounded in terms of $k$, $\Sigma$, and $q$ --- the quantifier rank of the fixed sentence $\varphi$ whose satisfaction we monitor. The idea is that when obtaining $H_i$ from $G_i$ by contracting ferns, we use colors from $\Gamma_i$ to store information about the contracted ferns on edges and vertices of $H_i$. Similarly, when removing vertices of $B_i$ from $H_i$ to obtain $G_{i+1}$, we use colors from $\Sigma_{i+1}$ on vertices of $G_{i+1}$ to store information about the adjacencies of the removed vertices. These steps are encompassed by two key technical statements --- the Contraction Lemma and the Downgrade Lemma --- which we explain below.

\paragraph*{Contraction Lemma.} We explain the Contraction Lemma for the construction of $H\coloneqq H_0$ from $G=G_0$; the construction for $i>0$ is the same. Recall that eventually we are interested in monitoring whether the given sentence $\varphi$ is satisfied in $G$. For this, it is sufficient to monitor the type $\tp^q(G)$, where $q$ is the quantifier rank of $\varphi$. Consider the following construction:
\begin{itemize}[nosep]
 \item Pick some large $p\in \N$.
 \item Consider the fern decomposition $\Ff$ of $G$ and let ${\cal K}\coloneqq \{\bnd S\colon S\in \Ff\}$. For every $D\in {\cal K}$, let $R_D$ be the join of all the ferns with boundary $D$, and with colors stripped from the vertices of $D$. Note that $R_D$ is a boundaried graph with boundary~$D$.
 \item For every $D\in {\cal K}$ with $|D|=2$, contract $R_D$ to a single edge with color $\tp^p(R_D)$ connecting the two vertices of $D$.
 \item For every $D\in {\cal K}$ with $|D|=1$, contract $R_D$ onto the single vertex $d$ of $D$, and make $d$ of color $\tp^p(R_D)$.
 \item Remove $R_\emptyset$, if present, and remember $\tp^p(R_\emptyset)$ through flags\footnote{We assume that a colored graph can be supplied with a bounded number of boolean flags, which thus can store a bounded amount of additional information. In the general setting of relational structures, flags are modeled by nullary predicates (predicates of arity $0$).}.
 \item The obtained colored graph is named $\Contract^p(G)$. Note that $\Contract^p(G)$ is a $\Gamma^p$-colored graph, where $\Gamma^p$ is a palette consisting of all rank-$p$ types of $\Sigma$-colored graphs with a boundary of size at most $2$.
\end{itemize}
Thus, every fern $S$ in $G$ is essentially disposed of, but a finite piece of information (the rank-$p$ type) about $S$ is being remembered in $\Contract^p(G)$ on the boundary of $S$. The intuition is that if $p$ is large enough, these pieces of information are enough to infer the rank-$q$ type of~$G$. This intuition is confirmed by the following Replacement Lemma.

\begin{lemma}[Replacement Lemma, informal statement]\label{lem:replacement-inf}
 For any given $q\in \N$ and $\Sigma$, there exists $p\in \N$ large enough so that for any $\Sigma$-colored graph~$G$, the type $\tp^p(\Contract^p(G))$ uniquely determines the type $\tp^q(G)$.
\end{lemma}

The proof of the Replacement Lemma uses Ehrenfeucht-Fra\"isse games. It is conceptually rather standard, but technically quite involved. We note that the obtained constant $p$ is essentially the number of rank-$q$ types of $\Sigma$-colored graphs, which is approximately a tower of exponentials of height $q$ applied to $|\Sigma|$. Since Replacement Lemma is used $k$ times in the construction, this incurs a huge explosion in the parameter dependence in our data structure.

Replacement Lemma shows that in order to monitor the type $\tp^q(G)$ in the dynamic setting, it suffices to maintain the graph $H\coloneqq \Contract^p(G)$ and the type $\tp^p(H)$. Maintaining $H$ dynamically is the responsibility of the Contraction Lemma.

\begin{lemma}[Contraction Lemma, informal statement]\label{lem:contraction-inf}
 For a given $p\in \N$ and palette~$\Sigma$, there is a dynamic data structure that for a dynamic graph $G$, maintains the graph $\Contract^p(G)$ under updates in $G$. The worst-case update time is $\Oh[p,\Sigma]{\log n}$.
\end{lemma}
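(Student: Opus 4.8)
The plan is to follow Alman et al.'s~\cite{AlmanMW20} strategy for maintaining the fern decomposition of $G$ under updates, but to implement it with top trees rather than link--cut trees and to enrich the top trees so that each fern also carries its rank-$p$ type. Concretely, each fern $S$ is a tree (ignoring the unicyclic corner cases discussed below), and we store it in a top tree in which every cluster $C$ carries the annotation $\tp^p(C)$ with the colours of $\bnd C$ stripped, exactly in the format described before Theorem~\ref{thm:forests}. Since the type of a joined cluster is obtained from the types of its children by a join possibly followed by forgetting part of the boundary, and this composition is effective, this annotation meets the requirements of the top-tree enrichment mechanism of Alstrup et al.~\cite{AlstrupHLT05}, so each top-tree operation still runs in worst-case $\Oh[p,\Sigma]{\log n}$ time, and exposing $\bnd S$ at the root of the top tree of $S$ returns $\tp^p(S)$ on demand. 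The starting point of the algorithm is Alman et al.'s analysis showing that an elementary update to $G$ changes the fern decomposition only locally: it splits, merges, creates or destroys only $\Oh{1}$ ferns, and adds to or removes from the vertex set of $\Contract^p(G)$ only $\Oh{1}$ vertices. Each such elementary change is a link, a cut, or an expose (or $\Oh{1}$ of them, to isolate a vertex of constant degree inside its fern) on the top trees of the affected ferns, so after processing an update we have at hand the list of affected ferns together with their new rank-$p$ types, computed in worst-case $\Oh[p,\Sigma]{\log n}$ time.

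It remains to propagate this information into $\Contract^p(G)$ itself. For every set $D$ that is the boundary of at least one fern we maintain a bounded-size digest of the multiset $\{\tp^p(S)\colon \bnd S = D\}$: for each of the finitely many rank-$p$ types $t$ we keep a count of the ferns of type $t$ with boundary $D$, suitably truncated --- this is possible because the type of a join of $n$ copies of a fixed boundaried graph is eventually periodic in $n$, so $\tp^p(R_D)$ depends on the multiset only through a bounded amount of information. From the digest one recomputes $\tp^p(R_D)$ in constant time, and hence the colour of the edge (for $|D| = 2$), of the loop or vertex (for $|D| = 1$), or of the flag (for $D = \emptyset$) that $R_D$ contributes to $\Contract^p(G)$. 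Because each update changes only $\Oh{1}$ ferns, it touches only $\Oh{1}$ digests --- each change being $\Oh{1}$ increments and decrements in a dictionary keyed by pairs (boundary, type), where the decrements use the old type stored with each affected fern --- and therefore changes only $\Oh{1}$ vertices, edges, colours or flags of $\Contract^p(G)$, which we report. Inserting or deleting a vertex $v$ of $\Contract^p(G)$ fits the same pattern: it rebuilds from scratch the digests of the $\Oh{1}$ boundaries meeting $v$, each out of the $\Oh{1}$ ferns now incident to $v$.

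The main obstacle is not any single one of these steps but the faithful bookkeeping of the degeneracies that the overview deliberately suppressed. Top trees live over forests, so a fern that is unicyclic --- or a unicyclic component of $G$, which contributes a loop or a flag rather than an edge --- must be represented by a spanning tree together with one distinguished non-tree edge, $\tp^p$ of the fern being recovered by joining that edge back onto the root type; updates that create, destroy, or move the distinguished edge have to be routed through $\Oh{1}$ further top-tree operations without spoiling the worst-case bound. One must likewise be careful with loops and parallel edges already present in $G$ (the lemma is applied to multigraphs), and with a vertex that is simultaneously a size-$1$ boundary and an endpoint of several size-$2$ boundaries. Verifying that Alman et al.'s local-change analysis survives all these cases --- still $\Oh{1}$ affected ferns and $\Oh{1}$ affected digests, each repaired by $\Oh{1}$ top-tree operations --- is where the bulk of the work lies. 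Granting this, every update is served by a constant number of $\Oh[p,\Sigma]{\log n}$-time top-tree operations together with $\Oh{1}$ constant-time digest manipulations, which yields the claimed worst-case bound.
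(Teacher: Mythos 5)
Your proposal takes essentially the same route as the paper: maintain the fern decomposition via top trees augmented with boundary-stripped rank-$p$ types (so that recoloring a boundary vertex does not touch unboundedly many clusters), keep per-boundary counts of fern types and exploit idempotence of the type join to turn these counts into the contribution of each boundary to $\Contract^p(G)$, and handle unicyclic ferns by splitting a cycle vertex and gluing the type back at query time. One small implementation wrinkle worth flagging: the paper stores the \emph{exact} per-(boundary,type) counts and applies the idempotence-based reduction $f_j$ only when reading, rather than storing ``suitably truncated'' counts as you suggest, because a truncated count cannot be decremented reliably when a fern leaves the bucket.
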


The proof of Lemma~\ref{lem:contraction-inf} follows closely the reasoning of Alman et al.~\cite{AlmanMW20}. That is, in the same way as in~\cite{AlmanMW20}, every update in $G$ incurs a constant number of changes in the fern decomposition of $G$, expressed as splitting or merging of individual ferns. Instead of relying on link-cut trees as in~\cite{AlmanMW20}, the ferns are stored using top trees. This is because we enrich the top trees data structure with the information about rank-$p$ types of clusters, as in Theorem~\ref{thm:forests}, so that for each fern $S$ we know its rank-$p$ type. This type is needed to determine the color of the feature (edge/vertex/flag) in $H=\Contract^p(G)$ to which $S$ contributes. 

Executing the plan sketched above requires an extreme care about details. Note for instance that in the construction of $\Contract^p(G)$, when defining $R_D$ we explicitly stripped colors from the boundary vertices. This is for a reason similar to that discussed alongside Theorem~\ref{thm:forests}: including the information on the colors of $D$ in $\tp^p(R_D)$ would mean that a single update to the color of a vertex $d$ would affect the types of all subgraphs $R_D$ with $d\in D$, and there is potentially an unbounded number of such subgraphs. Further, we remark that Alman et al.~\cite{AlmanMW20} relied on an understanding of the fern decomposition through a sequence of dissolutions, which makes some arguments inconvenient for generalization to our setting. We need a firmer grasp on the notion of fern decomposition, hence we introduce a robust graph-theoretic description that is static --- it does not rely on an~iterative dissolution procedure. This robustness helps us greatly in maintaining ferns and their types in the dynamic setting.

Another noteworthy technical detail is that the operator $\Contract^p(\cdot)$, as defined above, does not create parallel edges or loops, and thus we stay within the realm of colored simple graphs (or, in the general setting, of classic relational structures over binary signatures). Unfortunately, this simplification cannot be applied throughout the whole proof, as in Lemma~\ref{lem:high} we need to count the degrees with respect to the multigraph $\Contract(G)$ as defined in Alman et al.~\cite{AlmanMW20}. For this reason, in the full proof we keep trace of two objects at the same time: a relational structure $\mathbb A$ that we are interested in, and a multigraph $H$ which is a supergraph of the Gaifman graph of $\mathbb A$ and that represents the structure of earlier contractions.

\paragraph*{Downgrade Lemma.} Finally, we are left with the Downgrade Lemma, which reduces the graph by removing a bounded number of vertices. Formally, we have a $\Gamma$-colored graph $H$ and a set $B$ of $\Oh{k}$ vertices, and we would like to construct a $\Sigma'$-colored graph $G'=\Downgrade(H,B)$ by removing the vertices of $B$ and remembering information about them on the remaining vertices of $H$. This construction is executed as follows:
\begin{itemize}[nosep]
 \item Enumerate the vertices of $B$ as $b_1,\ldots,b_\ell$, where $\ell=|B|$.
 \item Construct $G'$ by removing vertices of $B$.
 \item For every color $c\in \Gamma$ and $i\in \{1,\ldots,\ell\}$, add to $G'$ a flag signifying whether $b_i$ has color $c$ in $G$.
 \item For every pair $i,j\in \{1,\ldots,\ell\}$, $i<j$, and every color $c\in \Gamma$ add to $G'$ a flag signifying whether $b_i$ and $b_j$ are connected in $G$ by an edge of color $c$.
 \item For every vertex $u\in V(G)\setminus B$, every $i\in \{1,\ldots,\ell\}$, and every color $c\in \Gamma$, refine the color of $u$ in $G'$ by adding the information on whether $u$ and $b_i$ were connected in $G$ by an edge of color $c$. 
 \item The obtained graph is the graph $G'$. Note that $G'$ is $\Sigma'$-colored, where $\Sigma'=\Gamma\times 2^{[\ell]\times \Gamma}$.
\end{itemize}
Thus, the information about vertices of $B$ and edges incident to  $B$ is being stored in flags and colors on vertices of $V(G)\setminus B$. We have the following analogue of the Replacement Lemma.

\begin{lemma}\label{lem:downgrade-logic}
 For any given $p\in \N$, there exists $q' \in \N$ large enough so that for any $\Gamma$-colored graph~$H$ and a~subset~$B$ of $\Oh{k}$~vertices, the type $\tp^{q'}(\Downgrade(H,B))$ uniquely determines $\tp^p(H)$.
\end{lemma}

The proof of Lemma~\ref{lem:downgrade-logic} is actually very simple and boils down to a syntactic modification of formulas. From Lemma~\ref{lem:downgrade-logic} it follows that to maintain the type $\tp^p(H)$, it suffices to maintain a bounded-size set $B$ satisfying the conclusion of Lemma~\ref{lem:high}, the graph $G'=\Downgrade(H,B)$, and its type $\tp^{q'}(G')$. This is the responsibility of the Downgrade Lemma.

\begin{lemma}[Downgrade Lemma, informal statement]
For a given $p\in \N$ and palette $\Gamma$, there is a dynamic data structure that for a dynamic graph $H$ of feedback vertex number at most $k$ and with minimum degree $3$, maintains a set of vertices $B\subseteq V(H)$ with $|B|\leq 12k$ and satisfying the conclusion of Lemma~\ref{lem:high}, and the graph $\Downgrade(H,B)$. The worst-case update time is $\Oh[p,\Gamma,k]{\log n}$.
\end{lemma}

The proof of the Downgrade Lemma is essentially the same as that given for the corresponding step in Alman et al.~\cite{AlmanMW20}. We recompute $B$ from scratch every $\Theta(m/k)$ updates, because the argument of Alman et al. shows that $B$ remains valid for this long. Recomputing $B$ implies recomputing $\Downgrade(H,B)$ in $\Oh[p,\Gamma,k]{m}$ time, so the worst-case complexity is $\Oh[p,\Gamma,k]{1}$ (there are additional logarithmic factors from auxiliary data structures).

\paragraph*{Endgame.} We now have all the pieces to assemble the proof of Theorem~\ref{thm:main-graph}. Let $q_0$ be the quantifier rank of the given sentence $\varphi$ and let $G_0=G$ be the considered dynamic graph. By Replacement Lemma, to monitor $\tp^{q_0}(G_0)$ (from which the satisfaction of $\varphi$ can be inferred), it suffices to monitor $\tp^{p_0}(H_0)$, where $H_0\coloneqq \Contract^{p_0}(G_0)$ and $p_0$ is as provided by the Replacement Lemma. By Contraction Lemma, we can efficiently maintain $H_0$ under updates of $G_0$. By Lemma~\ref{lem:downgrade-logic}, to monitor $\tp^{p_0}(H_0)$ it suffices to monitor $\tp^{q_1}(G_1)$, where $G_1\coloneqq \Downgrade(H_0,B_0)$, and $B_0$ is a set that satisfies the conclusion of Lemma~\ref{lem:high}. By Downgrade Lemma, we can efficiently maintain such a set $B_0$ and the graph $G_1$. We proceed further in this way, alternating the usage of the Contraction Lemma and the Downgrade Lemma. Observe that each application of Downgrade Lemma strictly decrements the feedback vertex number, so after $k$ steps we end up with an empty graph $H_k$. The type of this graph can be directly computed from its flags, and this type can be translated back to infer $\tp^q(G)$ by using Replacement Lemma and Lemma~\ref{lem:downgrade-logic} alternately.

\section{Preliminaries}%
\label{sec:preliminaries}

For a nonnegative integer $p$, we write $[p]\coloneqq \{1,\ldots,p\}$. If $\bar c$ is a tuple of parameters, then the $\Oh[\bar c]{\cdot}$ notation hides multiplicative factors that are bounded by a function of $\bar c$. In this paper it will always be the case that this function is computable.

In this work, we assume the standard word RAM model in which we operate on machine words of length $\Oh{\log n}$.
In particular, one can perform arbitrary arithmetic operations on words and pointers in constant time.
All identifiers (elements of the universe of relational structures, vertices of graphs and multigraphs, etc.) are assumed to fit into a~single machine word, allowing us to operate on them in constant time.

\paragraph*{Multigraphs.}
In this work, we consider undirected multigraphs.
A~\emph{multigraph} $G = (V, E)$ is a~graph that is allowed to contain multiple edges connecting the same pair of vertices, as well as arbitrarily many self-loops (edges connecting a~vertex with itself).
For a~graph $G$, we denote by $V(G)$ the set of vertices of $G$, and by $E(G)$ the set of edges.
We define the \emph{size} of the multigraph as $|G| \coloneqq |V(G)| + |E(G)|$.
The \emph{degree} of a~vertex $v$ is the number of different edges of $G$ incident to $v$, where self-loops on $v$ count twice.

A~subset $S \subseteq V(G)$ of vertices of $G$ is a~\emph{feedback vertex set} if $G - S$ is acyclic.
Here, we naturally assume that self-loops are cycles consisting of one vertex and one edge, and two different edges connecting the same pair of vertices form a~cycle consisting of two vertices and two edges.
Then, the \emph{feedback vertex number} of $G$, denoted $\fvs{G}$, is the minimum size of a~feedback vertex set in $G$.
Note that $\fvs{G} = 0$ if and only if $G$ is a~forest.

\paragraph*{Dynamic sets and dictionaries.}
In our algorithms, we will heavily rely on two standard data structures: \emph{dynamic sets} and \emph{dynamic dictionaries}.

A~dynamic set is a~fully dynamic data structure maintaining a~finite subset $S$ of some linearly ordered universe $(\Omega, \leq)$.
We can add or remove elements in $S$ dynamically, as well as query the existence of a~key in $S$, check the size of $S$, or pick any (say, the smallest) element in $S$.
Provided $\leq$ can be evaluated on any pair of keys in $\Omega$ in worst-case constant time, each of these operations can be performed in worst-case $\Oh{\log |S|}$ time using the standard implementations of balanced binary search trees, such as AVL trees or red-black trees.

More generally, a~dynamic dictionary is a~data structure maintaining a~finite set $M$ of key-value pairs $(k, v_k)$, where all keys are pairwise different and come from $(\Omega, \leq)$.
Again, one can add or remove key-value pairs in $M$, replace the mapping of a~key to a~different value, as well as query the value assigned to some key $k$ in $M$.
Given that $\leq$ can be evaluated in constant time and that the key-value pairs can be manipulated in memory in constant time, each of these operations can be implemented in worst-case $\Oh{\log |M|}$ time by a~standard extension of a~dynamic set.

\subsection{Relational structures and logic}

\label{ssec:preliminaries_logic}

\newcommand{\arity}{\mathrm{ar}}
\newcommand{\Af}{\mathbb{A}}
\newcommand{\Bf}{\mathbb{B}}
\newcommand{\Xf}{\mathbb{X}}
\newcommand{\Gf}{\mathbb{G}}
\newcommand{\Hf}{\mathbb{H}}
\newcommand{\tup}[1]{\bar{#1}}
\newcommand{\Infer}{\mathsf{Infer}}
\newcommand{\Smash}{\mathsf{Smash}}
\newcommand{\Sentences}{\mathsf{Sentences}}
\newcommand{\head}{\mathrm{head}}
\newcommand{\tail}{\mathrm{tail}}
\newcommand{\glue}{\mathsf{glue}}

\newcommand{\Xx}{{\cal X}}
\newcommand{\Yy}{{\cal Y}}

\paragraph*{Relational structures.} For convenience of notation, we shall work over relational structures over signatures of arity at most $2$. A {\em{binary signature}} $\Sigma$ is a set of predicates, where each predicate $R\in \Sigma$ has a prescribed arity $\arity(R)\in \{0,1,2\}$. A {\em{$\Sigma$-structure}} $\Af$ consists of a universe $V$ and, for every predicate $R\in \Sigma$, its {\em{interpretation}} $R^{\Af}\subseteq V^{\arity(R)}$. For a tuple $\tup a\in V^k$ and predicate $R\in \Sigma$ of arity $k$, we say that $R(\tup a)$ {\em{holds in $\Af$}} if $\tup a\in R^{\Af}$. Note that if $R$ is a nullary predicate, i.e. $\arity(R)=0$, then $|V^{\arity(R)}|=1$, hence $R^{\Af}$ is {\em{de facto}} a boolean flag expressing whether $R$ holds in $\Af$ or not. 

The universe of a structure $\Af$ will be denoted by $V(\Af)$, while the elements of this universe will be called {\em{vertices}}. Ordered pairs of vertices are called {\em{arcs}}, where the two components of a pair are called the {\em{tail}} and the {\em{head}}, respectively. This is in line with the graph-theoretic interpretation of structures over binary signatures as of vertex- and arc-colored directed graphs (supplied by boolean flags, aka nullary predicates). The {\em{Gaifman graph}} of $\Af$, denoted $G(\Af)$, is the graph on vertex set $V(\Af)$ where two distinct vertices are adjacent if and only if they together satisfy some predicate in $\Af$. Note that even if this pair of vertices satisfies multiple predicates, the edge is added only once to $G(\Af)$, which makes $G(\Af)$ always a simple and undirected graph.

All signatures and all structures used in this paper will be finite. We also assume that the universes of all the considered structures are subsets of $\N$, which we denote by $\Omega\coloneqq \N$ in this context for clarity.

\paragraph*{Augmented structures.} We say that a~relational structure $\Af$ is \emph{guarded} by an~undirected multigraph $H$ if $V(\Af) = V(H)$, and the Gaifman graph $G(\Af)$ of $\Af$ is a~subgraph of $H$; that is, if any two different elements $u, v \in V(\Af)$ are bound by a~relation in $\Af$, then $uv$ must be an~edge of $H$.
Then, an \emph{augmented structure} is a~pair $(\Af, H)$ consisting of a~structure $\Af$ and a~multigraph $H$ guarding $\Af$.

\paragraph*{Boundaried structures.} A {\em{boundaried structure}} is a structure $\Af$ supplied with a subset $\bnd \Af$ of the universe $V(\Af)$, called the {\em{boundary}} of $\Af$. We consider three natural operations on boundaried structures.

For each $d\in \Omega$ there is an operation $\forget_d(\cdot)$ that takes a boundaried structure $\Af$ with $d\in \bnd \Af$ and returns the structure $\forget_d(\Af)$ obtained from $\Af$ by removing $d$ from the boundary. That is, the structure itself remains intact, but $\bnd \forget_d(\Af)=\bnd \Af\setminus \{d\}$.
Note that this operation is applicable to $\Af$ only if $d\in \bnd \Af$. We will use the following shorthand: for a finite $D\subseteq \Omega$, $\forget_D$ is the composition of $\forget_d$ over all $d\in D$; note that the order does not matter.

Further, there is an operation $\oplus$, called {\em{join}}, which works as follows. Given two boundaried structures $\Af$ and $\Bf$ over the same signature $\Sigma$ such that $V(\Af)\cap V(\Bf)=\bnd \Af\cap \bnd \Bf$, their join $\Af\oplus \Bf$ is defined as the boundaried $\Sigma$-structure where:
\begin{itemize}
 \item $V(\Af\oplus \Bf)=V(\Af)\cup V(\Bf)$;
 \item $\bnd (\Af\oplus \Bf)=\bnd \Af\cup \bnd \Bf$; and
 \item $R^{\Af\oplus \Bf}=R^{\Af}\cup R^{\Bf}$ for each $R\in \Sigma$.
\end{itemize}
Note that the join operation is applicable only if $V(\Af)$ and $V(\Bf)$ intersect only at subsets of their boundaries. However, we allow $\Af$ and $\Bf$ to share vertices in their boundaries, which are effectively ``glued'' during performing the join; this is the key aspect of this definition.

Finally, for all finite $D,D'\subseteq \Omega$ and a surjection $\xi\colon D\to D'$ there is an operation $\glue_\xi(\cdot)$ that takes a boundaried structure $\Af$ with boundary $D$ and such that $D'\cap (V(\Af)\setminus D)=\emptyset$, and returns the structure $\glue_\xi(\Af)$ that is obtained from $\Af$ as follows:
\begin{itemize}
 \item The universe of $\glue_\xi(\Af)$ is $(V(\Af)\setminus D)\cup D'$.
 \item Every relation in $\glue_\xi(\Af)$ is obtained from the corresponding relation in $\Af$ by replacing every occurrence of any $d\in D$ with $\xi(d)$. 
\end{itemize}
Note that we do not require $\xi$ to be injective, in particular it can ``glue'' two different elements $d_1,d_2\in D$ into a single element $\xi(d_1)=\xi(d_2)\in D'$. This will be the primary usage of the operation, hence the name.

\paragraph*{Boundaried multigraphs.} Analogously, a~\emph{boundaried multigraph} is a~multigraph $H$, together with a~subset $\bnd H$ of $V(H)$, called the \emph{boundary} of $H$.
The operations defined for boundaried structures: $\forget_d(\cdot)$, $\oplus$, and $\glue_\xi$ translate naturally to boundaried multigraphs.

\paragraph*{Logic.} Let $\Sigma$ be a binary signature. The \emph{Monadic Second-Order logic over $\Sigma$ with modular counting predicates} ($\msotwo$ over $\Sigma$) is a logic where there are variables for individual vertices, individual arcs, sets of vertices, and sets of arcs. Variables of the first two kinds are called {\em{individual}} and of the latter two kinds are called {\em{monadic}}. Atomic formulas are the following:
\begin{itemize}
 \item Equality for every kind of variables.
 \item Membership checks of the form $x\in X$, where $x$ is an individual variable and $X$ is a monadic variable.
 \item Checks of the form $\head(x,f)$ and $\tail(x,f)$, where $x$ is an individual vertex variable and $f$ is an individual arc variables.
 \item For each $R\in \Sigma$, relation checks for $R$ of the form depending on the arity of $R$: $R$ if $\arity(R)=0$, $R(x)$ where $x$ is an individual vertex variable if $\arity(R)=1$, and $R(f)$ where $f$ is an individual arc variable if $\arity(R)=2$. 
 \item Modular counting checks of the form $|X|\equiv a\bmod p$, where $X$ is a monadic variable and $a,p$ are integers with $p\neq 0$.
\end{itemize}
The semantics of the above are standard. These atomic formulas can be combined into larger formulas using standard boolean connectives, negation, and quantification  over individual vertices, individual arcs, subsets of vertices, and subsets of arcs, each introducing a new variable of the corresponding kind. However, we require that quantification over subsets of arcs is guarded by the union of binary predicates from $\Sigma$. Precisely, if by $\bigvee \Sigma^{(2)}$ we denote the union of all binary predicates in $\Sigma$, then 
\begin{itemize}
 \item quantification over individual arcs takes the form $\exists_{f\in \bigvee \Sigma^{(2)}}$ or $\forall_{f\in \bigvee \Sigma^{(2)}}$; and
 \item quantification over arc subsets takes the form $\exists_{F\subseteq \bigvee \Sigma^{(2)}}$ or $\forall_{F\subseteq \bigvee \Sigma^{(2)}}$.
\end{itemize}
Note that thus, every arc that is quantified or belongs to a quantified set of arcs is present (in its undirected form) in the Gaifman graph of the structure. Again, the semantics of quantification is standard. 

As usual, formulas with no free variables will be called {\em{sentences}}. Satisfaction of a sentence $\varphi$ in a $\Sigma$-structure $\Af$ is defined as usual and denoted $\Af\models \varphi$. This notation is extended to satisfaction of formulas with provided evaluation of free variables in the usual manner.

For a finite set $D\subseteq \Omega$, we define $\msotwo$ formulas over signature $\Sigma$ and {\em{boundary}} $D$ as $\msotwo$ formulas over $\Sigma$ that can additionally use the elements of $D$ as constants, that is, every element $d\in D$ can be freely used in atomic formulas. Such formulas will always be considered over boundaried structures where $D$ is the boundary, hence in particular each $d\in D$ will be always present in the structure.
The set of all $\msotwo$ sentences over signature $\Sigma$ is called $\msotwo[\Sigma]$, and $\msotwo[\Sigma,D]$ if a boundary $D\subseteq \Omega$ is also taken into account.

For a formula $\varphi$, the {\em{rank}} of $\varphi$ is equal to the maximum of the following two quantities:
\begin{itemize}
 \item the maximum nesting depth of quantifiers in $\varphi$; and
 \item the maximum among all the moduli in all the modular counting checks in $\varphi$.
\end{itemize}


\paragraph*{Types.} The following lemma is standard, see e.g.~\cite[Exercise~6.11]{Immerman99}.

\begin{lemma}\label{lem:formulas-finite}
 For a given binary signature $\Sigma$, $D\subseteq \Omega$, and $q\in \N$, there is a finite set $\Sentences^{q,\Sigma}(D)\subseteq \msotwo[\Sigma,D]$ consisting of sentences of rank at most $q$ such the following holds: for every sentence $\varphi\in \msotwo[\Sigma,D]$ of rank at most $q$ there exists $\varphi'\in \Sentences^{q,\Sigma}(D)$ such that
 $$\Af\models \varphi\quad\Leftrightarrow\quad \Af\models \varphi'\qquad\textrm{for every boundaried }\Sigma\textrm{-structure with }\bnd \Af=D.$$
 Moreover, $\Sentences^{q,\Sigma}(D)$ can be computed for given $\Sigma$, $D$, and $q$. Also, given a sentence $\varphi\in \msotwo[\Sigma,D]$ of rank at most $q$, the formula $\varphi'\in \Sentences^{q,\Sigma}(D)$ satisfying the above can be also computed.
\end{lemma}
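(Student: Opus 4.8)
The plan is to prove, by induction on the rank $q$, a slightly more general statement than what is claimed, and then specialize. Namely: for every finite binary signature $\Sigma$, every finite $D\subseteq\Omega$, every $q\in\N$, and every finite set $\bar x$ of variables of the four kinds (arc variables being understood as guarded by $\bigvee\Sigma^{(2)}$), there is a computable finite set of $\msotwo$ formulas over $(\Sigma,D)$ of rank at most $q$ with free variables among $\bar x$ such that every such formula is logically equivalent — over all boundaried $\Sigma$-structures with boundary $D$ and all valuations of $\bar x$ — to a member of the set, and moreover an equivalent member can be computed from the input formula. Taking $\bar x=\emptyset$ yields $\Sentences^{q,\Sigma}(D)$ and proves the lemma. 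The generalization to free variables is what makes the induction go through, since quantification introduces a fresh free variable under the quantifier.

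For the base case $q=0$, a rank-$0$ formula contains no quantifiers and no modular counting checks (the latter contribute rank at least $1$, since the modulus is nonzero), so it is a Boolean combination of atomic formulas. Over the fixed data $(\Sigma,D,\bar x)$ there are only finitely many atomic formulas — equalities, membership checks $x\in X$, the incidence checks $\head$ and $\tail$, and the relation checks for $R\in\Sigma$ — each using only variables from $\bar x$ and constants from $D$. Hence, up to propositional equivalence there are only finitely many Boolean combinations of these atoms; let the set consist of one representative per propositional equivalence class, for instance all disjunctive normal form formulas over these atoms. This set is clearly computable, and a representative equivalent to a given formula is obtained by propositional normalization.

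For the inductive step from $q$ to $q+1$, observe that any formula $\varphi$ over $(\Sigma,D,\bar x)$ of rank at most $q+1$ is a Boolean combination of \emph{generalized atoms}: (i) atomic formulas; (ii) modular counting checks $|X|\equiv a\bmod p$ with $1\le p\le q+1$, $0\le a<p$, and $X\in\bar x$; and (iii) formulas $Q\,y.\,\psi$ where $Q\in\{\exists,\forall\}$, the bound variable $y$ is renamed to a fixed canonical fresh name for its sort (so only four choices of $y$ arise, with arc quantification guarded by $\bigvee\Sigma^{(2)}$), and $\psi$ has rank at most $q$ and free variables among $\bar x\cup\{y\}$. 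Applying the inductive hypothesis to $(\Sigma,D,\bar x\cup\{y\})$, each such $\psi$ may be effectively replaced by an equivalent member of the corresponding finite computable set, so each generalized atom of type (iii) may be drawn from a finite computable family. Together with the finitely many generalized atoms of types (i) and (ii), this yields a finite computable set $\mathcal G$ of generalized atoms such that $\varphi$ is equivalent to a Boolean combination of members of $\mathcal G$, this combination being obtained effectively from $\varphi$ by recursively normalizing the parts under quantifiers. As before, up to propositional equivalence there are finitely many Boolean combinations of members of $\mathcal G$; we take one representative per class (say, DNF formulas over $\mathcal G$), all of rank at most $q+1$, and the representative equivalent to $\varphi$ is computed by propositional normalization after the recursive rewriting. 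This closes the induction.

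The only point requiring genuine care is the formulation of the generalized induction hypothesis — in particular, the remark that a single canonical fresh bound-variable name per sort suffices, so that clause (iii) really ranges over finitely many options — together with the observation that it is enough to canonicalize \emph{propositionally} rather than semantically: this is sufficient for producing an equivalent member of the set, it keeps everything effective (propositional equivalence is decidable), and the lemma does not demand that $\Sentences^{q,\Sigma}(D)$ be free of semantically redundant entries. The modular counting predicates, the two-sorted vertex/arc structure, and the guardedness of arc quantification merely add finitely many further atomic cases and do not affect finiteness or computability.
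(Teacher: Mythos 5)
The paper does not actually prove this lemma --- it is invoked as a standard fact with a pointer to Immerman (Exercise~6.11) --- so your proposal is supplying the folklore argument, and its overall shape (strengthen to formulas with a fixed finite set of free variables, induct, canonical renaming of bound variables, one syntactic representative per propositional equivalence class over finitely many ``generalized atoms'') is exactly the standard route and is the right one. Two cosmetic points: ``all disjunctive normal form formulas over these atoms'' is an infinite set as written, so you should fix a single canonical representative per propositional class (e.g.\ the full DNF over the fixed, ordered atom set); and the canonical fresh bound-variable name must be chosen relative to the current free-variable set $\bar x$ (least unused index per sort), which is what you implicitly do.

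There is, however, one step that fails as literally stated, because of how this paper defines rank: the rank of $\varphi$ is the \emph{maximum} of the quantifier nesting depth and of all moduli occurring in $\varphi$. In your inductive step you decompose a formula of rank at most $q+1$ into a Boolean combination of atoms, modular checks with modulus at most $q+1$, and formulas $Qy.\psi$ ``where $\psi$ has rank at most $q$''. That last clause is false in general: for $\varphi=\exists y\,\bigl(y\in X\wedge |X|\equiv 0 \bmod (q+1)\bigr)$ the subformula under the quantifier still contains a modulus $q+1$ and hence has rank $q+1$, so the inductive hypothesis for rank $q$ does not apply to it. Relatedly, your base case excludes modular checks altogether, which is correct for rank $0$ but leaves them nowhere to live once they are buried under quantifiers. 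The repair is routine but should be made explicit: fix the target rank $q$ once and for all, and induct on quantifier depth $d\le q$ for formulas all of whose moduli are at most $q$; modular counting checks $|X|\equiv a\bmod p$ with $p\le q$, $X\in\bar x$, are then admitted as atoms already at depth $0$ and at every later stage, and the quantifier case only needs the depth to drop. With that change the decomposition claim is true, the induction closes, and specializing to $\bar x=\emptyset$ and depth $q$ yields $\Sentences^{q,\Sigma}(D)$ with the finiteness, computability, and effective-translation properties required by the lemma.
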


We will henceforth use the sets $\Sentences^{q,\Sigma}(D)$ provided by Lemma~\ref{lem:formulas-finite} in the notation. As every sentence $\varphi\in \msotwo[\Sigma,D]$ of rank $q$ can be algorithmically translated to an equivalent sentence belonging to $\Sentences^{q,\Sigma}(D)$ in the sequel we may implicitly assume that all considered sentences belong to the corresponding sets $\Sentences^{q,\Sigma}(D)$.

We also define 
$$\Types^{q,\Sigma}(D)\coloneqq 2^{\Sentences^{q,\Sigma}(D)}$$
as the powerset of $\Sentences^{q,\Sigma}(D)$. The next definition is critical.

\begin{definition}
 Let $\Sigma$ be a binary signature, $\Af$ be a $\Sigma$-structure, $D \subseteq \Omega$ and $q\in \N$. Then the {\em{type}} of rank $q$ of $\Af$ is defined as the set of all sentences from $\Sentences^{q,\Sigma}(D)$ satisfied in $\Af$:
 $$\tp^q(\Af)\coloneqq \{\varphi\in \Sentences^{q,\Sigma}(D)~|~\Af\models \varphi\}\in \Types^{q,\Sigma}(D).$$
\end{definition}

The following lemma describes the compositionality of types with respect to the operations on boundaried structures. The proof is a standard application of Ehrenfeucht-Fra\"isse games and is omitted; see e.g.~\cite{GroheK09,Makowsky04}.

\begin{lemma}\label{lem:compositionality}
 Fix a binary signature $\Sigma$ and $q\in \N$.
 \begin{itemize}
  \item For all finite $D\subseteq \Omega$ and $d\in D$, there exists a computable function $\forget^{q,\Sigma}_{d,D}\colon \Types^{q,\Sigma}(D)\to \Types^{q,\Sigma}(D\setminus \{d\})$ such that
  $$\forget^{q,\Sigma}_{d,D}(\tp^q(\Af))=\tp^q(\forget_d(\Af))$$
  for every boundaried $\Sigma$-structure $\Af$ with $\bnd \Af=D$.
  \item For all finite $C,D\subseteq \Omega$, there exists a computable function $\oplus^{q,\Sigma}_{C,D}\colon \Types^{q,\Sigma}(C)\times \Types^{q,\Sigma}(D)\to \Types^{q,\Sigma}(C\cup D)$ such that
  $$\tp^q(\Af)\oplus^{q,\Sigma}_{C,D} \tp^q(\Bf)=\tp^q(\Af\oplus \Bf)$$
  for all boundaried $\Sigma$-structures $\Af$ and $\Bf$ with $\bnd \Af=C$ and $\bnd \Bf=D$.
  \item For all finite $D,D'\subseteq \Omega$ and a surjective function $\xi\colon D\to D'$, there exists a computable function $\glue_\xi^{q,\Sigma}\colon \Types^{q,\Sigma}(D)\to \Types^{q,\Sigma}(D')$ such that
  $$\glue^{q,\Sigma}_{\xi}(\tp^q(\Af))=\tp^q(\glue_\xi(\Af))$$
  for every boundaried $\Sigma$-structure $\Af$ with $\bnd \Af=D$.
 \end{itemize}
\end{lemma}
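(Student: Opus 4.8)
The plan is to reduce all three clauses to the single fact, recorded in Lemma~\ref{lem:formulas-finite}, that a rank-$q$ type is a finite object encoding exactly the rank-$q$ theory of a boundaried structure; the $\forget$ and $\glue$ clauses then become routine syntactic rewriting of sentences, and only the $\oplus$ clause carries the genuine Ehrenfeucht-Fra\"isse content. For $\forget_d$ I would observe that every sentence of rank at most $q$ over $\Sigma$ and boundary $D\setminus\{d\}$ is in particular such a sentence over boundary $D$, and that $\forget_d(\Af)$ has the same universe and the same interpretation of every predicate as $\Af$; hence $\forget_d(\Af)\models\psi$ iff $\Af\models\psi$ for every such $\psi$. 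Passing between $\Sentences^{q,\Sigma}(D\setminus\{d\})$ and $\Sentences^{q,\Sigma}(D)$ through the canonical equivalents supplied by Lemma~\ref{lem:formulas-finite}, the type $\tp^q(\forget_d(\Af))$ is a fixed, computable ``restriction'' of $\tp^q(\Af)$, independent of $\Af$, and this prescription is exactly $\forget^{q,\Sigma}_{d,D}$.

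For $\glue_\xi$ I would still stay syntactic, but build, for each $\psi\in\Sentences^{q,\Sigma}(D')$, a sentence $\psi^\xi$ of rank at most $q$ over $\Sigma$ and boundary $D$ with $\glue_\xi(\Af)\models\psi$ iff $\Af\models\psi^\xi$, by relativizing along $\xi$. Since the universe of $\glue_\xi(\Af)$ is $(V(\Af)\setminus D)\cup D'$, I would split each individual quantifier into the part ranging over the interior $V(\Af)\setminus D$ --- definable as being distinct from every constant of $D$, and on which the predicates of $\glue_\xi(\Af)$ and of $\Af$ coincide --- plus the finitely many cases in which the variable equals one of the fixed constants $d'\in D'$; predicate checks on such a constant expand into disjunctions over its preimages in $\xi^{-1}(d')$, since each relation of $\glue_\xi(\Af)$ is the $\xi$-image of the corresponding relation of $\Af$. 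A monadic variable is pulled back to a $\xi$-saturated subset of $V(\Af)$ (enforced by conjoining, over all pairs $d_1,d_2$ with $\xi(d_1)=\xi(d_2)$, the equivalence of membership of $d_1$ and of $d_2$), and a check $|X|\equiv a \bmod m$ becomes a finite disjunction over the membership pattern of $X$ on the fixed set $D'$, each disjunct shifting $a$ by the constant determined by that pattern; none of this increases the rank beyond $q$. Then $\glue_\xi^{q,\Sigma}(\tp^q(\Af))$ collects exactly those $\psi$ for which the canonical form of $\psi^\xi$ lies in $\tp^q(\Af)$, and the translation $\psi\mapsto\psi^\xi$ is computable.

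For $\oplus$ I would use Ehrenfeucht-Fra\"isse games for $\msotwo$ with modular counting. First recall the standard equivalence: two boundaried $\Sigma$-structures with the same boundary $E$ have equal rank-$q$ type iff Duplicator wins the $q$-round game on them with all of $E$ pre-pebbled identically; this game is finite, so equality of types is decidable. It therefore suffices to prove that $\tp^q(\Af\oplus\Bf)$ is determined by the pair $(\tp^q(\Af),\tp^q(\Bf))$, after which $\oplus^{q,\Sigma}_{C,D}$ is obtained by tabulating, over the finitely many realizable pairs of types, the type of the join of one chosen pair of witnesses (and set arbitrarily on non-realizable pairs). To prove the determination, assume $\tp^q(\Af)=\tp^q(\Af')$ and $\tp^q(\Bf)=\tp^q(\Bf')$, fix Duplicator strategies $\sigma_A$ and $\sigma_B$ witnessing these, and let Duplicator play on $\Af\oplus\Bf$ versus $\Af'\oplus\Bf'$ by running the two shadow games in parallel: an element move in $V(\Af)$ (resp.\ $V(\Bf)$) is forwarded to the $A$-game (resp.\ $B$-game), one landing in the overlap $C\cap D$ is forwarded to both, and a set move $X$ is forwarded as $X\cap V(\Af)$ and $X\cap V(\Bf)$ with the two replies taken in union. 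These replies are consistent on $C\cap D$ because $C\cap D$ consists of pinned boundary vertices, which any winning strategy copies verbatim. One then checks that the final position is a partial isomorphism $\Af\oplus\Bf\to\Af'\oplus\Bf'$: for nullary and unary predicates and for set membership this is inherited from the two sub-positions; for binary predicates it follows from $R^{\Af\oplus\Bf}=R^{\Af}\cup R^{\Bf}$ together with the gluing condition $V(\Af)\cap V(\Bf)=C\cap D$, which rules out any binary tuple having one endpoint in $V(\Af)\setminus D$ and the other in $V(\Bf)\setminus C$; and each modular check survives because $|X|=|X\cap V(\Af)|+|X\cap V(\Bf)|-|X\cap C\cap D|$, the first two terms being preserved modulo every $m\le q$ by $\sigma_A$ and $\sigma_B$ and the last copied exactly. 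Computability is clear throughout: finite syntactic rewriting plus Lemma~\ref{lem:formulas-finite} for the first two clauses, and a finite table together with decidability of the game for the third.

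The step I expect to be the real obstacle is the $\oplus$ clause, and inside it the bookkeeping at the boundary overlap $C\cap D$: one must make sure the two Duplicator strategies stay compatible there (which rests precisely on pinned vertices being copied move-for-move), that no ``cross'' binary tuple can appear, and --- the fussiest point --- that the modular-counting end-condition composes correctly through the inclusion--exclusion identity for $|X|$. The $\forget$ and $\glue$ clauses are essentially routine; their only recurring nuisance is the repeated need to mediate, via Lemma~\ref{lem:formulas-finite}, between the concrete syntactic sets $\Sentences^{q,\Sigma}(\cdot)$ and the semantic types built from them.
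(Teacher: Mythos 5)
The paper gives no proof of Lemma~\ref{lem:compositionality}: it is declared ``a standard application of Ehrenfeucht--Fra\"isse games'' and delegated to \cite{GroheK09,Makowsky04}. Your proposal fills it in using exactly that toolkit, and the structure is the right one: a purely semantic/syntactic restriction for $\forget$, a syntactic relativization for $\glue$, and a genuine EF-game product argument for $\oplus$. The $\forget$ clause and the $\oplus$ clause are correct as you describe them --- in $\oplus$ you have the right invariant (boundary pinning forces agreement on $C\cap D$, which makes $Y_A\cap Y_B = X\cap C\cap D$, so the inclusion--exclusion identity transfers), and you are right that no binary tuple crosses because $R^{\Af\oplus\Bf}=R^{\Af}\cup R^{\Bf}$.

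There is, however, a real gap in your $\glue_\xi$ argument that you do not flag: you only discuss pulling back \emph{vertex}-set variables, but $\msotwo$ also has arc-set variables, and when $\xi$ is not injective the map $\hat\xi$ can \emph{merge arcs}. Concretely, with $D=\{d_1,d_2\}$, $D'=\{d'\}$, $\xi(d_1)=\xi(d_2)=d'$, the arcs $(d_1,v)$ and $(d_2,v)$ of $\Af$ both map to the single arc $(d',v)$ of $\glue_\xi(\Af)$. So if $F$ is an arc set of $\glue_\xi(\Af)$ and $F'$ is its $\hat\xi$-saturated preimage, $|F'|$ exceeds $|F|$ by a ``merge count'' that depends on $F$ and on which pairs $(d_1,v),(d_2,v)$ are simultaneously arcs --- a quantity ranging over the (unbounded) interior $V(\Af)\setminus D$. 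Your prescription ``shift $a$ by the constant determined by the membership pattern on $D'$'' handles the bounded discrepancy coming from $D'$ itself, but silently does not handle this unbounded one. Expressing the merge count modulo $q'$ inside $\psi^\xi$ requires an auxiliary monadic vertex quantifier (to isolate the set of $v$ whose merge class lies in $F'$) and two more quantifier levels to test arc membership, so the naive translation does not obviously stay within rank $q$. The claim is still true, but the argument needs an additional step: one must observe that ``the merge count of $F'$ modulo $q'$'' is a formula of quantifier depth $3$ (and modulus $\le q$), hence determined by the rank-$(q-1)$ type of $(\Af,F')$ with moduli up to $q$ --- which \emph{is} preserved by Duplicator's reply in the sub-game --- and then correct $|F|$ by that invariant. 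As written, your sketch does not give this, and it is precisely the sort of ``fussy modular-counting point'' you correctly identify as the danger zone in the $\oplus$ case but then fail to carry through in the $\glue$ case.

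Two smaller remarks. First, an individual arc quantifier under $\glue$ (an arc with one endpoint in $D'$) may have several preimage arcs; your ``disjunction over preimages'' is the right fix, but you should say this applies to arc quantifiers too, not just to predicate checks at a constant. Second, in the $\oplus$ clause, for $q=1$ the argument that specific arcs on $(C\cap D)^2$ are handled consistently (which underlies $F^*_A\cap F^*_B$ being determined) uses a rank-$1$ check inside the remaining $q-1$ rounds; for $q=1$ this vanishes, so strictly you should either treat $q\le 1$ separately or note these degenerate cases are trivial.
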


We note that since the join operation $\oplus$ on boundaried structures is associative and commutative, the join operation $\oplus^{q,\Sigma}_{C,D}$ on types is also associative and commutative whenever $C=D$. 

We will also use the idempotence of the join operation on types, which is encapsulated in the following lemma. Again, the proof is a standard application of Ehrenfeucht-Fra\"isse games and is omitted.

\begin{lemma}
  \label{lem:idempotence_of_types}
 Let $\Sigma$ be a binary signature, $D\subseteq \Omega$, and $q\in \N$. Then there exists $m\in \N$, computable from $\Sigma$, $|D|$, and $q$, such that the following holds: for all $a,b\in \N$ such that $a,b \geq m$ and $a\equiv b\bmod m$, and every type $\alpha\in \Types^q(D)$, we have
 $$\underbrace{\alpha \oplus^{q,\Sigma}_{D,D} \alpha  \oplus^{q,\Sigma}_{D,D} \cdots \oplus^{q,\Sigma}_{D,D} \alpha}_{a\textrm{ times}}=\underbrace{\alpha \oplus^{q,\Sigma}_{D,D} \alpha  \oplus^{q,\Sigma}_{D,D} \cdots \oplus^{q,\Sigma}_{D,D} \alpha}_{b\textrm{ times}}.$$
\end{lemma}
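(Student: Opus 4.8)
\textbf{Proof plan for Lemma~\ref{lem:idempotence_of_types}.}
The plan is to exhibit $m$ via a pigeonhole argument on iterated joins of a single type, and then to make the choice of $m$ uniform over all types by taking it divisible by every possible ``period length''. Fix $\Sigma$, $D$, $q$, write $\oplus$ for $\oplus^{q,\Sigma}_{D,D}$, and let $N \coloneqq |\Types^{q,\Sigma}(D)|$; by Lemma~\ref{lem:formulas-finite} the set $\Sentences^{q,\Sigma}(D)$, hence $\Types^{q,\Sigma}(D) = 2^{\Sentences^{q,\Sigma}(D)}$, is finite and computable from $\Sigma$, $|D|$, $q$, so $N$ is a computable function of these parameters (its value depends on $D$ only through $|D|$, up to relabelling of constants). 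Using the associativity of $\oplus$ noted just before the lemma, for every $\alpha \in \Types^{q,\Sigma}(D)$ and every $n \geq 1$ the $n$-fold join $\alpha^{(n)} \coloneqq \alpha \oplus \alpha \oplus \cdots \oplus \alpha$ ($n$ copies) is well-defined, and $\alpha^{(n+k)} = \alpha^{(n)} \oplus \alpha^{(k)}$ for all $n,k \geq 1$.

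First I would establish eventual periodicity of the sequence $(\alpha^{(n)})_{n \geq 1}$ for a fixed $\alpha$. Since this sequence takes values in a set of size $N$, among $\alpha^{(1)}, \ldots, \alpha^{(N+1)}$ two entries coincide, say $\alpha^{(s)} = \alpha^{(t)}$ with $1 \leq s < t \leq N+1$; put $r_\alpha \coloneqq t - s \leq N$. Then for every $k \geq 0$ we get $\alpha^{(s+k)} = \alpha^{(s)} \oplus \alpha^{(k)}$-style expansion and, substituting $\alpha^{(s)} = \alpha^{(t)}$, $\alpha^{(s+k)} = \alpha^{(t+k)}$; hence $\alpha^{(n)} = \alpha^{(n + r_\alpha)}$ for all $n \geq s$, and by iteration $\alpha^{(n)} = \alpha^{(n + j r_\alpha)}$ for all $j \geq 0$ and $n \geq s$. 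In particular, if $a, b \geq s$ and $a \equiv b \pmod{r_\alpha}$, then (reducing the larger of $a,b$ towards the smaller by multiples of $r_\alpha$, staying $\geq s$ throughout) $\alpha^{(a)} = \alpha^{(b)}$.

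It remains to choose $m$ uniformly. Set $m \coloneqq \mathrm{lcm}(1, 2, \ldots, N)$, which is computable from $N$ and satisfies $m \geq N$. Let $a, b \geq m$ with $a \equiv b \pmod m$, and fix any $\alpha$. Since $1 \leq r_\alpha \leq N$, we have $r_\alpha \mid m$, so $a \equiv b \pmod{r_\alpha}$; and since $a, b \geq m \geq N \geq s$ (recall $s \leq N$), both indices lie in the periodic regime of $(\alpha^{(n)})_n$. By the previous paragraph, $\alpha^{(a)} = \alpha^{(b)}$, which is exactly the claimed equality of $a$-fold and $b$-fold joins of $\alpha$.

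The argument is essentially self-contained; the only point requiring a little care is that $m$ must serve \emph{all} types at once, which is handled by making $m$ a common multiple of $1, \ldots, N$ (covering every period $r_\alpha$) and simultaneously at least $N$ (covering every pre-period, which is at most $s-1 < N$). No genuine obstacle is expected beyond keeping these two requirements straight.
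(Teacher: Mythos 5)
Your proof is correct, and it takes a genuinely different route from the one the paper gestures at. The paper declines to give a proof, saying only that it "is a standard application of Ehrenfeucht-Fra\"isse games"; the canonical EF-game argument would exhibit a Duplicator strategy directly between an $a$-fold and a $b$-fold join of a structure realizing $\alpha$, pairing copies in a way that is only possible when $a,b$ are large and congruent modulo some threshold. You instead sidestep EF games entirely and work purely in the finite commutative semigroup $(\Types^{q,\Sigma}(D),\oplus^{q,\Sigma}_{D,D})$: pigeonhole among $\alpha^{(1)},\dots,\alpha^{(N+1)}$ gives an eventual period $r_\alpha\leq N$ with pre-period $<N$, and taking $m=\operatorname{lcm}(1,\dots,N)$ makes a single choice of $m$ dominate both the period and the pre-period simultaneously for every $\alpha$. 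The bookkeeping is all in order: associativity of $\oplus^{q,\Sigma}_{D,D}$ (noted in the paper just before the lemma) justifies the one-variable iteration $\alpha^{(n+k)}=\alpha^{(n)}\oplus\alpha^{(k)}$; $m\geq N\geq s$ guards the pre-periodic prefix; $r_\alpha\mid m$ converts the $\bmod m$ hypothesis into $\bmod r_\alpha$; and $N$ is computable from $\Sigma,|D|,q$ via Lemma~\ref{lem:formulas-finite}, with the (correctly observed) remark that it depends on $D$ only through $|D|$ up to relabelling, which is needed to match the lemma's computability clause. Your approach buys a fully self-contained, elementary argument with an explicit bound $m=\operatorname{lcm}(1,\dots,N)$, at the cost of this $m$ being rather enormous compared to what a direct game-theoretic argument might give; since the paper only needs computability, not tightness, this is an unqualified win in simplicity.
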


\paragraph*{Canonization of types.} Note that formally the sets $\Sentences^{q,\Sigma}(D)$ are different for different $D\subseteq \Omega$, but whenever $D,D'\subseteq \Omega$ have the same cardinality and $\pi\colon D\to D'$ is a bijection, then $\pi$ also induces also a bijection from $\Sentences^{q,\Sigma}(D)$ to $\Sentences^{q,\Sigma}(D')$ that replaces every occurrence of any $d\in D$ with $\pi(d)$. Recalling that $\Omega=\N$, for every $D\subseteq \Omega$ we let $\iota_D$ be the unique order-preserving bijection from $D$ to $[|D|]$. Thus, $\iota_D$ induces a bijection from $\Sentences^{q,\Sigma}(D)$ to $\Sentences^{q,\Sigma}([|D|])$, which we will also denote by $\iota_D$. The reader may think that if $\varphi\in \Sentences^{q,\Sigma}(D)$, then $\iota_D(\varphi)$ is a ``canonical variant'' of $\varphi$, where the elements of $D$ are reindexed with numbers in $\{1,\ldots,|D|\}$ in an order-preserving way. Note that thus, whenever $|D|=|D'|$, $\iota_{D'}^{-1} \circ \iota_D$ is a bijection from $\Sentences^{q,\Sigma}(D)$ to $\Sentences^{q,\Sigma}(D')$.

As $\iota_D$ acts on the elements of $\Sentences^{q,\Sigma}(D)$, it also naturally acts on their subsets. Hence $\iota_D$ induces a bijection from $\Types^{q,\Sigma}(D)$ to $\Types^{q,\Sigma}(D')$ in the expected way, and we will denote this bijection also as $\iota_D$. Again, for $\alpha\in \Types^{q,\Sigma}(D)$, $\iota_D(\alpha)$ can be regarded as the ``canonical variant'' of $\alpha$.

\paragraph*{Ensembles.} In our reasonings we will often work with decompositions of large structures into smaller, simpler substructures. Such decompositions will be captured by the notion of an {\em{ensemble}}, which we introduce now.

For a binary signature $\Sigma$, a {\em{$\Sigma$-ensemble}} is a finite set $\Xx$ of boundaried $\Sigma$-structures, each with a boundary of size at most $2$. Moreover, we require that the elements of an ensemble $\Xx$ are pairwise joinable, that is, for all $\Gf,\Hf\in \Xx$ we have $V(\Gf)\cap V(\Hf)=\bnd \Gf\cap \bnd \Hf$; equivalently, the sets $V(\Gf)\setminus \bnd \Gf$ for $\Gf\in \Xx$ are pairwise disjoint.
The {\em{smash}} of a $\Sigma$-ensemble $\Xx$ is the $\Sigma$-structure
$$\Smash(\Xx)\coloneqq \forget_{\bigcup_{\Gf\in \Xx} \bnd \Gf}\left(\bigoplus_{\Gf\in \Xx} \Gf\right).$$
Intuitively, $\Smash(\Xx)$ is the structure which is decomposed into the ensemble $\Xx$.

\paragraph*{Replacement Lemma.} We now formulate a logical statement, dubbed {\em{Replacement Lemma}}, that will be crucially used in our data structure. Its intuitive meaning is the following: If we partition a structure $\Af$ into several boundaried structures, each with boundary of size $2$, and we replace each of them with a single arc labeled with its type, then the replacement preserves the type of $\Af$. Here, if we want to preserve the type of rank $q$, the labels of arcs should encode types of rank $p$, where $p$ is sufficiently high depending on $q$.

For $p\in \N$, we define a new signature $\Gamma^p \coloneqq (\Gamma^p)^{(0)} \cup (\Gamma^p)^{(1)} \cup (\Gamma^p)^{(2)}$, where:
\begin{equation*}
\begin{split}
(\Gamma^p)^{(0)} & \coloneqq \Types^{p,\Sigma}(\emptyset), \\
(\Gamma^p)^{(1)} & \coloneqq \Types^{p,\Sigma}(\{1\}), \\
(\Gamma^p)^{(2)} & \coloneqq \Types^{p,\Sigma}(\{1, 2\}).
\end{split}
\end{equation*}
It is apparent that $\Gamma^p$ is finite and computable from $p$ and $\Sigma$.
Now, the {\em{rank-$p$ contraction}} of a $\Sigma$-ensemble $\Xx$ is the $\Gamma^p$-structure $\Contract^p(\Xx)$ defined as follows:
\begin{itemize}
 \item The universe of $\Contract^p(\Xx)$ is $D\coloneqq \bigcup_{\Gf\in \Xx} \bnd \Gf$.
 \item For every $i\in \{0,1,2\}$ and $\alpha\in \Types^{p,\Sigma}([i])$, the interpretation of $\alpha$ in $\Contract^p(\Xx)$ consists of all tuples $\tup a\in D^i$ such that:
 \begin{itemize}
 \item $\tup a$ is ordered by $\leq$ and its elements are pairwise different;
 \item there exists at least one $\Gf\in \Xx$ such that $\bnd \Gf$ is equal to the set of entries of $\tup a$; and
 \item the rank-$p$ type of the join of all the $\Gf\in \Xx$ as above is equal to $\iota_{\tup a}^{-1}(\alpha)$.
\end{itemize}
\end{itemize}
The Replacement Lemma then reads as follows.

\begin{lemma}[Replacement Lemma]
\label{lem:replacement_lemma}
 Let $\Sigma$ be a binary signature and $q\in \N$. Then there exists $p\in \N$ and a function $\Infer\colon \Types^{p,\Gamma^p}\to \Types^{q,\Sigma}$ such that for any $\Sigma$-ensemble $\Xx$,
 $$\tp^q(\Smash(\Xx))=\Infer\left(\tp^p(\Contract^p(\Xx))\right).$$
 Moreover, $p$ and $\Infer$ are computable from $\Sigma$ and $q$.
\end{lemma}

The proof is an elaborate application of Ehrenfeucht-Fra\"isse games. We give it in Appendix~\ref{sec:replacement-proof}.

\subsection{Top trees}
\label{ssec:top_trees}
\newcommand{\Strip}{\mathsf{Strip}}
\newcommand{\AddRel}{\mathsf{addRel}}
\newcommand{\DelRel}{\mathsf{delRel}}

We now focus our attention on simple undirected boundaried graphs, which can be seen as binary relational boundaried structures $G = (V, E)$ equipped with one symmetric binary relation $E$ without self-loops, and no unary or nullary relations.
As above, assume that labels of the vertices are integers; that is, $V(G) \subseteq \Omega = \N$.

Recall that a~graph is a~\emph{forest} if it contains no cycles.
The connected components of forests are called \emph{trees}.
Fix a~tree $T$, and designate a~boundary $\bnd T$ consisting of at most two vertices of $T$.
The elements of $\bnd T$ will be called \emph{external boundary vertices}.
A~boundaried connected graph $(C, \bnd C)$ is a~\emph{cluster} of $(T, \bnd T)$ if:
\begin{itemize}
  \item $C$ is a~connected induced subgraph of $T$ with at least one edge;
  \item $|\bnd C| \leq 2$;
  \item all vertices of $V(C)$ incident to any edge outside of $E(C)$ belong to $\bnd C$; and
  \item $\bnd T \cap V(C) \subseteq \bnd C$; i.e., all external boundary vertices in $C$ are exposed in the boundary of $C$.
\end{itemize}
We remark that $(T, \bnd T)$, as long as it contains at least one edge, is also a~cluster.

Now, given a~boundaried tree $(T, \bnd T)$ with $|\bnd T| \leq 2$, define a~\emph{top tree}~\cite{AlstrupHLT05} over $(T, \bnd T)$ as a~rooted binary tree $\Delta_T$ with a~mapping $\eta$ from the nodes of $\Delta_T$ to clusters of $(T, \bnd T)$, such that:
\begin{itemize}
  \item $\eta(r) = (T, \bnd T)$ where $r$ is the root of $\Delta_T$;
  \item $\eta$ induces a~bijection between the set of leaves of $\Delta_T$ and the set of all clusters built on single edges of $T$; and
  \item each non-leaf node $x$ has two children $x_1$, $x_2$ such that $|\bnd \eta(x_1) \cap \bnd \eta(x_2)| = 1$ and
  \[ \eta(x) = \forget_S\left(\eta(x_1) \oplus \eta(x_2)\right) \]
  for some set $S \subseteq \bnd \eta(x_1) \cup \bnd \eta(x_2)$ of the elements belonging to the boundary of either of the clusters $\eta(x_1)$, $\eta(x_2)$.
  In other words, the cluster mapped by $x$ in $\Delta_T$ is a~join of the two clusters mapped by the children of $x$, followed by a~removal of some (possibly none) elements from the boundary of the resulting structure.
\end{itemize}

If $T$ consists of only one vertex, then the top tree $\Delta_T$ is deemed empty.
This is a~design choice: each cluster is identified by a nonempty subset of edges of $T$, where the root of $\Delta_T$ contains all edges of $T$, and the leaves of $\Delta_T$ contain a~single edge each.

\begin{figure}[ht]
    \centering
      \includegraphics{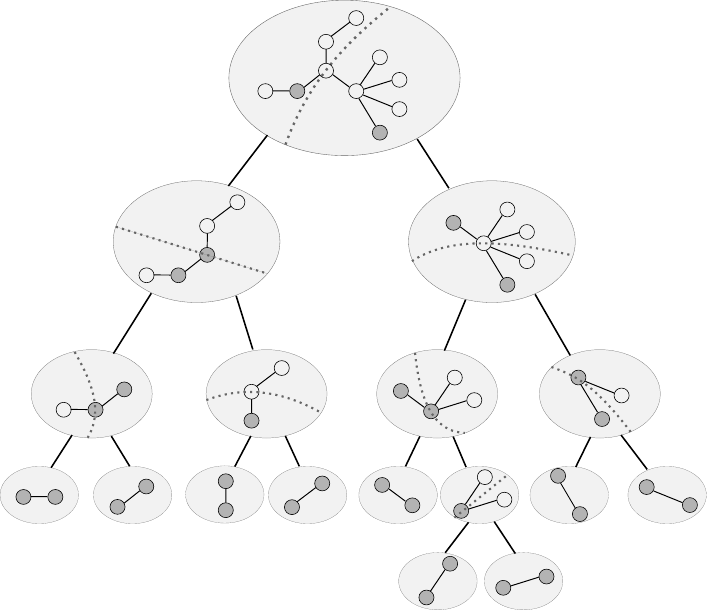}
    \caption{An example top tree $\Delta_T$. Clusters correspond to light gray ovals. Boundary vertices in each cluster are marked dark gray. Note that in this example, $\Delta_T$ has two external boundary vertices. However, it may have fewer (zero or one) such vertices.}
    \Description{An example top tree.}
    \label{fig:top_tree_example}
\end{figure}

Intuitively, a~top tree $\Delta_T$ represents a~recursive decomposition of a~boundaried tree $T$ into smaller and smaller pieces.
In the root of $\Delta_T$, the root cluster $(T, \bnd T)$ is edge-partitioned into two smaller clusters with small boundaries that can be joined along their boundaries to produce $(T, \bnd T)$.
Each of these clusters is again recursively edge-decomposed into simpler pieces, eventually producing clusters consisting of only one edge (Figure~\ref{fig:top_tree_example}).

It turns out that each boundaried tree can be assigned a~shallow top tree:

\begin{theorem}[\cite{AlstrupHLT05}]
  \label{thm:static_top_trees}
  Let $(T, \bnd T)$ be a~boundaried tree with $|\bnd T| \leq 2$, and set $n \coloneqq |V(T)|$.
  Then $(T, \bnd T)$ has a~top tree $\Delta_T$ of depth $\Oh{\log n}$.
\end{theorem}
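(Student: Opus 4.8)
The plan is to construct the top tree $\Delta_T$ by a balanced recursive edge-partition of $T$, where at each step we split the current cluster into two subclusters each carrying at most a constant fraction of the edges, while simultaneously controlling that boundaries stay of size at most $2$. The key structural fact to exploit is that a cluster $(C,\bnd C)$ with $|\bnd C|\le 2$ behaves, for the purpose of decomposition, like a tree with at most two ``special'' vertices that must remain on the boundary; so I would prove by induction the following stronger statement: every cluster $(C,\bnd C)$ with $m\coloneqq |E(C)|$ admits a top tree of depth $O(\log m)$. Theorem~\ref{thm:static_top_trees} is then the special case $C=T$, $\bnd C=\bnd T$ (with the trivial cases $|E(T)|\le 1$ handled directly, and $|V(T)|\le 1$ giving the empty top tree as per the design choice).

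The heart of the argument is the splitting step: given a cluster $(C,\bnd C)$ with at least two edges, I want to find an edge or a vertex whose removal breaks $E(C)$ into parts of roughly balanced size that can each be completed to valid clusters with boundary of size $\le 2$. Concretely, root $C$ at an arbitrary vertex and for each vertex $v$ consider the number of edges in the subtree below $v$; walk down from the root always into the child subtree with the most edges until reaching a vertex $v^\star$ whose own subtree has more than $m/3$ edges but each of whose children's subtrees has at most $m/3$ edges. Using $v^\star$ (and possibly the edge from $v^\star$ to its parent, and grouping the child subtrees of $v^\star$ greedily) one can partition $E(C)$ into two sets $E_1,E_2$, each of size between roughly $m/3$ and $2m/3$, such that the induced subgraphs $C_1,C_2$ are connected, share exactly one vertex (namely $v^\star$, or the relevant articulation vertex), and each $C_i$ has at most two vertices — among $\bnd C\cup\{v^\star\}$ — incident to edges outside $E(C_i)$. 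This last point needs care: $\bnd C$ has up to two vertices, and we add the shared vertex $v^\star$, so naively a part could need a boundary of size $3$. The standard fix is to choose the split so that at most one element of $\bnd C$ goes ``strictly inside'' each part while the shared vertex absorbs one boundary slot; a short case analysis on where the (at most two) external boundary vertices of $C$ lie relative to $v^\star$ shows a valid balanced split always exists.

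Once the split is fixed, the recursion is routine: set $\eta(r)=(C,\bnd C)$ at the root, recursively build top trees $\Delta_{C_1}$ and $\Delta_{C_2}$ for the two subclusters (with boundaries $\bnd C_i$ chosen as above, each of size $\le 2$, each containing the shared vertex and the external boundary vertices landing in $C_i$), and make their roots the two children of $r$. The join condition $|\bnd\eta(x_1)\cap\bnd\eta(x_2)|=1$ holds because $C_1$ and $C_2$ meet exactly in the single shared vertex, and $\eta(r)=\forget_S(\eta(x_1)\oplus\eta(x_2))$ for the appropriate $S$ (the shared vertex, unless it lies in $\bnd C$, plus any boundary vertex of a child not in $\bnd C$). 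For the depth bound, let $f(m)$ be the worst-case depth over clusters with $m$ edges; the balance guarantee gives $f(m)\le 1+f(\lceil 2m/3\rceil)$, hence $f(m)=O(\log m)$, and with $m\le n-1$ this yields depth $O(\log n)$.

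The main obstacle I anticipate is precisely the boundary-size bookkeeping in the splitting step — ensuring that a constant-factor-balanced edge-partition into two connected pieces always exists whose pieces each need a boundary of size at most $2$, given that the parent cluster may already use up both of its boundary slots. This is where one must be careful about corner cases: when both external boundary vertices of $C$ lie deep in the same child subtree of $v^\star$, when $v^\star$ itself is an external boundary vertex, or when $C$ has very few edges so that the $m/3$ thresholds interact with the requirement that each cluster contains at least one edge. Handling these by splitting at an incident edge of $v^\star$ rather than at $v^\star$ itself, or by a slightly asymmetric threshold, resolves them, but verifying that some valid choice always remains is the delicate part of the proof; everything else is bookkeeping.
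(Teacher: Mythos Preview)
The paper does not give its own proof of this theorem: it is stated with a citation to Alstrup, Holm, de Lichtenberg, and Thorup and is explicitly described as ``fairly straightforward,'' after which the paper moves on to the dynamic version (Theorem~\ref{thm:top_trees}), also cited without proof. So there is nothing in the paper to compare your argument against.

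Your sketch is the standard centroid-based construction and is on the right track. The one place where you correctly flag a potential gap is also the one place where your outline is genuinely incomplete: when both external boundary vertices of $C$ lie in the same component after removing $v^\star$, the naive split would force a boundary of size $3$ on that side, and ``a short case analysis'' does not by itself make clear that a \emph{balanced} split avoiding this always exists. The usual remedy is not to insist on splitting at a single centroid vertex, but to split along the unique path in $C$ between the two boundary vertices (the ``compress'' case) or to peel off a pendant subtree meeting the path in one vertex (the ``rake'' case); in either case each child cluster automatically inherits at most one old boundary vertex plus the shared vertex, and a counting argument on edges along the path versus edges hanging off it gives the constant-fraction balance. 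If you want your write-up to stand on its own, this is the case distinction to spell out; everything else in your plan is fine.
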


Given a~forest $F$ of boundaried trees, we define a~forest $\Delta_F$ of top trees of $F$ by assigning each connected component of $F$ a~single top tree.
Here, we assume that one-vertex connected components of $F$ are each given a~separate empty top tree.

While Theorem~\ref{thm:static_top_trees} is fairly straightforward, a~much more interesting result is that a~forest $\Delta_F$ of top trees can be efficiently maintained under the updates of $F$, and that $\Delta_F$ can be used to answer queries about $F$ efficiently.
Namely, consider the following kinds of updates and queries on $F$:

\begin{itemize}
  \item $\mathsf{link}(u, v)$: connects by an~edge two vertices $u$ and $v$, previously in different trees of $F$.
    \item $\mathsf{cut}(u, v)$: disconnects vertices $u$ and $v$ connected by an~edge.
    \item $\mathsf{expose}(S)$: if $S$ is the set of at most two vertices of the same tree $T$ of $F$, assigns the set of external boundary vertices $\bnd T$ to $S$, and returns: a~reference $\Delta_T$ to the root cluster of the top tree of $T$, and the previous boundary $\bnd T$.
    \item $\mathsf{clearBoundary}(\Delta_T)$: given a~reference $\Delta_T$ to the root cluster of the top tree of a~tree $T$, clears the external boundary vertices of $T$, i.e., sets $\bnd T \gets \emptyset$.
    \item $\mathsf{add}(v)$ / $\mathsf{del}(v)$: adds or removes vertex $v$ from $F$.
    If $v$ is removed, it is required to be an~isolated vertex of $F$;
    \item $\mathsf{get}(v)$: given a~vertex $v$ of $F$, returns the reference to the root cluster of the top tree containing $v$.
  \item $\mathsf{jump}(u, v, d)$: if $u$ and $v$ are in the same connected component of $F$, returns the vertex $w$ on the unique simple path between $u$ and $v$ at distance $d$ from $u$, if it exists; and
  \item $\mathsf{meet}(u, v, w)$: if $u$, $v$ and $w$ are in the same connected component of $F$, returns the (unique) vertex $m$ which lies in the intersection of three simple paths in $F$: $uv$, $vw$ and $wu$.
\end{itemize}

It is assumed that the queries: $\mathsf{jump}$, $\mathsf{meet}$, $\mathsf{get}$ do not modify $\Delta_F$.
Moreover, no updates may modify any top trees representing components unrelated to the query.

We note here that the methods $\mathsf{clearBoundary}$ and $\mathsf{get}$ are here mostly for technical reasons related to $\mathsf{expose}$.
The existence of $\mathsf{clearBoundary}$ is a consequence of the fact that $\mathsf{expose}(\emptyset)$ has no reasonable interpretation: given the empty set as the only argument, $\mathsf{expose}$ cannot determine the top tree to be stripped from the boundary vertices.
Then, $\mathsf{get}(v)$ can be implemented solely in terms of two $\mathsf{expose}$ calls (firstly, $\mathsf{expose}(\{v\})$, setting $v$ as an~external boundary vertex of some tree $\Delta_T$, and returning $\Delta_T$ as a~result, and then reverting the old boundary of $\Delta_T$ by another call to $\mathsf{expose}$).
This is, however, unwieldy, and may possibly alter the contents of $\Delta_T$.
Hence, the user of $\Delta_F$ may use $\mathsf{get}$ instead as a~clean, immutable replacement of the calls to $\mathsf{expose}$.

Moreover, assume a~restricted model of computation where the forest of top trees may only be modified by the following operations:

\begin{itemize}
  \item $\mathsf{create}(u, v)$: adds to $\Delta_F$ a~one-vertex top tree corresponding to the one-edge subgraph $C$ of $F$ with $V(C) = \{u, v\}$ and $\bnd C = \{u, v\}$;
  \item $\mathsf{destroy}(T)$: removes from $\Delta_F$ a~one-vertex top tree $T$;
  \item $\mathsf{join}(T_1, T_2, S)$: takes two top trees $T_1$, $T_2$, with roots $r_1$ and $r_2$, respectively, and combines them into a~single top tree $T$ by spawning a~new root node $r$ with children $r_1$ and $r_2$.
  The root $r$ is assigned the cluster $\forget_S\left(\eta(r_1) \oplus \eta(r_2)\right)$.
  \item $\mathsf{split}(T)$: given a~tree $T$ with more than one vertex, splits $T$ into two top trees $T_1$ and $T_2$ by removing the root vertex of $T$.
\end{itemize}

As shown by Alstrup et al., it turns out that even in this restricted model, the updates and queries can be processed efficiently:

\begin{theorem}[\cite{AlstrupHLT05}]
  \label{thm:top_trees}
  There exists a~data structure that, given a~dynamic forest $F$, implements a~dynamic forest $\Delta_F$ of top trees.
  At any point, if $F$ has exactly $n$ vertices, then each tree of $\Delta_F$ has height $\Oh{\log n}$, and each of the queries: $\mathsf{link}$, $\mathsf{cut}$, $\mathsf{expose}$, $\mathsf{clearBoundary}$, $\mathsf{add}$, $\mathsf{del}$, $\mathsf{get}$, $\mathsf{jump}$, and $\mathsf{meet}$ can be executed in $\Oh{\log n}$ worst-case time complexity.

  Additionally, in order to update $\Delta_F$, each query requires at most $\Oh{1}$ calls to $\mathsf{create}$ and $\mathsf{destroy}$, and at most $\Oh{\log n}$ calls to $\mathsf{join}$ and $\mathsf{split}$.
\end{theorem}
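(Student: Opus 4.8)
The plan is to build the forest of top trees on top of a classical hierarchical decomposition --- Frederickson's \emph{topology trees} --- and to show that every required update and query reduces to $O(\log n)$ elementary restructurings of that hierarchy, which in turn reduce to $O(1)$ calls to $\mathsf{create}$/$\mathsf{destroy}$ and $O(\log n)$ calls to $\mathsf{join}$/$\mathsf{split}$. First I would reduce to forests of maximum degree three. Each vertex $v$ of degree $d\ge 4$ is replaced by a small auxiliary balanced binary tree $B_v$ on $d-2$ internal nodes and $d$ leaves, where the $i$-th leaf takes over the $i$-th edge formerly incident to $v$; the $B_v$ are pairwise vertex-disjoint and of size $O(d)$, so the total number of vertices at most doubles. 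A $\mathsf{link}$ or $\mathsf{cut}$ incident to $v$ becomes insertion or deletion of a leaf of $B_v$, absorbed by $O(\log d)\le O(\log n)$ rotations while keeping $B_v$ balanced; every cluster we construct ``contracts'' each $B_v$ back to the single original vertex, so the boundary bookkeeping seen by the outside world is unaffected. From now on assume maximum degree $3$.

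Second, on a degree-$3$ forest I would maintain, for each tree $T$, a topology tree: a laminar family of \emph{vertex clusters} where a level-$0$ cluster is a single vertex, and each level-$(i{+}1)$ cluster is the union of one or two level-$i$ clusters that forms a connected subtree with at most two edges leaving it. Frederickson's greedy merging rule turns a level-$i$ partition into a level-$(i{+}1)$ partition with at most a constant fraction as many clusters, so the hierarchy has $O(\log n)$ levels; moreover a single edge insertion or deletion forces changes in only $O(\log n)$ clusters, each repair being a constant-size local restructuring. Each topology-tree cluster, read as the set of \emph{edges} with both endpoints inside it, is a connected induced subgraph of $T$ whose vertices incident to outside edges are the (at most two) endpoints of its external edges; taking those as the boundary gives exactly a cluster in the sense required here, and the laminar parent/child relation supplies the required edge partition once we binarise the at-most-two children and use $\forget_S$ to drop the boundary vertices that become internal. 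This yields a top tree $\Delta_T$ of depth $O(\log n)$, matching Theorem~\ref{thm:static_top_trees}, with the property that a $\mathsf{link}$ or $\mathsf{cut}$ rebuilds only an $O(\log n)$-size set of nodes of $\Delta_T$, spawns or discards only $O(1)$ single-edge leaves, and performs $O(\log n)$ $\mathsf{join}$/$\mathsf{split}$ operations in the bottom-up reassembly of the changed root-to-leaf path.

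Third I would implement the interface. The core primitive is $\mathsf{expose}(S)$: walking up $\Delta_T$ from the edges incident to the vertices of $S$ and reassembling the clusters along that path with $O(\log n)$ $\mathsf{split}$/$\mathsf{join}$ calls, one forces the at most two vertices of $S$ into the boundary of the root cluster; $\mathsf{clearBoundary}$ is the reverse. Then $\mathsf{link}$ and $\mathsf{cut}$ are ``expose the two endpoints, perform the constant-size topology-tree edit, re-balance, un-expose''; $\mathsf{add}$/$\mathsf{del}$ create or delete a trivial tree; and $\mathsf{get}(v)$ is $\mathsf{expose}(\{v\})$ immediately followed by restoring the previous boundary, or, for an immutable variant, a direct pointer walk to the root of $v$'s top tree. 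Finally $\mathsf{jump}(u,v,d)$ and $\mathsf{meet}(u,v,w)$ are pure queries: after (read-only) exposing the relevant endpoints we descend $\Delta_T$ from the root, at each node using cluster sizes and the position of the ``middle'' boundary vertex to decide which child contains the answer, halting after $O(\log n)$ steps. Because the topology tree of each component is stored independently, no operation reads or writes nodes of top trees of unrelated components.

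The main obstacle I expect is the second step --- the faithful translation between Frederickson's vertex-clusters-with-external-edges and the edge-based clusters-with-$\le2$-boundary-vertices demanded here, carried out so that (i) the boundary never exceeds two vertices at any level, (ii) each non-leaf node's two children genuinely edge-partition it, which forces the binarisation of ternary merges and the careful use of $\forget_S$, and (iii) a $\mathsf{link}$/$\mathsf{cut}$ provably disturbs only $O(\log n)$ nodes and only $O(1)$ leaves, the accounting that underlies the $O(1)$ bound on $\mathsf{create}$/$\mathsf{destroy}$ and the $O(\log n)$ bound on $\mathsf{join}$/$\mathsf{split}$. Getting $\mathsf{expose}$ right is the second delicate point: showing it is itself realisable with only $O(\log n)$ of the four restricted restructuring operations and that it never touches other components.
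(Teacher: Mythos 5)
The paper does not prove this theorem itself; it cites it from Alstrup et al.~\cite{AlstrupHLT05}, and the remark immediately after the statement only observes that $\mathsf{clearBoundary}$, $\mathsf{add}$, $\mathsf{del}$, $\mathsf{get}$ are trivial extensions while $\mathsf{jump}$ and $\mathsf{meet}$ already appear in the cited work. Your high-level plan --- reduce to maximum degree three, maintain a Frederickson-style topology tree on the reduced forest, and translate topology-tree vertex clusters with at most two external edges into top-tree edge clusters with at most two boundary vertices --- is the route Alstrup et al.\ actually take, so you are reconstructing the right proof.

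There is, however, a quantitative gap in your degree-reduction step. You replace a degree-$d$ vertex $v$ by a \emph{balanced} binary tree $B_v$ and account for a $\mathsf{link}$ or $\mathsf{cut}$ at $v$ by ``$O(\log d)$ rotations while keeping $B_v$ balanced.'' Each rotation changes $\Theta(1)$ edges of the underlying degree-$3$ forest, so it forces $\Theta(\log n)$ topology-tree repair work and, being an edge change in $F$, also costs $\Theta(1)$ calls to $\mathsf{create}$ and $\mathsf{destroy}$. Summed over $O(\log d)$ rotations this gives $\Theta(\log d \cdot \log n)$ time and $\Theta(\log d)$ $\mathsf{create}$/$\mathsf{destroy}$ calls in the worst case; for $d = \Theta(n)$ this breaks both the $O(\log n)$ time bound and the $O(1)$ bound on $\mathsf{create}$/$\mathsf{destroy}$ that the theorem promises. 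The standard repair is to not balance $B_v$ at all: model $B_v$ as a caterpillar or path of auxiliary degree-$\le 3$ vertices so that a $\mathsf{link}$ at $v$ appends a single auxiliary vertex and a $\mathsf{cut}$ splices one out, each an $O(1)$ structural change. The path may have height $\Theta(d)$, but that is harmless, since the topology tree's own balancing already guarantees $O(\log n)$ height and repair cost for an arbitrary bounded-degree forest; $B_v$ itself never needs to be shallow. (Alternatively, a balancing scheme with worst-case $O(1)$ rotations per update, such as red-black trees, would also work, but a generic $\Theta(\log d)$-rotation rebalancing does not.)
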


We remark here that the most basic form of top trees shown in~\cite{AlstrupHLT05} provides only the implementations of $\mathsf{link}$, $\mathsf{cut}$, and $\mathsf{expose}$.
However, we note that $\mathsf{clearBoundary}$, $\mathsf{add}$, $\mathsf{del}$, and $\mathsf{get}$ are trivial to implement, and $\mathsf{jump}$ and $\mathsf{meet}$ are the extensions of the interface of the data structure presented in the same work~\cite{AlstrupHLT05}.

\paragraph{Auxiliary information.}
In top trees, one can assign auxiliary information to vertices and edges of the underlying forest.
This can be conveniently formalized using relational structures.
Namely, assume that $\Delta_F$ is a~top trees data structure maintaining a~forest of top trees for a~dynamic forest $F$, where $V(F) \subseteq \Omega$.
Consider now an~arbitrary relational structure $\Af$ over a~finite binary signature $\Sigma$ that is guarded by $F$.
$\Af$ is also dynamic: one can add or remove arbitrary tuples from the interpretations of predicates in $\Af$, as long as after each update, $G(\Af)$ is a~subgraph of $F$.
Formally, the interface $\Delta_F$ is extended by the following methods:
\begin{itemize}
  \item $\AddRel(R, \bar{a})$: if for some $i \in \{0, 1, 2\}$, we have that $R \in \Sigma^{(i)}$ and $|\bar{a}| = i$, then adds $\bar{a}$ to the interpretation of $R$ in $\Af$;
  \item $\DelRel(R, \bar{a})$: as above, but removes $\bar{a}$ from the interpretation of $R$ in $\Af$.
\end{itemize}
Under these updates, the set of external boundary vertices in any top tree should not change.

\paragraph{Substructures of $\Af$.} Given a~boundaried graph $G$ which is an~induced subgraph of $F$, we define the substructure $\Af[G]$ of $\Af$ \emph{induced} by $G$ in a~usual way.
That is, we set $V(\Af[G]) = V(G)$ and $\bnd\Af[G] = \bnd G$, and for each $i \in \{0, 1, 2\}$, we define the interpretation of each predicate $R \in \Sigma^{(i)}$ in $\Af[G]$ as $R^{\Af[G]} \coloneqq R^{\Af} \cap V(G)^i$.

However, this definition is not robust enough for our considerations: in our work, we will often need to consider the set of all substructures $\Af[C]$ induced by the clusters $C$ of $\Delta_F$.
In this setup, some information about $\Af$ is shared between multiple induced substructures.
For instance, if an~element $v$ belongs to the boundary of multiple clusters, then each such cluster contains the information about the exact set of unary predicates $R_1$ whose interpretations contain $v$, and by the same token the exact set of binary predicates $R_2$ whose interpretations contain $(v, v)$ (i.e., self-loops on $v$).
Then, a~single update to any such predicate may alter as many as $\Omega(n)$ different substructures of $\Af$ induced by the clusters of $\Delta_F$.
Even worse, the current state of the flag of $\Af$ is stored in all induced substructures, and its change under $\AddRel$ or $\DelRel$ would cause the update of all considered induced substructures.

In order to alleviate this problem, we will consider an~operation stripping boundaried structures from the information on the satisfied flags, unary predicates on boundary vertices, and binary predicates on self-loops on boundary vertices.
Namely, given a~boundaried structure $\Bf$ over $\Sigma$ with boundary $\bnd \Bf$, a~\emph{stripped version} of $\Bf$ is a~boundaried structure $\Bf' \coloneqq \Strip(\Bf)$ over $\Sigma$ defined as follows:
\begin{itemize}
  \item $V(\Bf') = V(\Bf)$ and $\bnd \Bf' = \bnd \Bf$.
  \item $\Bf'$ inherits no flags from $\Bf$; that is, for every $R \in \Sigma^{(0)}$, the interpretation of $R$ in $\Bf'$ is empty.
  \item $\Bf'$ inherits unary relations from the non-boundary elements of $\Bf$; that is, for every $R \in \Sigma^{(1)}$, the interpretation of $R$ in $\Bf'$ is $R^{\Bf'} \coloneqq R^{\Bf} \setminus \bnd \Bf$.
  \item $\Bf'$ inherits all binary relations from $\Bf$, apart from any self-loops on the boundary of $\Bf$; that is, for every $R \in \Sigma^{(2)}$, the interpretation of $R$ in $\Bf'$ is
  $R^{\Bf'} \coloneqq R^{\Bf} \setminus \{(v, v)\,\mid\, v \in \bnd \Bf\}$.
\end{itemize}
Such structures $\Bf'$ will be called \emph{stripped boundaried structures}.
Formally, a~boundaried structure $\Bf'$ is a~stripped boundaried structure if the interpretations of unary and binary relations in $\Bf'$ do not contain tuples of the form $v$ or $(v, v)$ for $v \in \bnd\Bf'$, and $\Bf'$ has no flags.

Observe that we do not need to remove the pairs of the form $(u, v)$ for $u, v \in \bnd \Bf$, $u \neq v$ from the interpretations of binary relations in $\Strip(\B)$: there are at most $\Oh{\log n}$ clusters $C$ of $\Delta_F$ with $u, v \in C$.
Hence an~operation of the form $\AddRel$ or $\DelRel$ involving the pair $(u, v)$ will only modify the information stored in these clusters -- and we will process these modifications efficiently on each such operation.
In fact, removing such pairs $(u, v)$ from the stripped substructures would \emph{complicate} the implementation details of the data structure.
Hence we choose not to remove such pairs from $\Strip(\B)$.

Then, with $\Af$ defined as~above, and $G$ which is an~induced subgraph of $F$, we define an~\emph{almost induced substructure} $\Af\{G\} \coloneqq \Strip(\Af[G])$ as the stripped version of the substructure induced by $G$.
Note that $G$ guards $\Af\{G\}$.

We can now lift operations $\oplus$ and $\forget$ to stripped boundaried structures:
\begin{itemize}
  \item Join $\oplus$ of two stripped boundaried structures is defined in the same way as for ordinary boundaried structures.

  \item Given $S \subseteq \Omega$, the function $\forget_S$ accepts two arguments: a~stripped boundaried structure $\Bf$ with $S \subseteq \bnd \Bf$, and a~mapping $P \in \left(2^S\right)^{\Sigma^{(1)} \cup \Sigma^{(2)}}$, assigning to each unary and binary predicate of $\Sigma$ a~subset of $S$.
  Then $\forget_S(\Bf, P)$ is constructed from $\Bf$ by removing $S$ from its boundary, replacing the evaluation of unary predicates from $\Sigma^{(1)}$ on $S$ with $P|_{\Sigma^{(1)}}$, and replacing the evaluation of binary predicates from $\Sigma^{(2)}$ on self-loops on $S$ with $P|_{\Sigma^{(2)}}$.
  Formally, if $\Bf' = \forget_S(\Bf, P)$, then:
  \[
    \begin{split}
    R^{\Bf'} & \coloneqq R^{\Bf} \cup P(R) \qquad \text{for }R \in \Sigma^{(1)}, \\
    R^{\Bf'} & \coloneqq R^{\Bf} \cup \left\{(v, v)\,\mid\, v \in P(R)\right\} \qquad \text{for }R \in \Sigma^{(2)}.
    \end{split}
  \]
  Intuitively, when elements of $S$ are removed from the boundary of $\Bf$, we need to restore the information about the satisfaction of unary predicates on $S$, and the satisfaction of binary predicates on self-loops on $S$.
  This information is supplied to $\forget_S$ by $P$.
\end{itemize}

Naturally, $\Strip$ commutes with $\oplus$ and $\forget$.
The following is immediate:
\begin{lemma}
  \label{lem:strip_compositional}
  Fix a~binary signature $\Sigma$.
  \begin{itemize}
    \item For any pair of two joinable boundaried structures $\Bf_1$, $\Bf_2$ over $\Sigma$, we have that
      \[ \Strip(\Bf_1 \oplus \Bf_2) = \Strip(\Bf_1) \oplus \Strip(\Bf_2). \]
    \item For any boundaried structure $\Bf$ and any set $S \subseteq \bnd \Bf$, let $P \in \left(2^S\right)^{\Sigma^{(1)} \cup \Sigma^{(2)}}$ be the evaluation of unary predicates from $\Sigma^{(1)}$ on $S$ and binary predicates from $\Sigma^{(2)}$ on self-loops on $S$.
    Then,
    \[ \Strip(\forget_S(\Bf)) = \forget_S\left(\Strip(\Bf), P \right). \]
  \end{itemize}
\end{lemma}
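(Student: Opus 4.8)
The plan is to verify both identities by unfolding the definitions of $\Strip$, $\oplus$, and $\forget$ and comparing the two sides predicate by predicate. In each case both sides are boundaried $\Sigma$-structures, so it suffices to check that they have the same universe, the same boundary, and the same interpretation of every predicate of every arity $0$, $1$, $2$. Since $\Strip$ changes neither the universe nor the boundary of a structure, the universes and boundaries match immediately: for the join these are $V(\Bf_1)\cup V(\Bf_2)$ and $\bnd\Bf_1\cup\bnd\Bf_2$ (one also notes that $\Strip(\Bf_1)$ and $\Strip(\Bf_2)$ are still joinable, since joinability depends only on universes and boundaries), and for the forget they are $V(\Bf)$ and $\bnd\Bf\setminus S$.

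For the first identity, nullary predicates are trivial: $\Strip$ empties all flags on the left, and each $\Strip(\Bf_i)$ already has empty flags on the right, so both sides interpret every flag as empty. For a unary predicate $R$, the left-hand side is $(R^{\Bf_1}\cup R^{\Bf_2})\setminus(\bnd\Bf_1\cup\bnd\Bf_2)$ and the right-hand side is $(R^{\Bf_1}\setminus\bnd\Bf_1)\cup(R^{\Bf_2}\setminus\bnd\Bf_2)$. The left-to-right inclusion is pure set algebra; the right-to-left inclusion is the one place where joinability is used: if, say, $v\in R^{\Bf_1}\setminus\bnd\Bf_1$, then $v\in V(\Bf_1)\setminus\bnd\Bf_1$, and since $V(\Bf_1)\cap V(\Bf_2)=\bnd\Bf_1\cap\bnd\Bf_2\subseteq\bnd\Bf_1$ we get $v\notin V(\Bf_2)$, hence $v\notin\bnd\Bf_2$, so $v$ is indeed in the left-hand side. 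Binary predicates are handled the same way: a tuple of $R^{\Bf_1}\cup R^{\Bf_2}$ that is not a self-loop is kept by both sides, and for a self-loop $(v,v)$ the above joinability argument applies verbatim to $v$.

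For the second identity, first note that $\forget_S(\Strip(\Bf),P)$ is well-defined because $S\subseteq\bnd\Bf=\bnd\Strip(\Bf)$, and that $\forget$ does not affect flags, so both sides interpret every flag as empty and both have boundary $\bnd\Bf\setminus S$. For a unary predicate $R$, the left-hand side equals $R^{\Bf}\setminus(\bnd\Bf\setminus S)$ while the right-hand side equals $(R^{\Bf}\setminus\bnd\Bf)\cup P(R)$ with $P(R)=R^{\Bf}\cap S$; these agree by the set identity $A\setminus(B\setminus C)=(A\setminus B)\cup(A\cap C)$ taken with $A=R^{\Bf}$, $B=\bnd\Bf$, $C=S$. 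For a binary predicate $R$, the left-hand side equals $R^{\Bf}\setminus\{(v,v)\mid v\in\bnd\Bf\setminus S\}$ and the right-hand side equals $(R^{\Bf}\setminus\{(v,v)\mid v\in\bnd\Bf\})\cup\{(v,v)\mid v\in P(R)\}$ with $P(R)=\{v\in S\mid (v,v)\in R^{\Bf}\}$; both describe $R^{\Bf}$ with exactly those self-loops $(v,v)$ removed for which $v\in\bnd\Bf\setminus S$, so they coincide.

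The only step that requires genuine thought, rather than mechanical bookkeeping over the three arities, is the use of the joinability hypothesis $V(\Bf_1)\cap V(\Bf_2)=\bnd\Bf_1\cap\bnd\Bf_2$ in the first identity: it is precisely what prevents a non-boundary element of one structure from secretly lying in the boundary of the other, which would otherwise make stripping before and after the join disagree. Everything else is a routine case analysis, which is why the statement is labelled immediate.
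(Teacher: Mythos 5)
Your proof is correct, and it fills in a verification that the paper declares ``immediate'' and leaves unproven. The case analysis over arities $0,1,2$ is complete, the set identity $A\setminus(B\setminus C)=(A\setminus B)\cup(A\cap C)$ used for the second identity is valid unconditionally, and you rightly isolate the joinability condition $V(\Bf_1)\cap V(\Bf_2)=\bnd\Bf_1\cap\bnd\Bf_2$ as the one non-mechanical ingredient: without it, an element in $R^{\Bf_1}\setminus\bnd\Bf_1$ could lie in $\bnd\Bf_2$ and the right-to-left inclusion for unary (and self-loop) predicates would fail.
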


\paragraph{Deducing information on almost induced substructures.}
Finally, additional information can be stored about the substructures almost induced by the clusters present in $\Delta_F$, as long as the information is compositional under joining clusters and removing vertices from the boundary of a~cluster, and this information is isomorphism-invariant.

Formally, for every finite set $D \subseteq \Omega$, consider a~function $\mu_D$ mapping stripped boundaried structures $\Bf$ with $\bnd \Bf = D$ to some universe $I_D$ of possible pieces of information.
Then, $\mu_D$ shall satisfy the following properties:
\begin{itemize}
  \item \emph{Compositionality under joins.} For every finite $D_1, D_2 \subseteq \Omega$, there must exist a~function $\oplus_{D_1, D_2}\,:\, I_{D_1} \times I_{D_2} \to I_{D_1 \cup D_2}$ such that for every pair $\Bf_1$, $\Bf_2$ of stripped boundaried structures with $\bnd \Bf_1 = D_1$, $\bnd \Bf_2 = D_2$, we have:
\[ \mu_{D_1 \cup D_2}\left(\Bf_1 \oplus \Bf_2\right) = \mu_{D_1}(\Bf_1)\, \oplus_{D_1, D_2} \, \mu_{D_2}(\Bf_2). \]
  \item \emph{Compositionality under $\forget$s.} For every finite $D \subseteq \Omega$, and $S \subseteq D$, there must exist a~function $\forget_{S,D}\,:\, I_D \times \left(2^S\right)^{\Sigma^{(1)} \cup \Sigma^{(2)}} \to I_{D \setminus S}$ so that for every stripped boundaried structure $\Bf$ with $\bnd \Bf = D$ and $P \in \left(2^S\right)^{\Sigma^{(1)} \cup \Sigma^{(2)}}$, we have:
\[ \mu_{D \setminus S}\left(\forget_S(\Bf, P)\right) = \forget_{S,D}\left( \mu_D(\Bf), P \right). \]
  \item \emph{Isomorphism invariance.} For every finite $D_1, D_2 \subseteq \Omega$ of equal cardinality, and for every bijection $\phi\,:\,D_1 \to D_2$, there must exist a~function $\iota_{\phi}\,:\, I_{D_1} \to I_{D_2}$ such that for every pair $\Bf_1$, $\Bf_2$ of \emph{isomorphic} boundaried structures with $\bnd \Bf_1 = D_1$, $\bnd \Bf_2 = D_2$, with an~isomorphism $\hat{\phi}\,:\,V(\Bf_1) \to V(\Bf_2)$ extending $\phi$, we have:
  \[ \mu_{D_2}(\Bf_2) = \iota_{\phi}\left( \mu_{D_1}(\Bf_1) \right). \]
\end{itemize}

Then, define the \emph{$\mu$-augmented top trees} data structure as a~variant of top trees in which each node $x$ is augmented with the information $\mu_{\bnd C}(\Af\{C\})$, where $\eta(x) = (C, \bnd C)$.
Thus, given a~reference to a~component $\Delta_T$ of $\Delta_F$ (e.g., obtained from a~call to $\mathsf{expose}$), one can read the information associated with the root cluster of $\Delta_T$.
We stress that this definition of $\mu$-augmented top trees guarantees that the information associated with each cluster $(C, \bnd C)$ of $\Delta_F$ is invariant on the interpretations of unary and binary relations in $\Af$ on the boundary elements of $C$.

We remark that the notation used in the description above is deliberately similar to that defined in Subsection~\ref{ssec:preliminaries_logic}.
In our work, the information $\mu(C)$ stored alongside each cluster $(C, \bnd C)$ in $\mu$-augmented top trees will be precisely the \msotwo{}-type of some rank of the boundaried structure spanned by $G$.
Therefore, thanks to the compositionality of the types of \msotwo{}, the types of forest-like relational structures can be computed by top trees.

We now propose the following lemma, asserting the good asymptotic time complexity of any operation on top trees when the data structure is augmented with the information $\mu$, as long as $\mu_D$, $\oplus_{D_1,D_2}$, and $\forget_{S,D}$ can be computed efficiently:

\begin{lemma}
  \label{lem:top_trees_additional_information}
  Fix $\mu_D$, $\oplus_{D_1, D_2}$, and $\forget_{S,D}$ as above, and consider the $\mu$-augmented top trees data structure $\Delta_F$.
  Suppose the following:
  \begin{itemize}
    \item For each $D \subseteq \Omega$, $|D| \leq 2$, the mapping $\mu_D$ can be computed in $\Oh{1}$ worst-case time from any~stripped boundaried structure with at most $2$ vertices.
    \item For each $D_1, D_2 \subseteq \Omega$, $|D_1|, |D_2| \leq 2$, $|D_1 \cap D_2| = 1$, the function $\oplus_{D_1,D_2}$ can be evaluated on any pair of arguments in worst-case $\Oh{1}$ time.
    \item For each $D \subseteq \Omega$, $|D| \leq 3$, and $S \subseteq D$, the function $\forget_{S,D}$ can be evaluated on any pair of arguments in worst-case $\Oh{1}$ time.
  \end{itemize}
  Then each update and query on $\Delta_F$ can be performed in worst-case $\Oh[\Sigma]{\log n}$ time, where $n = |V(F)|$.
  \begin{proof}
    Firstly, we shall describe how the relational structure $\Af$ is stored in memory.
    Let
    \[ X \coloneqq \{\varepsilon\}\, \cup\, V(F) \,\cup\, \{(v, v)\,\mid\,v \in V(F)\} \,\cup\, \{(u, v)\,\mid\,u \neq v, uv \in E(F)\} \]
    be the set of $0$-, $1$-, and $2$-tuples that may appear in an~interpretation of a~predicate in $\Af$.
    Here, $\varepsilon$ is considered an~empty tuple.
    We remark that $|X| = 1 + 2|V(F)| + 2|E(F)| < 4|V(F)| = 4n$, where $n$ is the number of nodes in $F$.
    Also, there exists a~natural lexicographic ordering $\leq_{\mathrm{lex}}$ of $X$, in which tuples of $X$ can be compared with each other in constant time.

    Then, for each tuple $\bar{a} \in X$, we create a~mutable list $L_{\bar{a}} \subseteq \Sigma^{|\bar{a}|}$ of all $|\bar{a}|$-ary predicates $R$ for which $\bar{a} \in R^{\Af}$.
    Note that for every $\bar{a} \in X$, we have $|L_{\bar{a}}| = \Oh[\Sigma]{1}$.
    Each such list shall be referenced by a~pointer, so that we can update any list at any moment without modifying the pointer referencing the list.
    Naturally, given all lists $L_{\bar{a}}$ for $\bar{a} \in X$, one can uniquely reconstruct $\Af$.

    Moreover, we keep a~dynamic dictionary $M$ such that for every $\bar{a} \in X$, $M(\bar{a})$ stores the pointer to $L_{\bar{a}}$.
    Then, each update on $M$ (insertion or removal from $M$) and each query on $M$ (querying the value of $M$ on a~single key) takes $\Oh{\log |X|} = \Oh{\log n}$ time.
    
    Next, in the forest $\Delta_F$ of top trees, alongside each cluster $(C, \bnd C)$, we store:
    \begin{itemize}
      \item the information $\mu_{\bnd C}(\Af\{C\})$ associated with the cluster; and
      \item for each boundary element $d \in \bnd C$, pointers: to the list $L_{d}$ of unary predicates $R$ for which $d \in R^{\Af}$, and to the list $L_{(d, d)}$ of binary predicates $R$ for which $(d, d) \in R^{\Af}$.
    \end{itemize}
    
    Finally, for our convenience, we keep a~dynamic dictionary $\mathsf{leaves}$, mapping each edge $e \in E(F)$ to the pointer $\mathsf{leaves}(e)$ to the unique leaf node of $\Delta_F$ corresponding to a~one-edge cluster containing $e$ as the only edge.
    Again, $\mathsf{leaves}$ can be updated and queried in $\Oh{\log n}$ time.

    Consider now any update to $F$: $\mathsf{link}$, $\mathsf{cut}$, $\mathsf{expose}$, $\mathsf{clearBoundary}$, $\mathsf{add}$, and $\mathsf{del}$.
    We remark that under each of these updates, the information $\mu$ must only be recomputed for the nodes of $\Delta_F$ created during the update.
    Recall that only two operations on top trees add new nodes to $\Delta_F$: $\mathsf{create}$, spawning a~new one-vertex top tree from a~single-edge subgraph, and $\mathsf{join}$, connecting two rooted top trees mapping to clusters $C_1$, $C_2$ into a~single top tree mapping to a~cluster $C$.

    The $\mathsf{create}$ operation is guaranteed to be called a~constant number of times per query by Theorem~\ref{thm:top_trees}.
    When a~new two-vertex, one-edge cluster $(C, \bnd C)$ is spawned, where $C = \{u, v\}$, we need to compute the information $\mu_{\bnd C}(\Af\{C\})$.
    First, we query the contents of the lists $L_{(u,v)}$ and $L_{(v,u)}$, which requires a~constant number of calls to $M$.
    Given these lists, $\Af\{C\}$ can be reconstructed in constant time.
    Then, constant time is taken to compute the mapping $\mu_{\bnd C}$ on $\Af\{C\}$.
    This information is stored, together with the pointers to the lists $L_d$ and $L_{(d, d)}$ for each $d \in \bnd C$, alongside the constructed cluster.
    Hence, the total time spent in $\mathsf{create}$ is bounded by $\Oh[\Sigma]{\log n}$.
    
    The $\mathsf{join}$ operation is called at most $\Oh{\log n}$ times.
    Recall that in $\mathsf{join}$, the cluster $(C, \bnd C)$ is defined as $C = \forget_S \left(C_1 \oplus C_2\right)$ for two child clusters $C_1$, $C_2$, and some set $S \subseteq \bnd C_1 \cup \bnd C_2$ of elements removed from the boundary of $C$.
    In order to compute $\mu_{\bnd C}(\Af\{C\})$, we need a~few ingredients:
    \begin{itemize}
      \item information $\mu_{\bnd C}(\Af\{C_1\})$ and $\mu_{\bnd C}(\Af\{C_2\})$ about the stripped boundaries structures referenced by children of $C$; and
      \item the mapping $P \in \left(2^S\right)^{\Sigma^{(1)} \cup \Sigma^{(2)}}$ denoting the evaluation of unary predicates from $\Sigma^{(1)}$ on $S$, and of binary predicates from $\Sigma^{(2)}$ on self-loops on $S$.
    \end{itemize}
    
    Note that $\mu_{\bnd C}(\Af\{C_1\})$ and $\mu_{\bnd C}(\Af\{C_2\})$ can be read from the information stored together with the clusters $C_1$ and $C_2$.
    Observe also that we can access lists $L_v$ and $L_{(v, v)}$ for all $v \in S$ in constant time: either $v \in \bnd C_1$, and the pointers to $L_v$ and $L_{(v, v)}$ are stored together with $C_1$, or $v \in \bnd C_2$, and the corresponding pointers are stored together with $C_2$.
    Thus, the sought evaluation $P$ can be constructed from those lists in constant time.
    Now, notice that
    \[ \Af[C] = \forget_S\left(\Af[C_1] \oplus \Af[C_2]\right). \]
    Thus, by Lemma~\ref{lem:strip_compositional}:
    \[ \Af\{C\} = \forget_S\left(\Af\{C_1\} \oplus \Af\{C_2\}, P\right). \]
    Therefore, $\mu_{\bnd C}(\Af\{C\})$ can be computed efficiently from $\mu_{\bnd C_1}(\Af\{C_1\})$ and $\mu_{\bnd C_2}(\Af\{C_2\})$ by exploiting the compositionality of $\mu$ under joins and forgets:
    \begin{equation}
    \label{eq:computation_of_information}
    \begin{split}
    \mu_{\bnd C}(\Af\{C\}) &=
      \mu_{\bnd C}\left[ \forget_S \left( \Af\{C_1\} \oplus \Af\{C_2\}, P\right) \right] = \\
      &= \forget_{S, \bnd C_1 \cup \bnd C_2}\left[\mu_{ \bnd C_1 \cup \bnd C_2}\left(\Af\{C_1\} \oplus \Af\{C_2\}\right), P\right] = \\
      &= \forget_{S, \bnd C_1 \cup \bnd C_2}\left[\mu_{\bnd C_1}\left(\Af\{C_1\}\right)\, \oplus_{\bnd C_1, \bnd C_2}\, \mu_{\bnd C_2}\left(\Af\{C_2\}\right), P\right].
    \end{split}
    \end{equation}
    Note that $|\bnd C_1|, |\bnd C_2| \leq 2$ and $|\bnd C_1 \cap \bnd C_2| = 1$, so $|\bnd C_1 \cup \bnd C_2| \leq 3$.
    Thus, in order to compute the information about $\Af\{C\}$, we need to evaluate $\oplus_{\bnd C_1, \bnd C_2}$ once, followed by one evaluation of $\forget_{S, \bnd C_1 \cup \bnd C_2}$.
    By our assumptions, each of these evaluations take worst-case constant time, and so the computed information $\mu_{\bnd C}(\Af\{C\})$ can be computed in constant time and stored, together with the pointers to the lists $L_v$ and $L_{(v, v)}$ for $v \in \bnd C$, alongside the cluster $(C, \bnd C)$.
    This results in a~worst-case $\Oh[\Sigma]{\log n}$ time bound across all $\mathsf{join}$s per update.
    
    For $\mathsf{get}$, $\mathsf{jump}$ and $\mathsf{meet}$, observe that these are queries on $F$ that do not require any updates to the top trees data structure nor are they related to $\Af$.
    Hence, the implementations of these methods remain unchanged, and so each call to each method concludes in worst-case $\Oh{\log n}$ time.
    
    Finally, we consider $\AddRel(R, \bar{a})$ and $\DelRel(R, \bar{a})$.
    The implementations of these methods depend on the contents of $\bar{a}$:
    \begin{itemize}
      \item If $\bar{a} = \varepsilon$, we only update the dictionary $M$ accordingly.
        Since stripped boundaried structures do not maintain any information on the flags of $\Af$, no information stored in any cluster changes.
        Hence, the entire update can be done in $\Oh{\log n}$ time.

      \item If $\bar{a} = v$ or $\bar{a} = (v, v)$ for $v \in \Omega$, then the interpretation of some predicate $R \in \Sigma^{(1)} \cup \Sigma^{(2)}$ is updated: the element $v \in \Omega$ is either added to or removed from $R^{\Af}$.

      We resolve the update by first calling $\mathsf{expose}(\{v\})$, causing $v$ to become an~external boundary vertex of the unique top tree $\Delta_T$ containing $v$ as a~vertex; let also $\bnd_{\mathrm{old}}$ be the previous set of external vertices of $\Delta_T$.
      After this call, every cluster of $\Delta_T$ containing $v$ as a~vertex necessarily has $v$ in its boundary; hence, no substructure of $\Af$ almost induced by a~cluster of $\Delta_T$ depends on the set of unary predicates satisfied by $v$, or the set of binary predicates satisfied by $(v, v)$.
      Thanks to this fact, we can update the dictionary $M$ according to the query, without any need to update the information stored in the clusters of $\Delta_T$.
      Finally, we revert the set of external boundary vertices of $\Delta_T$ to $\bnd_{\mathrm{old}}$ by another call to $\mathsf{expose}$.
      Naturally, this entire process can be performed in $\Oh[\Sigma]{\log n}$ worst-case time.
      
      \item If $\bar{a} = (u, v)$ with $u \neq v$, then a~pair $(u, v)$ is either added or removed from $R^{\Af}$ for some $R \in \Sigma^{(2)}$.
      We first update the dictionary $M$ accordingly; and let $e = uv \in E(F)$ be the edge of $F$.
      This, however, causes the information stored in some clusters of $\Delta_F$ to become obsolete; namely, the stripped boundaried structures corresponding to the clusters $(C, \bnd C)$ containing $e$ as an~edge change, so the information related to these clusters needs to be refreshed.
      To this end, observe that the set of all such clusters $(C, \bnd C)$ forms a~rooted path from the root of some top tree $\Delta_T$ to the leaf $\mathsf{leaves}(e)$ corresponding to the one-edge cluster containing $e$ as an~edge.
      Hence, the information can be updated by following the tree bottom-up from $\mathsf{leaves}(e)$ all the way to the root of $\Delta_T$, recomputing information $\mu$ about the stripped boundaried structures on the way using (\ref{eq:computation_of_information}).
      As Theorem~\ref{thm:top_trees} asserts that the depth of $\Delta_T$ is logarithmic with respect to $n$, this case is again resolved in worst-case $\Oh[\Sigma]{\log n}$ time.
    \end{itemize}
  Summing up, each update and query: $\mathsf{link}$, $\mathsf{cut}$, $\mathsf{expose}$, $\mathsf{clearBoundary}$, $\mathsf{add}$, $\mathsf{del}$, $\mathsf{get}$, $\mathsf{jump}$, $\mathsf{meet}$, $\AddRel$, and $\DelRel$ can be performed in $\Oh[\Sigma]{\log n}$ time.
  \end{proof}
\end{lemma}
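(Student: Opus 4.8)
The plan is to use the top trees data structure of Theorem~\ref{thm:top_trees} as a black box and augment it with two ingredients: a global, pointer-based representation of the dynamic structure $\Af$, and a bottom-up recomputation scheme for the $\mu$-labels. Since the Gaifman graph $G(\Af)$ is always a subgraph of the forest $F$, the set $X$ of tuples of arity $0$, $1$, $2$ that can ever occur in an interpretation of a predicate of $\Af$ has size $\Oh{n}$. For every $\bar a\in X$ I would keep a short list $L_{\bar a}$, of length $\Oh[\Sigma]{1}$, of the predicates $R$ with $\bar a\in R^{\Af}$, reached through a stable pointer, and a balanced-search-tree dictionary $M$ from $X$ to these pointers; I also keep a dictionary $\mathsf{leaves}$ from $E(F)$ to the leaf nodes of $\Delta_F$. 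With each cluster $(C,\bnd C)$ I store the label $\mu_{\bnd C}(\Af\{C\})$ together with, for every $d\in\bnd C$, the pointers to $L_{d}$ and $L_{(d,d)}$. Every operation on $M$, on $\mathsf{leaves}$, and inside the top trees themselves then costs $\Oh{\log n}$, so the whole burden is to keep the labels current within the same budget.

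By Theorem~\ref{thm:top_trees}, a structural update on $F$ ($\mathsf{link}$, $\mathsf{cut}$, $\mathsf{expose}$, $\mathsf{clearBoundary}$, $\mathsf{add}$, $\mathsf{del}$) triggers $\Oh{1}$ calls to $\mathsf{create}$ and $\mathsf{destroy}$ and $\Oh{\log n}$ calls to $\mathsf{join}$ and $\mathsf{split}$, and only $\mathsf{create}$ and $\mathsf{join}$ create nodes whose label is undefined ($\mathsf{destroy}$ and $\mathsf{split}$ only delete nodes). For a fresh single-edge cluster on an edge $uv$ — which has boundary $\{u,v\}$, so its stripped structure $\Af\{C\}$ is determined solely by the binary predicates on the arcs $(u,v)$ and $(v,u)$ — I read $L_{(u,v)}$ and $L_{(v,u)}$ via $M$ and apply $\mu_{\{u,v\}}$ in $\Oh[\Sigma]{1}$. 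For a $\mathsf{join}$ forming $C=\forget_S(C_1\oplus C_2)$ with $S\subseteq\bnd C_1\cup\bnd C_2$, Lemma~\ref{lem:strip_compositional} gives $\Af\{C\}=\forget_S\!\left(\Af\{C_1\}\oplus\Af\{C_2\},P\right)$, where $P$ records the unary and self-loop interpretations on the vertices of $S$; since $S\subseteq\bnd C_1\cup\bnd C_2$, each needed list is available in $\Oh{1}$ from the pointers cached at $C_1$ or $C_2$. Then compositionality of $\mu$ yields $\mu_{\bnd C}(\Af\{C\})=\forget_{S,\bnd C_1\cup\bnd C_2}\!\left(\mu_{\bnd C_1}(\Af\{C_1\})\oplus_{\bnd C_1,\bnd C_2}\mu_{\bnd C_2}(\Af\{C_2\}),P\right)$, computable in $\Oh{1}$ because $|\bnd C_1\cup\bnd C_2|\le 3$ and $S\subseteq\bnd C_1\cup\bnd C_2$ match the hypotheses on $\oplus_{D_1,D_2}$ and $\forget_{S,D}$. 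Summing over the $\Oh{\log n}$ recomputed nodes gives $\Oh[\Sigma]{\log n}$ per structural update, and $\mathsf{get}$, $\mathsf{jump}$, $\mathsf{meet}$ are pure $F$-queries, inherited verbatim.

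For $\AddRel(R,\bar a)$ and $\DelRel(R,\bar a)$, which change $\Af$ but not the tree shape of $\Delta_F$, I would split by the shape of $\bar a$. If $\bar a=\varepsilon$ (a flag), only $M$ changes, and since stripped structures carry no flags no label is affected. If $\bar a=v$ or $\bar a=(v,v)$, I first call $\mathsf{expose}(\{v\})$, after which $v$ is a boundary vertex of every cluster containing it, so no $\Af\{C\}$ depends on the unary or self-loop predicates at $v$; I then update $M$ and restore the previous exposure, all in $\Oh{\log n}$. If $\bar a=(u,v)$ with $u\neq v$, I update $M$ and then walk from $\mathsf{leaves}(uv)$ up to the root of its top tree — precisely the clusters whose stripped structure contains the arc — recomputing each label on the way with the join formula above; this path has length $\Oh{\log n}$ by Theorem~\ref{thm:top_trees}. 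Each case therefore stays within $\Oh[\Sigma]{\log n}$, completing the bound for all updates and queries.

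The hardest part will be the sharing of boundary information: a single vertex, and a flag even more so, can lie on the boundary of $\Omega(n)$ clusters, so a unary-predicate or flag change must not entail per-cluster work. This is resolved by the \emph{stripping} convention — no cluster stores flags or the unary/self-loop predicates on its own boundary vertices — together with the $\mathsf{expose}$ maneuver, which confines a unary/self-loop update to clusters in which the vertex is already a boundary vertex and whose labels are therefore invariant under it. Getting this interaction right, and checking that in a $\mathsf{join}$ the forgotten set $S$ and the union $\bnd C_1\cup\bnd C_2$ always fit the $|D|\le 3$ regime assumed for $\forget_{S,D}$, is the real content of the argument; the rest is bookkeeping with standard dictionaries.
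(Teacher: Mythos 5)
Your proposal is correct and follows essentially the same route as the paper's own proof: the same pointer-based lists $L_{\bar a}$ with dictionary $M$ and $\mathsf{leaves}$, cached boundary pointers at clusters, recomputation of $\mu$ only at $\mathsf{create}$/$\mathsf{join}$ nodes via Lemma~\ref{lem:strip_compositional} and compositionality, and the same three-case treatment of $\AddRel$/$\DelRel$ (flags touch nothing, unary/self-loop updates via the $\mathsf{expose}$ maneuver, and proper-arc updates via a leaf-to-root recomputation). No gaps to report.
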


\section{Statement of the main result and proof strategy}%
\label{sec:strategy}

With all the definitions in place, we may state the main result of this work.

\begin{theorem}
  \label{thm:main_theorem}
   Given a sentence $\varphi$ of $\msotwo$ over a~binary relational signature $\Sigma$ and $k\in \N$, one can construct a data structure that maintains whether a given dynamic relational structure $\Af$ over $\Sigma$ satisfies $\varphi$. $\Af$ is initially empty and may be modified by adding or removing elements of the universe, as well as adding or removing tuples from the interpretations of relations in~$\Af$.
   Here, a~vertex $v$ may be removed from $V(\Af)$ only if $v$ participates in no relations of $\Af$.
   
   The data structure is obliged to report a correct answer only if the feedback vertex number of the Gaifman graph $G(\Af)$ of $\Af$ is at most $k$, otherwise it reports {\em{Feedback vertex number too large}}. The amortized update time is $f(\varphi,k)\cdot \log n$, for some computable function $f$.
   
   Moreover, if the feedback vertex number of $G$ is at most $k$ at all times, we can ensure the worst-case update time $f(\varphi, k) \cdot \log n$.
\end{theorem}

Unfortunately, 
the setting of plain relational structures 
comes short in a~couple of combinatorial aspects that will be important:
\begin{itemize}
  \item Our work will contain involved graph-theoretic constructions and proofs, which are cumbersome to analyze in the terminology of relational structures.
  \item In the proof of the efficiency of the proposed data structure, we will rely on the fact that there may exist multiple parallel edges  between a pair of vertices. This feature cannot be modeled easily within the plain setting of relational structures.
\end{itemize}
These issues will be circumvented by assigning $\Af$ a~multigraph $H$ that guards $\Af$. In other words, we shall work with augmented structures $(\Af, H)$.
Then, graph-theoretic properties and constructions will first be stated in terms of $H$, and only later they will be transferred to~$\Af$.

In the language of augmented structures, we propose the following notion of an~efficient data structure dynamically monitoring the satisfaction of $\varphi$:

\begin{definition}
  \label{def:efficient_dynamic_structure}
  For a~class of multigraphs $\Cc$, a~relational signature $\Sigma$, and a~sentence $\varphi \in \msotwo[\Sigma]$, an~\emph{efficient dynamic $(\Cc, \Sigma, \varphi)$-structure} is a~dynamic data structure $\D$ maintaining an~augmented $\Sigma$-structure $(\A, H)$.
  One can perform the following updates on~$(\A, H)$:
  \begin{itemize}
    \item $\mathsf{initialize}(\A, H)$: initializes the data structure with an~augmented $\Sigma$-structure $(\A, H)$ such that $H \in \Cc$.
    \item $\mathsf{addVertex}(v)$: adds an~isolated vertex $v$ to the universe of $\A$ and to the set of vertices of $H$.
    \item $\mathsf{delVertex}(v)$: removes an~isolated vertex $v$ from $\A$ and $H$. It is assumed that no relation in $\A$ and no edge of $H$ contains $v$ as an~element.
    \item $\mathsf{addEdge}(u, v)$: adds an~undirected edge between $u$ and $v$ in $H$.
    \item $\mathsf{delEdge}(u, v)$: removes one of the edges between $u$ and $v$ in $H$.
    \item $\mathsf{addRelation}(R, \bar{a})$: adds a~tuple $\bar{a}$ to the relation $R$ of matching arity in $\A$. Each element of $\bar{a}$ must belong to $V(\A)$ at the time of query.
    \item $\mathsf{delRelation}(R, \bar{a})$: removes $\bar{a}$ from the relation $R$ of matching arity in $\A$.
  \end{itemize}
  $\D$ accepts the updates in constant-sized batches---sequences of operations to be performed one after another.
  The data structure assumes that after each batch of operations, $H \in \Cc$ and $H$ guards $\A$.
  After each batch of operations, $\D$ reports whether $\varphi$ is satisfied in~$\A$.
  The initialization of the data structure is performed in time $\Oh[\Cc, \varphi]{|H| \log |H|}$, while each subsequent update is performed in worst-case time $\Oh[\Cc, \varphi]{\log{|H|}}$, where $|H| = |V(H)| + |E(H)|$.
  
  Additionally, an~efficient dynamic $(C, \Sigma, \varphi)$-structure $\D$ is \emph{weak} if it is only guaranteed that, upon initialization, $\D$ processes correctly the first $\Omega(|H|)$ updates in worst-case time $\Oh[\Cc, \varphi]{\log |H|}$.
\end{definition}

Then, an~analog of Theorem~\ref{thm:main_theorem} for augmented structures reads as follows:

\begin{theorem}
  \label{thm:efficient_augmented_structure}
  For every integer $k \in \N$, let $\Cc_k$ be the class of multigraphs with feedback vertex number at most $k$.
  Then, given $k \in \N$, a~relational signature $\Sigma$, and a~sentence $\varphi \in \msotwo[\Sigma]$, one can construct an~efficient dynamic $(\Cc_k, \Sigma, \varphi)$-structure.
\end{theorem}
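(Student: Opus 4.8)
We argue by induction on $k$: the base case $k=0$ is handled directly by $\mu$-augmented top trees, and the inductive step peels off one unit of feedback vertex number by combining the \lemmaA, Lemma~\ref{lem:downgrade-logic}, and the \lemmaB with the logical Replacement Lemma (Lemma~\ref{lem:replacement_lemma}), recursing afterwards for $k-1$. Throughout write $n\coloneqq|H|$ and $q\coloneqq\mathrm{rank}(\varphi)$. \emph{Base case.} If $H\in\Cc_0$ then $H$, and hence the Gaifman graph of $\Af$, is a forest. We instantiate the framework of Subsection~\ref{ssec:top_trees} over $H$ with $\mu_D(\Bf)\coloneqq\tp^{q}(\Bf)$: compositionality of $\mu$ under joins and forgets is exactly Lemma~\ref{lem:compositionality}, and isomorphism invariance comes from canonization of types, so Lemma~\ref{lem:top_trees_additional_information} applies and gives $\Oh[\varphi,\Sigma]{\log n}$ worst-case updates. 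Keeping every component of $H$ with empty boundary, the root of each top tree exposes the rank-$q$ type of the corresponding flag-free component of $\Af$; from these, from the rank-$q$ types of the isolated vertices, and from the current set of satisfied flags we reconstruct $\tp^{q}(\Af)$ --- recording type multiplicities only up to the bounded period of Lemma~\ref{lem:idempotence_of_types}, which is constant-size bookkeeping on dynamic dictionaries --- and $\tp^q(\Af)$ decides $\Af\models\varphi$. This is the augmented-structure analogue of Theorem~\ref{thm:forests}.

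\emph{Inductive step, contraction layer.} Assume the theorem for $k-1$ over all binary signatures and all sentences, and fix $\Sigma$, $\varphi$, $k\ge 1$; set $q_0\coloneqq q$ and $\Sigma_0\coloneqq\Sigma$. Apply Lemma~\ref{lem:replacement_lemma} to $(q_0,\Sigma_0)$ to obtain $q_1\in\N$, the signature $\Gamma_0\coloneqq\Gamma^{q_1}$, and a computable $\Infer_0\colon\Types^{q_1,\Gamma_0}\to\Types^{q_0,\Sigma_0}$. The fern decomposition of $H$ presents $\Af$ as $\Smash(\Xx_0)$ for a $\Sigma_0$-ensemble $\Xx_0$; put $H_0\coloneqq\Contract^{q_1}(\Xx_0)$, a $\Gamma_0$-structure guarded by the multigraph $H_0'\coloneqq\Contract(H)$. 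Then $\tp^{q_0}(\Af)=\Infer_0(\tp^{q_1}(H_0))$, by Lemma~\ref{lem:diss} we have $\fvs{H_0'}=\fvs{H}\le k$, and $H_0'$ has minimum degree $3$ once isolated vertices are disregarded (the loop corner cases being treated as in the overview). By the \lemmaA we maintain $(H_0,H_0')$ under updates to $(\Af,H)$ so that each such update causes $\Oh{1}$ updates to $(H_0,H_0')$ at cost $\Oh[q_1,\Sigma_0]{\log n}$.

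\emph{Inductive step, downgrade layer and recursion.} Put $\ell\coloneqq 12k$ and $\Sigma_1\coloneqq\Gamma_0\times 2^{[\ell]\times\Gamma_0}$. For any $B_0\subseteq V(H_0)$ with $|B_0|\le\ell$ satisfying the conclusion of Lemma~\ref{lem:high} for $H_0'$, the $\Sigma_1$-structure $G_1\coloneqq\Downgrade(H_0,B_0)$ is guarded by $H_1'\coloneqq H_0'-B_0$, and Lemma~\ref{lem:downgrade-logic} supplies a computable map (one per admissible value of $|B_0|$) taking $\tp^{q_1}(G_1)$ to $\tp^{q_1}(H_0)$. Since $\fvs{H_0'}\le k$, a minimum feedback vertex set of $H_0'$ has size at most $k$, so by Lemma~\ref{lem:high} it meets $B_0$; deleting one vertex of that intersection --- and a fortiori all of $B_0$ --- drops the feedback vertex number, so $H_1'\in\Cc_{k-1}$. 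By the \lemmaB we maintain such a $B_0$ together with $(G_1,H_1')$ under updates to $(H_0,H_0')$, recomputing $B_0$ only every $\Theta(|H_0'|/k)$ updates, so that, amortized, each update to $(H_0,H_0')$ triggers $\Oh[k]{1}$ updates to $(G_1,H_1')$ and costs $\Oh[q_1,\Gamma_0,k]{\log n}$ time. Finally, $\tp^{q_1}(G_1)$ is determined by the truth values of the finitely many sentences of $\Sentences^{q_1,\Sigma_1}(\emptyset)$ (Lemma~\ref{lem:formulas-finite}), so by the inductive hypothesis we run, in parallel on $(G_1,H_1')$, an efficient dynamic $(\Cc_{k-1},\Sigma_1,\psi)$-structure for each such $\psi$; from their verdicts we rebuild $\tp^{q_1}(G_1)$, apply the downgrade map and $\Infer_0$ to get $\tp^{q_0}(\Af)$, and report whether $\Af\models\varphi$. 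All updates are forwarded down this chain in batches, using the end-of-batch consistency guarantee of Definition~\ref{def:efficient_dynamic_structure}. For the time bound, a batch of $b$ operations on $(\Af,H)$ becomes $\Oh[q_1,\Sigma_0]{b}$ operations on $(H_0,H_0')$, hence $\Oh[q_1,\Sigma_0,k]{b}$ amortized operations on $(G_1,H_1')$, fed to $\Oh[q_1,\Sigma_1]{1}$ recursive structures of amortized cost $\Oh[\Sigma_1,k-1]{\log n}$ per operation; since $q_1,\Gamma_0,\Sigma_1$ are bounded by computable functions of $\varphi$ and $k$ and $|H_0'|,|H_1'|\le|H|$, the whole construction runs in amortized $f(\varphi,k)\cdot\log n$ per update.

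\emph{Main obstacle.} The delicate point is to make this $2k$-fold composition go through. One must verify that the Downgrade layer --- the only layer that does not forward operations one-for-one --- amplifies the number of forwarded operations by merely an $\Oh[k]{1}$ factor (this is exactly where the ``recompute $B_0$ every $\Theta(m/k)$ updates'' idea of Alman et al.\ is essential), so that the $k$-deep nesting preserves the $\log n$ factor; and that the rank and palette blow-ups incurred by invoking the Replacement Lemma at each of the $k$ levels --- each roughly a tower of exponentials --- remain bounded in terms of $\varphi$ and $k$ alone. The genuinely heavy work is packaged inside the \lemmaA (a robust static description of fern decompositions, maintained via top trees enriched with \msotwo-types) and the Replacement Lemma (an Ehrenfeucht-Fra\"isse analysis), which are proved separately.
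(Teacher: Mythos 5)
Your proposal follows essentially the same inductive strategy as the paper's: induction on $k$, alternating the \lemmaA{} and \lemmaB{} to peel off one unit of feedback vertex number per level, with the Replacement Lemma and fern-decomposition machinery doing the real work inside those two lemmas. The argument is correct, but two organizational choices make it longer and slightly more fragile than the paper's, which is worth noting.

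First, you take $\Cc_0$ (forests) as the base case and re-instantiate the full $\mu$-augmented top trees machinery there. The paper instead runs a mutual induction over $\Pc_k$ and $\Pc^\star_k$, with base case $\Pc^\star_0$: the only multigraph in $\Cc^\star_0$ is the null graph, so the data structure just tracks the set of nullary predicates and checks $\varphi$ against them. The forest case $\Pc_0$ is then a free consequence of $\Pc^\star_0$ via the \lemmaA. Your forest base case also has a subtlety you do not address: top trees require the underlying graph to stay a forest at every intermediate step, but Definition~\ref{def:efficient_dynamic_structure} only guarantees $H\in\Cc_0$ after each batch, not within it. This is repairable (reorder so that all deletions precede all insertions within a batch), but the paper's route never touches the issue, because the \lemmaA already handles batching internally.

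Second, at the recursion step you run one recursive $(\Cc_{k-1},\Sigma_1,\psi)$-structure for every $\psi\in\Sentences^{q_1,\Sigma_1}(\emptyset)$ and then reassemble $\tp^{q_1}(G_1)$ from their verdicts. The \lemmaA{} and \lemmaB{} already package their logical translations into a \emph{single} sentence $\psi$ such that $\Af\models\varphi$ iff the contracted/downgraded structure satisfies $\psi$, so a single recursive instance per level suffices, and the paper's proof is a direct four-step composition $\Pc^\star_0\Rightarrow\Pc_0\Rightarrow\Pc^\star_1\Rightarrow\cdots\Rightarrow\Pc_k$ that treats the two lemmas as black boxes. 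Your variant works and the extra factor of $|\Sentences^{q_1,\Sigma_1}(\emptyset)|$ structures is still bounded in terms of $\varphi$ and $k$, but you end up re-deriving the internals of the two lemmas (invoking Lemma~\ref{lem:diss}, Lemma~\ref{lem:high}, Lemma~\ref{lem:downgrade-logic} and the Replacement Lemma directly) inside this proof rather than isolating that work where the paper does. One small inaccuracy along the way: you cite Lemma~\ref{lem:diss} to claim $\fvs{H_0'}=\fvs{H}$, but the Fern Decomposition Lemma (Lemma~\ref{lem:fern-decomposition}) only establishes $\fvs{\quo{H}{\Fc}}\le\fvs{H}$, which is all that is needed and all that is formally proved.
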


In this section, we will present a~proof strategy for Theorem~\ref{thm:efficient_augmented_structure}, as well as offer a~reduction from Theorem~\ref{thm:main_theorem} to Theorem~\ref{thm:efficient_augmented_structure}: that is, given an~efficient dynamic $(\Cc_k, \Sigma, \varphi)$-structure, we will show how the data structure for Theorem~\ref{thm:main_theorem} is produced.
To this end, we should first understand the key differences between Theorem~\ref{thm:main_theorem} and Theorem~\ref{thm:efficient_augmented_structure}.

\begin{itemize}
  \item The definition of an~efficient dynamic structure accepts classes of multigraphs different than $\Cc_k$.
  Indeed, the proof of Theorem~\ref{thm:efficient_augmented_structure} will require us to define classes $\Cc^\star_k$ of graphs constructed from $\Cc_k$ by filtering out all graphs that contain vertices of degree $0$, $1$ or $2$.
  Then, efficient dynamic structures will be presented both for $\Cc_k$ and for $\Cc^\star_k$.
  \item In Theorem~\ref{thm:efficient_augmented_structure}, we assert that after each batch of updates $H$ has low feedback vertex number, which means that the data structure may break down if $\fvs{H}$ becomes too large.
  However, Theorem~\ref{thm:main_theorem} requires us to correctly detect that the invariant is not satisfied, return {\em{Feedback vertex number too large}}, and stand by until the feedback vertex number decreases below the prescribed threshold.
  To this end, we shall use the technique of \emph{postponing invariant-breaking insertions} proposed by Eppstein et al.~\cite{EppsteinGIS96}.
  Unfortunately, data structures exploiting this framework inherently have amortized update time complexities, so we cannot hope for a~worst-case update time bound in the general setting of Theorem~\ref{thm:main_theorem} using this technique.
  \item In Theorem~\ref{thm:efficient_augmented_structure}, the update time is logarithmic with respect to the size of $H$ (i.e., the total number of vertices and edges in $H$), and not in the size of the universe.
  The difference could cause problems as multigraphs with a~bounded number of vertices may potentially contain an~unbounded number of edges.
  However, in the presented reduction, $H$ will actually be the Gaifman graph of $\Af$; thus, $|E(H)|$ will always be bounded in terms of $n$.
  \item The data structure in Theorem~\ref{thm:efficient_augmented_structure} accepts queries in constant-sized batches; this technical design decision will turn out necessary in some parts of the proof of Theorem~\ref{thm:efficient_augmented_structure}.
  However, the reduction from Theorem~\ref{thm:main_theorem} will essentially ignore this difference by never grouping the queries into batches.
\end{itemize}

\paragraph{Proof strategy for Theorem~\ref{thm:efficient_augmented_structure}.}
Recall from the statement of Theorem~\ref{thm:efficient_augmented_structure} the definition of $\Cc_k$ as the class of multigraphs with feedback vertex number at most $k$.
We now define a~restriction of $\Cc_k$ to multigraphs with no vertices of small degree:
\[ \Cc^\star_k \coloneqq \{G \in \Cc_k\,\mid\,\text{each vertex of }G\text{ has degree at least }3\}. \]
Here, the degree of a~vertex $v$ is the number of edges incident to $v$, where each self-loop on $v$ counts as two incidences.

Let us discuss a couple of corner cases in the definition of the multigraph classes:
$\Cc_0$ is the class of all undirected forests, while $\Cc^\star_0$ is the class containing only one graph---the null graph (that is, the graph without any edges or vertices).

The proof will be an~implementation of the following inductive strategy, which was already discussed semi-formally in Section~\ref{sec:overview}.
\begin{itemize}
  \item \emph{(Base case.)} There exists a~simple efficient dynamic $(\Cc^\star_0, \Sigma, \varphi)$-structure, exploiting the fact that such a~dynamic structure is guaranteed to be given a~dynamic augmented structure with empty universe as its input.
  \item \emph{(Contraction step.)} For $k \in \N$, we can construct an~efficient dynamic $(\Cc_k, \Sigma, \varphi)$-structure $\D$ by:
    \begin{itemize}
      \item constructing a~new signature $\Gamma$ and a~new formula $\psi$ from $k$, $\Sigma$, and $\varphi$;
      \item creating an~instance $\D^\star$ of an~efficient dynamic $(\Cc^\star_k, \Gamma, \psi)$-structure;
      \item relaying each batch of queries from $\D$ to $\D^\star$ in a~smart way, so that the correct answer for $\D$ can be deduced from the answers given by $\D^\star$.
    \end{itemize}
  \item \emph{(Downgrade step.)} For $k \in \N$, $k>0$, we can construct a~\emph{weak} efficient dynamic $(\Cc^\star_k, \Sigma, \varphi)$-structure $\D$ by:
    \begin{itemize}
      \item constructing a~new signature $\Gamma$ and a~new formula $\psi$ from $k$, $\Sigma$, and $\varphi$;
      \item creating an~instance $\wt{\D}$ of an~efficient dynamic $(\Cc_{k-1}, \Gamma, \psi)$-structure;
      \item relaying each batch of queries from $\D$ to $\widetilde{\D}$ in a~way allowing us to infer the correct answer for $\D$ from the answers given by $\widetilde{\D}$.
    \end{itemize}
    The produced data structure will be weak: when initialized with an~augmented $\Sigma$-structure $(\A, H)$, it will only be able to process the first $\Omega(|H|)$ updates in worst-case $\Oh[\Cc^\star_k, \varphi]{\log |H|}$ time each.
    Then we will use the technique of global rebuilding by Overmars and van~Leeuwen~\cite{DBLP:conf/wg/Overmars81,DBLP:journals/ipl/OvermarsL81a} to make $\D$ non-weak.
\end{itemize}

The base case is trivial.
The contraction step is formalized by the following lemma:

\begin{lemma}[\lemmaA]
  \label{lem:lemma_a}
  Given an~integer $k \in \N$, a~binary relational signature $\Sigma$,
    and a~sentence $\varphi \in \msotwo[\Sigma]$, there exist:
  \begin{itemize}[nosep]
    \item a~binary signature $\Gamma$;
    \item a~mapping $\Contract$ from augmented $\Sigma$-structures to augmented $\Gamma$-structures; and
    \item a~sentence $\psi \in \msotwo[\Gamma]$,
  \end{itemize}
  all computable from $k$, $\Sigma$, and $\varphi$, such that for every augmented $\Sigma$-structure $(\Af, H)$, if $(\Af^\star,H^\star)=\Contract(\Af,H)$, then:
  \begin{itemize}[nosep]
    \item $H \in \Cc_k$ implies $H^\star \in \Cc^\star_k$;
    \item $\A \models \varphi$ if and only if $\A^\star \models \psi$, and
    \item $|H^\star| \le |H|$.
  \end{itemize}
  Moreover, given an~efficient dynamic $(\Cc^\star_k, \Gamma, \psi)$-structure $\D^\star$, we can construct an~efficient dynamic $(\Cc_k, \Sigma, \varphi)$-structure $\D$.
\end{lemma}

The proof of Lemma~\ref{lem:lemma_a} is presented in Section~\ref{sec:lemmaA}.
The downgrade step is stated formally as follows:

\begin{lemma}[\lemmaB]
  \label{lem:lemma_b}
  Given an~integer $k \in \N$, $k>0$, a~binary relational signature $\Sigma$,
    and a~sentence $\varphi \in \msotwo[\Sigma]$, there exist:
  \begin{itemize}[nosep]
    \item a~binary relational signature $\Gamma$;
    \item a~mapping $\Downgrade$ from augmented $\Sigma$-structures to augmented $\Gamma$-structures; and
    \item a~sentence $\psi \in \msotwo[\Gamma]$,
  \end{itemize}
  all computable from $k$, $\Sigma$, and $\varphi$, such that for every augmented $\Sigma$-structure, if $(\wt{\Af},\wt{H})=\Downgrade(\Af,H)$, then:
  \begin{itemize}[nosep]
    \item $H \in \Cc^\star_k$ implies $\wt{H} \in \Cc_{k-1}$;
    \item $\Af \models \varphi$ if and only if $\wt{\Af} \models \psi$; and
    \item $|\wt{H}| \le |H|$.
  \end{itemize}
  Moreover, given an~efficient dynamic $(\Cc_{k-1}, \Gamma, \psi)$-structure $\wt{\D}$, we can construct
    a~\emph{weak} efficient dynamic $(\Cc^\star_k, \Sigma, \varphi)$-structure $\D$.
\end{lemma}

The proof of Lemma~\ref{lem:lemma_b} is presented in Section~\ref{sec:lemmaB}.
We follow with the global rebuilding technique that will be used by us to make the produced efficient dynamic $(\Cc^\star_k, \Sigma, \varphi)$-structure non-weak.
Here we adapt the statements from~\cite[Chapter V]{DBLP:books/sp/Overmars83} and \cite{DBLP:conf/cocoon/KosarajuP98}:

\begin{theorem}
    \label{thm:global_rebuilding}
    Consider a~dynamic data structure problem where the task is to maintain an~instance of a~problem dynamically under updates and answer queries regarding the current state of the instance.
    Assume each update changes the size of an~instance by at most a~constant.
    
    Suppose we are given a~data structure for the problem that:
    \begin{itemize}[nosep]
       \item can be initialized on an~instance of a~problem of size $n$ in time $T_{\mathrm{init}}(n)$;
       \item can process any sequence of $\Omega(n)$ updates in worst-case time $T_{\mathrm{update}}(n)$ each; and
       \item can answer any query in worst-case time $T_{\mathrm{query}}(n)$.
    \end{itemize}
    
    Then there exists a~data structure for the same dynamic problem that:
    \begin{itemize}[nosep]
        \item can be initialized on an~instance of a~problem of size $n$ in time $\Oh{T_{\mathrm{init}}(n)}$;
        \item can process any update in worst-case time $\Oh{T_{\mathrm{update}}(n) + T_{\mathrm{init}}(n) / n}$; and
        \item can answer any query in worst-case time $\Oh{T_{\mathrm{query}}(n)}$.
    \end{itemize}
\end{theorem}

With all necessary lemmas stated, we can give a proof of Theorem~\ref{thm:efficient_augmented_structure}.

\begin{proof}[Proof of Theorem~\ref{thm:efficient_augmented_structure}]
  We prove the following two families of properties by induction on~$k$:

  \medskip

  \emph{$\Pc_k$: for every $\varphi \in \msotwo[\Sigma]$, there exists an efficient dynamic
    $(\Cc_k, \Sigma, \varphi)$-structure.}
  
  \emph{$\Pc^\star_k$: for every $\varphi \in \msotwo[\Sigma]$, there exists an efficient dynamic
    $(\Cc^\star_k, \Sigma, \varphi)$-structure.}

  \paragraph*{Proof of $\Pc^\star_0$.}
    The only graph in $\Cc^\star_0$ is the null graph.
    Hence, after each batch of updates, the~relational structure $\A$ maintained by the structure must have an~empty universe, and may only contain flags.
    Thus, the postulated efficient dynamic structure only maintains the set of flags $c \subseteq \Sigma^0$, and after each batch of queries, checks whether $\varphi$ is satisfied for this set of flags.
    Each of these can be easily done in worst-case constant time per update.

  \paragraph*{$\Pc^\star_k$ implies $\Pc_k$ for every $k \in \N$.}
  Given a signature $\Sigma$ and a formula $\varphi$, we invoke \lemmaA (Lemma~\ref{lem:lemma_a}), and we compute $\Gamma$ and $\psi$ as in the statement of the lemma.
    Since $\Pc^\star_k$ holds, we take $\D^\star$ to be an~efficient dynamic $(\Cc^\star_k, \Gamma, \psi)$-structure.
    We then invoke Lemma~\ref{lem:lemma_a} again and conclude that there exists an~efficient dynamic $(\Cc_k, \Sigma, \varphi)$-structure $\D$.

  \paragraph*{$\Pc_k$ implies $\Pc^\star_{k+1}$ for every $k \in \N$.}
  We begin as previously, invoking \lemmaB (Lemma~\ref{lem:lemma_b})  instead of Lemma~\ref{lem:lemma_a}.
  The resulting weak efficient dynamic $(\C^\star_{k+1}, \Sigma, \varphi)$-structure can be easily turned into a~non-weak counterpart using Theorem~\ref{thm:global_rebuilding}.

   \medskip
    
  The three propositions above easily allow us to prove~$\Pc_k$ inductively.
  Therefore, the proof of the theorem is complete.
\end{proof}

\paragraph{Reduction from Theorem~\ref{thm:main_theorem} to Theorem~\ref{thm:efficient_augmented_structure}.}
Having established auxiliary Theorem~\ref{thm:efficient_augmented_structure}, we now present the proof of the main result of this work: Theorem~\ref{thm:main_theorem}.

Recall that in the announced reduction, we will use the technique of \emph{postponing invariant-breaking insertions} proposed by Eppstein et al.~\cite{EppsteinGIS96}.
Now, we state it formally.
In our description, we follow the notation of Chen et al.~\cite{ChenCDFHNPPSWZ21}.

Suppose $U$ is a~universe.
We say that a~family $\Fc \subseteq 2^U$ is \emph{downward closed} if $\emptyset \in \Fc$ and for every $S \in \Fc$, every subset of $S$ is also in $\Fc$.
Consider a~data structure $\F$ maintaining an~initially empty set $S \subseteq 2^U$ dynamically, under insertions and removals of single elements.
We say that $\F$:
\begin{itemize}
  \item \emph{strongly supports $\Fc$ membership} if $\F$ additionally offers a~query $\mathsf{member}()$ which verifies whether $S \in \Fc$; and
  \item \emph{weakly supports $\Fc$ membership} if $\F$ maintains $S$ dynamically under the invariant that $S \in \Fc$; however, if an~insertion of an~element into $S$ would violate the invariant, $\F$ must detect this fact and reject the query.
\end{itemize}

Then, Chen et al.~prove the following:

\begin{lemma}[{\cite[Lemma 11.1]{ChenCDFHNPPSWZ21}}]
  \label{lem:delaying_queries}
  Suppose $U$ is a~universe and let $M$ be a~dynamic dictionary over $U$. Let $\Fc \subseteq 2^U$ be downward closed and assume that there is a~data structure $\F$ that weakly supports $\Fc$ membership.
  
  Then, there exists a~data structure $\F'$ that strongly supports $\Fc$ membership, where each $\mathsf{member}$ query takes $\Oh{1}$ time, and each update takes amortized $\Oh{1}$ time and amortized $\Oh{1}$ calls to $M$ and $\F$.
  Moreover, $\F'$ maintains an instance of the data structure $\F$ and whenever $\mathsf{member}() = \mathsf{true}$, then it holds that $\F$ stores the same set $S$ as $\F'$.
\end{lemma}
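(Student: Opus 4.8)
The plan is to have $\F'$ keep three pieces of state: the dictionary $M$, in which for every element of $U$ we record whether it currently belongs to $S$ (this realizes ``$\F'$ maintains $S$''); a~single instance of $\F$, which maintains a~subset $\wh S \subseteq S$ with $\wh S \in \Fc$; and a~doubly linked list holding the elements of $P \coloneqq S \setminus \wh S$, where $M$ additionally points each $x \in P$ to its list node, so that deciding whether $x \in \wh S$ or $x \in P$, and unlinking $x$ from the list, cost $\Oh{1}$ time and $\Oh{1}$ calls to $M$. A~counter tracks $|P|$, and $\mathsf{member}()$ simply returns ``$|P| = 0$'' in $\Oh{1}$ worst-case time. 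Thus the whole problem reduces to maintaining the invariant
\[ P = \emptyset \ \Longleftrightarrow\ S \in \Fc. \]
The implication $\Rightarrow$ is free, since $P = \emptyset$ forces $S = \wh S \in \Fc$ (and then $\F$ stores exactly $S$, as the ``moreover'' clause requires); the implication $\Leftarrow$, equivalently ``$P \ne \emptyset \Rightarrow S \notin \Fc$'', is where all the work is.

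I~would implement the updates as follows. On $\mathsf{insert}(x)$: mark $x \in S$ in $M$; if $P \ne \emptyset$, just append $x$ to $P$ --- we already know $S \notin \Fc$ and adding an~element keeps it so, because $\Fc$ is downward closed --- and if $P = \emptyset$, call $\F.\mathsf{insert}(x)$ and, depending on whether $\F$ accepts or rejects it, keep $P$ empty or put $x$ into $P$. On $\mathsf{delete}(x)$: unmark $x$ in $M$; if $x \in \wh S$ call $\F.\mathsf{delete}(x)$ (a~removal is never rejected, again by downward-closedness); if $x \in P$ unlink it; and then run a~\emph{repair loop}: while $P \ne \emptyset$, test the head element $z$ of $P$ by calling $\F.\mathsf{insert}(z)$; if $\F$ accepts, move $z$ from $P$ to $\wh S$ and continue, and if $\F$ rejects, leave $z$ at the head of $P$ and stop at once. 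The single idea is that the repair loop halts at the \emph{first} rejection instead of rescanning all of $P$. Correctness is then a~short case analysis: an~insertion cannot turn a~state satisfying the invariant into one violating it (handled by the two cases above), and after a~repair loop either $P = \emptyset$, in which case $S = \wh S \in \Fc$ because $\F$ always keeps its set inside $\Fc$, or the loop stopped at a~rejection of $z$ against $\F$'s current set $\wh S'$, which gives $\wh S' \cup \{z\} \notin \Fc$; since $\wh S' \cup \{z\} \subseteq S$ and $\Fc$ is downward closed we get $S \notin \Fc$, while $P \ni z$ is nonempty, so the invariant holds. Note also that whenever $\mathsf{member}()$ returns true we have $\wh S = S$, as needed.

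It remains to bound the amortized cost, and the stop-at-first-rejection rule is precisely what makes this go through, by a~plain charging argument rather than a~potential function. An~element can enter $P$ only during an~insertion, at most once per insertion, so along any run of $t$ operations at most $t$ entries into $P$ ever occur; each element that entered $P$ leaves it at most once afterwards, either by being deleted from $S$ or by one \emph{successful} repair step, so the total number of successful repair steps is at most $t$. On the other hand every execution of the repair loop performs at most one \emph{unsuccessful} repair step --- the one that stops it --- and there is exactly one repair loop per deletion, so the number of unsuccessful repair steps is at most $t$ as well. All the remaining work of an~update is a~constant number of $M$-operations, a~constant amount of list surgery, and at most one direct call to $\F$ outside the repair loop, hence $\Oh{1}$ worst case. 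Summing up, a~run of $t$ operations makes $\Oh{t}$ calls to $\F$, $\Oh{t}$ calls to $M$, and does $\Oh{t}$ additional work, i.e.\ amortized $\Oh{1}$ per update, with $\mathsf{member}$ in worst-case $\Oh{1}$. I~expect the main obstacle to be exactly this last step --- seeing that the repair loop may be stopped at the first rejection without breaking the equivalence $P = \emptyset \Leftrightarrow S \in \Fc$; a~naive variant that re-tries \emph{all} of $P$ after every deletion is correct but has amortized cost $\Oh{n}$ per update, and the fix is the observation that the first rejected element is itself a~witness that $S \notin \Fc$.
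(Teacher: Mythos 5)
Your proposal is correct and implements the same ``postponing invariant-breaking insertions'' technique that the lemma's source (Chen et al., Lemma 11.1) uses; the present paper only cites that lemma and does not reprove it, so there is no internal proof to compare against. Your bookkeeping via a doubly linked pending list $P$ with stop-at-first-rejection repair, and the flat charging argument (at most one entry into $P$ per insertion, hence at most $t$ successful repair steps and at most one unsuccessful step per deletion), is a clean and correct way to get the amortized $\Oh{1}$ bound on calls to both $\F$ and $M$; the key correctness step — that a rejection of $z$ against the current $\wh{S}'$ plus downward closure yields $S \supseteq \wh{S}' \cup \{z\} \notin \Fc$ — is exactly the right observation, and the ``moreover'' clause falls out of the invariant $P = \emptyset \Leftrightarrow S = \wh S$.
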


We remark that the last assertion was not stated formally in~\cite{ChenCDFHNPPSWZ21}, but follows readily from the proof.
With the necessary notions in place, we proceed to the proof of Theorem~\ref{thm:main_theorem}.

\begin{proof}[Proof of Theorem~\ref{thm:main_theorem}]
  Let $k \in \N$, $\Sigma$ and $\varphi \in \msotwo[\Sigma]$ be as in the statement of the theorem.
  We define the following universe for the postponing invariant-breaking insertions technique:
  \[ U \coloneqq \Omega\ \cup\ \Sigma^{(0)}\ \cup\ \left(\Sigma^{(1)} \times \Omega\right)\ \cup\ \left(\Sigma^{(2)} \times \Omega \times \Omega\right), \]
  where $\Omega = \N$ is the space over which relational $\Sigma$-structures are defined.
  Given a~finite set $S \subseteq U$, we define the relational $\Sigma$-structure $\Af(S)$ described by $S$ by:
  \begin{itemize}
    \item defining $V(\Af(S))$ as the set of elements $v \in \Omega$ for which either $v \in S$, or $v$ participates in some tuple in $S$; and
    \item for $i \in \{0, 1, 2\}$, setting the interpretation of every relation $R \in \Sigma^{(i)}$ in $\Af(S)$ as the set of tuples $(a_1, \dots, a_i)$ such that $(R, a_1, \dots, a_i) \in S$.
  \end{itemize}
  Somewhat unusually, we say that $v$ belongs to $\Af(S)$ even when $v \notin S$, but $v$ participates in some tuple in $S$.
  The rationale behind this choice is that this will ensure the downward closure of the family that we will construct shortly.
  On the other hand, we cannot remove $\Omega$ from the definition of $U$; otherwise, elements of $\Af(S)$ not participating in any relations would not be tracked by $S$, but the satisfaction of $\varphi$ in $\Af(S)$ may depend on these elements.

  Let $\Fc_k \subseteq 2^U$ be the family of finite subsets of $U$ such that $S \in \Fc_k$ if the Gaifman graph of $\Af(S)$ has feedback vertex number at most $k$.
  Clearly, $\Fc_k$ is downward closed.
  We also have:
  
  \begin{claim}
  \label{cl:weak_fvs_support}
  There exists a~dynamic data structure $\F$ that maintains an~initially empty dynamic set $S \subseteq U$ and weakly supports $\Fc_k$ membership.
  Moreover, $\F$ is obliged to report whether $\Af(S) \models \varphi$ under the invariant that $S \in \Fc_k$.
  The worst-case update time is $f(\varphi, k) \cdot \log |S|$ for a computable function $f$.
  \end{claim}
  \begin{proof}
  Given $k$, we construct a sentence $\psi_k \in \msotwo[\Sigma]$ testing whether the Gaifman graph of the examined $\Sigma$-structure has feedback vertex number at most $k$.
  Using Theorem~\ref{thm:efficient_augmented_structure}, we set up two auxiliary efficient dynamic structures:
  \begin{itemize}
    \item $\F_\varphi$: an~efficient dynamic $(\Cc_k, \Sigma, \varphi)$-structure; and
    \item $\F_\psi$: an~efficient dynamic $(\Cc_{k+1}, \Sigma, \psi_k)$-structure.
  \end{itemize}
  Recall that $\F_\varphi$ and $\F_\psi$ operate on augmented $\Sigma$-structures, but our aim is to construct a~data structure $\F$ maintaining an~ordinary $\Sigma$-structure $\Af$.
  
  We proceed to the description of $\F$.
  Note that $\F$ should accept all queries which result in the Gaifman graph of $\Af(S)$ having feedback vertex number at most $k$, and reject all other queries.
  We keep an~invariant: if $\F$ currently maintains some finite set $S \subseteq \Omega$, then $S \in \Fc_k$, and both $\F_\varphi$ and $\F_\psi$ maintain the same augmented $\Sigma$-structure $(\Af(S), G(\Af(S)))$.
  
  We now show how to process changes of $\Af(S)$ under the changes to $S$ according to the invariant.
  Each such change may result in: an~addition of an~isolated vertex $v$ to $\Af(S)$, an~addition or removal of a~single relation in $\Af(S)$, and a~removal of an~isolated vertex $v$ from $\Af(S)$, in this order.
  Then:
  \begin{itemize}
    \item Each vertex addition and removal is forwarded verbatim to $\F_\varphi$ and $\F_\psi$.
    \item Each removal of a tuple from a relation is forwarded to $\F_\varphi$.
    If the removal of a~pair $(u, v)$ from the interpretation of some relation $R$ in $\Af(S)$ causes a~removal of an~edge $(u, v)$, $u < v$, from the edge set of $G(\Af(S))$, we follow by issuing the prescribed $\mathsf{delRelation}$ call, as well as $\mathsf{delEdge}(u, v)$ on both $\F_\psi$ and $\F_\varphi$.
    \item We consider additions of tuples to relations.
      Let $\Af = \Af(S)$, and let $\Af' = \Af(S')$ be the  structure after the update.
      If $E(G(\Af)) = E(G(\Af'))$, then the query may be simply relayed to $\F_\varphi$ and $\F_\psi$ since the feedback vertex number of $G(\Af)$ remains unchanged.

      Otherwise, $E(G(\Af))$ expands by some edge $uv$.
      We call $\mathsf{addEdge}(u, v)$ and $\mathsf{addRelation}$ with appropriate arguments in $\F_\psi$.
      The addition of a~single edge may increase the feedback vertex number of $G(\Af)$ by at most $1$; hence, $\fvs{G(\Af')} \leq k + 1$ and thus $G(\Af') \in \Cc_{k + 1}$.
      Therefore, $\F_\psi$ allows us to verify whether $G(\Af') \models \psi_k$; or equivalently, whether $S' \in \Fc_k$.
      
      If the condition $\fvs{G(\Af')} \leq k$ is satisfied, then we accept the query and forward the relation addition query to $\F_\varphi$.
      Otherwise, the relation addition is rejected; then, we roll back the update from $\F_\psi$ by calling $\mathsf{delRelation}$ and $\mathsf{delEdge}(u, v)$.
      In both cases, the invariants are maintained.
  \end{itemize}
  Note that $|G(\Af)| = \Oh{|S|}$.
  Thus, each update to $\F$ is translated to a~constant number of queries to $\F_\varphi$ and $\F_\psi$, hence it requires worst-case $\Oh[\varphi, k]{\log |G(\Af)|} = \Oh[\varphi, k]{\log |S|}$ time.
  The verification whether $\Af \models \varphi$ can be done by directly querying $\F_\varphi$, which can be done in constant time.
  \cqed\end{proof}
  
  Now, by applying Lemma~\ref{lem:delaying_queries}, we get the following:
  \begin{claim}
  \label{cl:strong_fvs_support}
  There exists a~dynamic data structure $\F'$ that maintains an~initially empty dynamic set $S \subseteq U$ and strongly supports $\Fc_k$ membership, where each $\mathsf{member}$ query takes $\Oh{1}$ time, and each update to $S$ takes amortized $f(\varphi, k) \cdot \log |S|$ time for some computable function $f$.
  Additionally, if $S \in \Fc_k$, $\F'$ is obliged to report whether $\Af(S) \models \varphi$.
  \begin{proof}
  We apply Lemma~\ref{lem:delaying_queries} and Claim~\ref{cl:weak_fvs_support}.
  Since the worst-case query time to $M$ is $\Oh{\log |S|}$, and the worst-case (hence also amortized) update time in $\F$ is $\Oh[\varphi,k]{\log |S|}$, the claimed amortized bound on the update time of $\F'$ is immediate.
  Moreover, if $S \in \Fc_k$, then Lemma~\ref{lem:delaying_queries} guarantees that $\F$ contains the same set $S$ as $\F'$.
  Thus, if $\fvs{G(\Af(S))} \leq k$, then $\F$ can be queried in constant time whether $\Af(S) \models \varphi$.
  \cqed\end{proof}
  \end{claim}
  
  From Claim~\ref{cl:strong_fvs_support}, the proof of the theorem is straightforward: we set up $\F'$ as in Claim~\ref{cl:strong_fvs_support}.
  Initially, the $\Sigma$-structure $\Af$ maintained by us is empty, hence we initialize $\F'$ with $S = \emptyset$.

  Each addition or removal of a~single vertex or a~single tuple in the maintained $\Sigma$-structure can be easily translated to a~constant number of element additions or removals in $S$ and forwarded to $\F'$.
  Here, we rely on the fact that a~vertex can be removed from $\Af$ only if it does not participate in any relations in $\Af$; otherwise, the removal of the vertex would require non-constant number of updates to $S$.
  Thus, after each query, we have that $\Af = \Af(S)$.

  Then, after each update concludes, if $\mathsf{member}() = \mathsf{false}$, then $\fvs{G(\Af)} > k$, so we respond \emph{Feedback vertex number too large}.
  Otherwise, we check in $\F'$ whether $\Af \models \varphi$, and return the result of this check.
  
  In order to verify the time complexity of the resulting data structure, we observe that at each point of time, we have $|S| \leq |\Sigma^{(0)}| + (|\Sigma^{(1)}| + 1) \cdot n + |\Sigma^{(2)}| \cdot n^2$, where $n = |V(\Af)|$.
  Thus, each update to the data structure takes amortized $\Oh[\varphi, k]{\log n}$ time.
  This concludes the proof.
\end{proof}

\section{\lemmaA}%
\label{sec:lemmaA}

We move on to the proof of the \lemmaA (Lemma~\ref{lem:lemma_a}).
The proof is comprised of multiple parts.
In Subsections~\ref{ssec:lemmaA_ferns} and~\ref{ssec:lemmaA_static}, we will prove the static variant: given $k \in \N$, a~binary relational signature $\Sigma$, and $\varphi \in \msotwo[\Sigma]$, we can (computably) produce a~new binary signature $\Gamma$, a~new formula $\psi \in \msotwo[\Gamma]$, and a~mapping $\Contract$ from augmented $\Sigma$-structures to augmented $\Gamma$-structures, with the properties prescribed by the statement of the lemma.
Then, in Subsections~\ref{ssec:lemmaA_dynamic_fern}, \ref{ssec:lemmaA_dynamic_contractions}, and \ref{ssec:lemmaA_conclusion}, we will lift the static variant to the full version of the lemma by showing that given an~efficient dynamic $(\Cc^\star_k, \Gamma, \psi)$-structure monitoring the satisfaction of $\psi$ in $\Contract((\Af, H))$, we can produce an~efficient dynamic $(\Cc_k, \Sigma, \varphi)$-structure monitoring the satisfaction of $\varphi$ in $(\Af, H)$.

In Section~\ref{ssec:lemmaA_ferns}, we consider a~plain graph-theoretic problem: given a~multigraph $H$, we define the fern decomposition $\Fc$ of $H$, as well as the {\em{quotient graph}} $\quo{H}{\Fc}$ obtained from $H$ by dissolving vertices of degree $0$, $1$, and $2$, or equivalently by contracting each fern. We refer to the Overview (Section~\ref{sec:overview}) for an~intuitive explanation of fern decompositions and contractions.
Here, we will solve the problem in a~more robust way than that presented in~\cite{AlmanMW20}: we will define an~equivalence relation $\sim$ on the edges of $H$ so that each element of $\Fc$ corresponds to exactly one equivalence class of $\sim$.
Then, in a~series of claims, we will prove that $\Fc$ has strong structural properties which will be used throughout the proof of the \lemmaA{}.
We stress that the notion of contracting the multigraph by dissolving vertices of degree at most $2$ is not novel; though, the definition of $\Fc$ through $\sim$ seems to be new.

In Section~\ref{ssec:lemmaA_static}, we lift the construction of the fern decomposition relational structures.
Given an~augmented $\Sigma$-structure $(\Af, H)$ and the fern decomposition $\Fc$ of $H$, we show how to create a~$\Sigma$-ensemble $\Xx$ such that $\Smash(\Xx) = \Af$, and so that every fern $S \in \Fc$ corresponds to a~unique ensemble element $\Af_S \in \Xx$.
The construction shall achieve two goals: on the one hand, $\Xx$ must be crafted in a~way allowing us to maintain it efficiently under the updates of $H$ and $\Af$.
In particular, we must ensure that no information on the interpretation of relations in $\Af$ is shared between multiple elements of $\Xx$, for otherwise, a~maliciously crafted update to $\Af$ could cause the need to recompute a~huge number of elements of $\Xx$.

On the other hand, the definition of $\Xx$ should allow us to reason about $\Contract^p(\Xx)$ for suitably chosen $p \in \N$.
Indeed, the construction of $\Xx$ is the crucial part in the proof of the static variant of the Contraction Lemma.
For our choice of $\Xx$, dependent on $\Af$ and $H$, the mapping $\Contract(\Af, H)$ claimed in the statement of Lemma~\ref{lem:lemma_a} will be exactly equal to $\Contract^p(\Xx)$ for some $p$ large enough.
Moreover, the Replacement Lemma (Lemma \ref{lem:replacement_lemma}) will allow us to (computably) find a~signature $\Gamma$ and a~formula $\psi \in \msotwo[\Gamma]$ so that $\Af \models \varphi$ if and only if $\Contract^p(\Xx) \models \psi$.
This will conclude the proof of the static variant of Lemma~\ref{lem:lemma_a}.

In Section~\ref{ssec:lemmaA_dynamic_fern}, we present a~dynamic version of the graph-theoretic problem solved in Section~\ref{ssec:lemmaA_ferns}: given a~dynamic multigraph $H$, which changes by additions and removals of edges and isolated vertices, maintain $\Fc$ (the fern decomposition) and $\quo{H}{\Fc}$ (the contraction) dynamically.
Each change to $H$ should be processed in worst-case $\Oh{\log n}$ time, causing each time a~constant number of changes to $\Fc$ and $\quo{H}{\Fc}$.
This is not a~new concept: an~essentially equivalent data structure has been presented by Alman et al.~\cite{AlmanMW20}.
However, it is slightly different in two different ways.
First, the strict definition of $\Fc$ requires us to perform a~more thorough case study; in particular, the data structure of Alman et al.~sometimes produced (few) vertices of degree $2$ in the contraction, which is unfortunately impermissible for us.
Second, we use top trees instead of link-cut trees; this change will be crucial in the next step of the proof.
Sections~\ref{ssec:lemmaA_static} and~\ref{ssec:lemmaA_dynamic_fern} can be read independently of each other.

In Section~\ref{ssec:lemmaA_dynamic_contractions}, we combine the findings of Sections~\ref{ssec:lemmaA_static} and~\ref{ssec:lemmaA_dynamic_fern}.
Namely, we show how the top trees data structure representing the (graph-theoretic) fern decomposition of $H$ can also be used to track the ensemble $\Xx$ constructed from $\Af$ and $H$, and to compute the types of the fern elements of the ensemble.
This, together with the vital properties of type calculus, such as compositionality under joins and idempotence, can be used to maintain $\Contract^p(\Xx)$ dynamically, under the changes to $H$ and $\Af$.
Each change will be processed in worst-case $\Oh[p, \Sigma]{\log n}$ time, producing at most $\Oh[p, \Sigma]{1}$ changes to the rank-$p$ contraction of $\Xx$.

In Section~\ref{ssec:lemmaA_conclusion}, we conclude by presenting an~efficient dynamic $(\Cc_k, \Sigma, \varphi)$-structure $\D$.
Namely, we instantiate three data structures: the data structures presented in Sections~\ref{ssec:lemmaA_dynamic_fern} and~\ref{ssec:lemmaA_dynamic_contractions}, maintaining the graph-theoretic contraction of $H$ and the rank-$p$ contraction of $\Xx$, respectively; and an~efficient dynamic $(\Cc^\star_k, \Gamma, \psi)$-structure $\D^\star$, whose existence is assumed by Lemma~\ref{lem:lemma_a}.
Then, each query to $\D$ is immediately forwarded to data structures from Sections~\ref{ssec:lemmaA_dynamic_fern} and~\ref{ssec:lemmaA_dynamic_contractions}, producing constant-size sequences of changes to $\quo{H}{\Fc}$ and $\Contract^p(\Xx)$.
From these, we produce a~batch of changes to $\D^\star$ of constant size that ensures that the vertices and edges of the multigraph maintained by $\D^\star$ are given by $\quo{H}{\Fc}$, and the relations of the relational structure are given by $\Contract^p(\Xx)$.
It will be then proved that after $\D^\star$ finishes processing the batch, resulting in an~augmented structure $(\Af^\star, H^\star)$, we will have $\Af^\star \models \psi$ if and only if $\Af \models \varphi$.
This will establish the proof of the correctness of $\D$ and conclude the proof of the \lemmaA{}.

\subsection{Fern decomposition}
\label{ssec:lemmaA_ferns}
We start with describing a form of a decomposition of a multigraph that will be maintained by the data structure, which we call a {\em{fern decomposition}}. This decomposition was implicit in countless earlier works on parameterized algorithms for the {\sc{Feedback Vertex Set}} problem, as it is roughly the result of exhaustively dissolving vertices of degree at most $2$ in a graph. In particular, it is also present in the work of Alman et al.~\cite{AlmanMW20}, where the main idea, borrowed here, is to maintain this decomposition dynamically. The difference in the layer of presentation is that the earlier works mostly introduced the decomposition through the aforementioned dissolution procedure, which makes it more cumbersome to analyze. Also, following this approach makes it not obvious (though actually true) that the final outcome is independent of the order of dissolutions. Here, we prefer to introduce the fern decomposition in a more robust way, which will help us later when we will be working with $\msotwo$ types of its components.

In this section we work with multigraphs, where we allow multiple edges with the same endpoints and self-loops at vertices. An {\em{incidence}} is a pair $(u,e)$, where $u$ is a vertex and $e$ is an edge incident to $u$. By slightly abusing the notation, we assume that if $e$ is a self-loop at $u$, then $e$ creates two different incidences $(u,e)$ with $u$. The {\em{degree}} of a vertex is the number of incidences in which it participates. Note that thus, every self-loop is counted twice when computing the degree.

Similarly as for relational structures, a {\em{boundaried multigraph}} is a multigraph $H$ supplied with a subset of its vertices $\bnd H$, called the {\em{boundary}}. A {\em{fern}} is a boundaried multigraph $H$ satisfying the following conditions:
\begin{itemize}
 \item $|\bnd H|\leq 2$.
 \item If $|\bnd H|=2$, then $H$ is a tree in which both vertices of $\bnd H$ are leaves.
 \item If $|\bnd H|=1$, then $H$ is either a tree or a unicyclic graph. In the latter case, the unique boundary vertex of $H$ has degree $2$ and lies on the unique cycle of $H$.
 \item If $|\bnd H|=0$, then $H$ is a tree or a unicyclic graph.
\end{itemize}
Here, a {\em{unicyclic graph}} is a connected graph that has exactly one cycle; equivalently, it is a~connected graph where the number of edges matches the number of vertices. Note that by definition, every fern is connected. If $H$ is a~tree, we say that $H$ is a~{\em{tree fern}}, and if $H$ is a~unicyclic graph, we say that $H$ is a~{\em{cyclic fern}}.

If $F$ is a subset of edges of a multigraph $H$, then $F$ {\em{induces}} a boundaried multigraph $H[F]$ consisting of all edges of $F$ and vertices incident to them. The boundary of $H[F]$ consists of all vertices of $H[F]$ that in $H$ are also incident to edges outside of $F$. For a multigraph $H$ and partition $\Ff$ of the edge set of $H$, we define
$$H[\Ff]\coloneqq \{H[F]\colon F\in \Ff\}\cup \{((u,\emptyset),\emptyset)\colon u\textrm{ is isolated in }H\}.$$
and call it the {\em{decomposition}} of $H$ induced by $\Ff$. Note that to this decomposition we explicitly add a single-vertex graph (with empty boundary) for every isolated vertex of $H$, so that every vertex of $H$ belongs to at least one element of the decomposition.

Consider a multigraph $H$.
We are now going to define a partition $\Ff$ of the edge set of $H$ so that every element of $H[\Ff]$ is a fern and some additional properties are satisfied; these will be summarized in Lemma~\ref{lem:fern-decomposition}.

An edge $e$ in $H$ shall be called {\em{essential}} if it satisfies one of the following conditions:
\begin{itemize}
 \item $e$ lies on a cycle in $H$; or
 \item $e$ is a bridge and removing $e$ from $H$ creates two new components, each of which contains a cycle.
\end{itemize}
A vertex $u$ is {\em{essential}} if it participates in at least three incidences with essential edges, where every self-loop at $u$ is counted twice. Vertices that are not essential are called {\em{non-essential}}. An incidence $(u,e)$ is {\em{critical}} if both $u$ and $e$ are essential.

Define the following relation $\sim$ on the edge set of $H$: $e\sim f$ if and only if there exists a~walk
$$u_0\stackrel{e_1}{-}u_1\stackrel{e_2}{-}u_2\stackrel{e_3}{-}\ldots\stackrel{e_{\ell-1}}{-}u_{\ell-1}\stackrel{e_\ell}{-}u_\ell,$$
where $e_1=e$, $e_\ell=f$, and $e_i$ has endpoints $u_{i-1}$ and $u_i$ for all $i\in [\ell]$, such that for each $i\in [\ell-1]$, the incidences $(u_i,e_i)$ and $(u_i,e_{i+1})$ are not critical. Note that we allow the incidences $(u_0,e_1)$ and $(u_\ell,e_\ell)$ to be critical. A walk $W$ satisfying the condition stated above will be called {\em{safe}}.

We have the following observations.

\begin{lemma}
 For every multigraph $H$, $\sim$ is an equivalence relation on the edge set of $H$.
\end{lemma}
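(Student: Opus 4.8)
The statement asks us to show that the relation $\sim$ on the edge set of a multigraph $H$ --- where $e\sim f$ iff there is a safe walk from $e$ to $f$ (a walk whose every intermediate incidence is non-critical, though the two endpoint incidences may be critical) --- is an equivalence relation. The plan is to verify reflexivity, symmetry, and transitivity in turn, with transitivity being the only step requiring any care.

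\textbf{Reflexivity.} For any edge $e$ with endpoints $u_0,u_1$, the length-$1$ walk $u_0\stackrel{e}{-}u_1$ is trivially safe: there are no intermediate vertices $u_i$ with $i\in[\ell-1]=\emptyset$, so the condition on intermediate incidences is vacuous. Hence $e\sim e$. (A small remark: if $e$ is a self-loop at $u_0$, the walk $u_0\stackrel{e}{-}u_0$ still works, again vacuously.)

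\textbf{Symmetry.} If $W\colon u_0\stackrel{e_1}{-}\cdots\stackrel{e_\ell}{-}u_\ell$ is a safe walk witnessing $e\sim f$ (so $e_1=e$, $e_\ell=f$), then the reversed walk $W'\colon u_\ell\stackrel{e_\ell}{-}u_{\ell-1}\stackrel{e_{\ell-1}}{-}\cdots\stackrel{e_1}{-}u_0$ witnesses $f\sim e$. Indeed, the intermediate vertices of $W'$ are exactly $u_{\ell-1},\dots,u_1$, i.e. the same as those of $W$, and at each such vertex $u_i$ the pair of consecutive incidences in $W'$ is $\{(u_i,e_{i+1}),(u_i,e_i)\}$, the same unordered pair as in $W$; since that pair consists of non-critical incidences in $W$, the same holds in $W'$. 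The endpoint incidences of $W'$ are $(u_\ell,e_\ell)$ and $(u_0,e_1)$, which are allowed to be critical. So $W'$ is safe and $f\sim e$.

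\textbf{Transitivity.} Suppose $e\sim f$ via a safe walk $W_1\colon u_0\stackrel{e_1}{-}\cdots\stackrel{e_\ell}{-}u_\ell$ with $e_1=e,\ e_\ell=f$, and $f\sim g$ via a safe walk $W_2\colon v_0\stackrel{f_1}{-}\cdots\stackrel{f_m}{-}v_m$ with $f_1=f,\ f_m=g$. Since $W_1$ ends with edge $f$ and $W_2$ starts with edge $f$, the edge $f$ is incident to $u_{\ell-1},u_\ell$ and to $v_0,v_1$, so $\{u_{\ell-1},u_\ell\}=\{v_0,v_1\}$ as the endpoint set of $f$ (with the usual convention if $f$ is a self-loop). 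We want to concatenate $W_1$ and $W_2$ at the occurrence of $f$, but we must be careful to glue them at the \emph{same endpoint} of $f$, namely the vertex $u_\ell$. There are two cases. If $v_0=u_\ell$, then $W_2$ already starts where $W_1$ ends, and we form the walk $W\colon u_0\stackrel{e_1}{-}\cdots\stackrel{e_{\ell-1}}{-}u_{\ell-1}\stackrel{f}{-}u_\ell\stackrel{f_2}{-}v_2\stackrel{f_3}{-}\cdots\stackrel{f_m}{-}v_m$ --- that is, we drop the redundant second copy of $f$, keeping the single edge $f$ traversed once, and continue with $f_2,\dots,f_m$. If instead $v_1=u_\ell$ (and $v_0=u_{\ell-1}$), we first replace $W_2$ by its reversal (which is safe, by the symmetry argument above, and witnesses $g\sim f$ with last edge $f$), reducing to the previous case with the roles of $W_1$ and (reversed) $W_2$ swapped and concatenating at $u_\ell$. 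In all cases we obtain a walk $W$ from $e$ to $g$, and it remains to check that $W$ is safe. Its intermediate vertices split into three groups: the intermediate vertices of $W_1$ (namely $u_1,\dots,u_{\ell-1}$), at each of which the consecutive-incidence pair is inherited unchanged from $W_1$ and hence non-critical; the intermediate vertices of $W_2$ other than its first (namely $v_2,\dots,v_{m-1}$ in the aligned orientation), handled symmetrically using safety of $W_2$; and the single ``splice'' vertex $u_\ell=v_0$, where the two consecutive incidences in $W$ are $(u_\ell,f)$ --- i.e. $(u_\ell,e_\ell)$, the \emph{last} incidence of $W_1$, allowed to be critical --- and $(v_0,f_2)$. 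The latter is an \emph{intermediate} incidence of $W_2$ (it is the incidence $(v_1',f_2')$ at the first intermediate vertex $v_1'=v_0$ of $W_2$, wait --- more precisely, $(v_0,f_2)$ is one of the two consecutive incidences at the vertex $v_1$ of $W_2$ in its original indexing... ) --- here I must be slightly careful, because $v_0$ is the \emph{initial} vertex of $W_2$, not an intermediate one, so safety of $W_2$ says nothing directly about the pair $\{(v_0,f_1),(v_0,f_2)\}$. This is exactly the subtlety: gluing two safe walks can in principle create a \emph{new} intermediate vertex (the splice point) whose incidence pair was not constrained to be non-critical by either walk.

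\textbf{Resolving the subtlety --- the main obstacle.} The key observation that saves transitivity is that at the splice vertex $u_\ell=v_0$, the pair of consecutive incidences in $W$ is $\{(u_\ell,f),(v_0,f_2)\}=\{(v_0,f_1),(v_0,f_2)\}$, and at least \emph{one} of these --- namely $(v_0,f_1)$, i.e. $(u_\ell,f)$ --- need not be constrained; but what we actually need for $W$ to be safe at $u_\ell$ is that \emph{both} incidences of the pair are non-critical. So a naive argument fails, and indeed the statement as literally parsed would be in jeopardy if this were all. The fix must be that the \emph{definition of safe walk forbids repeating an edge}, or equivalently that a safe walk is required to be a \emph{path}, so that $f$ does not reappear and the only way $W_1$ can end with $f$ and $W_2$ begin with $f$ is that one of the two walks is trivial in a way that makes the splice harmless; alternatively, the intended reading is that \emph{every} incidence used at an intermediate position, including the splice point, is non-critical, which forces us to argue that the incidence $(v_0,f_1)$ \emph{is} non-critical --- and this would follow if safe walks are required to have non-critical endpoint incidences too, contradicting the remark ``we allow the incidences $(u_0,e_1)$ and $(u_\ell,e_\ell)$ to be critical.'' I therefore expect the actual proof in the paper to hinge on a structural fact I would need to extract from the definitions: presumably that a walk can be taken to be a \emph{non-self-intersecting} walk (a path) without loss of generality, perhaps via the observation that if a safe walk revisits a vertex or edge one can shortcut it while preserving safety, and then transitivity is obtained by a careful path-concatenation-and-shortcutting argument in which the splice vertex, having degree considerations from essentiality, cannot create a critical intermediate incidence. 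Concretely, I would: (i) prove a shortcutting lemma --- any safe walk from $e$ to $f$ can be replaced by a safe walk from $e$ to $f$ that uses each edge at most once; (ii) concatenate the shortcut forms of $W_1$ and (suitably oriented) $W_2$; (iii) apply shortcutting once more to the concatenation; and (iv) check that the resulting path is safe, where now at the former splice vertex the incidence that was problematic has either been removed by shortcutting or is non-critical because, being traversed by an edge-simple walk, the relevant essentiality count at that vertex forces it. The delicate bookkeeping in step (iv) --- showing that shortcutting never turns a non-critical intermediate incidence into a critical one, and that the splice point is always dealt with --- is where I expect the real work to lie.
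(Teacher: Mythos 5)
Your reflexivity and symmetry arguments are correct, but transitivity has a genuine gap. The ``subtlety'' you agonize over---that the splice vertex could have an incidence pair not constrained by either walk---is an artifact of a misaligned concatenation, not a real obstacle. In your case $v_0=u_\ell$, since $f$ has endpoint set $\{u_{\ell-1},u_\ell\}=\{v_0,v_1\}$, one gets $v_1=u_{\ell-1}$; but then $f_2$ is incident to $v_1=u_{\ell-1}$ and $v_2$, not to $u_\ell$, so the fragment $u_{\ell-1}\stackrel{f}{-}u_\ell\stackrel{f_2}{-}v_2$ of your proposed $W$ is not a walk at all. You never write down a valid concatenation, and the rest of your proposal (speculating about forbidding edge repetitions and a shortcutting lemma) does not reach a finished proof.

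The paper's argument is direct and needs none of that. After disposing of the trivial cases where two of $e,f,g$ coincide, case-split on whether $W_{ef}$ and $W_{fg}$ traverse $f$ in the same or in opposite directions. Same direction ($v_0=u_{\ell-1}$, $v_1=u_\ell$): concatenate $W_{ef}$ with $W_{fg}$ minus its first edge. The splice vertex is $u_\ell=v_1$, and its two incidences in the concatenation are $(v_1,f_1)$ and $(v_1,f_2)$; both are non-critical because $v_1$ is an \emph{intermediate} vertex of $W_{fg}$ (here $m\ge 2$ since $f\ne g$). Opposite direction ($v_0=u_\ell$, $v_1=u_{\ell-1}$): drop $f$ from both walks, concatenating $W_{ef}$ minus its last edge with $W_{fg}$ minus its first edge. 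The splice vertex is $u_{\ell-1}=v_1$, with incidences $(u_{\ell-1},e_{\ell-1})$---non-critical since $u_{\ell-1}$ is intermediate in $W_{ef}$ (as $\ell\ge 2$, $e\ne f$)---and $(v_1,f_2)$---non-critical since $v_1$ is intermediate in $W_{fg}$. Every other intermediate incidence of the concatenated walk is inherited verbatim from one of the two safe walks. Safe walks may repeat edges with no harm; edge-simplicity plays no role.
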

\begin{proof}
 The only non-trivial check is transitivity. Suppose then that $e,f,g$ are pairwise different edges such that $e\sim f$ and $f\sim g$, hence there are safe walks $W_{ef}$ and $W_{fg}$ such that $W_{ef}$ starts with $e$ and ends with $f$, while $W_{fg}$ starts with $f$ and ends with $g$.
 We consider two cases, depending on whether $f$ is traversed by $W_{ef}$ and $W_{fg}$ in the same or in opposite directions.
 
 If $f$ is traversed by $W_{ef}$ and $W_{fg}$ in the same direction, then construct $W_{eg}$ by concatenating $W_{ef}$ with $W_{fg}$ with the first edge removed. Then $W_{eg}$ is a walk that starts with $e$, ends with $g$, and it is easy to see that it is safe. Therefore $e\sim g$.
 
 If $f$ is traversed by $W_{ef}$ and $W_{fg}$ in opposite directions, then construct $W_{eg}$ by concatenating $W_{ef}$ with the last edge removed with $W_{fg}$ with the first edge removed. Again,  $W_{eg}$ is a safe walk that starts with $e$ and ends with $g$, so $e\sim g$.
\end{proof}

\begin{lemma}\label{lem:connection-essential}
Let $e,f$ be essential edges of a~multigraph $H$, and let $W$ be any walk in $H$ that starts with $e$, ends with~$f$, and traverses every edge at most once. Then every edge traversed by $W$ is essential.
\end{lemma}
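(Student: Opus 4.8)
The plan is to argue by contradiction: suppose $W$ traverses some non-essential edge, and derive a contradiction with the essentiality of $e$ and $f$. Since $W$ traverses every edge at most once, $W$ can be viewed as a simple walk (no repeated edges), so removing the edges of $W$ from $H$ is well-defined and each proper sub-walk of $W$ is again edge-simple. The key structural fact I will exploit is the characterization of essential edges: an edge is essential iff it lies on a cycle, or it is a bridge whose removal creates two components each containing a cycle. Equivalently (and this is the reformulation I would establish first), an edge $e=xy$ is \emph{non-essential} iff it is a bridge and at least one of the two components of $H-e$ is a forest (i.e.\ acyclic). Call the component of $H-e$ that is a forest the \emph{light side} of $e$.

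First I would set up the following claim: if $W = v_0 \stackrel{g_1}{-} v_1 \stackrel{g_2}{-} \cdots \stackrel{g_\ell}{-} v_\ell$ is an edge-simple walk and some $g_i$ is non-essential, then either $g_1$ or $g_\ell$ is non-essential — in fact one of the endpoints $v_0$, $v_\ell$ lies on the light side of $g_i$. To see this, pick the non-essential edge $g_i$ traversed by $W$; it is a bridge with a light side $L$ (a forest component of $H-g_i$). The walk $W$, with $g_i$ deleted, splits into a prefix $W_{\mathrm{pre}}$ (from $v_0$ to $v_{i-1}$) and a suffix $W_{\mathrm{post}}$ (from $v_i$ to $v_\ell$), and since $g_i$ is the only edge of $H$ connecting the two sides of $H-g_i$ and $W$ uses $g_i$ exactly once, $W_{\mathrm{pre}}$ lies entirely on one side and $W_{\mathrm{post}}$ entirely on the other. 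One of these sides is $L$; say $v_0 \in L$ (the other case is symmetric). Now I claim $g_1$ is non-essential: $g_1$ lies inside $L$, and $L$ is a forest, so $g_1$ does not lie on any cycle of $H$; moreover $g_1$ is a bridge of $H$ (as every edge of a forest component hanging off a single bridge is itself a bridge of $H$), and the side of $H - g_1$ not containing $g_i$'s attachment is a subforest of $L$, hence acyclic — so $g_1$ is non-essential. Symmetrically, if $v_\ell \in L$ then $g_\ell$ is non-essential.

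With this claim in hand the lemma is immediate: $W$ starts with $e$ and ends with $f$, both essential, so neither $g_1 = e$ nor $g_\ell = f$ is non-essential, contradicting the claim. Hence $W$ traverses no non-essential edge, i.e.\ every edge of $W$ is essential.

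I expect the main obstacle to be the bookkeeping in the claim above — specifically, verifying cleanly that an edge $g_1$ sitting inside a forest component $L$ of $H - g_i$ is genuinely non-essential \emph{in $H$}, which requires checking both that $g_1$ is a bridge of $H$ and that one side of $H-g_1$ is acyclic. The point is that cutting a bridge off a forest-side only produces forest-sides on the "outer" part, so one side of $H - g_1$ is a subforest of $L$; the essentiality conditions reference cycles in $H$, but any cycle in $H$ avoiding $g_i$ lies on one fixed side of $H - g_i$, so a cycle through $g_1$ would have to live inside $L$, which is impossible. Making this airtight is where the care is needed; everything else is routine.
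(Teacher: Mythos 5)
Your proposal is correct and follows essentially the same argument as the paper: assume a non-essential edge $g$ on $W$, note it is a bridge with an acyclic side, observe that since $W$ traverses $g$ exactly once the two terminal edges $e$ and $f$ land on opposite sides of $H-g$, and then show the terminal edge on the acyclic side is itself non-essential (not on any cycle, hence a bridge, with one side of its removal a subforest of the acyclic component), contradicting essentiality. The extra reformulation via the "light side" and the intermediate claim are just a repackaging of the paper's direct case analysis.
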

\begin{proof}
For contradiction, suppose some edge $g$ of $W$ is non-essential. Clearly $g\notin \{e,f\}$. By definition, $g$ is a bridge in $H$ and the removal of $g$ from $H$ creates two new connected components, say $C_1$ and $C_2$, out of which at least one, say $C_1$, is a tree. Since $g$ is traversed only once by $W$, it follows that $e$ and $f$ do not belong to the same component among $C_1,C_2$; by symmetry suppose $e\in E(C_1)$ and $f\in E(C_2)$. Since $C_1$ is a tree and $g$ is a bridge, $e$ cannot be contained in any cycle in $H$; in other words, $e$ is a bridge as well. Moreover, if one removes $e$ from $H$, then one of the resulting new components is a subtree of $C_1$, and hence is acyclic. This means that $e$ is non-essential, a contradiction.  
\end{proof}

\begin{lemma}\label{lem:vertex-connection-essential}
Let $u,v$ be two different essential vertices of a~multigraph $H$, and let $P$ be any (simple) path in $H$ that starts with $u$ and ends with $v$. Then every edge traversed by $P$ is essential.
\end{lemma}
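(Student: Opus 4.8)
The plan is to reduce to Lemma~\ref{lem:connection-essential} by extending $P$ at both endpoints with essential edges. Write $P$ as $u_0\stackrel{e_1}{-}u_1\stackrel{e_2}{-}\cdots\stackrel{e_\ell}{-}u_\ell$ with $u_0=u$ and $u_\ell=v$; since $u\neq v$ we have $\ell\ge 1$, and we must show that every $e_i$ is essential. First I would locate an essential ``anchor'' edge at each endpoint. As $u$ is essential, it participates in at least three incidences with essential edges; the path edge $e_1$, being an edge of a simple path, is not a self-loop and hence contributes at most one such incidence, and no other edge of $P$ is incident to $u_0$ by simplicity. Thus at least two of the essential incidences at $u$ come from edges outside $E(P)$, so there is an essential edge $e_0$ incident to $u$ with $e_0\notin E(P)$; symmetrically there is an essential edge $e_{\ell+1}$ incident to $v$ with $e_{\ell+1}\notin E(P)$.

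Next I would verify that $e_0$ and $e_{\ell+1}$ can be chosen distinct. If no such choice existed, then the set of essential edges incident to $u$ lying outside $E(P)$ and the analogous set for $v$ would both equal a single common edge $g$, incident to both $u$ and $v$. A self-loop at $u$ cannot be incident to $v\neq u$, so $g$ is an ordinary edge, contributing exactly one incidence at $u$; combined with the at most one incidence from $e_1$ and none from the remaining edges of $P$, this leaves $u$ with at most two incidences with essential edges, contradicting that $u$ is essential. (If the unique candidate at $u$ were instead a self-loop, it could not equal $e_{\ell+1}$, and we are again free to separate the choices.) Hence we may fix $e_0\neq e_{\ell+1}$, both essential, $e_0$ incident to $u$, $e_{\ell+1}$ incident to $v$, and $e_0,e_{\ell+1}\notin E(P)$.

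Finally, consider the walk $W=(e_0,e_1,\dots,e_\ell,e_{\ell+1})$ obtained by traversing $e_0$ into $u_0$, following $P$ to $u_\ell$, and then traversing $e_{\ell+1}$. This is indeed a walk, as consecutive edges share the vertices $u_0,u_1,\dots,u_\ell$; the edges $e_1,\dots,e_\ell$ are pairwise distinct because $P$ is simple, and $e_0,e_{\ell+1}$ lie outside $E(P)$ and differ from one another, so $W$ traverses every edge at most once. Since $W$ starts with the essential edge $e_0$ and ends with the essential edge $e_{\ell+1}$, Lemma~\ref{lem:connection-essential} applies and every edge of $W$ is essential; in particular $e_1,\dots,e_\ell$ are essential, which is the claim.

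The main obstacle is the bookkeeping in the middle step: guaranteeing that the two anchor edges can be selected to be distinct from each other and from $P$, which relies on the precise definition of an essential vertex (three incidences, self-loops counted twice) and on the simplicity of $P$. Once this is secured, the conclusion is immediate from Lemma~\ref{lem:connection-essential}. As an alternative, one can give a self-contained argument mirroring the proof of Lemma~\ref{lem:connection-essential}: if some $e_i$ were non-essential it would be a bridge whose removal splits $H$ into two parts, one of which is a tree; the endpoint of $P$ lying in the tree part would have all incident edges non-essential, contradicting its essentiality.
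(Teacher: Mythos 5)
Your proof is correct and takes essentially the same approach as the paper: extend $P$ at both ends with essential edges $e_0,e_{\ell+1}$ that avoid $P$ and differ from each other, observe the resulting walk traverses each edge at most once, and then invoke Lemma~\ref{lem:connection-essential}. The paper's proof is terser — it simply asserts the anchor edges can be chosen with these properties since $u$ and $v$ each participate in three critical incidences while $P$ touches each endpoint via only one edge — whereas you carefully justify existence and distinctness; the underlying argument is the same.
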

\begin{proof}
The path $P$ traverses one edge incident to $u$ and one edge incident to $v$. Since each of $u$ and $v$ participates in three different critical incidences, we may find essential edges $e$ and $f$, incident to $u$ and $v$, respectively, such that $e\neq f$ and neither $e$ nor $f$ is traversed by $P$. Then adding $e$ and $f$ at the front and at the end of $P$, respectively, yields a walk $W$ that starts with $e$, ends with $f$, and traverses every edge at most once. It remains to apply Lemma~\ref{lem:connection-essential} to~$W$.
\end{proof}

With the above observations in place, we may formulate the main result of this section.

\begin{lemma}[Fern Decomposition Lemma]\label{lem:fern-decomposition}
Let $H$ be a multigraph and let $\Ff$ be the partition of the edge set of $H$ into the equivalence classes of the equivalence relation $\sim$ defined above. Then each element of $H[\Ff]$ is a fern, every non-essential vertex of $H$ belongs to exactly one element of $H[\Ff]$, and $\bigcup_{S\in H[\Ff]} \bnd S$ comprises exactly the essential vertices of $H$.

Moreover, define the quotient multigraph $\quo{H}{\Ff}$ on the vertex set $\bigcup_{S\in H[\Ff]} \bnd S$ by adding:
\begin{itemize}
 \item one edge $uv$ for each tree fern $S\in H[\Ff]$ with $\bnd S=\{u,v\}$, $u\neq v$; and
 \item one loop at $u$ for each cyclic fern $S\in H[\Ff]$ with $\bnd S=\{u\}$.
\end{itemize}
Then $\fvs{\quo{H}{\Ff}}\leq \fvs{H}$ and every vertex of $\quo{H}{\Ff}$ has degree at least $3$ in $\quo{H}{\Ff}$.
\end{lemma}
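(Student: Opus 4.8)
The plan is to derive the lemma from a chain of structural claims relating the essential edges, the essential vertices, and the equivalence relation $\sim$, with Lemmas~\ref{lem:connection-essential} and~\ref{lem:vertex-connection-essential} as the main tools. Throughout, let $H_{\mathrm{e}}$ be the submultigraph of $H$ whose edge set is exactly the set of essential edges (self-loops counted twice at their vertex). The first thing I would prove is that $H_{\mathrm{e}}$ has minimum degree at least $2$: if some vertex $w$ had exactly one essential incidence, with essential edge $e = wx$, then $e$ could not lie on a cycle of $H$ (every edge of such a cycle is essential, giving $w$ a second essential incidence), so $e$ is a bridge whose removal leaves a component on the $w$-side containing a cycle $C$; concatenating $e$, a shortest path in that component from $w$ to $C$, and one edge of $C$ gives a walk between two essential edges using each edge at most once, and Lemma~\ref{lem:connection-essential} then forces the first edge of that path to be a second essential edge at $w$, a contradiction (the degenerate case $w \in V(C)$ again yields two essential incidences). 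Consequently the essential vertices of $H$ are precisely the vertices of $H_{\mathrm{e}}$-degree at least $3$, and every non-essential vertex incident to an essential edge has $H_{\mathrm{e}}$-degree exactly $2$.

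Next I would dispose of the ``easy'' parts of the counting assertions. For a non-essential vertex $v$ and any two edges $e,f$ incident to $v$, the length-$2$ walk $e,f$ through $v$ is safe, so $e \sim f$; hence all edges incident to a fixed non-essential vertex lie in one $\sim$-class. This yields immediately that every non-essential vertex belongs to exactly one element of $H[\Ff]$ (the one induced by its incident edges, or its own singleton element if it is isolated), and that every boundary vertex of any $H[F]$, being incident to an edge inside $F$ and to an edge outside $F$, must be essential --- so $\bigcup_{S \in H[\Ff]} \bnd S$ is contained in the set of essential vertices. The reverse inclusion, and the fact that the boundary vertices of a fern indeed play the roles of leaves or of degree-$2$ cycle vertices, will fall out of the structural description obtained in the next step.

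The technical heart is the claim that each $H[F]$ is a fern; note that $H[F]$ is connected by the definition of $\sim$. Here I would analyse safe walks once more: an essential edge can be traversed by a safe walk only as its first or last edge, since immediately after an essential edge the walk arrives at an internal vertex, which must then be non-essential. It follows that, inside a fixed class $F$, the essential edges form one of a few simple shapes --- a single path of $H_{\mathrm{e}}$ between two distinct essential vertices all of whose interior vertices have $H_{\mathrm{e}}$-degree $2$, or a single cycle of $H_{\mathrm{e}}$ carrying at most one essential vertex, or nothing at all --- and $H[F]$ is obtained from this skeleton by attaching trees of non-essential edges at non-essential vertices; using Lemmas~\ref{lem:connection-essential} and~\ref{lem:vertex-connection-essential} one checks that no attached tree can contain an essential vertex, so the essential vertices lying in $H[F]$ are exactly those on the skeleton (at most two of them), and since $\bnd H[F]$ consists of essential vertices we get $|\bnd H[F]| \le 2$. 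A bookkeeping over the possible shapes then matches each one with one of the four defining conditions of a fern: a path skeleton between distinct essential vertices gives a tree whose two boundary vertices are leaves; a cycle skeleton with one essential vertex on it gives a unicyclic graph whose single boundary vertex has degree $2$ on the cycle; the remaining cases (including empty skeleton and the case where the whole of $H[F]$ has no outside edges) are handled analogously. I expect this case analysis --- and wrangling the corner cases caused by self-loops, parallel edges, and cyclic ferns --- to be the main obstacle of the whole proof. Along the way, one also sees that a non-boundary vertex of a fern has at most two essential incidences of $H$, which gives the reverse inclusion promised above: if an essential vertex had all of its incident edges in one class, it would be a non-boundary vertex of that fern and hence could not be essential.

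Finally, for the quotient: $\quo{H}{\Ff}$ is obtained from $H$ by contracting each tree fern with a two-element boundary into a single edge between its boundary vertices, contracting each cyclic fern with a one-element boundary into a self-loop, and deleting all the remaining ferns; each of these is a minor operation, so $\quo{H}{\Ff}$ is a minor of $H$, whence $\fvs{\quo{H}{\Ff}} \le \fvs{H}$ (this also follows by iterating the dissolution of Lemma~\ref{lem:diss} and discarding acyclic pieces). For the minimum-degree statement, I would argue that for an essential vertex $v$ and a class $F$ with $v \in \bnd H[F]$, the contribution of $F$ to $\deg_{\quo{H}{\Ff}}(v)$ equals the number of essential incidences of $v$ inside $F$: a tree fern with boundary $\{v,w\}$ has $v$ as a leaf and contributes one edge, a cyclic fern with boundary $\{v\}$ has $v$ of degree $2$ on its cycle and contributes a self-loop, while a tree fern with boundary $\{v\}$ turns out to contain no essential edge incident to $v$ at all, so no essential incidence of $v$ is unaccounted for. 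Summing over all ferns meeting $v$ gives $\deg_{\quo{H}{\Ff}}(v) \ge \deg_{H_{\mathrm{e}}}(v) \ge 3$, which completes the proof.
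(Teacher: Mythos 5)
Your organization is genuinely different from the paper's. You introduce the essential-edge subgraph $H_{\mathrm{e}}$ as a primary object, prove its minimum degree is at least $2$ on its support, and try to read off the ``skeleton'' of essential edges in each $\sim$-class; the paper works directly with $\sim$, first bounding the number of critical incidences in a class (Claim~\ref{cl:critical-inc-bound}) by comparing two simple safe walks sharing their first edge, and derives all structure from that. Your minor-based proof of $\fvs{\quo{H}{\Ff}} \leq \fvs{H}$ is a nice alternative to the paper's direct cycle-lifting, and the degree count at the end matches.

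However, there is a genuine gap at the ``technical heart''. The assertion ``an essential edge can be traversed by a safe walk only as its first or last edge'' is false: after an essential edge a safe walk does arrive at a vertex that, if internal, must be non-essential, but nothing prevents the walk from continuing through that non-essential vertex into another essential edge --- a path of essential edges all of whose interior vertices are non-essential can be traversed entirely by a single safe walk. What is true is only that a \emph{critical incidence} can appear only at the two ends of a safe walk. Consequently your conclusion that the essential edges of a class form a path or a cycle does not follow from the stated reasoning. The real difficulty, which you do not address, is ruling out a vertex $v$ carrying three essential edges $e_1,e_2,e_3$ all inside one $\sim$-class. For non-essential $v$ this is immediate (such a vertex has at most two essential incidences by definition), but for essential $v$ the three incidences $(v,e_i)$ are critical, so any safe walk between two of the $e_i$ must start and end at $v$, and you need the paper's Claim~\ref{cl:critical-inc-bound}-style argument: take two simple safe walks, from $e_1$ to $e_2$ and from $e_1$ to $e_3$, both starting at $v$, and inspect their maximal common prefix --- the branch vertex has three incident essential edges by Lemma~\ref{lem:connection-essential}, hence is essential, hence creates a critical incidence in the interior of one of the two walks. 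Your later statement that ``a non-boundary vertex of a fern has at most two essential incidences of $H$'' is offered as a consequence of the structural picture, which makes the sketch circular without this missing ingredient.
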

\begin{proof}
 We develop consecutive properties of $H[\Ff]$ in a series of claims. Whenever we talk about essentiality or criticality, we mean it in the graph $H$.

 \begin{claim}\label{cl:critical-inc-bound}
  For each $F\in \Ff$, $H[F]$ contains at most $2$ critical incidences.
 \end{claim}
 \begin{proof}
 For contradiction, suppose there are three different critical incidences $(u_1,e_1)$, $(u_2,e_2)$, $(u_3,e_3)$ such that $e_1,e_2,e_3\in F$. We may assume that $e_1,e_2,e_3$ are pairwise different, for otherwise, if say $e_1=e_2$, then $(u_1,e_1)$ and $(u_2,e_2)$ are the two incidences of $e_1$ and both of them are critical, implying that $F=\{e_1\}$ and $H[F]$ contains two incidences in total.
 
 Since $e_1\sim e_2$ and $e_1\sim e_3$, there are safe walks $W_{12}$ and $W_{13}$ that both start with $e_1$ and end with $e_2$ and $e_3$, respectively.  By shortcutting $W_{12}$ and $W_{13}$ if necessary we may assume that each of them traverses every edge at most once. Hence, by Lemma~\ref{lem:connection-essential}, every edge traversed by $W_{12}$ or $W_{13}$ is essential.
 Note that since $(u_1,e_1)$ is critical, both $W_{12}$ and $W_{13}$ must start at $u_1$, and hence they traverse $e_1$ in the same direction.
 
 Let $R$ be the maximal common prefix of $W_{12}$ and $W_{13}$, and let $v$ the the second endpoint of $R$. Noting that $R$ is neither empty nor equal to $W_{12}$ or $W_{13}$, we find that $v$ is incident to three different essential edges: one in the prefix $R$ and two on the suffixes of $W_{12}$ and $W_{13}$ after $R$, respectively. It follows that $v$ is essential, and consequently the incidences between $v$ and the incident edges on $W_{12}$ and $W_{13}$ are critical. This contradicts the assumption that $W_{12}$ and $W_{13}$ are safe.
 \cqed\end{proof}

 \begin{claim}\label{cl:bnd-bound}
  For each $F\in \Ff$, $|\bnd H[F]|\leq 2$.
 \end{claim}
 \begin{proof}
 Since edges incident to a non-essential vertex are always pairwise $\sim$-equivalent, it follows that every vertex of $\bnd H[F]$ is essential. Since $H[F]$ is connected by definition, for every pair of different vertices $u,v\in \bnd H[F]$ there exists a path in $H[F]$ connecting $u$ and $v$. By Lemma~\ref{lem:vertex-connection-essential}, each edge of this path is essential. Therefore, if $|\bnd H[F]| > 1$, then every vertex $u\in \bnd H[F]$ participates in a critical incidence in $H[F]$. As by Claim~\ref{cl:critical-inc-bound} there can be at most $2$ critical incidences in $H[F]$, we conclude that $|\bnd H[F]|\leq 2$.
 \cqed\end{proof}

 \begin{claim}\label{cl:tree-or-unicyclic}
  For each $F\in \Ff$, $H[F]$ is either a tree or a unicyclic graph.
 \end{claim}
 \begin{proof}
 By definition $H[F]$ is connected. It therefore suffices to show that it cannot be the case that $H[F]$ contains two different cycles. For contradiction, suppose there are such cycles, say $C$ and $D$. We consider two cases: either $C$ and $D$ share a vertex or are vertex-disjoint.
 
 Assume first that $C$ and $D$ share a vertex. Since $C$ and $D$ are different, there must exist a vertex $u$ that participates in three different incidences with edges of $E(C)\cup E(D)$. By definition, every edge of $C$ and every edge of $D$ is essential. Therefore, $u$ is essential and involved in three different critical incidences in $H[F]$. This is a contradiction with Claim~\ref{cl:critical-inc-bound}.
 
 Assume then that $C$ and $D$ are vertex-disjoint. Since $H[F]$ is connected, we can find a path $P$ in $H[F]$ whose one endpoint $u$ belongs to $V(C)$, the other endpoint $v$ belongs to $V(D)$, while all the internal vertices of $P$ do not belong to $V(C)\cup V(D)$. Observe that every edge traversed by $P$ is essential, for it cannot be a bridge whose removal leaves one of the resulting components a tree. Therefore, $u$ participates in three different incidences with essential edges  in $H[F]$: two with edges of $C$ and one with the first edge of $P$. Again, we find that $u$ is essential and involved in three different critical incidences in $H[F]$, a contradiction with Claim~\ref{cl:critical-inc-bound}.
 \cqed\end{proof}

 \begin{claim}\label{cl:boundary-on-cycle}
  If $H[F]$ contains a cycle $C$, then every vertex of $\bnd H[F]$ belongs to $V(C)$.
 \end{claim}
 \begin{proof}
  Suppose there is a vertex $u\in \bnd H[F]$ that does not lie on $C$. Since $H[F]$ is connected, there is a path $P$ in $H[F]$ from $u$ to a vertex $v\in V(C)$ that is vertex-disjoint with $C$ except for $v$. 
  
  Since $u\in \bnd H[F]$, $u$ is essential, hence we can find an essential edge $e$ incident to $u$ that is not traversed by $P$. Let $f$ be any edge of $C$ that is incident to $v$. Then adding $e$ and $f$ at the front and at the end of $P$, respectively, yields a walk in $H$ that starts with $e$, ends with $f$, and passes through every edge at most once. By Lemma~\ref{lem:connection-essential} we infer that every edge of $P$ is essential.
  
  By definition, every edge of $C$ is also essential.
  Similarly as before, we find that $v$ participates in three different incidences with essential edges: two with edges from $C$ and one with the last edge of $P$. Hence $v$ is essential and creates three critical incidences in $H[P]$, a~contradiction with Claim~\ref{cl:critical-inc-bound}.
 \cqed\end{proof}

 \begin{claim}\label{cl:bnd2}
  If $|\bnd H[F]|=2$ for some $F\in \Ff$, then $H[F]$ is a tree and both vertices of $\bnd H[F]$ are leaves of this tree.
 \end{claim}
 \begin{proof}
 Let $\bnd H[F]=\{u,v\}$.
 Suppose that $H[F]$ contains a cycle $C$. By Claim~\ref{cl:boundary-on-cycle} we have $u,v\in V(C)$, in particular $C$ is not a self-loop. Let $e,f$ be the two edges of $C$ that are incident to $u$. By definition of $\sim$, there is a safe walk $W$ in $H$ that starts with $e$ and ends with $f$, and we may assume that $W$ passes through every edge at most once. By Lemma~\ref{lem:connection-essential}, all edges of $W$ are essential. Note that every edge of $W$ is $\sim$-equivalent with both $e$ and $f$, hence $W$ is contained in $H[F]$.  
 
 As $u,v\in \bnd H[F]$, both $u$ and $v$ are essential, and hence they cannot be internal vertices of the safe walk $W$, as they would create critical incidences with neighboring edges of $W$. It follows that $W$ is actually a closed walk in $H[F]$ that does not pass through $v$, hence it contains a cycle that is different from $C$. This is a contradiction with the unicyclicity of $H[F]$, following from Claim~\ref{cl:tree-or-unicyclic}.
 
 Therefore $H[F]$ is indeed a tree. Suppose now that one of the vertices of $\bnd H[F]$, say $u$, is incident on two different edges of $F$, say $e$ and $f$. Since $e\sim f$, there is a safe walk $W$ in $H$ that starts with $e$ and ends in $f$, and we may assume that $W$ passes through every edge at most once. Again, $u\in \bnd H[F]$ implies that $u$ is essential, hence $u$ cannot be an internal vertex of $W$. So $W$ forms a non-empty closed walk in $H[F]$ that passes through every edge at most once, a contradiction with the fact that $H[F]$ is a tree.
 \cqed\end{proof}

 \begin{claim}\label{cl:bnd1}
  If $|\bnd H[F]|=1$ for some $F\in \Ff$, then $H[F]$ is either a tree or a unicyclic graph. In the latter case, the unique boundary vertex of $H[F]$ has degree $2$ in $H[F]$ and lies on the unique cycle of $H[F]$.
 \end{claim}
 \begin{proof}
 Let $u$ be the unique vertex of $\bnd H[F]$. 
 That $H[F]$ is a tree or a unicyclic graph is implied by Claim~\ref{cl:tree-or-unicyclic}. It remains to prove that if $H[F]$ is unicyclic, then $u$ has degree $2$ in $H[F]$ and it lies on the unique cycle $C$ of $H[F]$. That $u$ lies on $C$ follows by Claim~\ref{cl:boundary-on-cycle}. Note that $u\in \bnd H[F]$ implies that $u$ is essential.
 
 So assume, for the sake of contradiction, that $u$ is incident on some edge $f$ that does not belong to $C$. Since $H[F]$ is unicyclic, $C$ is the only cycle in $H[F]$, and hence $f$ is a bridge whose removal splits $H[F]$ into two connected components. One component of $H[F]-f$ contains $C$, while the other must be a tree, for $C$ is the only cycle in $H[F]$. Let $e$ be any edge of $C$ incident to $u$. Clearly $e$ is essential, hence the incidence $(u,e)$ is critical. Since $e\sim f$, there is a safe walk $W$ in $H$ that starts with $e$ and ends with $f$; note that as before, $W$ is entirely contained in $H[F]$. Again, we may assume that $W$ passes through every edge at most once. Note that since $W$ is safe, it needs to start at the vertex $u$, as otherwise the critical incidence $(u,e)$ is not among the two terminal incidences on $W$. Since the last edge of $W$ is $f$, which is a bridge, the penultimate vertex traversed by $W$ must be $u$ again. Then $W$ with the last edge removed forms a closed walk in $H[F]$ that passes through every edge at most once, which means that every edge on $W$ except for $f$ must belong to some cycle in $H[F]$, and hence is essential. In particular, the edge $e'$ traversed by $W$ just before the last visit of $u$ is essential as well. Now the incidence $(u,e')$ is critical, appears on $W$, and is not among the two terminal incidences on $W$. This is a contradiction with the safeness of~$W$.
 \cqed\end{proof}

 \begin{claim}\label{cl:bnd0}
  If $|\bnd H[F]|=0$ for some $F\in \Ff$, then $H[F]$ is either a tree or a unicyclic graph.
 \end{claim}
 \begin{proof}
 Follows immediately from Claim~\ref{cl:tree-or-unicyclic}.
 \cqed\end{proof}
 
 From Claims~\ref{cl:bnd-bound},~\ref{cl:bnd2},~\ref{cl:bnd1}, and~\ref{cl:bnd0} it follows that for every $F\in \Ff$, $H[F]$ is a fern. Moreover, since all edges incident to a non-essential vertex are pairwise $\sim$-equivalent, it follows that every non-essential vertex of $H$ belongs to exactly one element of $H[\Ff]$. Also, we observe the following.
 
 \begin{claim}\label{cl:bnd-ess}
  The set $\bigcup_{S\in H[\Ff]} \bnd S$ comprises exactly the essential vertices of $H$.
 \end{claim}
 \begin{proof}
  If $u$ is non-essential, then all edges incident to $u$ are pairwise $\sim$-equivalent and $u$ does not participate in any boundary of an element of $\Ff$.
  On the other hand, if $u$ is essential, then it participates in at least three different critical incidences. By Claim~\ref{cl:critical-inc-bound}, they cannot all belong to the same multigraph $H[F]$ for any $F\in \Ff$, hence $u$ is incident to edges belonging to at least two different elements of $\Ff$, say $F$ and $F'$. It follows that $u\in \bnd H[F]\cap \bnd H[F']$. 
 \cqed\end{proof}
 
 We are left with verifying the asserted properties of the quotient graph $\quo{H}{\Ff}$. Let $U=\bigcup_{S\in H[\Ff]} \bnd S=V(\quo{H}{\Ff})$ be the set of essential vertices of $H$.
 
 \begin{claim}\label{cl:fvs-smaller}
  $\fvs{\quo{H}{\Ff}}\leq \fvs{H}$.
 \end{claim}
 \begin{proof}
 Let $X$ be a feedback vertex set of $H$ of size $\fvs{H}$. We construct a set of vertices $X'\subseteq U$ as follows:
 \begin{itemize}
  \item For each $x\in X\cap U$, add $x$ to $X'$.
  \item For each $x\in X\setminus U$, let $F_x$ be the unique element of $\Ff$ such that $x\in V(H[F_x])$. Then, provided $\bnd H[F_x]$ is nonempty, add an arbitrary element of $\bnd H[F_x]$ to $X'$.
 \end{itemize}
 Clearly $|X'|\leq |X|=\fvs{H}$. Therefore, it suffices to argue that $X'$ is a feedback vertex set in $\quo{H}{\Ff}$.

 Consider any cycle $C'$ in $\quo{H}{\Ff}$. Construct a cycle $C$ in $H$ from $C'$ as follows:
 \begin{itemize}
  \item If $C'$ consists of a self-loop at vertex $u$, then this self-loop corresponds to a cyclic fern $H[F]$ for some $F\in \Ff$, and $u$ lies on the unique cycle of $H[F]$. Then we let $C$ be this unique cycle.
  \item Otherwise, each edge $e=uv$ traversed by $C'$ corresponds to a tree fern $H[F_e]$ for some $F_e\in \Ff$, where $\bnd F_e=\{u,v\}$. Then replace $e$ with the (unique) path in $H[F_e]$ connecting $u$ and $v$, and do this for every edge of $C'$. It is easy to see that this yields a cycle in $H$, which is $C$. 
 \end{itemize}
 As $X$ is a feedback vertex set of $H$, there is some $x\in X$ that lies on $C$. It is then easy to see that the vertex added for $x$ to $X'$ lies on $C'$. Since $C'$ was chosen arbitrarily, we conclude that $X'$ is a feedback vertex set of $\quo{H}{\Ff}$. 
 \cqed\end{proof}

 \begin{claim}\label{cl:mindeg}
  In $\quo{H}{\Ff}$, every vertex has degree at least $3$.
 \end{claim}
 \begin{proof}
  By Claim~\ref{cl:bnd-ess}, every element of $U=V(\quo{H}{\Ff})$ is an essential vertex of $H$. Since every essential vertex participates in at least $3$ critical incidences, it suffices to show that the degree of a vertex $u\in U$ in $\quo{H}{\Ff}$ matches the number of critical incidences that $u$ participates in. This is easy to see: every tree fern $S\in H[\Ff]$ with $u\in \bnd S$ contributes $1$ to the degree of $u$ in $\quo{H}{\Ff}$ and contains one critical incidence in which $u$ participates, while each cyclic fern $S$ with $u\in \bnd S$ contributes $2$ to the degree of $u$ in $\quo{H}{\Ff}$ and contains two critical incidences in which $u$ participates.
 \cqed\end{proof}

 Claims~\ref{cl:fvs-smaller} and~\ref{cl:mindeg} finish the proof of Lemma~\ref{lem:fern-decomposition}.
\end{proof}

The decomposition $H[\Ff]$ provided by Lemma~\ref{lem:fern-decomposition} will be called the {\em{fern decomposition}} of~$H$.

\subsection{Static variant of the \lemmaA{}}
\label{ssec:lemmaA_static}
In this section we use the notion of a fern decomposition, introduced in the previous section, to prove the static variant of the \lemmaA{}. Intuitively, given an augmented structure $(\Af,H)$, we make a fern decomposition $H[\Ff]$ of $H$ and split $\Af$ into an ensemble $\Xx$ accordingly. The augmented structure $(\Af^\star,H^\star)=\Contract(\Af,H)$ is defined as follows: $H^\star=\quo{H}{\Ff}$ and $\Af^\star=\Contract^p(\Xx)$ for $p$ sufficiently large so that the Replacement Lemma can be applied to conclude that $\Af^\star$ contains enough information to infer the rank-$q$ type of $\Af$; here, $q$ is the quantifier rank of the given sentence $\varphi$. However, the split of $\Af$ into $\Xx$ needs to be done very carefully so that we will be able to maintain it in a dynamic data structure.

We proceed to a formal description.
First, let $\Ff$ be the partition of the edges of $H$ into equivalence classes of the relation $\sim$ defined in Section~\ref{ssec:lemmaA_ferns}. Then $H[\Ff]$ is the fern decomposition of $H$, with properties described by the Fern Decomposition Lemma (Lemma~\ref{lem:fern-decomposition}). 

Now, our goal is to carefully partition $\Af$ into an ensemble $\Xx$ ``along'' the fern decomposition $H[\Ff]$. Recall that $H$ is a supergraph of the Gaifman graph of $\Af$, which means that if vertices $v,w$ are bound by some relation in $\Af$, then at least one edge $vw$ is present in $H$. Let $U$ be the set of essential vertices of $H$. The ensemble $\Xx$ is defined as follows.
\begin{itemize}
 \item For every fern $S\in H[\Ff]$, create a boundaried $\Sigma$-structure $\Af_S$, where the universe of $\Af_S$ is $V(S)$ and the boundary is $\bnd S$. So far make all relations in $\Af_S$ empty.
 \item For every $R\in \Sigma^{(1)}\cup \Sigma^{(2)}$ and every $w\in V(\Af)\setminus U$ such that $R(w)$ (or $R(w,w)$ in case $R$ is binary) holds in $\Af$, find the unique fern $S\in H[\Ff]$ that contains $w$ ($S$ exists and is unique by Lemma~\ref{lem:fern-decomposition}). Then make $R(w)$ (resp. $R(w,w)$) hold in $\Af_S$, that is, add $w$ (resp. $(w,w)$) to $R^{\Af_S}$.
 \item For every $R\in \Sigma^{(2)}$ and every pair of distinct vertices $v,w\in V(\Af)$ with $(v,w)\in R^{\Af}$, choose any fern $S\in H[\Ff]$ such that $vw$ is an edge in $S$ (such a fern exists by the assumption that $H$ is a supergraph of the Gaifman graph of $\Af)$. Then make $R(v,w)$ hold in $\Af_S$, that is, add $(v,w)$ to $R^{\Af_S}$.
 \item For every $u\in U$, create a boundaried $\Sigma$-structure $\Af_u$ with both the universe and the boundary consisting only of $u$. The structure $\Af_u$ retains the interpretation of all unary and binary relations on $u$ from $\Af$: For each $R\in \Sigma^{(1)}$ we have $R^{\Af_u}=R^{\Af}\cap \{u\}$ and for each $R\in \Sigma^{(2)}$ we have $R^{\Af_u}=R^{\Af}\cap \{(u,u)\}$. Note that the values of nullary predicates are {\em{not}} retained from $\Af$.
 \item Finally, we create a boundaried $\Sigma$-structure $\Af_\emptyset$ with empty universe and boundary that retains the interpretation of all nullary predicates from $\Af$: for each $R\in \Sigma^{(0)}$, we have $R^{\Af_\emptyset}=R^{\Af}$.
\end{itemize}
The ensemble $\Xx$ comprises all $\Sigma$-structures described above, that is, structures $\Af_S$ for $S\in H[\Ff]$, $\Af_u$ for $u\in U$, and $\Af_\emptyset$. These elements of $\Xx$ will be respectively called {\em{fern elements}}, {\em{singleton elements}}, and the {\em{flag element}}.

From the construction we immediately obtain the following.

\begin{lemma}\label{lem:fern-smash}
 For each $R\in \Sigma$, $\{R^\Xf\colon \Xf\in \Xx\}$ is a partition of $R^\Af$. Moreover,
 $$\Af=\Smash(\Xx).$$
 Also, for every fern element $\Af_S$ of $\Xx$, $S$ is a supergraph of the Gaifman graph of $\Af_S$.
\end{lemma}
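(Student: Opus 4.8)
The plan is to verify the three claims essentially by unwinding the definitions of the ensemble $\Xx$, of $\Smash$, and of the decomposition $H[\Ff]$.

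\emph{First claim: $\{R^\Xf\colon \Xf\in\Xx\}$ partitions $R^\Af$ for every $R\in\Sigma$.} I would split by the arity of $R$. For $R\in\Sigma^{(0)}$, only the flag element $\Af_\emptyset$ retains $R^\Af$, while every fern element and every singleton element has $R$ empty; so the family consists of $R^\Af$ together with empty sets, which trivially partitions $R^\Af$. For $R\in\Sigma^{(1)}$, a vertex $w$ with $R(w)$ holding in $\Af$ is placed either in the unique fern $S\in H[\Ff]$ containing $w$ (if $w\notin U$), using the uniqueness guarantee of Lemma~\ref{lem:fern-decomposition}, or in the singleton $\Af_u$ with $u=w$ (if $w\in U$); no other element of $\Xx$ contains $w$ in the interpretation of $R$, since fern elements only receive non-essential vertices and singleton elements only their own vertex. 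Hence the interpretations are pairwise disjoint and their union is $R^\Af$. For $R\in\Sigma^{(2)}$ I would handle self-loops $(w,w)$ exactly as in the unary case, and for a genuine arc $(v,w)$ with $v\neq w$ observe that exactly one fern $S$ is \emph{chosen} (the construction fixes one such $S$; the element is added only to $\Af_S$), and that no singleton or flag element receives $(v,w)$ since those have at most one vertex. Disjointness and exhaustiveness follow.

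\emph{Second claim: $\Af=\Smash(\Xx)$.} Recall $\Smash(\Xx)=\forget_{\bigcup_{\Gf\in\Xx}\bnd\Gf}\bigl(\bigoplus_{\Gf\in\Xx}\Gf\bigr)$. I would first check that the elements of $\Xx$ are pairwise joinable, i.e. that $V(\Gf)\setminus\bnd\Gf$ are pairwise disjoint over $\Gf\in\Xx$: two fern elements $\Af_S,\Af_{S'}$ share only boundary vertices because, by Lemma~\ref{lem:fern-decomposition}, a non-essential vertex lies in a unique fern and the shared essential vertices lie in the boundaries; a fern element and a singleton $\Af_u$ can share only $u$, which is in the boundary of both; the flag element has empty universe. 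Next, the universe of $\bigoplus_{\Gf\in\Xx}\Gf$ is $\bigcup_\Gf V(\Gf)=V(\Af)$ (every vertex of $\Af$ is covered: non-essential ones by their fern, essential ones by their singleton), and after forgetting all boundary vertices the universe is still $V(\Af)$ since the $\forget$ operation does not delete vertices. Finally, for every $R\in\Sigma$, the interpretation of $R$ in the join is $\bigcup_{\Gf\in\Xx}R^\Gf$, which by the first claim equals $R^\Af$; $\forget$ does not change relations. Hence $\Smash(\Xx)=\Af$ as boundaried structures with empty boundary, i.e. as structures.

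\emph{Third claim: for each fern element $\Af_S$, $S$ is a supergraph of the Gaifman graph of $\Af_S$.} Take distinct $v,w\in V(\Af_S)=V(S)$ bound by some relation in $\Af_S$. Only a binary $R$ with $(v,w)\in R^{\Af_S}$ or $(w,v)\in R^{\Af_S}$ can bind them, and by construction such an arc was added to $\Af_S$ only for a fern $S$ in which $vw$ is an edge. Therefore $vw\in E(S)$, which is exactly the statement that $G(\Af_S)$ is a subgraph of $S$ (the Gaifman graph is simple and undirected, so parallel edges in $S$ do not matter).

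I expect the only genuinely delicate point to be the joinability check in the second claim, i.e. confirming that no non-boundary vertex is shared between two elements of $\Xx$; this is where the structural guarantees of the Fern Decomposition Lemma (Lemma~\ref{lem:fern-decomposition}) — uniqueness of the fern containing a non-essential vertex, and the fact that $\bigcup_{S\in H[\Ff]}\bnd S$ is exactly the set of essential vertices — do all the work. Everything else is a routine matter of tracking where each tuple of each relation is placed.
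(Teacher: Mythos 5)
Your proposal is correct and matches the paper's (implicit) approach: the paper states the lemma follows ``immediately from the construction'' and omits the proof, and your write-up is precisely the routine unwinding of the definitions of $\Xx$, $\Smash$, and $H[\Ff]$ that justifies that. The one point you flag as delicate---that two distinct elements of $\Xx$ overlap only on boundary vertices---is indeed the crux, and your reliance on Claim~\ref{cl:bnd-ess} together with Claim~\ref{cl:critical-inc-bound} (to conclude that an essential vertex lying in $V(S)$ necessarily lies in $\bnd S$) is exactly the right argument.
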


Let us discuss the intuition. The information about $\Af$ is effectively partitioned among the elements of ensemble $\Xx$.
Fern elements store all information about binary relations between distinct vertices and unary and binary relations on non-essential vertices; they effectively are induced substructures of $\Af$, except that the relations on boundaries are cleared. Note that in some corner cases, a single tuple $(u,v)\in R^{\Af}$, where $u\neq v$, may be stored in several different fern elements, and hence the fern to store it is chosen non-deterministically in the construction. For instance, if in $H$ there are multiple parallel edges connecting $u$ and $v$, then they are all in different ferns, and the tuple $(u,v)$ can be stored in any single of them.
Singleton elements store information concerning single essential vertices. The idea is that when this information is updated, we only need to update a single singleton element corresponding to an essential vertex, rather than all fern elements containing this essential vertex on respective boundaries.
Similarly, the flag element stores the information on flags in $\Af$, so that it can be quickly updated without updating all other elements of $\Xx$.

We proceed to the proof of the static variant of \lemmaA{}. Recall that we work with a given sentence $\varphi\in \msotwo[\Sigma]$. Let $q$ be the rank of $\varphi$.
By the Replacement Lemma (Lemma~\ref{lem:replacement_lemma}) and Lemma~\ref{lem:fern-smash}, we may compute a number $p$, a signature $\Gamma^p$, and a mapping $\Infer\colon \Types^{p,\Gamma^p}\to \Types^{q,\Sigma}$ such that
$$\tp^q(\Af)=\Infer\left(\tp^p(\Contract^p(\Xx))\right).$$
Hence, for \lemmaA{} it suffices to set
$$\Gamma\coloneqq \Gamma^p,\qquad \Contract(\Af,H)\coloneqq (\Contract^p(\Xx),\quo{H}{\Ff}),$$
where $\quo{H}{\Ff}$ is the quotient graph defined in the statement of the Fern Decomposition Lemma, and
$$\psi\coloneqq\bigvee \left\{ \bigwedge \alpha\colon \alpha\in \Types^{p,\Gamma^p}\textrm{ such that } \varphi\in \Infer(\alpha)\right\},$$
where $\bigwedge \alpha$ denotes the conjunction of all sentences contained in type $\alpha$. That these objects satisfy the conclusion of \lemmaA{} follows directly from the Replacement Lemma and the Fern Decomposition Lemma.

\subsection{Dynamic maintenance of fern decomposition}
\label{ssec:lemmaA_dynamic_fern}
\newcommand{\dcrit}{\textsf{critical\_degree}}
\newcommand{\addquo}{\textsf{addQuotient}}
\newcommand{\delquo}{\textsf{delQuotient}}
\newcommand{\attach}{\textsf{attachTree}}
\newcommand{\detach}{\textsf{detachTree}}
\newcommand{\edgerep}{\textsf{edge\_representatives}}
\newcommand{\name}{\textsf{name}}
\newcommand{\makeess}{\textsf{makeEssential}}
\newcommand{\makenon}{\textsf{makeNonEssential}}
\newcommand{\Add}{\mathrm{add}}
\newcommand{\Del}{\mathrm{del}}
\newcommand{\attachment}{\textsf{attachment}}
\newcommand{\Trees}{\textsf{Trees}}
\newcommand{\joint}{\textsf{joint\_edge}}

Let $H$ be a~multigraph, and let $\Fc$ be a~partition of its edges into equivalence classes of the relation~$\sim$ defined in Section~\ref{ssec:lemmaA_ferns}.
In Section~\ref{ssec:lemmaA_static} we saw that the quotient multigraph $\quo{H}{\Fc}$ can be used in the proof of the static variant of the~\lemmaA.
Now, we show how to efficiently maintain such a~multigraph together with the~fern decomposition~$H[\Fc]$ assuming that the~input multigraph~$H$ is dynamically modified.
The idea is to maintain a~forest~$\Delta_\Tc$ of top trees such that each tree $T \in \Tc$ corresponds to a~different fern $S \in H[\Fc]$.
It turns out that a~single update of $H$ causes only a~constant number of changes to such a~representation.
The technique presented here is not new: it was previously used by Alman et al.~\cite{AlmanMW20} to maintain a~feedback vertex set of size~$k$ in dynamic graphs.
However, due to the fact that our goal is to monitor any \msotwo-definable property, we require stronger invariants to hold,
and consequently, our data structure needs to be more careful in the process of updating its inner state.


Let us clarify how we will represent the fern decomposition of a~multigraph~$H$.
For a~graph~$G$ and a~surjection $\name : V(G) \to V'$, denote by $\name(G)$ a~multigraph on the vertex set $V'$ with edges of the form $\{ \name(u)\name(v) \mid uv \in E(G) \}$.
A~boundaried tree $(T, \bnd T)$ together with a~mapping $\name: V(T) \to V(S)$ \emph{represents a~fern} $(S, \bnd S)$ if $\name(T) = S$, and additionally:
\begin{itemize}
  \item if $S$ is a~tree, then $\name$ is a~bijection, and $\bnd T = \name^{-1}(\bnd S)$;
  \item if $S$ is a~unicyclic graph, then $\bnd T = \{x, x'\}$, where $x'$ is a~leaf of~$T$, $\name(x) = \name(x') = u$ where $u$ is some vertex of $S$, and $\name|_{V(T) \setminus \{ x' \}}$ is a~bijection.
    Moreover,  if $|\bnd S| = 1$ then $u \in \bnd S$, and otherwise $u$ is any~vertex on the unique cycle in $S$. Intuitively, $T$ is obtained from $S$ by \emph{splitting} one of the vertices on the unique cycle in $S$ (see Figure~\ref{fig:unicyclic_rep}).
\end{itemize}
If the mapping $\name$ is clear from the context, we may say that $T$ represents $S$.

\begin{figure}[ht]
\captionsetup[subfigure]{labelformat=empty}
\centering
\begin{subfigure}{0.45\textwidth}
\centering
{
  \newcommand{\Ang}{90}
\newcommand{\AngSmall}{8}
\newcommand{\distN}{2}
\newcommand{\distn}{3.3}
\begin{tikzpicture}[scale=0.6]
  \node[vertex] (1) at (\Ang:\distN) [label=right:$u_1$]{};
  \node[vertex] (2) at (\Ang+72:\distN) [label=right:$u_2$]{};
  \node[vertex] (3) at (\Ang+2*72:\distN) [label=left:$u_3$]{};
  \node[vertex] (4) at (\Ang+3*72:\distN) [label=right:$u_4$]{};
  \node[vertex] (5) at (\Ang+4*72:\distN) [label=left:$u_5$]{};

  \coordinate (6) at (\Ang+\AngSmall:\distn) {};
  \coordinate (7) at (\Ang-\AngSmall:\distn) {};
  \coordinate (8) at (\Ang+\AngSmall+72:\distn) {};
  \coordinate (9) at (\Ang-\AngSmall+72:\distn) {};
  \coordinate (10) at (\Ang+\AngSmall+2*72:\distn) {};
  \coordinate (11) at (\Ang-\AngSmall+2*72:\distn) {};
  \coordinate (12) at (\Ang+\AngSmall+3*72:\distn) {};
  \coordinate (13) at (\Ang-\AngSmall+3*72:\distn) {};
  \coordinate (14) at (\Ang+\AngSmall+4*72:\distn) {};
  \coordinate (15) at (\Ang-\AngSmall+4*72:\distn) {};

  \draw (1) -- (2)
    (2) -- (3)
    (3) -- (4)
    (4) -- (5)
    (5) -- (1)

    (1) -- (6)
    (1) -- (7)
    (6) -- (7)

    (2) -- (8)
    (2) -- (9)
    (8) -- (9)

    (3) -- (10)
    (3) -- (11)
    (10) -- (11)

    (4) -- (12)
    (4) -- (13)
    (12) -- (13)

    (5) -- (14)
    (5) -- (15)
    (14) -- (15);
\end{tikzpicture}
}
\caption{$S$}
\end{subfigure}
\begin{subfigure}{0.45\textwidth}
\centering
{
  \newcommand{\Ang}{120}
\newcommand{\AngSmall}{8}
\newcommand{\distN}{2}
\newcommand{\distn}{3.3}
\begin{tikzpicture}[scale=0.6]
  \node[bndvertex] (1) at (\Ang:\distN) [label=left:$x_1$]{};
  \node[vertex] (2) at (\Ang+60:\distN) [label=right:$x_2$]{};
  \node[vertex] (3) at (\Ang+2*60:\distN) [label=left:$x_3$]{};
  \node[vertex] (4) at (\Ang+3*60:\distN) [label=right:$x_4$]{};
  \node[vertex] (5) at (\Ang+4*60:\distN) [label=left:$x_5$]{};
  \node[bndvertex] (0) at (\Ang+5*60:\distN) [label=right:$x_1'$]{};

  \coordinate (6) at (\Ang+\AngSmall:\distn) {};
  \coordinate (7) at (\Ang-\AngSmall:\distn) {};
  \coordinate (8) at (\Ang+\AngSmall+60:\distn) {};
  \coordinate (9) at (\Ang-\AngSmall+60:\distn) {};
  \coordinate (10) at (\Ang+\AngSmall+2*60:\distn) {};
  \coordinate (11) at (\Ang-\AngSmall+2*60:\distn) {};
  \coordinate (12) at (\Ang+\AngSmall+3*60:\distn) {};
  \coordinate (13) at (\Ang-\AngSmall+3*60:\distn) {};
  \coordinate (14) at (\Ang+\AngSmall+4*60:\distn) {};
  \coordinate (15) at (\Ang-\AngSmall+4*60:\distn) {};

  \draw (1) -- (2)
    (2) -- (3)
    (3) -- (4)
    (4) -- (5)
    (5) -- (0)

    (1) -- (6)
    (1) -- (7)
    (6) -- (7)

    (2) -- (8)
    (2) -- (9)
    (8) -- (9)

    (3) -- (10)
    (3) -- (11)
    (10) -- (11)

    (4) -- (12)
    (4) -- (13)
    (12) -- (13)

    (5) -- (14)
    (5) -- (15)
    (14) -- (15);
\end{tikzpicture}
}
\caption{$T$}
\end{subfigure}
\caption{A~cyclic fern~$S$ with empty boundary and a~boundaried tree~$T$ representing~$S$. Here, we have $\bnd T = \{ x_1, x_1' \}$ and we set $\name(x_i) = u_i$ and $\name(x_1') = u_1$ (and the corresponding subtrees of $S$ and $T$ are mapped to each other). Note that, by definition, at least one of the vertices $x_1$ and $x_1'$ must be a~leaf of~$T$.}
\Description{Transforming a cyclic fern into a boundaried tree.}
\label{fig:unicyclic_rep}
\end{figure}

We say that a~forest $\Tc$ of boundaried trees together with a~mapping $\name\colon V(\Tc) \to V(H)$ \emph{represents the fern decomposition $H[\Fc]$} of a~multigraph $H$ if every fern from $H[\Fc]$ is represented by a~different boundaried tree $(T, \bnd T)$ in~$\Tc$ together with a~mapping $\name|_{V(T)}$.
Again, if the mapping $\name$ is clear from the context, we may say that $\Tc$ represents $H[\Fc]$.

Given a~dynamic multigraph~$H$, we are going to maintain a~forest $\Delta_\Tc$ of top trees and a~dynamic dictionary $\name \colon V(\Tc) \to V(H)$ such that $\Tc$ with $\name$ represents the fern decomposition of~$H$.
Our data structure will perform the updates on the pair $(\Delta_\Tc, \name)$, where whenever it adds or removes a~vertex~$x$ of~$\Tc$, it immediately updates the value of $\name(x)$.
We assign to each top tree $\Delta_T \in \Delta_\Tc$ one of the two auxiliary states: either $\Delta_T$ is \emph{attached} (which intuitively means that it already represents some~fern $S \in H[\Fc]$), or it is \emph{detached} (which intuitively means that it is being modified).

Now, we describe how we will represent the quotient multigraph~$\quo{H}{\Fc}$.
Recall from Lemma~\ref{lem:fern-decomposition} that the vertices of $\quo{H}{\Fc}$ correspond to the essential vertices of $H$, and each edge of $\quo{H}{\Fc}$ comes from some~fern in the decomposition $H[\Fc]$.
Therefore, we can represent the vertices of $\quo{H}{\Fc}$ by a~subset of $V(H)$, and to each edge $f \in E(\quo{H}{\Fc})$ we can assign a~top tree $\Delta_{T_f} \in \Delta_\Tc$ representing the corresponding fern in $H$.
We store vertices and edges of~$\quo{H}{\Fc}$ in dynamic dictionaries so that we can access and modify them in time $O(\log |H|)$.
Furthermore, to each attached top tree~$\Delta_T \in \Delta_\Tc$ we assign a~tuple $\tup{a}_T$ of vertices of $\quo{H}{\Fc}$ ($|\tup{a}_T| \leq 2$), called hereinafter an~\emph{attachment tuple of $T$}, so that:
\begin{itemize}
  \item if $\tup{a}_T = (u, v)$, then $T$ represents a~fern which corresponds to an~edge $uv$ in $E(\quo{H}{\Fc})$;
  \item if $\tup{a}_T = (v)$, then $T$ represents a~tree fern with boundary $\{ v \}$;
  \item if $\tup{a}_T = \varepsilon$, then $T$ represents a~fern with empty boundary.
\end{itemize}
We store this assignment in a~dynamic dictionary $\attachment \colon \Trees(\Delta_\Tc) \to V(\quo{H}{\Fc})^{\leq 2}$.

We consider a~restricted model of computation, where only the following procedures can be called to modify the multigraph $\quo{H}{\Fc}$:
\begin{itemize}
  \item $\addquo(u)$: adds an~isolated vertex~$u$ to $V(\quo{H}{\Fc})$;
  \item $\delquo(u)$: removes an~isolated vertex~$u$ from $V(\quo{H}{\Fc})$;
  \item $\attach(\tup{a}, \Delta_T)$: takes a~tuple $\tup{a}$ of vertices of $\quo{H}{\Fc}$ ($|\tup{a}| \leq 2$) and a~detached top tree $\Delta_T \in \Delta_\Tc$ and marks $\Delta_T$ as attached by setting $\attachment(\Delta_T) \coloneqq \tup{a}$. 
    
    Moreover, if $\tup{a} = (u, v)$, adds an~edge $uv$ to $E(\quo{H}{\Fc})$ (if $u = v$, it adds a~self-loop $uu$), otherwise it does not modify $\quo{H}{\Fc}$.
  \item $\detach(\tup{a}, \Delta_T)$: analogous to $\attach(\tup{a}, \Delta_T)$, but takes an~attached tree $\Delta_T$ instead, and marks $\Delta_T$ as detached by temporarily removing $\Delta_T$ from the domain of $\attachment$.

    Moreover, instead of adding an~edge to $E(\quo{H}{\Fc})$, we remove it.
\end{itemize}
The purpose of introducing such restrictions will become more clear in Section~\ref{ssec:lemmaA_dynamic_contractions}, where we will augment the structure of top trees with information about \msotwo types.
Having a~compact description of possible modifications of $\quo{H}{\Fc}$ will make the~arguments there simpler, because we will only  need to argue how to maintain types under these modifications.

We are ready to formulate the~main result of this section.

\begin{lemma}
  \label{lem:dynamic_ferns}
  Let $H$ be a~dynamic multigraph, initially empty, where we are allowed to modify (i.e., add or remove) edges and isolated vertices.
  Then, there exists a~data structure $\F$ which maintains:
  \begin{itemize}
    \item a~forest~$\Delta_\Tc$ of top trees of $\Tc$, where $\Tc$ is a~forest of boundaried trees, and a~dynamic dictionary $\name\colon V(\Tc) \to V(H)$ such that $\Tc$ together with $\name$ represents the fern decomposition of $H$;
    \item the~quotient multigraph $\quo{H}{\Fc}$ with a~dynamic dictionary $\attachment: \Trees(\Delta_\Tc) \to V(\quo{H}{\Fc})^{\leq 2}$ defined as described above;
    \item dynamic dictionaries $\name^{-1} : V(H) \to 2^{V(\Tc)}$ and $\attachment^{-1} : V(\quo{H}{\Fc})^{\leq 2} \to 2^{\Trees(\Delta_\Tc)}$, storing inverse functions of~$\name$ and $\attachment$, respectively (the values of $\name^{-1}(u)$ and $\attachment^{-1}(\tup{a})$ are stored in dynamic sets);
    \item a~dynamic dictionary $\edgerep : V(H)^2 \to 2^{V(\Tc)^2}$, that given a~pair $(u, v)$ of vertices of $H$, returns a~dynamic set comprising all pairs $(x, y) \in V(\Tc)^2$ such that $xy \in E(\Tc)$, $\name(x) = u$, and $\name(y) = v$.
  \end{itemize}
  Moreover, the following additional invariants hold.
  \begin{enumerate}[label=(\Alph*)]
    \item \label{inv:all_attached} After performing an~update of~$H$, every top tree $\Delta_T \in \Delta_\Tc$ must be attached. 
    \item \label{inv:modify_detached_tree} Whenever an~operation on a~top tree $\Delta_T \in \Delta_\Tc$ is performed by $\F$, $\Delta_T$ must be detached.
    \item \label{inv:modify_detached_vertex} Whenever $\addquo(u)$ or $\delquo(u)$ is called by $\F$, every top tree $\Delta_T \in \Delta_\Tc$ containing a~vertex $x$ such that $\name(x) = u$ must be detached.
  \end{enumerate}
  $\F$ handles each update of $H$ in time $\Oh{\log |H|}$.
  Additionally, each update of~$H$ requires $\Oh{1}$ operations on $(\Delta_\Tc, \name)$, $\Oh{1}$ calls to $\addquo$, $\delquo$, $\attach$, $\detach$, and $\Oh{1}$ modifications of $\name^{-1}$, $\attachment^{-1}$, and $\edgerep$.
\end{lemma}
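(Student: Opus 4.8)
The plan is to reduce each of the six allowed updates of $H$---adding/removing an isolated vertex, adding an edge, removing an edge---to a bounded sequence of manipulations of the forest of top trees, the name dictionary, the auxiliary inverse dictionaries, and the quotient multigraph, while re-establishing invariants \ref{inv:all_attached}--\ref{inv:modify_detached_vertex} at the end of each update. The central combinatorial claim driving everything is that a single edge insertion or deletion in $H$ changes the equivalence relation $\sim$ only locally: there is a constant-size set of ferns in $H[\Fc]$ that are affected, and the remaining ferns, together with their boundaries and their role in $\quo{H}{\Fc}$, stay exactly the same. I would first prove this locality statement as a standalone claim, by analyzing how the essentiality status of edges and vertices can change. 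Inserting an edge $e=uv$ can (i) turn some bridges on the $u$--$v$ path into cycle edges, hence essential; (ii) promote $u$ and $v$ (and possibly nothing else) to essential vertices; and symmetrically for deletion. The key point, which follows from Lemmas~\ref{lem:connection-essential} and~\ref{lem:vertex-connection-essential}, is that the newly essential edges all lie on the unique $u$--$v$ path in the relevant fern(s), and the set of ferns whose edge set changes its $\sim$-class is confined to the $O(1)$ ferns incident to $u$ and $v$ and, in the bridge-promotion case, to the fern containing the relevant bridge path. So at most a constant number of ferns are destroyed, split, merged, or have their boundary changed; everything else is untouched.

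Given the locality claim, the algorithm for each update is: locate the affected ferns using $\name^{-1}$, $\attachment^{-1}$, and $\edgerep$ (a constant number of dictionary lookups, each costing $\Oh{\log|H|}$); \detach{} the corresponding top trees and, if needed, \delquo{} any essential vertices that are about to stop being essential (invariant \ref{inv:modify_detached_vertex} is satisfied because we have detached every incident tree first); then perform the actual surgery on the detached top trees using the restricted primitives $\mathsf{link}$, $\mathsf{cut}$, $\mathsf{split}$, $\mathsf{join}$, $\mathsf{expose}$, $\mathsf{create}$, $\mathsf{destroy}$, together with $\mathsf{jump}$ and $\mathsf{meet}$ to find the right split points on the relevant path; simultaneously update $\name$ (and $\name^{-1}$, $\edgerep$) for every vertex of $T$ that is created or removed, being careful about the split-vertex convention in the representation of cyclic ferns (Figure~\ref{fig:unicyclic_rep}); finally \addquo{} any newly essential vertices and \attach{} each rebuilt top tree with the correct attachment tuple, which restores invariants \ref{inv:all_attached} and \ref{inv:modify_detached_tree}. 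For vertex additions and deletions in $H$ the bookkeeping is trivial: an isolated vertex of $H$ contributes a single-vertex (empty) boundaried tree with no edges, so we only touch $\name$, $\name^{-1}$, and---since such a vertex is non-essential---nothing in $\quo{H}{\Fc}$.

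The main obstacle is the case analysis in the edge-update surgery, not the complexity accounting. Several genuinely distinct sub-cases arise: whether $u$ and $v$ are in the same fern or different ferns; whether they are essential before and/or after; whether the new edge closes a cycle inside a tree fern (producing a cyclic fern, which must be represented by the split-vertex trick) or merges two components; and, for deletions, whether removing the edge breaks a cyclic fern back into a tree fern, disconnects a fern into two, or demotes a previously essential vertex. Each sub-case must be shown to (a) correctly reproduce the new fern decomposition $H[\Fc]$ locally, (b) touch only $\Oh{1}$ ferns and hence $\Oh{1}$ top trees, each operated on by $\Oh{\log|H|}$ top-tree primitives (invoking Theorem~\ref{thm:top_trees}), and (c) leave all dictionaries consistent. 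I would organize this as a table of cases with, for each, the precise sequence of \detach/\delquo/top-tree-primitive/\addquo/\attach{} calls, and argue correctness by appealing to the static Fern Decomposition Lemma (Lemma~\ref{lem:fern-decomposition}) applied to the updated graph restricted to the affected region. The time bound then follows immediately: $\Oh{1}$ detach/attach/addQuotient/delQuotient calls, $\Oh{1}$ top trees modified with $\Oh{\log|H|}$ primitives each, and $\Oh{1}$ dictionary updates of cost $\Oh{\log|H|}$, for a total of $\Oh{\log|H|}$ per update of $H$.
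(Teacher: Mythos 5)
Your high-level plan—maintain $\Fc$ and $\quo{H}{\Fc}$ via top trees, argue that each update touches only $\Oh{1}$ ferns, and reorganize them with $\Oh{\log|H|}$ top-tree primitives—matches the paper's strategy at a bird's-eye level, and your ordering of $\detach$/$\delquo$/surgery/$\addquo$/$\attach$ correctly respects invariants (B) and (C). However, you are missing the paper's central simplification, and without it the case analysis you propose is substantially harder than you acknowledge. The paper never attacks a general edge insertion/deletion $uv$ head-on: it first temporarily adds two self-loops at each of $u$ and $v$ (making both vertices essential, since each self-loop contributes two critical incidences), at which point Claims~\ref{cl:df_edge_fern} and~\ref{cl:df_essential} guarantee that $e=uv$ forms its own singleton fern and the rest of $\Fc$ is unaffected; then it removes the four self-loops. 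This reduces all of the hard combinatorics to the single primitive ``add/delete a self-loop at $u$,'' and the full case analysis lives there. Your table-of-cases approach has to simultaneously juggle: $u$ and/or $v$ essential or not; same fern or different ferns; whether a new cycle appears; and where on that cycle the paths to the boundary vertices of the containing fern attach (which can split a single tree fern into as many as four new ferns, a consequence your sketch ``new edge closes a cycle inside a tree fern, producing a cyclic fern'' understates). That is a genuinely larger surface area for errors than the paper's self-loop reduction.

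There is a second, more substantive gap on the deletion side. When a vertex $u$ is demoted from essential to non-essential, its incident ferns merge (your $\makenon$-type step); but this merge changes the degrees in $\quo{H}{\Fc}$ of $u$'s quotient-neighbours, so another vertex could in principle be demoted, and then another, and so on. Your proposal treats demotion as a one-shot local event, but you never argue the cascade stops. The paper controls this by tracking $\dcrit^H(\cdot)$ and showing (Claims~\ref{cl:df_joint_fern}, \ref{cl:df_two_tree_ferns}, \ref{cl:df_large_dcrit}) that a self-loop deletion can demote at most two vertices---$u$ itself, and possibly its unique remaining neighbour when $\dcrit^H(u)=3$---and that the second demotion cannot propagate further because it only merges trees without changing the essentiality of any other edge. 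Without an explicit termination argument of this kind, your claim that each update causes $\Oh{1}$ calls to $\addquo/\delquo/\attach/\detach$ is unsupported, and so is the $\Oh{\log|H|}$ time bound.
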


\begin{proof}
  We did not mention mappings $\name^{-1}$, $\attachment^{-1}$ and $\edgerep$ before, since they are introduced mainly for technical reasons.
  Let us observe that each operation on $(\Delta_\Tc, \name)$ and $\quo{H}{\Fc}$ naturally induces operations on $\name^{-1}$, $\attachment^{-1}$ and $\edgerep$, hence we can omit the updates of those structures in what follows.
  For example, we will assume that whenever we link/cut vertices $x$ and $y$ in $\Delta_\Tc$, we update the values of $\edgerep(u, v)$, where $\{u, v\} = \{\name(x), \name(y)\}$.

  We begin with the following auxiliary facts.

  \begin{claim}
    \label{cl:df_edge_fern}
    Let $u$, $v$ be essential vertices of a~multigraph $H$ (possibly $u = v$).
    Then, each edge $e \in E(H)$ with endpoints $u$ and $v$ forms a~separate fern $S_e$ in the fern decomposition $H[\Fc]$, and $S_e$~contributes an~edge $uv$ to $E(\quo{H}{\Fc})$.
  \end{claim}
  \begin{proof}
    Observe that $e$ is essential in $H$, for it is either a~self-loop (and thus it lies on a~cycle), or $u \neq v$, and then $e$ is essential by Lemma~\ref{lem:vertex-connection-essential}.
    Hence, incidences $(u, e)$ and $(v, e)$ are both critical, and consequently, $e$ cannot be in $\sim$-relation with any of its adjacent edges.
    Therefore, $e$ forms a~separate fern $S_e$ in $H[\Fc]$, and by definition of $\quo{H}{\Fc}$, $S_e$ contributes an~edge $uv$ to~$\quo{H}{\Fc}$ (a self-loop if $u = v$).
  \cqed\end{proof}

  \begin{claim}
    \label{cl:df_essential}
    Let $H'$ be a~multigraph obtained from a~multigraph $H$ by adding or removing an~edge $e$ between two vertices $u,\ v$ which are essential both in~$H$ and in~$H'$.
    Then, each element (i.e.,~vertex or edge) of~$H$ is essential in~$H$ if and only if it is essential in~$H'$.

    Moreover, the quotient multigraph $\quo{H'}{\Fc'}$ can be obtained from $\quo{H}{\Fc}$ by adding (respectively, removing) an~edge between $u$ and $v$ to it.
  \end{claim}
  \begin{proof}
    For the first part, it is enough to show that every edge of~$H$ is non-essential in~$H$ if and only if it is non-essential in~$H'$.
    Indeed, having proved it, we obtain that vertices $u$ and $v$ are essential both in~$H$ and in~$H'$, and for the other vertices of~$H$ their number of incidences to essential edges does not change.
    Recall that an~edge $f$ is non-essential in~$H$ if it is a~bridge, and removing it produces at least one new component $C$ which is acyclic.
    Clearly, all edges and vertices of $C$ are non-essential in~$H$.
    In particular, this means that $u, v \not\in V(C)$.
    Therefore, adding or removing $e$ does not affect $C$, and thus $f$ is a~non-essential edge in~$H'$ as well.
    The proof that if $f$ is non-essential in~$H'$, then it is non-essential in~$H$, can be obtained by swapping $H$ with $H'$ in the argumentation above.

    By combining this result with Claim~\ref{cl:df_edge_fern}, we obtain that $\Fc' = \Fc \cup \{\{ e \}\}$, and thus adding (resp. removing) an~edge $uv$ to $\quo{H}{\Fc}$ yields $\quo{H'}{\Fc'}$.
  \cqed\end{proof}

  We claim that it is enough to show how to implement the updates of~$H$ of the form: $\mathsf{addEdge}(u, u)$ and $\mathsf{delEdge}(u, u)$, which add and remove a~self-loop at~$u$, respectively.
  Given such two operations, we can implement the remaining updates of~$H$ as follows.
  
  To introduce a~new vertex~$u$ to~$H$, add a~new vertex~$x$ to~$\Delta_{\Tc}$ with $\name(x) = u$.
  To remove an~isolated vertex~$u$ of $H$, delete the~unique vertex $x \in \name^{-1}(u)$ from $\Delta_\Tc$ and remove $x$ from the domain of $\name$.

  Now, consider an~update of an~edge $e$ between two different vertices $u$ and~$v$ ($e$ is either inserted or discarded).
  First, we add two self-loops at the vertex~$u$ and two self-loops at the vertex~$v$.
  Let $H_1$ be the~newly created multigraph, and let $H_1'$ be obtained from $H_1$ by updating edge $e$ appropriately.
  Recall that every self-loop counts as two incidences to essential edges, and thus vertices $u$ and $v$ are essential both in $H_1$ and in~$H_1'$.
  Hence, by combining Claim~\ref{cl:df_edge_fern} with Claim~\ref{cl:df_essential}, we conclude that the partition $\Fc_1'$ of edges of $H_1'$ satisfies $\Fc_1' = \Fc_1 \cup \{ \{e\} \}$ (resp. $\Fc_1' = \Fc_1 \setminus \{ \{e\} \}$), and the only difference between multigraphs $\quo{H_1}{\Fc_1}$ and $\quo{H_1'}{\Fc_1'}$ is a~presence of an~additional edge between $u$ and $v$ in one of them.
  
  Therefore, to insert $e$ to $H_1$, add two new vertices $x$ and $y$ to $\Delta_\Tc$ with $\name(x) = u$ and $\name(y) = v$, link $x$ with $y$, and call $\attach( (u,v), \Delta_{T_{uv}})$, where $\Delta_{T_{uv}}$ is a~top tree on the edge $xy$.
  In case $e$ is to be deleted, we proceed in a~similar way, but we detach a~tree $\Delta_{T_{uv}}$, and remove it from $\Delta_\Tc$ (here $\Delta_{T_{uv}}$ is any~top tree on a~single edge $xy$, where $(x, y) \in \edgerep(u, v)$).

  Finally, it remains to delete four loops that we added at the beginning.

  \bigskip

  We proceed to implementations of $\mathsf{addEdge}(u, u)$ and $\mathsf{delEdge}(u, u)$.
  The following fact will be helpful.

  \begin{claim}
    \label{cl:df_path_loop_essential}
    Let $\ell$ be a~self-loop at a~vertex~$u$ in a~multigraph~$H$.
    Consider a~simple path $P$ in~$H$ that starts at $u$ and ends at~$v$, where $v$ is an~essential vertex of~$H$.
    Then every edge traversed by~$P$ is essential in $H$, and $u$ is an~essential vertex of~$H$.
  \end{claim}
  \begin{proof}
    Since $v$ participates in at least three critical incidences, there is an~essential edge $e$ incident to $v$ which does not lie on $P$.
    Extending $P$ with $\ell$ and $e$ yields a~walk $W$ in~$H$ that starts and ends with an~essential edge, and traverses every edge at most once.
    Hence, by Lemma~\ref{lem:connection-essential}, every edge of $W$ (in particular, every edge of $P$) is essential.

    If $u = v$, then $u$ is essential by definition of $v$.
    If $u \neq v$, then $u$ participates in at least 3 critical incidences: two with the loop~$\ell$ and one with the first edge of~$P$.
    Consequently, $u$ is essential in~$H$.
  \cqed\end{proof}

  \paragraph*{Adding a self-loop $\ell$ at a~vertex~$u$.}
    Let $H^{\Add}$ be a~multigraph obtained by adding $\ell$ to~$H$, and let $\Fc^{\Add}$ be the~partition of its edges into equivalence classes of the relation~$\sim$.

    First, suppose that $u$ is an~essential vertex of~$H$
    (we can verify this by checking whether $u$ is a~vertex of $\quo{H}{\Fc}$).
    Then by Claim~\ref{cl:df_edge_fern}, $\ell$ contributes a~self-loop $\ell'$ at~$u$ to $E(\quo{H^{\Add}}{\Fc^{\Add}})$.
    By Claim~\ref{cl:df_essential} this is the only difference between $\quo{H}{\Fc}$ and $\quo{H^{\Add}}{\Fc^{\Add}}$.
    Therefore, it is enough to initialize a~new top tree on a~single new edge that represents~$\ell$, and attach this top tree to the~tuple $\tup{a} = (u, u)$.

    \medskip
    
      Now, we assume that $u$ is non-essential in $H$.
      Then, there is a~unique fern $S_u \in H[\Fc]$ that contains~$u$.
      Clearly, after adding $\ell$ to $H$, any essential edge of $H$ stays essential in $H^{\Add}$, and consequently, any essential vertex of $H$ stays essential in $H^{\Add}$.
      For the non-essential elements of $H$ we use the following fact.

      \begin{claim}
        \label{cl:df_only_fern}
        If $t \in H$ is a feature (i.e.\ an~edge or a~vertex) which is non-essential in~$H$ and essential in~$H^{\Add}$, then $t \in S_u$.
      \end{claim}

      \begin{proof}
        Take an~edge $e$ that becomes essential in $H^{\Add}$.
        Since $e$ is non-essential in~$H$, it is a bridge in~$H$, and removing it from $H$ creates two new components, one of which (call it $C$) is acyclic.
        On the other hand, $e$ is essential in~$H^{\Add}$, which implies that the component $C$ must be affected by adding the loop~$\ell$.
        Hence, $u \in V(C)$.
        Since $C$ is acyclic in~$H$, all edges of $C$ are non-essential in $H$.
        Moreover, $e$ is incident to $C$, and thus $e \sim f$ for every $f \in E(C)$, where $\sim$ is considered in $H$.
        Consequently, $e$ is in the same fern of $H[\Fc]$ as $u$, that is, $e \in E(S_u)$.

        For the case of vertices, observe that a~vertex~$v$ which becomes essential in $H^{\Add}$ either is equal to $u$, or it is incident to some~edge $e$ of~$H$ that becomes essential in $H^{\Add}$ (hence $e \in E(S_u)$).
        In both cases we conclude that $v \in V(S_u)$.
      \cqed\end{proof}

      Hence, we can focus on how $\Fc^{\Add}$ partitions the edges of $S_u$, and how to update $\Delta_\Tc$ and $\quo{H}{\Fc}$ accordingly.

      In what follows, we will use an~auxiliary procedure $\mathsf{\makeess}(x, Y)$, which takes:
      \begin{itemize}
        \item a~vertex $x \in V(\Tc)$ such that the top tree containing $x$ is detached, and
        \item a~subset of vertices $Y \subseteq V(\Tc)$ such that for every $y \in Y$, $y \neq x$, $x$~and~$y$ are in the same boundaried tree~$T$ of~$\Tc$, and all the paths in $T$ of the form $x \Path y$ (for $y \in Y$) are pairwise edge-disjoint.
      \end{itemize}
      Intuitively, $\makeess(x, Y)$ adds a~new essential vertex $\name(x)$ to $\quo{H}{\Fc}$ with edges of the form $\{\name(x)\name(y) \mid y \in Y, \name(y)\in V(\quo{H}{\Fc}) \}$, and splits $T$ accordingly (see Figure~\ref{fig:make_essential}).
      Formally, it performs the following operations:
      \begin{itemize}
        \item Introduce $\name(x)$ to $\quo{H}{\Fc}$ by calling $\addquo(\name(x) )$, and for every $y \in Y$ do as follows:
          \begin{itemize}
            \item find the second vertex $z_y$ on the path $x \Path y$ in $T$ ($z_y = \mathsf{jump}(x, y, 1)$), and apply on $\Delta_\Tc$:
              \begin{itemize}
                \item $\mathsf{cut}(x, z_y)$,
                \item $\mathsf{add}(x_y)$, where $x_y$ is a~new vertex with $\name(x_y) = \name(x)$, and
                \item $\mathsf{link}(x_y, z_y)$;
              \end{itemize}
            \item let $T_y$ be the boundaried tree that now contains $z_y$;
            \item set the boundary of $T_y$ to $\{ y, x_y \} $, and if $\name(y) \in V(\quo{H}{\Fc})$, attach~$\Delta_{T_y}$ to a~tuple $(\name(y), \name(x) )$.
          \end{itemize}
        \item After performing these operations, if $x$ is an~isolated vertex in~$\Tc$, remove it from $\Tc$.
        Otherwise, let $T_x$ be the~unique boundaried tree containing~$x$.
        Set the boundary of $T_x$ to $\{ x \}$ and attach $\Delta_{T_x}$ to a~tuple $(\name(x))$.
      \end{itemize}
    
\begin{figure}[ht]
    \centering
    \begin{subfigure}{0.45\linewidth}
    \centering
    {
      \newcommand{\Ang}{-30}
\newcommand{\AngSmall}{16}
\newcommand{\distN}{2.5}
\newcommand{\TreeSmall}{0.5}
\newcommand{\TreeLarge}{1.1}
\begin{tikzpicture}
  \node [vertex] (x) at (0, 0) [label=above right:$x$] {};
  \node [vertex] (y1) at (\Ang:\distN) [label=right:$y_1$] {};
  \node [vertex] (y2) at (\Ang+120:\distN) [label=right:$y_2$] {};
  \node [vertex] (y3) at (\Ang+240:\distN) [label=left:$y_3$] {};
  
  \foreach \id/\pa in {a/0.333, b/0.667} {
    \foreach \b in {1, 2, 3} {
      \node [vertex] (y\id\b) at ($(x)!\pa!(y\b)$) {};
    };
  };
  \draw (x) -- (ya1) -- (yb1) -- (y1);
  \draw (x) -- (ya2) -- (yb2) -- (y2);
  \draw (x) -- (ya3) -- (yb3) -- (y3);
  
  \foreach \name/\size/\angle in {x/\TreeLarge/-90,
          ya1/\TreeSmall/-90, yb1/\TreeSmall/-90, y1/\TreeLarge/-90,
          ya2/\TreeSmall/150, yb2/\TreeSmall/150, y2/\TreeLarge/90,
          ya3/\TreeSmall/-90, yb3/\TreeSmall/-90, y3/\TreeLarge/-90} {
    \draw (\name) -- ($(\name) + (\angle-\AngSmall:\size)$) -- ($(\name) + (\angle+\AngSmall:\size)$) -- (\name);
  };
\end{tikzpicture}
    }
    \caption{}
    \label{fig:make_essential_1}
    \end{subfigure}
    \hfill
    \begin{subfigure}{0.45\linewidth}
    \centering
    {
      \newcommand{\Ang}{-30}
\newcommand{\AngSmall}{16}
\newcommand{\distN}{3}
\newcommand{\distn}{1.2}
\newcommand{\TreeSmall}{0.5}
\newcommand{\TreeLarge}{1.1}
\begin{tikzpicture}
  \node [vertex, minimum size=4.5pt] (x) at (0, 0) [label=above right:$x$] {};
  \node [vertex, minimum size=4.5pt] (y1) at (\Ang:\distN) [label=right:$y_1$] {};
  \node [vertex, minimum size=4.5pt] (y2) at (\Ang+120:\distN) [label=right:$y_2$] {};
  \node [vertex, minimum size=4.5pt] (y3) at (\Ang+240:\distN) [label=left:$y_3$] {};
  
  \node [vertex, minimum size=4.5pt] (x1) at (\Ang+0:\distn) [label=above:$x_{y_1}$]{};
  \node [vertex, minimum size=4.5pt] (x2) at (\Ang+120:\distn) [label=right:$x_{y_2}$]{};
  \node [vertex, minimum size=4.5pt] (x3) at (\Ang+240:\distn) [label=above:$x_{y_3}$]{};
  
  \foreach \id/\pa in {a/0.333, b/0.667} {
    \foreach \b in {1, 2, 3} {
      \node [vertex] (y\id\b) at ($(x\b)!\pa!(y\b)$) {};
    };
  };
  \draw (x1) -- (ya1) -- (yb1) -- (y1);
  \draw (x2) -- (ya2) -- (yb2) -- (y2);
  \draw (x3) -- (ya3) -- (yb3) -- (y3);
  
  \foreach \name/\size/\angle in {x/\TreeLarge/-90,
          ya1/\TreeSmall/-90, yb1/\TreeSmall/-90, y1/\TreeLarge/-90,
          ya2/\TreeSmall/150, yb2/\TreeSmall/150, y2/\TreeLarge/90,
          ya3/\TreeSmall/-90, yb3/\TreeSmall/-90, y3/\TreeLarge/-90} {
    \draw (\name) -- ($(\name) + (\angle-\AngSmall:\size)$) -- ($(\name) + (\angle+\AngSmall:\size)$) -- (\name);
  };
\end{tikzpicture}
    }
    \caption{}
    \label{fig:make_essential_2}
    \end{subfigure}
    \caption{The effect of applying $\makeess(x, y_1, y_2, y_3)$.}
    \Description{An example application of makeEssential, separating $x$ from the paths connecting $x$ to $y_1$, $y_2$, and $y_3$.}
  \label{fig:make_essential}
\end{figure}

\medskip
    
    We are ready to show how to update our inner structures provided that $u$ is non-essential in~$H$.
      Let $T_u \in \Tc$ be the~boundaried tree representing $S_u$.
      Note that $|\name^{-1}(u)| \in \{1, 2\}$, where $|\name^{-1}(u)| = 2$ if and only if $S_u$ is a~unicyclic fern with empty boundary and $T_u$ is obtained from $S_u$ by splitting $u$.
      Hence let $x_u \in \name^{-1}(u)$, and if $|\name^{-1}(u)| = 2$, assume that $x_u$ is chosen so that the element of $\name^{-1}(u)$ different from $x_u$ is a~leaf of~$T_u$.
%
      Recall that we can find $\Delta_{T_u}$ and $\bnd T_u$ by calling $\mathsf{get}(x_u)$ on $\Delta_\Tc$.
      We start with adding an~edge to $T_u$ representing the loop~$\ell$.
      To do this, we detach $\Delta_{T_u}$, add a~new vertex~$x_u'$ to~$\Delta_\Tc$ with $\name(x_u') = u$, and link $x_u$ with $x_u'$ in $\Delta_{T_u}$.
      However, due to the fact that some vertices of $S_u$ may become essential in~$H^{\Add}$, we will need to fix this representation.

      We consider different cases depending on the size of $\bnd S_u$.
      
      \mpara{$\bnd S_u$ is empty.}
      First, assume that $\bnd S_u = \emptyset$.
      This means that $S_u$ is equal to one of the connected components of~$H$.

      If $S_u$ is a~tree fern (or, an isolated vertex), then, after adding the loop~$\ell$, the component on $V(S_u)$ becomes a~unicyclic graph.
      Hence, all edges of $H^{\Add}$ between the vertices of $S_u$ still form a~single fern in $H^{\Add}[\Fc^{\Add}]$.
      Recall that we have already added the leaf $x_u'$ to $T_u$, so that $x_u x_u'$ represents~$\ell$.
      Set the boundary of $T_u$ to $\{ x_u, x_u' \}$.
      Then, the current forest $\Tc$ indeed represents the fern decomposition of $H^{\Add}$, and it remains to attach $\Delta_{T_u}$ to the empty tuple.
      Note that we can detect this case by verifying whether $\bnd T_u = \emptyset$ holds.

      If $S_u$ is a~unicyclic graph (see Figure~\ref{fig:df_add_loop_1}), let $u \Path w$ be the shortest path in $S_u$ such that $w$~lies on the unique cycle of~$S_u$.
      We detect this case by checking that $\tup{a}_u = \varepsilon$ and $|\bnd T_u| = 2$.
      Then, a~vertex $x_w$ such that $\name(x_w) = w$ can be found by calling $\textsf{meet}(x_u, y, y')$, where $\{ y, y' \} \coloneqq \bnd T_u$.
      
      Recall that to obtain $T_u$ from $S_u$ we can ``split'' $S_u$ with respect to any vertex on its unique cycle.
      Hence, we may assume that $x_w \in \bnd T_u$, say $y = x_w$ and $y' = x_w'$.
      Indeed, if $x_w \not\in \bnd T_u$, then we can reconstruct $T_u$ as follows.
      Without loss of generality, assume that $y'$ is a~leaf of $T_u$. (For each boundaried tree representing cyclic fern of empty boundary we may store information which element of its boundary is a~leaf.)
      Let $z_1$ be the second vertex on the path $y' \Path x_w$ in $T_u$, and let $z_2$ be the second vertex on the path $x_w \Path y$ in $T_u$ ($z_1$ and $z_2$ can be found by calling $\textsf{jump}(y', x_w, 1)$ and $\textsf{jump}(x_w, y, 1)$, respectively).
      Then, we apply the following operations on $\Delta_\Tc$: $\textsf{cut}(z_1, y')$, $\textsf{del}(y')$ (and remove $y'$ from the domain of $\name$), $\textsf{cut}(x_w, z_2)$, $\textsf{link}(z_1, y)$, $\textsf{add}(x_w')$ (where $\name(x_w') = w$), $\textsf{link}(x_w', z_2)$, $\textsf{expose}(\{x_w, x_w'\})$.
      After performing all these modifications, we see that we rearranged the split of the cycle of $S_u$, and now $\bnd T_u = \{ x_w, x_w' \}$, as desired.

      We consider two cases.
      \begin{itemize}
        \item If $w = u$, then we see that the only vertex that becomes essential in $H^{\Add}$ is $w = u$, and we need to add it to $\quo{H}{\Fc}$ with two self-loops of the form $ww$ (one is contributed by $S_u$, and another one by the trivial fern comprising $\ell$).
          One can observe that in order to modify $\quo{H}{\Fc}$ and $\Delta_{\Tc}$ appropriately, it is enough to call $\makeess(x_w, \{x_w', x_u'\})$.
        \item If $w \neq u$, then both vertices $w$ and $u$ become essential in $H^{\Add}$, and we need to introduce them to $\quo{H}{\Fc}$ with edges $uu$, $uw$, $ww$.
          Again, this can be done by calling $\makeess(x_u, \{ x_u', x_w \})$ and $\makeess(x_w, \{ x_u, x_w'\})$.
          Observe that only the second operation adds an~edge $uw$ to $\quo{H}{\Fc}$ as during the first one the vertex $w$ is not essential yet.
      \end{itemize}

    \begin{figure}[ht]
    \centering
    \begin{subfigure}{0.3\linewidth}
    \centering
    {
      \newcommand{\Ang}{-90}
\newcommand{\AngSmall}{8}
\newcommand{\distN}{1}
\newcommand{\distn}{3.3}
\begin{tikzpicture}
  \node[essvertex] (1) at (\Ang:0.8*\distN) [label=above:$w$]{};
  \node[vertex] (2) at (\Ang+72:0.8*\distN) {};
  \node[vertex] (3) at (\Ang+2*72:0.8*\distN) {};
  \node[vertex] (4) at (\Ang+3*72:0.8*\distN) {};
  \node[vertex] (5) at (\Ang+4*72:0.8*\distN) {};
  \node[vertex] (6) [below=\distN of 1] {};
  \node[essvertex] (7) [below=\distN of 6] [label=right:$u$]{};

  \draw (1) -- (2) -- (3) -- (4) -- (5) -- (1);
  \draw[essedge] (1) -- (6) -- (7);
  \path (7) edge[essedge, loop, in=-45, out=-135, looseness=10] (7);
\end{tikzpicture}
    }
    \caption{}
    \label{fig:df_add_loop_1}
    \end{subfigure}
    \hfill
    \begin{subfigure}{0.3\linewidth}
    \centering
    {
      \newcommand{\distN}{1}
\begin{tikzpicture}
  \node[bndvertex] (1) at (0, 0) [label=above:$v$]{};
  \node[vertex] (2) [below=\distN of 1] {};
  \node[vertex] (3) [below=\distN of 2] {};
  \node[essvertex] (4) [below=\distN of 3] [label=right:$u$]{};

  \draw[essedge] (1) -- (2) -- (3) -- (4);
  \path (4) edge[essedge, loop, in=-45, out=-135, looseness=10] (4);
\end{tikzpicture}
    }
    \caption{}
    \label{fig:df_add_loop_2}
    \end{subfigure}
    \hfill
    \begin{subfigure}{0.3\linewidth}
    \centering
    {
      \newcommand{\distn}{1}
\begin{tikzpicture}
  \node[bndvertex] (1) at (0, 0) [label=above:$v_1$]{};
  \node[essvertex] (2) [right=\distn of 1] [label=above:$w$]{};
  \node[vertex] (3) [right=\distn of 2] {};
  \node[bndvertex] (4) [right=\distn of 3] [label=above:$v_2$]{};
  \node[vertex] (5) [below=\distn of 2] {};
  \node[vertex] (6) [below=\distn of 5] {};
  \node[essvertex] (7) [below=\distn of 6] [label=right:$u$]{};

  \draw[essedge] (2) -- (5) -- (6) -- (7);
  \draw (1) -- (2) -- (3) -- (4);
  \path (7) edge[essedge, loop, in=-45, out=-135, looseness=10] (7);
\end{tikzpicture}
    }
    \caption{}
    \label{fig:df_add_loop_3}
    \end{subfigure}
    \caption[]{Adding a~loop at a~vertex $u$ to a~fern $S_u$ which is: (a) a~unicyclic graph of empty boundary, (b) a~tree fern of boundary $\bnd S_u = \{ v \}$, (c) a~tree fern of boundary $\bnd S_u = \{ v_1, v_2 \}$.
    Empty vertices \begin{tikzpicture}\node[bndvertex] (1) at (0,0) {};\end{tikzpicture} denote essential vertices of~$H$, and red elements: \{ \begin{tikzpicture}\node[essvertex] (1) at (0,0) {};\end{tikzpicture}, \begin{tikzpicture}\draw[essedge] (0,0) -- (0.3,0.2);\end{tikzpicture} \} denote vertices and edges of $H$ which become essential in~$H^{\Add}$.}
    \Description{Process of adding a loop at a vertex.}
    \label{fig:df_add_loop}
    \end{figure}

    \mpara{$\bnd S_u$ is not empty.} Now, assume $\bnd S_u \neq \emptyset$.
    We consider cases based on the shape of~$S_u$.
    \begin{itemize}
      \item $S_u$ is a~tree fern with $\bnd S_u = \{ v \}$ (see Figure~\ref{fig:df_add_loop_2}).
        This case can be detected by checking whether $\tup{a}_u = (v)$.
        Then $v$ is an~essential vertex of~$H$, and thus $v \neq u$.
        By Claim~\ref{cl:df_path_loop_essential}, all edges on the path $u \Path v$ in~$S_u$ are essential in $H^{\Add}$.
        Observe that other edges of $S_u$ remain non-essential in~$H^{\Add}$ as they still isolate a~subtree of $S_u$.
        Hence, $u$ becomes an~essential vertex in $H^{\Add}$, and we need to add $u$ to $V(\quo{H}{\Fc})$ with edges $uu$ and $uv$, and split $T_u$ accordingly.
        This can be done by calling $\makeess(x_u, \{ x_u', y \} )$, where $y$ is the unique vertex of $\bnd T_u$ ($\name(y) = v$).

      \item $S_u$ is a~tree fern with $\bnd S_u = \{ v_1, v_2 \}$ (see Figure~\ref{fig:df_add_loop_3}).
        We can detect this case by checking whether $\tup{a}_u = (v_1, v_2)$, where $v_1 \neq v_2$.
        Then $v_1$ and $v_2$ are two different essential vertices of~$H$, and thus $u \notin \{v_1, v_2\}$.
        By Claim~\ref{cl:df_path_loop_essential}, all edges on the paths $u \Path v_1$ and $u \Path v_2$ are essential in $H^{\Add}$.
        Again, the other edges of $S_u$ remain non-essential in~$H^{\Add}$ as they still isolate a~subtree of $S_u$.
        Let $w$ be the~intersection point of paths $u \Path v_1$, $u \Path v_2$, and $v_1 \Path v_2$.
        Recall that the corresponding vertex $x_w \in V(T_u)$ can be found be calling $\mathsf{meet}(x_u, y_1, y_2)$, where $\{ y_1, y_2\} \coloneqq \bnd T_u$.
        By definition of a~tree fern with boundary of size~$2$, $v_1$ and $v_2$ are leaves of~$S_u$, and thus $w\notin \{v_1, v_2\}$.

        If $w = u$, then the only vertex of $S_u$ that becomes essential in~$H^{\Add}$ is~$w$.
        Then, we need to add it to $\quo{H}{\Fc}$ with edges $ww$, $wv_1$, and $wv_2$.
        This can be done by calling $\makeess(x_w, \{ x_u', y_1, y_2 \})$.
        
        If $w \neq u$, then both $w$ and $u$ become essential in~$H^{\Add}$, and we need to introduce them to $\quo{H}{\Fc}$ with edges $uu$, $uw$, $wv_1$, and $wv_2$.
        This can be done by calling first $\makeess(x_u, \{ x_u', x_w \})$, and then $\makeess(x_w, \{ x_u, y_1, y_2 \})$.
        Observe that only the second operation will add an~edge $uw$ to $\quo{H}{\Fc}$ as during the first one, $w$ is not yet a~vertex of $\quo{H}{\Fc}$.
      \item $S_u$ is a~unicyclic graph, and $\bnd S_u = \{ v \}$.
        We can detect this case by checking whether $\tup{a}_u = (v, v)$.
        By the definition of a cyclic fern with non-empty boundary, $v$ lies on the cycle of $S_u$, and $v$ has degree~$2$ in~$S_u$, and thus the corresponding vertices $x_{v_1}, x_{v_2} \in \bnd T_u$ are leaves of $T_u$.
        Hence, this case is in fact analogous to the previous one, in the~sense that we may perform the same operations on~$T_u$ as if $T_u$ represented a~tree fern with boundary of size~$2$.
     \end{itemize}

  \paragraph*{Deleting a~self-loop $\ell$ at a~vertex~$u$.}
    Let $H^{\Del}$ be the~multigraph obtained by removing $\ell$ from~$H$, and let $\Fc^{\Del}$ be the~partition of its edges into equivalence classes of the relation~$\sim$.

    If $u$ is a~non-essential vertex of~$H$, then the~unique fern $S_u \in H[\Fc]$ that contains $u$ must be a~unicyclic graph (with cycle $(\ell)$) of empty boundary.
    Indeed, $S_u$ cannot be a~tree, for it contains the~loop $\ell$.
    Moreover, if $\bnd S_u$ was non-empty, say $v \in \bnd S_u$, then $v$ would be essential in $H$ and there would be a~path $u \Path v$ in~$S_u$, hence by Claim~\ref{cl:df_path_loop_essential}, $u$ would be essential in~$H$, a~contradiction.

    Therefore, $S_u$ is a~single connected component of~$H$, and after deleting $\ell$ this component becomes a~tree.
    Let $T_u$ be a~tree representing $S_u$ with $\bnd T_u = \{ x_u, x_u' \}$, where $x_u'$ is a~leaf of~$T_u$.
    Then, it is enough to detach $\Delta_{T_u}$, cut the~edge $x_u x_u'$, remove $x_u'$ from both $\Tc$ and the domain of $\name$, and attach $\Delta_{T_u}$ to the~empty tuple.

    \medskip
    
      From now on, assume that $u$ is essential in $H$, that is, $u$ is a~vertex of $\quo{H}{\Fc}$.
      Clearly, removing an~edge (in particular, a~loop) from $H$ cannot make any non-essential edge essential.
      For the other direction, we will use the claims below, but first let us introduce some additional notation.

      For a~multigraph~$H'$ and a~vertex $w \in V(H')$, we denote by $\dcrit^{H'}(w)$ the number of critical incidences in $H'$ that $w$ participates in.
      In particular, we have $\dcrit^{H'}(w) \geq 3$ if $w$ is essential in $H'$, and $\dcrit^{H'}(w) = 0$ otherwise.
      Depending on details of inner representation of~$\quo{H}{\Fc}$, the non-zero values of $\dcrit^H(\cdot)$ can either be obtained directly in time $\Oh{\log |H|}$, or we can maintain $\dcrit$ as an~additional dictionary $V(\quo{H}{\Fc}) \to \N$, which we update on every modification of~$\quo{H}{\Fc}$.

    \begin{claim}
      \label{cl:df_joint_fern}
      Let $e \in E(H^{\Del})$ be an~edge which is essential in $H$ and non-essential in $H^{\Del}$.
      Let $S_e \in H[\Fc]$ be the fern containing $e$.
      Then $S_e$ is a~tree fern with $|\bnd S_e| = 2$ and $u \in \bnd S_e$.
    \end{claim}
    \begin{proof}
      Suppose that $S_e$ is a~cyclic fern.
      Since $e$ is essential in~$H$, $e$ must lie on the unique cycle of $S_e$.
      Clearly, removing $\ell$ does not affect this cycle, hence $e$ lies on a~cycle in~$H^{\Del}$ as well.
      This means that $e$ is essential in $H^{\Del}$, a~contradiction.
      We have $|\bnd S_e| = 2$, for otherwise $e$ would isolate a~subtree of~$S_e$, and thus it would be non-essential in~$H$.

      Since removing $\ell$ makes $e$ non-essential, we conclude that removing $e$ from $H^{\Del}$ must yield a~new component $C$ of $H^{\Del}$ which is a~tree in~$H^{\Del}$ and contains~$u$.
      Hence, there is a~simple path $P$ of consecutive edges $e_1 = e, e_2, \ldots, e_r$ that starts with $e$ and ends at $u$.
      Since $C$ is a~tree with additional loop $\ell$, we see that the internal vertices of~$P$ are non-essential in $H$ and $e_i \sim e_{i+1}$ (for $i = 1, \ldots,r-1$), and thus $e_r \in E(S_e)$.
      Moreover, by Lemma~\ref{lem:connection-essential} applied to loop at $u$ and $e$, $e_r$ is an~essential edge, and thus $u$ is an~essential vertex,  so $u \in \bnd S_e$.
    \cqed\end{proof}

    \begin{claim}
      \label{cl:df_two_tree_ferns}
      Suppose that there exist two different tree ferns $S_1, S_2 \in H[\Fc]$ such that $|\bnd S_i| = 2$ and $u \in \bnd S_i$ for $i = 1, 2$.
      Then, each edge of $H^{\Del}$ is essential in $H^{\Del}$ if and only if it is essential in $H$.
    \end{claim}
    \begin{proof}
      We know that no non-essential edge of~$H$ can become essential in~$H^{\Del}$.
      Suppose that there is an essential edge~$e$ of~$H$ which becomes non-essential in $H^{\Del}$.
      By Claim~\ref{cl:df_joint_fern}, the tree fern $S_e \in H[\Fc]$ that contains $e$ satisfies $|\bnd S_e| = 2$ and $u \in \bnd S_e$.
      Without loss of generality, assume that $S_1 \neq S_e$.
      Let $e_1 \in E(S_1)$ be an~edge incident to $u$.
      Since $e$ is essential in~$H$ but non-essential in~$H^{\Del}$, removing $e$ from $H^{\Del}$ must yield a~new component~$C$ which is a~tree in $H^{\Del}$ and contains~$u$.
      As ferns $S_e$ and $S_1$ are edge-disjoint, we have $S_1 \subseteq C$.
      Hence, removing $e_1$ from $H$ creates a~component $C_1$ such that $S_1 - \{ u \} \subseteq C_1 \subseteq C$, and thus $C_1$ is acyclic in~$H$.
      However, this implies that $e_1$ is a~non-essential edge in~$H$ which is a~contradiction with Lemma~\ref{lem:vertex-connection-essential} as $e_1$ lies on the path within $S_1$ connecting the~vertices of $\bnd S_1$.
    \cqed\end{proof}
    
    \begin{claim}
      \label{cl:df_large_dcrit}
      Suppose that $\dcrit^H(u) \geq 4$.
      Then, each feature (i.e.\ vertex or edge) of~$H^{\Del}$ other than $u$ is essential in~$H^{\Del}$ if and only if it is essential in~$H$.
    \end{claim}
    \begin{proof}
      From previous observations we know that is enough to show that an~essential edge~$e$ of~$H$, $e\neq  \ell$, remains essential in~$H^{\Del}$.
      Recall that the loop $\ell$ is counted as two critical incidences of~$u$.
      Since $\dcrit^H(u) \geq 4$, we obtain that $u$ must belong either to two different tree ferns $S_1, S_2 \in H[\Fc]$ with $|\bnd S_i| = 2$, or a cyclic fern $S_c \in H[\Fc]$ with boundary~$\{ u \}$.
      In the first case the assertion follows from Claim~\ref{cl:df_two_tree_ferns}.
      In the second case if $e$ lies on the cycle of $S_c$, then it is essential both in $H$ and in $H^{\Del}$.
      Otherwise, if removing $e$ creates a~new component $C$ that contains $u$, $C$ contains the cycle of $S_c$ as well.
      Hence, removing the loop $\ell$ does not affect whether $e$ is an~essential edge.
    \cqed\end{proof}
    
    \medskip

      \newcommand{\badd}{\textsf{bad}}
      After deleting the~loop $\ell$, some of the essential vertices may become non-essential in $H^{\Del}$.
      From definition of the~relation~$\sim$, whenever a~vertex $w$ becomes non-essential in $H^{\Del}$ all ferns $S_i \in H[\Fc]$ such that $w \in \bnd S_i$ become a~single fern in $H^{\Del}[\Fc^{\Del}]$.
      Hence, in such a~case we need to join the corresponding boundaried trees $T_i \in \Tc$, and remove $w$ from $\quo{H}{\Fc}$.
      Analogously to the case of adding a~loop $\ell$, we introduce an~auxiliary procedure $\makenon(w)$ to handle such a~situation.
      In fact, $\makenon$ can be seen as a~sort of an ``inverse procedure'' to \makeess.
      This procedure takes a~vertex $w$ which is essential in $H$ but non-essential in $H^{\Del}$, and works as follows:
      \begin{itemize}
        \item Start from a~base boundaried tree $T_w$:
        \begin{itemize}
          \item if there exists a~tree $T \in \Tc$ with boundary $\{x\}$ such that $\name(x) = w$, then we take $T_w \coloneqq T$.
          The uniqueness of $T$ follows immediately from the properties of $\sim$.
          The tree $T$, if it exists, can be found in logarithmic time by querying the only element in $\attachment^{-1}((w))$;
          \item otherwise, we spawn a~new tree $T$ in $\Tc$ containing a~fresh vertex $x$ with $\name(x) = w$, and set $T_w \coloneqq T$.
        \end{itemize}
      \item For each edge $e$ of $E(\quo{H}{\Fc})$ with one endpoint $w$:
          \begin{itemize}
            \item find the unique tree $T_e \in \Tc$ corresponding to~$e$ (again, the corresponding top tree can be found by querying $\attachment^{-1}$);
            \item detach the top tree $\Delta_{T_e}$;
            \item recall from the definitions of ferns and their representations that $T_e$ must contain an~edge $x' y$ such that $x'$ is a~leaf of $T_e$, and $\name(x') = \name(x) = w$;
            \item apply the following operations on $\Delta_{\Tc}$:
            \begin{itemize}
              \item $\textsf{cut}(x', y)$,
              \item $\textsf{del}(x')$,
              \item $\textsf{link}(x, y)$.
            \end{itemize}
          \end{itemize}
        \item Call $\delquo(w)$.
        \item Based on the set of edges we considered in the previous step, we may deduce the appropriate values of $\bnd T_x$ and the attachment tuple for it.
      \end{itemize}

      We are ready to move on to the description of modifications we need to make in order to obtain a valid data structure for $H^{\Del}$.
      Recall that we assumed that $u$ is an~essential vertex of~$H$ (recall that $\dcrit^H(u) \geq 3$).
      Then, by Claim~\ref{cl:df_edge_fern}, $\ell$ forms a~separate fern in $H[\Fc]$.
      We start with removing any~boundaried tree $T_{uu} \in \Tc$ representing a~loop at $u$.
    We do this in the~usual way: find $\Delta_{T_{uu}}$ by calling $\textsf{get}(x)$, where $(x,y) \in \edgerep(u, u)$ for some~$y$, detach $\Delta_{T_{uu}}$, and remove $\Delta_{T_{uu}}$ from $\Delta_{\Tc}$.

      Now, we consider three cases:
      \begin{itemize}
        \item $\dcrit^H(u) \geq 5$.
          Then, by Claim~\ref{cl:df_large_dcrit}, all essential/non-essential features of~$H$ other than $u$ remain essential/non-essential in~$H^{\Del}$.
          Hence, $u$ participates in
          \[
            \dcrit^H(u) - 2 \geq 3
          \]
          incidences to essential edges of~$H^{\Del}$, which means that $u$ remains essential in~$H^{\Del}$ as well.
          Therefore, we have $\Fc^{\Del} = \Fc \setminus \{\{ \ell \}\}$, so no further modifications of $\Delta_{\Tc}$ and $\quo{H}{\Fc}$ are required.
        \item $\dcrit^H(u) = 4$.
          Again, by Claim~\ref{cl:df_large_dcrit}, all essential/non-essential elements of~$H$ other than $u$ remain essential/non-essential in~$H^{\Del}$.
          However, this time $u$ participates in exactly
          \[
            \dcrit^H(u) - 2 = 2.
          \]
          incidences to essential edges of~$H^{\Del}$, which means that $u$ becomes non-essential in~$H^{\Del}$.
          In such a~case we should call $\makenon(u)$ in order to update $\quo{H}{\Fc}$ and join trees of $\Tc$ accordingly.
        \item $\dcrit^H(u) = 3$.
          In this case $u$ has a~unique neighbor $v$ in $\quo{H}{\Fc}$ such that $v \neq u$.
          Let $S_{uv}$ be the fern which contributes the~edge $uv$ to $E(\quo{H}{\Fc})$ (the corresponding tree $\Delta_{T_{uv}}$ is the only~element of the set $\attachment^{-1}( (u, v) )$).
          Similarly to the previous cases, we compute that now $u$ is incident to at most one essential edge.
          This means that $u$ becomes non-essential in~$H^{\Del}$, and we need to call $\makenon(u)$.
          Furthermore, one can observe that every edge on the path $u \Path v$ in $S_{uv}$ becomes non-essential in $H^{\Del}$, as it now separates a~tree containing $u$ from the rest of the graph.
          On the other hand, by Claim~\ref{cl:df_joint_fern}, all other essential edges of $H$ remain essential in~$H^{\Del}$, and thus all essential vertices in $H$ (except $u$ and potentially $v$) remain essential in~$H^{\Del}$.
          Hence, we can conclude that the value of $\dcrit^{H^{\Del}}(v)$ equals either $\dcrit^{H}(v) - 1$ if $\dcrit^{H}(v) \geq 4$, or $0$ if $\dcrit^{H}(v) = 3$.
          In the latter case, $v$ becomes non-essential as well, and we need to call $\makenon(v)$.
          Observe that this operation only joins some trees in $\Delta_{\Tc}$.
          In particular, it does not make any other essential vertex non-essential, and thus this procedure terminates after this call.
      \end{itemize}

  Summing up, we see that each update of $H$ requires a constant number of modifications of $(\Delta_\Tc, \allowbreak \name, \allowbreak \quo{H}{\Fc}, \allowbreak \attachment, \allowbreak \name^{-1}, \allowbreak \attachment^{-1}, \allowbreak \edgerep)$. Each of these structures is of size $\Oh{|H|}$ and supports operations in worst-case time logarithmic in its size, hence the running time of $\F$ is $\Oh{\log |H|}$ per update of~$H$.
\end{proof}

We finish this part with a~remark that for the sole purpose of this section, instead of top trees, we could have used slightly simpler data structure on dynamic forest such as link/cut trees~\cite{SleatorT83}, as in the work of Alman et al.~\cite{AlmanMW20}.
In the next section, we will see why it is convenient to choose top trees as the~underlying data structure.

\subsection{Dynamic maintenance of ensemble contractions}
\label{ssec:lemmaA_dynamic_contractions}


\newcommand{\ColoredEdge}{\mathsf{colored\_edge}}
\newcommand{\ColoredVertex}{\mathsf{colored\_vertex}}
\newcommand{\RemovedList}{\mathsf{removed}}
\newcommand{\RecolorList}{\mathsf{recolor}}
\newcommand{\CountDict}{\mathsf{count}}

We will now combine the results of Sections~\ref{ssec:lemmaA_static} and~\ref{ssec:lemmaA_dynamic_fern}.
Namely, given a~dynamic augmented structure $(\Af, H)$, we will prove that we can efficiently maintain the rank-$p$ contraction of the ensemble $\Xx$, constructed in the proof of the static variant of the \lemmaA in Section~\ref{ssec:lemmaA_static}.
The data structure for dynamic ensembles will extend the dynamic data structure maintaining $\Fc$ from Section~\ref{ssec:lemmaA_dynamic_fern}.

This subsection is devoted to the proof of the following proposition:

\begin{lemma}
  \label{lem:dynamic_contraction}
  Let $(\Af, H)$ be a~dynamic augmented structure over a~binary relational signature $\Sigma$, which is initially empty, in which we are allowed to add and remove isolated vertices, edges or tuples to relations.
  After each update, $\Af$ must be guarded by $H$.

  There exists a~data structure $\C$ which, when initialized with an~integer $p \in \N$, maintains $\Contract^p(\Xx)$, where $\Xx$ is the ensemble constructed from $\Ff$ given in Section~\ref{ssec:lemmaA_static}.
  Each update to $(\Af, H)$ can be processed by $\C$ in worst-case $\Oh[p,\Sigma]{\log |H|}$ time and requires $\Oh[p, \Sigma]{1}$ updates to $\Contract^p(\Xx)$, where each update adds or removes a single element or a single tuple to a relation.
\end{lemma}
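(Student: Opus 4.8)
The plan is to piggyback on the top trees data structure of Lemma~\ref{lem:dynamic_ferns}, augmented with $\msotwo$ types in the manner of Lemma~\ref{lem:top_trees_additional_information}, and to separately maintain the auxiliary pieces of $\Xx$ (singleton elements, flag element, and the ``stray'' binary tuples between distinct vertices) in simple dictionaries, recomputing the affected entry of $\Contract^p(\Xx)$ whenever a fern is created or destroyed.

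First I would set up the skeleton. Run the data structure $\F$ of Lemma~\ref{lem:dynamic_ferns} on $H$; this gives the forest $\Delta_{\Tc}$ of top trees representing the fern decomposition $H[\Ff]$, the quotient graph $\quo{H}{\Ff}$, and the bookkeeping dictionaries $\name$, $\name^{-1}$, $\attachment$, $\attachment^{-1}$, $\edgerep$. On top of $\Delta_{\Tc}$ I install the $\mu$-augmentation of Section~\ref{ssec:top_trees} with $\mu_D(\Bf)\coloneqq \tp^p(\Bf)$ for stripped boundaried structures; compositionality under $\oplus$ and $\forget$ is exactly Lemma~\ref{lem:compositionality} together with Lemma~\ref{lem:strip_compositional}, and isomorphism-invariance is routine, so Lemma~\ref{lem:top_trees_additional_information} gives $\Oh[p,\Sigma]{\log n}$ per top-tree operation. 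The subtlety here is that the relational structure carried by $\Delta_{\Tc}$ must be the one whose almost-induced substructures over ferns coincide with the fern elements $\Af_S$ of $\Xx$: when processing $\mathsf{addRelation}/\mathsf{delRelation}$, a unary/self-loop tuple on a non-essential vertex $w$ is routed, via $\name^{-1}$, into the unique top tree containing $w$ and inserted with $\AddRel/\DelRel$; a binary tuple $(v,w)$ with $v\ne w$ is routed into some top tree containing an edge $xy$ with $\{\name(x),\name(y)\}=\{v,w\}$ (using $\edgerep$), which exists because $H$ guards $\Af$; and unary/self-loop tuples on essential vertices, as well as flags, are \emph{not} put into the top trees at all — they are kept in two side dictionaries realizing the singleton elements $\Af_u$ ($u\in U$) and the flag element $\Af_\emptyset$, together with their precomputed types $\tp^p(\Af_u)$, $\tp^p(\Af_\emptyset)$, which are constant-time updatable since these structures have size $\le 1$.

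Next I would describe how $\Contract^p(\Xx)$ is stored and refreshed. Its universe is $U=V(\quo{H}{\Ff})$, so vertices of $\Contract^p(\Xx)$ mirror vertices of $\quo{H}{\Ff}$ one-to-one. For each tuple $\tup a\in U^i$ ($i\le 2$) the interpretation records which rank-$p$ type $\alpha$ is the join of all fern elements whose boundary equals the entry set of $\tup a$; by Lemma~\ref{lem:dynamic_ferns} these ferns are exactly the top trees in $\attachment^{-1}(\tup a)$ (up to the ordering and the self-loop/singleton convention). The key point is that a single update of $(\Af,H)$ causes, via Lemma~\ref{lem:dynamic_ferns}, only $\Oh{1}$ calls to $\mathsf{create}$, $\mathsf{destroy}$, $\attach$, $\detach$, $\addquo$, $\delquo$, plus $\Oh{\log n}$ calls to $\mathsf{join}/\mathsf{split}$; only $\mathsf{create}/\mathsf{destroy}/\attach/\detach/\addquo/\delquo$ change the \emph{multiset of fern types attached to a given boundary tuple}. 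So for each of the $\Oh{1}$ boundary tuples $\tup a$ touched, I re-derive the interpretation entry at $\tup a$ from the current contents of $\attachment^{-1}(\tup a)$: each fern contributes its root type $\tp^p$ (read off the top-tree root, with boundary colours reattached — but boundary colours of ferns are precisely the singleton-element data, handled by the $\oplus$-with-$\tp^p(\Af_u)$ correction), and I fold these together with the associative–commutative operation $\oplus^{p,\Sigma}_{D,D}$ of Lemma~\ref{lem:compositionality}, using the idempotence of Lemma~\ref{lem:idempotence_of_types} to keep the fold bounded in size — I only need to remember, per tuple and per type $\alpha$, the count of attached ferns of type $\alpha$ capped at the modulus $m$ from Lemma~\ref{lem:idempotence_of_types}, which is a dictionary of bounded size. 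Finally the flag predicates $(\Gamma^p)^{(0)}$ of $\Contract^p(\Xx)$ come from $\tp^p(\Af_\emptyset)$, read directly off the flag dictionary. Each touched tuple is processed in $\Oh[p,\Sigma]{\log n}$ time (a bounded number of type joins, each $\Oh[p,\Sigma]{1}$, plus dictionary lookups), and $\Contract^p(\Xx)$ changes by $\Oh[p,\Sigma]{1}$ inserted/removed tuples, as claimed; the total per-update cost is dominated by the $\Oh{\log n}$ top-tree operations, each $\Oh[p,\Sigma]{\log n}$ after $\mu$-augmentation, giving $\Oh[p,\Sigma]{\log |H|}$.

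The main obstacle I expect is the careful matching between the \emph{combinatorial} fern-decomposition machinery of Lemma~\ref{lem:dynamic_ferns} and the \emph{ensemble} definition of $\Xx$ from Section~\ref{ssec:lemmaA_static}: one must check that the top tree $\F$ attaches to a tuple $\tup a$ with $|\tup a|=2$ really carries $\tp^p$ of the stripped fern element $\Af_S$, that cyclic ferns with $|\bnd S|=1$ (contributing loops) and ferns with empty boundary (the removed $R_\emptyset$, stored via flags) are each accounted for exactly once and with the right boundary convention $\iota_{\tup a}$, and that the ``stray'' binary tuples — a tuple $(v,w)$ which the ensemble construction is free to place in any fern containing an edge $vw$, while $\F$ may later split or merge that fern — are re-placed consistently so that $\Smash(\Xx)=\Af$ is preserved at all times. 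I would handle this by fixing, as an invariant of the combined data structure, that every binary tuple $(v,w)$ with $v\ne w$ lives in the top tree currently pointed to by a designated element of $\edgerep(v,w)$, and re-establishing it with $\Oh{1}$ $\DelRel/\AddRel$ calls whenever that pointer changes under a $\mathsf{cut}/\mathsf{link}$; the correctness of the type computation then reduces, via Lemma~\ref{lem:strip_compositional} and Lemma~\ref{lem:top_trees_additional_information}, to the already-established fact that the root of each top tree stores $\tp^p$ of the almost-induced substructure.
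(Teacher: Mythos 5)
Your proposal follows the paper's own approach closely — $\mu$-augmented top trees with $\mu = \tp^p$, routing relational data into the trees via $\name^{-1}$ and $\edgerep$ while keeping singleton and flag elements aside, and maintaining per-tuple, per-type counts of attached ferns to rebuild $\Contract^p(\Xx)$ at touched entries — but there are two concrete gaps.

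The first is the capped counter. You say you ``only need to remember, per tuple and per type $\alpha$, the count of attached ferns of type $\alpha$ capped at the modulus $m$''. The cap $a \mapsto f(a)$ from Lemma~\ref{lem:idempotence_of_types} (with $f(a) = a$ for $a < m$ and $f(a) = (a \bmod m) + m$ otherwise) is many-to-one on $\{m, m+1, \ldots\}$, so decrements become ambiguous: if $f(a) = m$ you cannot tell whether $f(a-1)$ should be $m-1$ (the case $a = m$) or $2m - 1$ (the case $a = 2m$). Ferns are destroyed as well as created, so you need exact decrements. The paper's $\CountDict_j$ dictionaries therefore store \emph{uncapped} counts, keyed only on nonzero entries so the dictionary stays of size $\Oh{|H|}$, and apply the idempotence cap $f_j$ only at fold time when recomputing the interpretation at one specific tuple. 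Idempotence bounds the number of $\oplus^{p,\Sigma}$ evaluations per query, not the precision of the stored counter.

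The second is the cyclic-fern glue. The top-tree root stores $\tp^p$ of the tree representation $\Bf_S$, which for a cyclic fern $S$ is obtained by \emph{splitting} a vertex of the cycle into two boundary copies $x_1, x_2$; recovering $\tp^p(\Af_S)$ requires applying $\glue^{p,\Sigma}_{\psi}$ to merge the copies (and, if $\bnd S = \emptyset$, a further $\forget$) — this is exactly Lemma~\ref{lem:tree_type_to_fern_type}. Your phrase ``read off the top-tree root, with boundary colours reattached'' does not account for this: reattaching boundary colours corresponds to joining in the singleton elements $\Af_u$, which is a separate step, while the glue undoes the cycle split and is orthogonal. Without it, every cyclic fern contributes the wrong type. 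Both issues are repairable; once fixed, your skeleton matches the paper's proof.
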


  Recall from Lemma~\ref{lem:dynamic_ferns} that there exists a~dynamic data structure $\F$ maintaining a~forest of top trees $\Delta_{\Tc}$ that, together with a~dynamic mapping $\name\colon V(\Tc) \to V(H)$, represents the fern decomposition of $H$.
   In the proof, we will gradually extend $\F$ by new functionality, which will eventually allow us to conclude with a~data structure $\C$ claimed in the statement of the lemma.
    
    \paragraph*{Augmenting $\Tc$ with a~relational structure.}
    Let $(\Af, H)$ be an~augmented $\Sigma$-structure.
    Then, let $\Fc$ be the fern decomposition of $H$ constructed in Subsection~\ref{ssec:lemmaA_ferns}; let $\Xx$ be the ensemble constructed from $\Af$ and $\Fc$ in Subsection~\ref{ssec:lemmaA_static}; and let $\Tc$ be the forest maintained by $\Delta_{\Tc}$ in Subsection~\ref{ssec:lemmaA_dynamic_fern} which, together with $\name$, represents $\Fc$.
    For every fern $S \in H[\Fc]$, let $\Af_S$ be the fern element of $\Xx$ corresponding to $S$, and let $\Tc_S$ be the component of $\Tc$ which, together with $\name$, represents $S$.

    Recall from Section~\ref{ssec:top_trees} that $\Tc$ can be extended with auxiliary information by assigning it a~$\Sigma$-structure $\Bf$ guarded by $\Tc$.
    Then, the interface of $\Delta_\Tc$ is extended by two new methods: $\AddRel$ and $\DelRel$, defined in Section~\ref{ssec:top_trees}.
    The structure $\Bf$ will be defined in a moment, intuitively it corresponds to $\Af$ split into individual elements of the ensemble $\Xx$, which in turn are guarded by the trees of forest $\Tc$. 
    
    For every fern $S \in H[\Fc]$, let $\Bf_S \coloneqq \Bf[\Tc_S]$ be the boundaried substructure of $\Bf$ induced by $\Tc_S$.
    Here, $\Bf_S$ will be a~substructure of $\Bf$ guarded by $\Tc_S$, which is a~tree representation of a~single fern element $\Af_S \in \Xx$. 
    The boundary $\bnd \Bf_S$ will be equal to the set $\bnd \Tc_S$ of external boundary vertices of $\Tc_S \in \Tc$, which in turn represents the set essential vertices of $H$ to which $\Tc_S$ is attached.
    Thus, in the language of relational structures, the boundary of $\Bf_S$ corresponds naturally to $\bnd \Af_S$.

    Note that $\Bf$ is the disjoint sum over $\Bf_S$ for all ferns $S \in H[\Fc]$, hence in order to describe $\Bf$, we only need to describe $\Bf_S$ for each $S$.
    For the ease of exposition, we will often say that if $\Tc_S$ represents $S$, then each of $\Tc_S$ and $\Bf_S$ represents both $S$ and $\Af_S$.

    We now construct a structure $\Bf \coloneqq \Bf(\Xx, \Tc)$ in such a~way that for each fern $S \in H[\Fc]$, the rank-$p$ type of $\Af_S$ can be deduced uniquely from $\bnd \Bf_S$, $\tp^p(\Bf_S)$ and the attachment tuple of $\Tc_S$, given access to $\name$ as an~oracle.
    Fix $S \in H[\Fc]$.
    Then:
          
    \begin{itemize}
      \item If $S$ is a~tree fern, then we construct $\Bf_S$ as a~structure isomorphic to $\Af_S$, with the isomorphism given by $\name$.
      Note that this isomorphism exists since $\Bf_S$ is a~relational structure built on $\Tc_S$, $\Af_S$ is a~relational structure built on $S$, and $\name(\Tc_S) = S$.
      Observe that we have $\name(\Bf_S) = \Af_S$.
      
      \item If $S$ is an~unicyclic fern, then we need to tweak the construction.
        Assume that the cycle in $S$ is split at vertex $w \in V(S)$, and that its two copies in $\Tc_S$ are $x_1$ and $x_2$; the remaining vertices of $\Tc_S$ are in a~bijection with $V(S) \setminus \{w\}$.
        Let $\Bf_S$ be an~initially empty relational structure with $V(\Bf_S) = V(\Tc_S)$.
        We shall now describe how tuples are added to the relations of $\Bf_S$.
        
        First, consider a~vertex $v$ of $S$.
        Then, let $y$ be an arbitrarily chosen element of $\Bf_S$ with $\name(y) = v$.
        (This choice is unique if $v \neq w$.)
        The element $y$ inherits the interpretations of all unary and binary relations on $v$ in $\Af$.
        That is, for each relation $R \in \Sigma$, if $v \in R^{\Af_S}$ (respectively, $(v, v) \in R^{\Af_S}$), then we add $y$ (resp. $(y, y)$) to $R^{\Bf_S}$.
        
        Similarly, consider a~pair $(u, v)$ such that $u \neq v$ and $uv \in E(S)$.
        Then, let $(y, z)$ be a~pair of elements of $\Bf_S$, chosen arbitrarily, so that $yz \in E(\Tc_S)$, $\name(y) = u$ and $\name(z) = v$.
        (This choice is usually unique, apart from the case where the cycle of $S$ has length exactly $2$.)
        Then, for each binary relation $R \in \Sigma^{(2)}$, if $(u, v) \in R^{\Af_S}$, then add $(y, z)$ to $R^{\Bf_S}$.
        
        It can be now easily checked that $\Af_S = \name(\glue_\psi(\Bf_s))$, where $\psi : \{x_1, x_2\} \to \{x_2\}$ is the function given by $\psi(x_1) = \psi(x_2) = x_2$.
    \end{itemize}
    
    Since all fern elements of $\Xx$ have empty flags, the same also holds for $\B$.
    Moreover, (almost) no boundary element $d \in \bnd \B$ satisfies any unary predicates, and the interpretations of binary predicates of $\Sigma$ in $\B$ (usually) do not contain $(d, d)$.
    The only exception is given by unicyclic ferns of $\Fc$ with empty boundary: recall that the component $\Bf_S$ of $\Bf$ representing such a~fern is formed by splitting the cycle of the fern along a~non-deterministically chosen vertex $v$ of the cycle---which is non-essential by the properties of ferns.
    Then, exactly one copy $x$ of $v$ in $\Bf_S$ inherits the interpretations of unary and binary predicates from $\Af$, even though $x \in \bnd \Bf_S$.
    
    As promised, we have:
    
    \begin{lemma}
      \label{lem:tree_type_to_fern_type}
      There exists a~function $\mathsf{treeTypeToFernType(\cdot, \cdot, \cdot)}$ which, given $\bnd \Bf_S$, $\tp^p(\Bf_S)$ and the attachment tuple $\bar{a}$ of $\Tc_S$ as its three arguments, and given access to $\name$ as a~dynamic dictionary, computes $\tp^p(\Af_S)$ in $\Oh[p, \Sigma]{\log |H|}$ time.
      
      \begin{proof}
        We consider all different shapes of the fern $S$.
        We will show that each of them can be distinguished by the number of different elements of $\bar{a}$ and the size of $|\bnd \Bf_S|$, and that $\tp^p(\Af_S)$ can be computed efficiently in each of the cases.
        Let $A \coloneqq \{u \mid u \in \tup{a}\}$.
        By the definition of $\tup{a}$, we have $A = \bnd \Af_S$.
        Then:
        
        \begin{itemize}
          \item If $S$ is a~tree fern with $\ell \in \{0, 1, 2\}$ elements in the boundary, then $|A| = |\bnd \Bf_S| = \ell$.
          Since $\Af_S$ is isomorphic to $\Bf_S$, the type $\tp^p(\Af_S)$ can be constructed from $\tp^p(\Bf_S)$ by replacing each occurrence of a~boundary element $d \in \bnd \Bf_S$ with $\name(d)$.
          
          \item If $S$ is a~unicyclic fern with $\bnd S = \{v\}$, then $|A| = 1$, but $|\bnd \Bf_S| = 2$ (i.e., the boundary of $\Bf_S$ consists of two copies, say $x_1$, $x_2$, of $v$).
          Let $\bnd \Bf_S = \{x_1, x_2\}$.
          Then, $\tp^p(\Af_S)$ is given by
          \[ \tp^p(\Af_S) = \left(\iota_v^{-1} \circ \iota_{x_2}\right)\left( \glue_{\psi}^{p, \Sigma} \left( \tp^p(\Bf_S) \right) \right), \]
          where $\psi\colon\{x_1, x_2\} \to \{x_2\}$ is defined as $\psi(x_1) = \psi(x_2) = x_2$.
          Intuitively, given a~structure $\Bf_S$, we first glue both copies of $v$ in $\Bf_S$ into one vertex $x_2$.
          The resulting structure is isomorphic to $\Af_S$, with an isomorphism $\name$ sending $x_2$ to $v$.
          Hence, the rank-$p$ type of $\Af_S$ can be retrieved from the rank-$p$ type of $\Bf_S$.
          
          \item If $S$ is a~unicyclic fern with $\bnd S = \emptyset$, then $|A| = 0$, but $|\bnd \Bf_S| = 2$ (i.e., the boundary of $\Bf_S$ comprises two copies, say $x_1$, $x_2$, of some vertex on the cycle of $S$).
          Let $\bnd \Bf_S = \{x_1, x_2\}$.
          Then, $\tp^p(\Af_S)$ is given by
          \[ \tp^p(\Af_S) = \forget_{\{x_2\}}^{p, \Sigma}\left( \glue_{\psi}^{p, \Sigma} \left( \tp^p(\Bf_S) \right) \right), \]
          where $\psi$ is defined as above.
          Intuitively, given a~structure $\Bf_S$, we first glue both copies of $v$ in $\Bf_S$ into one vertex, which is then removed from the boundary.
          The resulting boundaryless structure is isomorphic to $\Af_S$, thus its type is exactly $\tp^p(\Af_S)$.
        \end{itemize}
        Hence, all cases can be distinguished by the sizes of $A$ and $\bnd \Bf_S$, and in each case, we can compute $\tp^p(\Af_S)$ in $\Oh[p, \Sigma]{\log |H|}$ time, which is dominated by the queries to $\name$ in the first case.
      \end{proof}
    \end{lemma}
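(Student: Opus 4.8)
The plan is a case analysis on the combinatorial shape of the fern $S$, which can be recovered from just the two integers $|A|$ and $|\bnd\Bf_S|$, where $A$ is the set of distinct entries of $\bar a$. By the way attachment tuples are maintained in Section~\ref{ssec:lemmaA_dynamic_fern}, we have $A=\bnd\Af_S$, so $|A|\in\{0,1,2\}$ records $|\bnd S|$ on the relational side. On the other hand $|\bnd\Bf_S|=|\bnd\Tc_S|$, which by the definition of a tree representing a fern equals $|\bnd S|$ when $S$ is a tree fern and equals $2$ when $S$ is cyclic. Hence $(|A|,|\bnd\Bf_S|)$ is one of $(0,0),(1,1),(2,2)$ (tree ferns with $\ell=0,1,2$), $(1,2)$ (cyclic, $|\bnd S|=1$), or $(0,2)$ (cyclic, $|\bnd S|=0$); these are pairwise distinct, so after a single pass over $\bar a$ and $\bnd\Bf_S$ we can branch on the shape in $\Oh{1}$ time.

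In the tree-fern cases, $\Bf_S$ is isomorphic to $\Af_S$ via $\name$ — this was noted when $\Bf$ was constructed — and $\name$ restricts to a bijection from $\bnd\Bf_S$ onto $A=\bnd\Af_S$. So I would obtain $\tp^p(\Af_S)$ from $\tp^p(\Bf_S)$ by applying the relabeling of $\Sentences^{p,\Sigma}(\bnd\Bf_S)$ onto $\Sentences^{p,\Sigma}(A)$ induced by $\name|_{\bnd\Bf_S}$ (equivalently, a suitable composition of the canonization maps $\iota$). The only non-constant work here is querying $\name$ on the at most two elements of $\bnd\Bf_S$.

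In the cyclic cases, write $\bnd\Bf_S=\{x_1,x_2\}$ and let $\psi\colon\{x_1,x_2\}\to\{x_2\}$ be the gluing map. From the construction of $\Bf$ we have $\Af_S=\name(\glue_\psi(\Bf_S))$, up to relabeling of the surviving boundary copy $x_2$. I would first compute $\glue^{p,\Sigma}_\psi(\tp^p(\Bf_S))\in\Types^{p,\Sigma}(\{x_2\})$ using Lemma~\ref{lem:compositionality}. If $|\bnd S|=1$, I then relabel $x_2$ to the unique element $v$ of $A$, obtaining a type over $A=\bnd\Af_S$; if $|\bnd S|=0$, I instead apply $\forget^{p,\Sigma}_{x_2,\{x_2\}}$, obtaining the type over the empty boundary. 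Correctness follows directly from the compositionality identities of Lemma~\ref{lem:compositionality} together with isomorphism-invariance of types. One subtlety to check is the (already flagged) failure of $\Bf_S$ to be a stripped structure when $S$ is cyclic with empty boundary: exactly one copy of the split, non-essential vertex carries the unary and binary predicates of $\Af$, and this is precisely the copy $x_2$ that survives $\glue_\psi$, so $\glue_\psi(\Bf_S)$ is genuinely isomorphic to $\Af_S$ and no information is lost.

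For the time bound, note that every operation used on types — $\glue^{p,\Sigma}_\psi$, $\forget^{p,\Sigma}$, and the canonization maps $\iota$ — acts on finite sets whose size is bounded by a function of $p$ and $\Sigma$ (boundaries of size at most $3$), hence runs in $\Oh[p,\Sigma]{1}$ time; the only super-constant work is the at most two dictionary lookups of $\name$, each costing $\Oh{\log|H|}$. Summing up yields the claimed $\Oh[p,\Sigma]{\log|H|}$. I do not expect a conceptual obstacle; the real work is bookkeeping — correctly threading the relabeling/canonization maps between the boundaries $\bnd\Bf_S$, $\{x_2\}$, $A=\bnd\Af_S$ and $\emptyset$, and verifying in each branch that the relabeling is well-defined (in particular that $\name|_{\bnd\Bf_S}$ is injective onto $A$ in the $|\bnd S|=2$ subcase, and identifies exactly $x_1$ with $x_2$ in the cyclic subcases).
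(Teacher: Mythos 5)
Your proposal matches the paper's proof: you distinguish the same five cases via the pair $(|A|,|\bnd\Bf_S|)$, handle tree ferns by transporting $\tp^p(\Bf_S)$ along the isomorphism $\name$, and handle cyclic ferns by applying $\glue^{p,\Sigma}_\psi$ followed by either a relabeling or a $\forget$, with the same $\Oh[p,\Sigma]{\log|H|}$ accounting dominated by $\name$-lookups. Your remark about the non-stripped copy in the empty-boundary cyclic case surviving the glue is a nice explicit check of a subtlety the paper leaves implicit, but the argument is the same.
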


    \paragraph*{Deducing $\msotwo{}$ types of the clusters.}
    We will now show that the clusters of $\Delta_{\Tc}$ can be augmented with information related to the $\msotwo{}$ types of substructures of $\Bf$.
    Recall that top trees can be $\mu$-augmented by assigning each cluster $(C, \bnd C)$ of $\Delta_{\Tc}$ an~abstract piece of information $\mu_{\bnd C}(\Bf\{C\})$ about the substructure of $\Bf$ almost induced by $C$.
    Here, for each finite $D \subseteq \Omega$, $\mu_D$ is a~mapping from stripped boundaried structures with boundary $D$ to some space $I_D$ of possible pieces of information.
    
    We now define $\mu_D$ and $I_D$.
    For every finite set $D \subseteq \Omega$, let $\mu_D(\X) \coloneqq \tp^p(\X)$ be the function that assigns each stripped boundaried structure $\X$ over $\Sigma$ with $\bnd \X = D$ its rank-$p$ type.
      Let also $I_D \coloneqq \Types^{p, \Sigma}(D)$ be the set of different rank-$p$ types of structures with boundary $D$.
      We will prove the following:
    
\begin{lemma}
    \label{lem:storing_stripped_types}
      The top trees data structure $\Delta_{\Tc}$ can be $\mu$-augmented.
      Moreover, each update and query on $\mu$-augmented $\Delta_{\Tc}$ can be performed in worst-case $\Oh[p, \Sigma]{\log n}$ time, where $n = |V(\Tc)|$.
    \begin{proof}
    
     We now prove a~series of claims about the properties from $\mu$.
     From these and Lemma~\ref{lem:top_trees_additional_information}, the statement of the lemma will be immediate.

    \begin{claim}
      \label{cl:contraction_efficient_mu}
      For each $D \subseteq \Omega$, $|D| \leq 2$, the mapping $\mu_D$ can be computed in $\Oh[p, \Sigma]{1}$ worst-case time from any stripped boundaried structure with at most $2$ vertices.
      \begin{proof}
        Trivial as here, $\mu_D$ is only evaluated on structures of constant size.
      \cqed\end{proof}
    \end{claim}
    
    \begin{claim}[Efficient compositionality of $\mu$ under joins]
      \label{cl:contraction_mu_join}
      For every finite $D_1, D_2 \subseteq \Omega$, there exists a~function $\oplus_{D_1, D_2}\colon I_{D_1} \times I_{D_2} \to I_{D_1 \cup D_2}$ such that for every pair $\Xf_1$, $\Xf_2$ of stripped boundaried structures with $\bnd \Xf_1 = D_1$, $\bnd \Xf_2 = D_2$, we have that:
\[ \mu_{D_1 \cup D_2}\left(\Xf_1 \oplus \Xf_2\right) = \mu_{D_1}(\Xf_1)\, \oplus_{D_1, D_2} \, \mu_{D_2}(\Xf_2). \]
      Moreover, if $|D_1|, |D_2| \leq 2$ and $|D_1 \cap D_2| = 1$, then $\oplus_{D_1, D_2}$ can be evaluated on any pair of arguments in worst-case $\Oh[p, \Sigma]{1}$ time.
      \begin{proof}
        Such a~function $\oplus_{D_1, D_2}$ is simply given by $\oplus^{p, \Sigma}_{D_1, D_2}\colon \Types^{p,\Sigma}(D_1) \times \Types^{p, \Sigma}(D_2) \to \Types^{p, \Sigma}(D_1 \cup D_2)$ defined in Lemma~\ref{lem:compositionality}.
        That $\oplus_{D_1, D_2}$ commutes with $\mu$ and that $\oplus_{D_1, D_2}$ can be efficiently evaluated for $|D_1|, |D_2| \leq 2$ follows from Lemma~\ref{lem:compositionality}.
      \cqed\end{proof}
    \end{claim}
    
    \begin{claim}[Efficient compositionality of $\mu$ under forgets]
    \label{cl:contraction_mu_forget}
    For every finite $D \subseteq \Omega$, and $S \subseteq D$, there exists a~function $\forget_{S,D}\,:\, I_D \times \left(2^S\right)^{\Sigma^{(1)} \cup \Sigma^{(2)}} \to I_{D \setminus S}$ so that for every stripped boundaried structure $\Xf$ with $\bnd \Xf = D$, and $P \in \left(2^S\right)^{\Sigma^{(1)} \cup \Sigma^{(2)}}$, we have:
\[ \mu_{D \setminus S}\left(\forget_S(\Xf, P)\right) = \forget_{S,D}\left( \mu_D(\Xf), P \right). \]
    Moreover, if $|D| \leq 3$, then $\forget_{S, D}$ can be evaluated on any pair of arguments in worst-case $\Oh[p, \Sigma]{1}$ time.
    \begin{proof}
    Fix $S$, $D$, and $P$ as above.
    We define the relational structure $\Yf \coloneqq \Yf(S, P)$ with $V(\Yf) = \bnd\Yf = S$ as a~structure with an edgeless Gaifman graph $G(\Yf)$, whose interpretations of unary relations on $S$ and binary relations on self-loops on $S$ are given by $P$.
    Formally,
    \[
    \begin{split}
    R^{\Yf} &= P(R)\qquad\text{for }R \in \Sigma^{(1)}, \\
    R^{\Yf} &= \{(x, x)\,\mid\,x \in P(R)\}\qquad\text{for }R \in \Sigma^{(2)}.
    \end{split}
    \]
    Then, for any stripped boundaried structure $\Xf$, we have that $\forget_S(\Xf, P) = \forget_S(\Xf \oplus \Yf)$.
    In particular, by Lemma~\ref{lem:compositionality},
    \[
    \tp^p\left(\forget_S(\Xf, P)\right) = \tp^p\left(\forget_S(\Xf \oplus \Yf)\right) =
      \forget_{S,D}^{p,\Sigma}\left(\tp^p(\Xf) \oplus_{D,S}^{p,\Sigma} \tp^p(\Yf)\right),
    \]
    where $\tp^p(\Yf)$ can be computed in $\Oh[p, \Sigma]{1}$ time from $S$, $P$, $p$, and $\Sigma$, as long as $|S| = \Oh{1}$.
    Since $\mu_D$ is defined as the rank-$p$ type of a~given stripped boundaried structure, we conclude that the function
    \[ \forget_{S,D}\left(\alpha, P\right) \coloneqq
      \forget_{S,D}^{p,\Sigma}\left(\alpha \oplus_{D,S}^{p,\Sigma} \tp^p\left(\Yf(S, P)\right)\right)
    \]
    satisfies the compositionality of $\mu$.
    Naturally, for $|D| \leq 3$, the function is computable in $\Oh[p, \Sigma]{1}$ time.
    \cqed\end{proof}
    \end{claim}
    
    \begin{claim}[Efficient isomorphism invariance of $\mu$]
    \label{cl:contraction_mu_iota}
    For every finite $D_1, D_2 \subseteq \Omega$ of equal cardinality, and for every bijection $\phi\colon D_1 \to D_2$, there exists a~function $\iota_{\phi}\colon I_{D_1} \to I_{D_2}$ such that for every pair $\Xf_1$, $\Xf_2$ of \emph{isomorphic} boundaried structures with $\bnd \Xf_1 = D_1$, $\bnd \Xf_2 = D_2$, with an~isomorphism $\wh{\phi}\colon V(\Xf_1) \to V(\Xf_2)$ extending $\phi$, we have:
  \[ \mu_{D_2}(\Xf_2) = \iota_{\phi}\left( \mu_{D_1}(\Xf_1) \right). \]
  Moreover, if $|D_1| \leq 2$, then $\iota_{\phi}$ can be evaluated on any argument in worst-case $\Oh[p, \Sigma]{1}$ time.
  \begin{proof}
  Define $\iota_{\phi}$ as a~function taking a~type $\alpha \in \Types^{p,\Sigma}(D_1)$ as an~argument, and replacing every occurrence of a~boundary element $d \in D_1$ with $\phi(d) \in D_2$ within every sentence of $\alpha$.
  Naturally, given two isomorphic boundaried structures $\Xf_1$, $\Xf_2$ with $\bnd \Xf_1 = D_1$, $\bnd \Xf_2 = D_2$, with an isomorphism $\wh{\phi}\colon V(\Xf_1) \to V(\Xf_2)$ extending $\phi$, we have $\tp^p(\Xf_2) = \iota_{\phi}(\tp^p(\Xf_1))$.
  Thus, $\iota_{\phi}$ witnesses the isomorphism invariance of $\mu$.
  
  Naturally, for every $D_1$ with $|D_1| \leq 2$, $\iota_{\phi}$ can be evaluated in $\Oh[p, \Sigma]{1}$ time.
  \cqed\end{proof}
    \end{claim}
    
    By Claims~\ref{cl:contraction_efficient_mu}, \ref{cl:contraction_mu_join}, \ref{cl:contraction_mu_forget}, \ref{cl:contraction_mu_iota}, and Lemma~\ref{lem:top_trees_additional_information}, $\Delta_{\Tc}$ can be $\mu$-augmented.
    Moreover, each update and query on $\Delta_{\Tc}$ can be performed in worst-case $\Oh[p, \Sigma]{\log n}$ time.
    This concludes the proof.
  \end{proof}
\end{lemma}

From now on, assume that $\Delta_{\Tc}$ is $\mu$-augmented.
    Thus, by Lemma~\ref{lem:storing_stripped_types}, for each fern $S \in H[\Fc]$, we can read from $\Delta_{\Tc}$ the rank-$p$ type of $\Strip(\Bf_S)$, stored in the root cluster of the top tree corresponding to $\Bf_S$.
    It still remains to show that we can recover from $\Delta_{\Tc}$ the rank-$p$ type of $\Bf_S$.
\begin{lemma}
  \label{lem:top_trees_gettype}
  The interface of $\Delta_{\Tc}$ can be extended by the following method:
  \begin{itemize}
    \item $\mathsf{getType}(\Delta_T)$: given a~reference to a~top tree $\Delta_T$, returns the rank-$p$ type of the relational structure $\Bf_T$ described by $\Delta_T$.
  \end{itemize}
  This method runs in worst-case $\Oh[p, \Sigma]{\log n}$ time, where $n = |V(\Tc)|$.
  \begin{proof}
    Let $\bnd_{\mathrm{old}}$ be the current boundary of $\Delta_T$.
    We call $\mathsf{clearBoundary}(\Delta_T)$.
    Since $\Bf_T$ satisfies no nullary predicates by the construction of $\Bf$, we immediately infer that the sought rank-$p$ type of $\Bf_T$ is stored in the root cluster of $\Delta_T$.
    Before returning from the method, we restore the boundary of $\Delta_T$ by calling $\mathsf{expose}(\bnd_{\mathrm{old}})$.
  \end{proof}
\end{lemma}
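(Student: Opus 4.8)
The plan is to reconstruct $\tp^p(\Bf_T)$ from the information already cached by the $\mu$-augmentation, re-adding exactly the data that $\Strip$ discards. Write $D \coloneqq \bnd T$ for the current external boundary of $\Delta_T$, so $|D|\le 2$, and let $\alpha \coloneqq \tp^p(\Strip(\Bf_T))$ be the rank-$p$ type held in the root cluster of $\Delta_T$; by Lemma~\ref{lem:storing_stripped_types} this is available in $\Oh{1}$ time, since the $\mu$-augmentation stores $\mu_{\bnd T}(\Bf\{T\}) = \tp^p(\Strip(\Bf_T))$ there. By the construction of $\Bf$ the structure $\Bf_T$ carries no flags, so $\Bf_T$ and $\Strip(\Bf_T)$ differ only in the unary predicates of $\Sigma$ satisfied on elements of $D$ and in the binary self-loops on elements of $D$; recall that for an empty-boundary unicyclic fern one copy of the split vertex lies in $D$ and genuinely inherits such predicates from $\Af$, so this difference can be non-trivial and must be accounted for.

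First I would read the assignment $P \in \left(2^D\right)^{\Sigma^{(1)} \cup \Sigma^{(2)}}$ that records, for each $d \in D$, which unary predicates hold on $d$ and which binary predicates hold on $(d,d)$ in $\Bf$. This costs $\Oh[\Sigma]{\log n}$ time: one may query the dictionary storing the interpretations of the predicates of $\Bf$ a constant number of times, or follow the pointers to the lists $L_d$, $L_{(d,d)}$ that the implementation in Lemma~\ref{lem:top_trees_additional_information} keeps alongside the root cluster. Next, as in the proof of Claim~\ref{cl:contraction_mu_forget}, I would form the small boundaried structure $\Yf = \Yf(D,P)$ with $V(\Yf) = \bnd\Yf = D$, empty Gaifman graph, no flags, and the interpretations of unary predicates and of self-loop binary predicates on $D$ dictated by $P$; since $|D| \le 2$, its rank-$p$ type $\tp^p(\Yf)$ can be computed in $\Oh[p,\Sigma]{1}$ time.

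Then I would verify that $\Bf_T = \Strip(\Bf_T) \oplus \Yf$: the join leaves all relations of $\Strip(\Bf_T)$ untouched off $D$, re-adds precisely the unary predicates and self-loops on $D$ prescribed by $P$, and restores the flags vacuously (both sides have none). Hence compositionality of types (Lemma~\ref{lem:compositionality}) gives
\[
  \tp^p(\Bf_T) \;=\; \alpha \;\oplus^{p,\Sigma}_{D,D}\; \tp^p(\Yf),
\]
and since $\Types^{p,\Sigma}(D)$ is finite for $|D| \le 2$ the right-hand side is obtained in $\Oh[p,\Sigma]{1}$ time. Returning this value implements $\mathsf{getType}(\Delta_T)$; it reads only a constant number of cached cluster records and performs $\Oh{1}$ dictionary lookups, so the running time is $\Oh[p,\Sigma]{\log n}$. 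I would also keep $\mathsf{getType}$ non-mutating---reading the cached stripped type directly rather than temporarily clearing and re-exposing the boundary of $\Delta_T$---so that it stays a genuine query. I do not expect a serious obstacle here; the only point needing care is the empty-boundary unicyclic corner case, but that is exactly what routing $P$ through $\Yf$ absorbs.
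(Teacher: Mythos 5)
Your proposal is correct, but it takes a genuinely different route from the paper. The paper's one-line proof temporarily calls $\mathsf{clearBoundary}(\Delta_T)$: once $\bnd T=\emptyset$, the root cluster is $(T,\emptyset)$, so $\Strip$ has nothing to remove (no boundary elements, and $\Bf_T$ has no flags), and the cached value $\mu_\emptyset(\Bf\{(T,\emptyset)\})$ coincides with the desired type; the old boundary is then restored with $\mathsf{expose}$. You instead leave $\Delta_T$ untouched, read the cached $\alpha=\tp^p(\Strip(\Bf_T))$ together with the evaluation $P$ of unary/self-loop predicates on $D=\bnd T$ (available in constant time from the pointers $L_d, L_{(d,d)}$ stored at the root cluster), and reconstruct $\tp^p(\Bf_T)$ as $\alpha\oplus^{p,\Sigma}_{D,D}\tp^p(\Yf(D,P))$, using the same fix-up structure $\Yf$ as in Claim~\ref{cl:contraction_mu_forget}. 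The identity $\Bf_T=\Strip(\Bf_T)\oplus\Yf(D,P)$ checks out (universes, boundaries and interpretations all match, and both sides have empty flags), so compositionality (Lemma~\ref{lem:compositionality}) gives the stated formula, and the running time is dominated by a constant number of dictionary lookups.

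The two proofs are not interchangeable in one respect that is worth your attention: the paper's clear-then-restore trick returns the type of $\Bf_T$ viewed with \emph{empty} boundary, an element of $\Types^{p,\Sigma}(\emptyset)$, whereas your construction returns the boundaried type in $\Types^{p,\Sigma}(\bnd T)$. The latter is what the consumer $\mathsf{treeTypeToFernType}$ in Lemma~\ref{lem:tree_type_to_fern_type} actually manipulates (e.g.\ it replaces boundary elements $d\in\bnd\Bf_S$ by $\name(d)$, which presupposes access to the boundaried type), and in general the boundaried type cannot be recovered from the boundaryless one even if $\bnd T$ is handed over separately. So your version is both a read-only query --- avoiding the temporary $\mathsf{clearBoundary}/\mathsf{expose}$ mutation, which costs two $\Oh[p,\Sigma]{\log n}$ traversals and transiently changes the data structure's state --- and it outputs an object that plugs into the downstream lemma without further translation. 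In short: same conclusion, cleaner interface, and it sidesteps an implicit re-boundarying step that the paper's terse proof glosses over.
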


  We remark that Lemma~\ref{lem:top_trees_gettype} can be chained with Lemma~\ref{lem:tree_type_to_fern_type} in order to recover the rank-$p$ type of the original fern element $\Af_S$ in $\Xx$, as long as we have access to $\attachment$ dictionary, mapping top trees to their attachment tuples.
\begin{corollary}
  \label{cor:top_tree_to_fern_type}
  Given a~fern $S \in H[\Fc]$, let $\Af_S \in \Xx$ be the fern element associated with $S$, let $\Tc_S$ be the tree component of $\Tc$ representing $S$, and let $\Delta_S$ be the top tree maintaining $\Tc_S$.
  Then, there exists a~function $\mathsf{topTreeToFernType}(\cdot)$, which, given a~reference to a~top tree $\Delta_T$ as its only argument, and given access to $\name$ and $\attachment$ as relational dictionaries, computes $\tp^p(\Af_S)$ in $\Oh[p, \Sigma]{\log |H|}$ time.
  \begin{proof}
    We have
    \[ \mathsf{topTreeToFernType}(\Delta_S) = \mathsf{treeTypeToFernType}(\bnd \Delta_S, \mathsf{getType}(\Delta_S), \attachment(\Delta_S)), \]
    where $\bnd \Delta_S$ is the set of external boundary vertices of $\Delta_S$, and $\attachment(\Delta_S)$ is the attachment tuple of $\Tc_S$.
    Since $|V(\Tc)| = \Oh{|H|}$, the call to $\mathsf{getType}$ takes worst-case $\Oh[p, \Sigma]{\log |H|}$ time, as well as the call to $\mathsf{TreeTypeToFernType}$.
  \end{proof}
\end{corollary}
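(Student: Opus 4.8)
The plan is to observe that Corollary~\ref{cor:top_tree_to_fern_type} is a pure composition of the two preceding results, Lemma~\ref{lem:top_trees_gettype} and Lemma~\ref{lem:tree_type_to_fern_type}, together with a constant amount of dictionary bookkeeping; essentially all the real content already lives in those two lemmas, so the proof amounts to checking that their inputs and outputs line up. First I would fix the fern $S \in H[\Fc]$, the associated fern element $\Af_S \in \Xx$, the tree component $\Tc_S$ of $\Tc$ representing $S$ (together with $\name$), and the top tree $\Delta_S$ maintaining $\Tc_S$. By the construction of $\Bf$ given above, the substructure $\Bf_S = \Bf[\Tc_S]$ has boundary $\bnd \Bf_S = \bnd \Tc_S$, i.e.\ the boundary of the relational structure represented by $\Tc_S$ coincides with the set $\bnd \Delta_S$ of external boundary vertices of $\Delta_S$; this is the first argument of $\mathsf{treeTypeToFernType}$.

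Next I would produce the remaining two arguments. By Lemma~\ref{lem:top_trees_gettype}, a single call $\mathsf{getType}(\Delta_S)$ returns $\tp^p(\Bf_S)$ in worst-case $\Oh[p,\Sigma]{\log n}$ time, where $n = |V(\Tc)|$; internally this uses Lemma~\ref{lem:storing_stripped_types}, which lets the root cluster of $\Delta_S$ carry the rank-$p$ type of the (stripped) structure almost induced by it, the stripping being harmless here because $\Bf_S$ carries no flags. The attachment tuple $\bar a$ of $\Tc_S$ is obtained by one dictionary lookup $\attachment(\Delta_S)$ in $\Oh{\log|H|}$ time. Having assembled $\bnd \Delta_S$, $\tp^p(\Bf_S)$ and $\bar a$, I would invoke Lemma~\ref{lem:tree_type_to_fern_type}, which states exactly that
\[
  \mathsf{treeTypeToFernType}\bigl(\bnd \Bf_S,\ \tp^p(\Bf_S),\ \bar a\bigr) = \tp^p(\Af_S),
\]
and runs in $\Oh[p,\Sigma]{\log|H|}$ time (dominated, in the tree-fern case, by the queries to $\name$). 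Hence defining $\mathsf{topTreeToFernType}(\Delta_S) \coloneqq \mathsf{treeTypeToFernType}(\bnd \Delta_S, \mathsf{getType}(\Delta_S), \attachment(\Delta_S))$ gives the desired function.

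For the time bound I would note that $\Tc$ is obtained from the ferns of $H[\Fc]$ by at most one vertex duplication per cyclic fern, so $|V(\Tc)| \le |V(H)| + |E(H)| = \Oh{|H|}$; thus the $\log n$ factors from $\mathsf{getType}$ and from $\mathsf{treeTypeToFernType}$, as well as the two dictionary lookups, are all $\Oh{\log|H|}$, and the total is $\Oh[p,\Sigma]{\log|H|}$, as claimed.

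I do not expect a serious obstacle at the level of the corollary itself; it is bookkeeping. The only point that warrants a moment of care is confirming that the boundary demanded by $\mathsf{treeTypeToFernType}$ really is the external-boundary-vertex set of $\Delta_S$ and that $\mathsf{getType}$ delivers the rank-$p$ type with respect to that boundary, including the corner case of unicyclic ferns with empty $H$-boundary, where both split copies $x_1, x_2$ lie in $\bnd \Bf_S = \bnd \Delta_S$ and one of them inherits unary/binary interpretations from $\Af$. This corner case is, however, already absorbed into Lemma~\ref{lem:tree_type_to_fern_type}, which distinguishes all shapes of $S$ purely through $|\bnd \Bf_S|$ versus the number of distinct entries of $\bar a$; so nothing new has to be argued at this stage.
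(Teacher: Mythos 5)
Your proposal is correct and is essentially identical to the paper's proof: the paper also defines $\mathsf{topTreeToFernType}(\Delta_S) = \mathsf{treeTypeToFernType}(\bnd \Delta_S, \mathsf{getType}(\Delta_S), \attachment(\Delta_S))$ and derives the time bound from $|V(\Tc)| = \Oh{|H|}$ together with the running times of Lemma~\ref{lem:top_trees_gettype} and Lemma~\ref{lem:tree_type_to_fern_type}. Your extra remarks on the boundary alignment and the unicyclic corner case are fine but, as you note, already absorbed into the cited lemmas.
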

    
    \paragraph*{Dynamic maintenance of $\Xx$ and related information.}
    We now show how to maintain the ensemble $\Xx$ and its representation in $\Delta_{\Tc}$ dynamically under the updates of vertices, edges, and tuples of the augmented structure $(\Af, H)$.
    Under each of these updates, $\Xx$ and its representation will be modified in worst-case $\Oh[p, \Sigma]{\log |H|}$ time.

    More importantly, we will show that we can dynamically deduce type information about ferns stored in $\Xx$.
    For each $j \in \{0, 1, 2\}$, define a~function $\CountDict_j: V(H)^j \times \Types^{p,\Sigma}([j]) \to \N$.
    For any tuple $\tup{a} \in V(H)^j$ sorted by $\leq$ and with no two equal elements, and a~type $\alpha \in \Types^{p,\Sigma}([j])$, we define $\CountDict_j(\tup{a}, \alpha)$ as the number of boundaried structures $\Xf \in \Xx$ for which $\bnd \Xf = \{u \mid u \in \tup{a}\}$ and $\iota_{\bnd \Xf}\left(\tp^p(\Xf) \right) = \alpha$.

    Interestingly, it will turn out later in the proof that the rank-$p$ contraction of $\Xx$, whose maintenance is the main objective of Lemma~\ref{lem:dynamic_contraction}, can be uniquely inferred from the functions $\CountDict_j$.
    Therefore, we propose a~data structure that maintains $\Xx$ dynamically and which allows one to examine the properties of $\Xx$ through $\CountDict_0$, $\CountDict_1$, and $\CountDict_2$ only.
  \begin{lemma}
    \label{lem:dynamic_ensemble}
    There exists a~data structure $\C_0$ which, given an~initially empty dynamic augmented structure $(\Af, H)$, updated by adding or removing vertices, edges, or tuples to or from $(\Af, H)$, maintains the functions $\CountDict_0$, $\CountDict_1$, and $\CountDict_2$ deduced from $\Xx$---the ensemble constructed from $\Af$ and $H$.

    Each update to $\C_0$ (addition or removal of a~vertex, edge, or tuple to or from $(\Af, H)$) is processed by $\C_0$ in $\Oh[p, \Sigma]{\log n}$ time and causes $\Oh[p, \Sigma]{1}$ changes to values of $\CountDict_j$.
  \begin{proof}
  We start with an~extra definition.
  Call a~boundaried structure $\Xf \in \Xx$ \emph{detached} if $\Xf$ is a~fern element of $\Xx$ represented by a~tree detached from $\quo{H}{\Fc}$.
  Otherwise, call $\Xf$ attached.
  In particular, singleton elements and the flag element of $\Xx$ are deemed attached.  
  
  We begin with the description of auxiliary data structures.
    We keep:
    \begin{itemize}
      \item An~instance of $\F$, defined in Lemma~\ref{lem:dynamic_ferns}, maintaining the fern decomposition $\Fc$ of $H$ dynamically under the updates of $H$.
      $\F$ implements a~$\mu$-augmented top trees data structure $\Delta_{\Tc}$ for a~forest $\Tc$ guarding a~$\Sigma$-structure $\Bf$.
      \item For every $i \in \{0, 1, 2\}$ and for every relation $R \in \Sigma^{(i)}$, a~dynamic set $R^{\Af}$ storing the interpretation of $R$ in $\Af$.
      \item A~dynamic dictionary $\ColoredEdge \colon V(H)^2 \rightharpoonup V(\Tc)^2$, mapping each pair $(u, v)$, $u, v \in V(H)$, $u \neq v$, such that $uv \in E(H)$, to an~arbitrary pair $(x, y)$ with $x, y \in V(\Tc)$ such that $xy$ is an~edge of $\Tc$ and $\name(x) = u$, $\name(y) = v$.
      \item A~dynamic dictionary $\ColoredVertex \colon V(H) \to V(\Tc)$, mapping every vertex $v \in V(H)$ to an~arbitrary vertex $x \in V(\Tc)$ such that $\name(x) = v$.
      \item For $j \in \{0, 1, 2\}$, dynamic dictionaries $\CountDict^{\star}_j: V(H)^j \times \Types^{p,\Sigma}([j]) \to \N$ defined analogously as $\CountDict_j$, only that for $\tup{a} \in V(H)^j$ and $\alpha \in \Types^{p,\Sigma}([j])$, $\CountDict^{\star}_j(\tup{a}, \alpha)$ is defined as the number of \emph{attached} boundaried structures $\Xf \in \Xx$ for which $\bnd \Xf = \{u \mid u \in \tup{a}\}$ and $\iota_{\bnd \Xf}(\tp^p(\Xf)) = \alpha$.
      A~key $(\tup{a}, \alpha)$ is stored in $\CountDict_j$ if and only if $\CountDict_j(\tup{a}, \alpha) \geq 1$; this way, the dictionaries $\CountDict_j$ contain at most $|\Af|$ elements in total.
    \end{itemize}
    
    All these data structures can be trivially initialized on an~empty augmented structure $(\Af, H)$ and the ensemble $\Xx$ consisting of exactly one element---an~empty fern element.
  
  We remark that every structure $\Xf \in \Xx$ contributes $1$ to exactly one value $\CountDict^{\star}_j(\cdot, \cdot)$, as long as $\Xf$ is attached; otherwise, $\Xf$ contributes to no values of $\CountDict^{\star}$.
    This distinction between attached and detached structures will be vital in our data structure: since no updates can be performed by $\F$ on any boundaried structure unless it is detached (Lemma~\ref{lem:dynamic_ferns}, invariant~\ref{inv:modify_detached_tree}), we guarantee that the types of boundaried structures contributing to $\CountDict^{\star}$ cannot change.
    When $\F$ updates a~component of $\Tc$ corresponding to $\Xf$, it needs to detach the component first by calling $\detach$; the call will be intercepted by us and used to exclude $\Xf$ from contributing to $\CountDict^{\star}$.
    Then, when $\F$ finishes updating tree components, it will reattach all unattached components (Lemma~\ref{lem:dynamic_ferns}, invariant~\ref{inv:all_attached}), causing us to include the resulting boundaried structures in $\CountDict^{\star}$.
    Thus, after each query, we are guaranteed to have $\CountDict_j = \CountDict^{\star}_j$.

    We shall now describe how $\Xx$ is tracked using the auxiliary data structures.
    Firstly, for each $S \in H[\Fc]$, the fern element $\Af_S \in \Xx$ is represented by a~single connected component of the forest $\Tc$ underlying $\Delta_{\Tc}$.

    Next, for each non-essential vertex $v$ of $H$, we store one copy of $v$ in $\Tc$ in $\ColoredVertex(v)$.
    Then, $x \coloneqq \ColoredVertex(v)$ inherits in $\Bf$ the interpretations on $v$ of all unary and binary of relations in $\Af$.
    Formally, for every $R \in \Sigma^{(1)}$, we have $x \in R^{\Bf}$ if and only if $v \in R^{\Af}$; and similarly, for every $R \in \Sigma^{(2)}$, we have $(x, x) \in R^{\Bf}$ if and only if $(v, v) \in R^{\Af}$.
    We remark that in most cases, non-essential vertices $v$ have exactly one copy in $\Tc$.
    However, if the cycle of some unicyclic fern with empty boundary is split along $v$ in $\Tc$, then $v$ has exactly two copies in $\Tc$.
    In this case, only one copy inherits the interpretations of relations from $v$.
    
    Similarly, for every pair $(u, v)$ of distinct vertices of $H$ such that the edge $uv$ is in $H$, we store the endpoints $(x, y) \coloneqq \ColoredEdge((u, v))$ of some copy of this edge in $\Tc$.
    As previously, for every $R \in \Sigma^{(2)}$, we set $(x, y) \in R^{\Bf}$ if and only if $(u, v) \in R^{\Af}$; and the remaining copies of the edge in $\Tc$ do not retain any binary relations from $\Af$.
    Thanks to this choice, an~addition or removal of $(u, v)$ from the interpretation of some binary predicate $R$ in $\Af$ requires the update of the interpretation of $R$ in $\Bf$ only on one pair of vertices.
    
    Finally, singleton elements and the flag element of $\Xx$ are stored in our data structure implicitly.
    This is possible since all these elements can be uniquely reconstructed from the dynamic sets $R^{\Af}$.
    
    In this setting, fern elements of $\Xx$ can be inferred from $\Bf$ and $\name$, and singleton elements and the flag element can be deduced from the dictionaries $R^{\Af}$ for $R \in \Sigma^{(0)} \cup \Sigma^{(1)}$.
    Hence, if a~change of the interpretation of some fern element $\Af_S$ of $\Xx$ is required, we perform it by modifying the  the component $\Bf_S$ of $\Bf$ representing $\Af_S$.
    This in turn is done by calling $\AddRel$ or $\DelRel$ on $\Delta_{\Tc}$ with appropriate arguments.
    
    \medskip
    
    We now show how queries to our data structure are processed.
    First, we will prove a~helpful observation:
    \begin{claim}
      \label{cl:addrel_delrel_update}
      Suppose that all top trees of $\Delta_T$ are attached.
      When $\Xx$ is updated by an~$\AddRel$ or $\DelRel$ call to $\Delta_{\Tc}$, the dictionaries $\CountDict^\star_j$ can be updated under this modification in $\Oh[p, \Sigma]{\log |H|}$ worst-case time, requiring at most $2$ updates to all $\CountDict^\star_j$ in total.
      \begin{proof}
        Without loss of generality, assume that $\AddRel(R, \tup{b})$ is to be called for $R \in \Sigma^{|\tup{b}|}$.
        We can also assume that $\tup{b}$ is nonempty as in our case, $\Bf$ does not store any flags.

        Let $x$ be any element of $\tup{b}$.
        We locate, in logarithmic time, the top tree $\Delta_T \in \Delta_{\Tc}$ containing $x$ as a~vertex.
        Then, $\Delta_T$ represents a~tree component $T$ of $\Tc$; let $\Bf_T$ be the substructure of $\Bf$ induced by $T$, and assume that $T$ corresponds to a~fern element $\Af_T \in \Xx$.

        By Corollary~\ref{cor:top_tree_to_fern_type}, we can get the current type $\alpha_{\mathrm{old}}$ of $\Af_T$ in $\Oh[p, \Sigma]{\log |H|}$ time by calling $\mathsf{topTreeToFernType}(\Delta_T)$.
        Since the type of $\Af_T$ might change under the prescribed modification, we temporarily exclude $\Af_T$ from participating in $\CountDict^\star$ by subtracting $1$ from $\CountDict^\star_{|\bnd \Af_T|}\left( \iota_{\bnd \Af_T}(\alpha_{\mathrm{old}}) \right)$; note that $\bnd \Af_T$ can be uniquely recovered from $\tup{a}$ as $\bnd \Af_T = \{u \colon u \in \tup{a}\}$.
        
        Now, we run the prescribed call to $\Delta_{\Tc}$, causing a~change to $\Bf_T$ (and, hence, to $\Af_T$).
        Afterwards, we retrieve the new type $\alpha_{\mathrm{new}}$ of $\Af_T$ by calling $\mathsf{topTreeToFernType}(\Delta_T)$ again.
        Then, we include the type of $\Af_T$ back to $\CountDict^\star$ by increasing $\CountDict^\star_{|\bnd \Af_T|}\left( \iota_{\bnd \Af_T}(\alpha_{\mathrm{new}}) \right)$ by $1$.

        This concludes the update.
        Naturally, the operation took logarithmic time, and $\CountDict^\star$ was updated twice.
      \cqed\end{proof}
    \end{claim}    
    
    Now, assume that a~query to $\C_0$ requests adding or removing a~tuple $\tup{b}$ ($|\tup{b}| \leq 2$) from the interpretation of some predicate $R \in \Sigma^{|\tup{b}|}$ in $\Af$.
    The query is not relayed to $\F$, so all trees maintained by $\Delta_{\Tc}$ stay attached throughout the query.
    We now consider several cases, depending on the contents of $\tup{b}$:
    \begin{itemize}
      \item If $\tup{b} = \varepsilon$, then the changed flag element $\Xf_{\emptyset}$ is maintained implicitly in $R^{\Af}$.
      Note, however, that the change of $\Xf_{\emptyset}$ will cause a~modification to $\CountDict^\star_0$ since the rank-$p$ type of $\Xf_{\emptyset}$ might change.
      Thus, in order to process the query, we:
      \begin{enumerate}
        \item Construct $\Xf_{\emptyset}$ from the dynamic sets $R^{\Af}$ for each $R \in \Sigma^{(0)}$.
        \item Compute the rank-$p$ type $\alpha_{\mathrm{old}} \coloneqq \tp^p(\Xf_{\emptyset})$.
        \item Decrease $\CountDict^{\star}_0(\varepsilon, \alpha_{\mathrm{old}})$ by $1$.
        \item Perform an~update to $R^{\Af}$ prescribed by the query.
        \item Construct $\Xf'_{\emptyset}$ from the modified family of dynamic sets $R^{\Af}$.
        \item For $\alpha_{\mathrm{new}} \coloneqq \tp^p(\Xf'_{\emptyset})$, increase $\CountDict^{\star}_0(\varepsilon, \alpha_{\mathrm{new}})$ by $1$.
      \end{enumerate}
      
      \item If $\tup{b} = v$ or $\tup{b} = (v, v)$ for some $v \in V\left(\quo{H}{\Fc}\right)$, then the state of $\tup{b}$ is stored in the implicitly stored singleton element, so this case can be processed analogously to the previous case; only that $\CountDict^{\star}_1(v, \cdot)$ is affected by the query instead of $\CountDict^{\star}_0(\varepsilon, \cdot)$.
      
      \item If $\tup{b} = v$ or $\tup{b} = (v, v)$ for some $v \in V(H) \setminus V\left(\quo{H}{\Fc}\right)$, then let $x \coloneqq \ColoredVertex(v)$.
      Since the interpretations of unary and binary relations on $v$ in $\Af$ are stored in vertex $x$ of $\Bf$, we need to call $\AddRel$ or $\DelRel$ on $\Delta_{\Tc}$ to reflect that change.
      This updates the prescribed relation $R$ on $x$ or $(x, x)$ in $\Bf$.
      By Claim~\ref{cl:addrel_delrel_update}, this can be done in logarithmic time, with at most $2$ updates to all $\CountDict^\star_j$.
      \item If $\tup{b} = (u, v)$ with $u \neq v$, then let $(x, y) \coloneqq \ColoredEdge(u, v)$.
      The query is resolved analogously to the previous case, only that we call $\AddRel$ or $\DelRel$ on the tuple $(x, y)$ instead.
    \end{itemize}
    It is apparent that these updates can be performed in each case in worst-case logarithmic time, causing at most $2$ updates to $\CountDict^\star$.

    Now, consider a~query that adds or removes a~single vertex or edge.
    This query is immediately forwarded by us to $\F$, which updates $\Delta_{\Tc}$, $\name$, $\name^{-1}$, and $\edgerep$ in logarithmic time, performing a~constant number of updates to those dynamic structures, and updating the graph $\quo{H}{\Fc}$ by calling $\addquo$, $\delquo$, $\attach$, and $\detach$ a~constant number of times.
    We create two initially empty auxiliary lists $\RemovedList$ and $\RecolorList$.
    Then:
    
    \begin{itemize}
      \item Whenever any vertex $x$ is removed from $\Tc$ by $\F$, we verify if $x = \ColoredVertex(\name(x))$; that is, if $x$ is the colored representative of $\name(x)$ in $\Tc$.
      If not, then we are done.
      Otherwise, under this update, the interpretations of unary and binary predicates on $\name(x)$ are purged from $\Bf$ by $\F$, which requires that some other representative of $\name(x)$ inherit these interpretations instead.
      This fix is deferred to the end of the query; for now, we append $\name(x)$ at the end of $\RemovedList$.
      
      \item Similarly, whenever any edge $(x, y)$ is removed from $\Tc$ by $\F$, if the condition $(x, y) = \ColoredEdge((\name(x), \name(y)))$ holds, then we append $(\name(x), \name(y))$ at the end of $\RemovedList$.
      
      \item Whenever $\addquo(u)$ is called by $\F$, the information on the interpretations of unary and binary relations on $u$ need to be removed by us from $\Bf$.
      This, too, is delayed until the end of the query; for now, we append pair $(u, -)$ at the end of $\RecolorList$.
      
      We remark that at the moment of call to $\addquo(u)$, all trees containing a~representative of $u$ as a~vertex must have been detached by $\F$ (Lemma~\ref{lem:dynamic_ferns}, invariant~\ref{inv:modify_detached_vertex}).
      On the other hand, the types of components of $\Bf$ not containing a~representative remain unchanged; hence, no update in $\CountDict^\star_j$ is needed for the attached fern elements.
      
      However, $\addquo(u)$ spawns a~new, implicitly stored, singleton element $\Af_u \in \Xx$.
      Hence, it needs to be included in $\CountDict^\star_1$.
      Thus, observe that $\Af_u$ can be uniquely constructed by querying $\Oh[\Sigma]{1}$ dynamic sets $R^{\Af}$.
      Then, let $\alpha \coloneqq \tp^p(\Af_u)$ be the rank-$p$ type of $\Af_u$, and increase $\CountDict^\star_1(u, \iota_{\{u\}}(\alpha))$ by $1$.      
      
      \item Whenever $\delquo(u)$ is called by $\F$, the interpretations of unary and binary relations on $v$ must be restored to $\Bf$.
      To this end, we append the pair $(u, +)$ at the end of $\RecolorList$.
      Moreover, $\delquo(u)$ causes the singleton element $\Af_u$ to disappear from $\Xx$.
      This is accounted for in $\CountDict^\star_1$ analogously to $\addquo$.
      Again, by invariant~\ref{inv:modify_detached_vertex} of Lemma~\ref{lem:dynamic_ferns}, no other types of attached ferns change in $\Xx$.
      
      \item Whenever $\detach(\tup{a}, \Delta_T)$ is called by $\F$, we need to exclude the type of $\Delta_T$ from $\CountDict^\star$.
      By calling $\mathsf{topTreeToFernType}(\Delta_T)$ (Corollary~\ref{cor:top_tree_to_fern_type}), we get the type $\alpha$ of the boundaried structure $\Xf \in \Xx$ corresponding to $\Delta_T$, where $\alpha \in \Types^{p, \Sigma}(\bnd \Xf)$.
      Thus, in order to process the detachment of $\Delta_T$, we subtract $1$ from $\CountDict^\star_{|\bnd \Xf|}\left( \iota_{\bnd \Xf}(\alpha) \right)$.
      \item Whenever $\attach(\tup{a}, \Delta_T)$ is called by $\F$, we proceed analogously to $\detach$, only that instead of subtracting $1$ from some value in $\CountDict^\star$, we increase this value by $1$.
    \end{itemize}
    
    After the call to $\F$ concludes, the top trees data structure $\Delta_{\Tc}$ is updated, however $\F$ may have removed from $\Tc$ a~constant number of vertices and edges retaining the interpretations of relations in $\Af$, thus causing the need to fix some values of $\ColoredVertex$ and $\ColoredEdge$.
    Thus, for every vertex $v \in \RemovedList$, we find a~new representative of $v$ by querying any element $x$ of $\name^{-1}(v)$.
    If such an~element exists, we set $\ColoredVertex(v) \gets x$, and call $\AddRel(R, x)$ (resp. $\AddRel(R, (x, x))$) for each relation $R \in \Sigma^{(1)}$ (resp. $R \in \Sigma^{(2)}$) such that $v \in R^{\Af}$ (resp. $(v, v) \in R^{\Af}$).
    Note that at each call to $\AddRel$, Claim~\ref{cl:addrel_delrel_update} must be invoked so that the dictionaries $\CountDict^\star_j$ remain correct.
    Otherwise, if $\name^{-1}(v)$ is empty, we remove $v$ as a~key from $\ColoredVertex$.
    
    Then, $\ColoredEdge$ is repaired analogously: for each $(u, v) \in \RemovedList$, we find a~new representative of $(u, v)$ in $\edgerep$ and restore the binary relations as above.

    Finally, we process $\RecolorList$.
    For each pair of the form $(u, -)$, we take $x \coloneqq \ColoredVertex(u)$.
    Then, we call $\DelRel(R, x)$ for each predicate $R \in \Sigma^{(1)}$ such that $u \in R^{\Af}$; we then repeat the same procedure for binary predicates $R$ and pairs $(x, x)$.
    The pairs of the form $(u, +)$ are processed analogously, only that $\AddRel$ is called instead of $\DelRel$.
    Again, at each call to $\AddRel$ or $\DelRel$, Claim~\ref{cl:addrel_delrel_update} is invoked in order to preserve the correctness of $\CountDict^\star_j$.

    This concludes the description of the dynamic maintenance of $\Xx$ and $\CountDict^\star_j$.
    It can be easily verified that each operation to our data structure is performed in worst-case $\Oh[p, \Sigma]{\log n}$ time, and causes $\Oh[p, \Sigma]{1}$ updates to $\CountDict^\star_j$.
  \end{proof}
  \end{lemma}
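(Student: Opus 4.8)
The plan is to assemble $\C_0$ from three layers sitting on top of one another. First I would run an instance of the data structure $\F$ of Lemma~\ref{lem:dynamic_ferns}, which maintains in $\Oh[p,\Sigma]{\log n}$ time per update a forest of top trees $\Delta_\Tc$ representing the fern decomposition $H[\Fc]$, together with $\quo{H}{\Fc}$ and the dictionaries $\name,\name^{-1},\attachment,\edgerep$. On top of $\Tc$ I would attach the auxiliary $\Sigma$-structure $\Bf=\Bf(\Xx,\Tc)$ constructed just above this lemma, and $\mu$-augment $\Delta_\Tc$ with rank-$p$ types of stripped substructures as in Lemma~\ref{lem:storing_stripped_types}, so that for every fern $S\in H[\Fc]$ the rank-$p$ type of the corresponding fern element $\Af_S\in\Xx$ can be read back from the root cluster via $\mathsf{topTreeToFernType}$ (Corollary~\ref{cor:top_tree_to_fern_type}) in $\Oh[p,\Sigma]{\log n}$ time. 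Finally I would keep some plain dynamic dictionaries: a set $R^{\Af}$ for each $R\in\Sigma$ holding its interpretation; a dictionary $\ColoredVertex$ picking one copy in $\Tc$ of each non-essential vertex of $H$, and $\ColoredEdge$ picking one copy in $\Tc$ of each non-loop edge (these are the unique places in $\Bf$ where the unary/binary relations of those features are recorded, which is what keeps single relational updates cheap); and the output dictionaries $\CountDict^\star_0,\CountDict^\star_1,\CountDict^\star_2$, which count exactly as the required $\CountDict_j$ but only over ensemble elements that are currently \emph{attached} (singleton and flag elements are always deemed attached and are maintained implicitly from the sets $R^{\Af}$).

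The correctness mechanism I would exploit is that an attached tree has a frozen fern element: by invariant~\ref{inv:modify_detached_tree} of Lemma~\ref{lem:dynamic_ferns} $\F$ never touches an attached tree, and by invariant~\ref{inv:modify_detached_vertex} it calls $\addquo/\delquo$ on a vertex only when every tree containing a copy of it is detached. Hence the type of an ensemble element contributing to $\CountDict^\star$ can change only at moments I control, and I would adjust $\CountDict^\star$ precisely there: when $\F$ issues $\detach(\cdot,\Delta_T)$ I read the type $\alpha$ of the fern element of $\Delta_T$ (Corollary~\ref{cor:top_tree_to_fern_type}) and decrement $\CountDict^\star_{|\bnd\Xf|}(\cdot,\iota_{\bnd\Xf}(\alpha))$; when $\F$ issues $\attach(\cdot,\Delta_T)$ I symmetrically increment; when $\F$ issues $\addquo(u)$ or $\delquo(u)$ I adjust $\CountDict^\star_1$ for the freshly created/destroyed singleton element $\Af_u$, whose type I compute from $\Oh[\Sigma]{1}$ lookups in the sets $R^{\Af}$. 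Since by invariant~\ref{inv:all_attached} $\F$ leaves all trees attached at the end of a batch, afterwards $\CountDict^\star_j=\CountDict_j$, so exposing $\CountDict^\star_j$ fulfils the lemma.

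Concretely I would case on the update to $(\Af,H)$. An addition/removal of a tuple $\tup b$ to a relation $R$ with $\tup b=\varepsilon$, or with $\tup b$ supported on an essential vertex, touches only the implicitly stored flag or singleton element; I resolve it by recomputing that element's rank-$p$ type from the sets $R^{\Af}$ before and after, moving one unit in $\CountDict^\star_0$ or $\CountDict^\star_1$. A tuple supported on a non-essential vertex $v$, or on a non-loop edge $(u,v)$, is pushed into $\Bf$ by calling $\AddRel/\DelRel$ on $\Delta_\Tc$ at $\ColoredVertex(v)$, resp.\ $\ColoredEdge(u,v)$; here I would first establish the helper observation that such a single relational update changes the type of exactly one fern element, namely the one whose top tree contains that coordinate (located in $\Oh[p,\Sigma]{\log n}$ time), so $\CountDict^\star$ moves by at most two (read old type via $\mathsf{topTreeToFernType}$, do the $\AddRel/\DelRel$, read new type). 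An addition/removal of a vertex or edge of $H$ I forward to $\F$, which performs $\Oh{1}$ internal operations (vertex/edge removals in $\Tc$, $\addquo/\delquo$, $\attach/\detach$); I intercept all of them, updating $\CountDict^\star$ at the $\attach/\detach/\addquo/\delquo$ events as above, and logging into lists $\RemovedList,\RecolorList$ those non-essential features whose $\Bf$-interpretations were carried by a copy $\F$ deleted, and those vertices whose essentiality flipped. After $\F$ returns I replay the lists: for each orphaned feature I pick a new representative from $\name^{-1}$/$\edgerep$ (or drop the key) and reinstall its relations by $\AddRel$ calls, each invoking the helper observation, and for each vertex that turned essential/non-essential I purge or restore its relations in $\Bf$ similarly; all of this is $\Oh[p,\Sigma]{1}$ calls of cost $\Oh[p,\Sigma]{\log n}$ each.

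The main obstacle is the bookkeeping around the structural update: one must check that every intercepted $\detach$ is matched by a later $\attach$ of the same tree so no unit of $\CountDict^\star$ is lost or duplicated; that a binary relation carried by an edge lives on exactly one copy of it in $\Tc$, so that representative migration neither double-counts nor drops binary tuples; and that the type of a fern element fed to $\CountDict^\star$ is read at the right instant — after detaching (so the subtracted value is the one that was actually present) and after reattaching (so the added value is the final one). The enabling facts are exactly invariants~\ref{inv:all_attached}, \ref{inv:modify_detached_tree} and \ref{inv:modify_detached_vertex} of Lemma~\ref{lem:dynamic_ferns}: granted them, the ``frozen while attached'' principle makes the accounting purely local, the $\Oh{1}$ bound on $\F$'s internal operations keeps the number of $\CountDict^\star$-changes $\Oh[p,\Sigma]{1}$, and Corollary~\ref{cor:top_tree_to_fern_type} keeps every individual step within $\Oh[p,\Sigma]{\log n}$.
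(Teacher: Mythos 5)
Your proposal is correct and follows essentially the same route as the paper's proof: the same layering of $\F$, the $\mu$-augmented top trees over $\Bf$, the representative dictionaries $\ColoredVertex$/$\ColoredEdge$, and the attached-only counters $\CountDict^\star_j$, with the ``frozen while attached'' accounting driven by invariants~\ref{inv:all_attached}--\ref{inv:modify_detached_vertex} and types read via Corollary~\ref{cor:top_tree_to_fern_type}, plus the same $\RemovedList$/$\RecolorList$ repair phase after forwarding structural updates to $\F$. No substantive differences to report.
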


    \paragraph*{Dynamic maintenance of $\Contract^p(\Xx)$.}
    As a~final part of the proof of Lemma~\ref{lem:dynamic_contraction}, we show how to extend the data structure $\C_0$ proposed in Lemma~\ref{lem:dynamic_ensemble} to maintain the rank-$p$ contraction of $\Xx$ dynamically in logarithmic time.
    At each query, our data structure will only change $\Oh[p, \Sigma]{1}$ vertices, edges, and relations in $\Contract^p(\Xx)$.
    This construction will yield the proof of Lemma~\ref{lem:dynamic_contraction}.
    
    \begin{proof}[Proof of Lemma~\ref{lem:dynamic_contraction}]
      Let $\C_0$ be the dynamic data structure shown in Lemma~\ref{lem:dynamic_ensemble} which maintains the functions $\CountDict_j$ dynamically.
    We now construct the prescribed data structure~$\C$.
      Each query on $\C$ will consist of three steps:
      \begin{itemize}
        \item Relay the query to $\C_0$.
        \item Intercept each modification of $\quo{H}{\Fc}$ of the form $\addquo(\cdot)$ or $\delquo(\cdot)$ made by $\F$; under each such modification, change the universe of $\Contract^p(\Xx)$.
        \item Intercept each change made to any values of the functions $\CountDict_j$ for $j \in \{0, 1, 2\}$; each such change will correspond to an~update to some relation in $\Contract^p(\Xx)$.
      \end{itemize}
      
      Thus, we begin handling each query by immediately forwarding the query to $\C_0$.

      Recall that vertices of $\Contract^p(\Xx)$ and vertices of $\quo{H}{\Fc}$ stay in a~natural bijection.
      Hence, whenever $\quo{H}{\Fc}$ is modified by $\addquo(u)$, we add a~new element $u$ to $\Contract^p(\Xx)$; similarly, under $\delquo(u)$, we remove $u$ from $\Contract^p(\Xx)$.

      It only remains to show how to update the relations efficiently under the modifications of $\CountDict_j$.
      To this end, we exploit the idempotence of types.
      For $j \in \{0, 1, 2\}$, let $m_j$ be a~constant (depending on $p$ and $\Sigma$) such that for all $a, b \in \N$ with $a, b \geq m_j$ and $a \equiv b \mod{m_j}$ and every type $\alpha \in \Types^p([j])$, we have
      \[ \underbrace{\alpha \oplus^{p,\Sigma}_{[j],[j]} \alpha  \oplus^{p,\Sigma}_{[j],[j]} \cdots \oplus^{p,\Sigma}_{[j],[j]} \alpha}_{a\textrm{ times}}=\underbrace{\alpha \oplus^{p,\Sigma}_{[j],[j]} \alpha  \oplus^{p,\Sigma}_{[j],[j]} \cdots \oplus^{p,\Sigma}_{[j],[j]} \alpha}_{b\textrm{ times}}. \]
      The existence and computability of $m_0$, $m_1$, $m_2$ is asserted by Lemma~\ref{lem:idempotence_of_types}.
      For each $j \in \{0, 1, 2\}$, let also $f_j : \N \to \N$ be a~function defined as follows:
      \[
        f_j(a) = \begin{cases}
          a & \text{if } a < m_j, \\
          (a\ \mathrm{mod}\ m_j) + m_j & \text{if } a \geq m_j.
        \end{cases}
      \]
      Each $f_j$ is constructed so that its range is $\{0, 1, 2, \dots, 2m_j - 1\}$, and so that for every integer $a \in \N$ and every type $\alpha \in \Types^p([j])$, the $a$-fold join of $\alpha$ with itself is equal to the $f_j(a)$-fold join of $\alpha$ with itself.
      Naturally, each $f_j$ can be evaluated on any argument in $\Oh[p,\Sigma]{1}$ time.
      
      Let $D$ be the universe of $\Contract^p(\Xx)$.
      We have:
      \begin{claim}
        \label{cl:efficient_contraction_check}
        Let $j \in \{0, 1, 2\}$ and $\alpha \in \Types^{p, \Sigma}([j])$.
        The interpretation of $\alpha$ in $\Contract^p(\Xx)$ contains a~tuple $\tup{a} \in D^j$ if and only if:
        \begin{itemize}
          \item $\tup{a}$ is ordered by $\leq$ and its elements are pairwise different;
          \item there exists at least one $\beta \in \Types^{p, \Sigma}([j])$ such that $\CountDict_j(\tup{a}, \beta) > 0$; and
          \item the following is satisfied:
          \begin{equation}
          \label{eq:contract_type_by_count}
      \alpha =
      \bigoplus{}^{p,\Sigma}_{[j],[j]} \left\{
        \underbrace{\beta \oplus^{p,\Sigma}_{[j],[j]} \beta \oplus^{p,\Sigma}_{[j],[j]} \cdots \oplus^{p,\Sigma}_{[j],[j]} \beta}_{f_j(\CountDict_j(\bar{a}, \beta))\text{ times}}
        \ \mid \ \beta \in \Types^{p, \Sigma}([j]),\, \CountDict_j(\bar{a}, \beta) > 0
      \right\}.
          \end{equation}
        \end{itemize}

        \begin{proof}
          Recall from the definition of $\Contract^p(\Xx)$ that the interpretation of $\alpha$ contains $\tup{a}$ if and only if:
          \begin{itemize}
             \item $\tup a$ is ordered by $\leq$ and its elements are pairwise different;
             \item there exists at least one $\Gf\in \Xx$ such that $\bnd \Gf$ is equal to the set of entries of $\tup a$; and
             \item the rank-$p$ type of the join of all the $\Gf\in \Xx$ as above is equal to $\iota_{\tup a}^{-1}(\alpha)$.
          \end{itemize}
          The first conditions in the statement of the claim and the definition of the contraction are identical.
          Then, the second conditions in these are equivalent: each $\Gf \in \Xx$ such that $\bnd \Gf = \{u \colon u \in \tup{a}\}$ contributes exactly $1$ to $\CountDict_j(\bar{a}, \tp^p(\Gf))$.
          For the third condition, observe that (\ref{eq:contract_type_by_count}) is equivalent to
          \[
          \alpha =
      \bigoplus{}^{p,\Sigma}_{[j],[j]} \left\{
        \underbrace{\beta \oplus^{p,\Sigma}_{[j],[j]} \beta \oplus^{p,\Sigma}_{[j],[j]} \cdots \oplus^{p,\Sigma}_{[j],[j]} \beta}_{\CountDict_j(\bar{a}, \beta)\text{ times}}
        \ \mid \ \beta \in \Types^{p, \Sigma}([j]),\, \CountDict_j(\bar{a}, \beta) > 0
      \right\},
          \]
          which, by associativity and commutativity of $\oplus$, is equivalent to
          \[
          \alpha = \bigoplus{}^{p,\Sigma}_{[j],[j]} \left\{ \iota_{\tup{a}}\left(\tp^p(\Gf)\right)\ \mid\ \Gf \in \Xx,\, \bnd \Gf = \{u \mid u \in \tup{a}\} \right\}.
          \]
          Applying the commutativity of $\oplus$ with $\tp^p$ and $\iota_{\tup{a}}$, we get equivalently
          \[
          \alpha = \iota_{\tup{a}} \left( \tp^p \left( \bigoplus{} \{ \Gf\, \mid\, \Gf \in \Xx,\, \bnd \Gf = \{u \mid u \in \tup{a}\} \} \right) \right).
          \]
          The equivalence with the third condition of the definition of the contraction follows immediately.
        \cqed\end{proof}
      \end{claim}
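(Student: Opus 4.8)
The plan is to prove the claim by matching, one by one, the three conditions in its statement against the three conditions in the definition of $\Contract^p(\Xx)$, and showing that each pair is equivalent. The first condition---that $\bar a$ be $\leq$-ordered with pairwise distinct entries---is verbatim the same in both places, so there is nothing to do. For the second condition I would use the definition of $\CountDict_j$: for a fixed tuple $\bar a$, every $\Gf \in \Xx$ with $\bnd \Gf = \{u \mid u \in \bar a\}$ contributes exactly $1$ to $\CountDict_j(\bar a, \iota_{\bar a}(\tp^p(\Gf)))$ and to no other entry of $\CountDict_j$, so $\sum_{\beta} \CountDict_j(\bar a, \beta)$ equals the number of such $\Gf$. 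Hence ``there exists $\beta$ with $\CountDict_j(\bar a, \beta) > 0$'' is equivalent to ``there exists $\Gf \in \Xx$ whose boundary is the set of entries of $\bar a$'', which is the second condition of the definition.

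The only part that requires actual work is the third condition. First I would invoke Lemma~\ref{lem:idempotence_of_types} with $D = [j]$: by construction $f_j$ satisfies that for every $\alpha \in \Types^{p,\Sigma}([j])$ and every $a \in \N$ the $a$-fold $\oplus^{p,\Sigma}_{[j],[j]}$-power of $\alpha$ equals its $f_j(a)$-fold power, so replacing each exponent $f_j(\CountDict_j(\bar a, \beta))$ in~(\ref{eq:contract_type_by_count}) by $\CountDict_j(\bar a, \beta)$ does not change the value of the right-hand side. Using that the join on types is associative and commutative when both boundaries coincide (the remark following Lemma~\ref{lem:compositionality}), the resulting iterated join is well-defined, and regrouping it by type shows that it is exactly $\bigoplus^{p,\Sigma}_{[j],[j]}\{\iota_{\bar a}(\tp^p(\Gf)) \mid \Gf \in \Xx,\ \bnd \Gf = \{u \mid u \in \bar a\}\}$. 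Finally I would pull $\iota_{\bar a}$ out of the join---it is induced by an order-preserving relabelling, hence commutes with the syntactically defined type join---and then pull $\tp^p$ out, since it commutes with $\oplus$ on boundaried structures by Lemma~\ref{lem:compositionality}. This yields that~(\ref{eq:contract_type_by_count}) is equivalent to $\alpha = \iota_{\bar a}\bigl(\tp^p(\bigoplus\{\Gf \mid \Gf \in \Xx,\ \bnd \Gf = \{u \mid u \in \bar a\}\})\bigr)$, i.e.\ to $\tp^p(\bigoplus\{\Gf \mid \dots\}) = \iota_{\bar a}^{-1}(\alpha)$, which is precisely the third condition in the definition of $\Contract^p(\Xx)$. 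Combining the three equivalences proves the claim.

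I expect the main obstacle to be the bookkeeping with the canonization maps $\iota$: one must keep straight that $\iota_{\bar a}$ converts a type over the boundary $\{u \mid u \in \bar a\}$ into a type over $[j]$ while $\iota_{\bar a}^{-1}$ does the reverse, and one should state explicitly both why $\iota_{\bar a}$ commutes with $\oplus^{p,\Sigma}_{[j],[j]}$ (because $\iota$ merely relabels the underlying sentence sets from which the type-level join is defined) and why $\tp^p$ commutes with the join of structures (the compositionality of Lemma~\ref{lem:compositionality}). A minor point to check along the way is that the degenerate case of an empty iterated join never causes trouble: whenever no $\Gf \in \Xx$ has boundary equal to the entries of $\bar a$, the second condition already fails, so $\bar a$ lies in the interpretation of no $\alpha$.
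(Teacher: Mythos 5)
Your proposal is correct and follows essentially the same route as the paper's own proof: matching the three conditions of the claim against the definition of $\Contract^p(\Xx)$, using the definition of $\CountDict_j$ for the second condition, and for the third condition removing $f_j$ via idempotence, regrouping by associativity and commutativity, and commuting $\iota_{\bar a}$ and $\tp^p$ with the join (Lemma~\ref{lem:compositionality}). Your extra care with the canonization maps and the degenerate empty-join case is consistent with, and slightly more explicit than, the paper's argument.
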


      From Claim~\ref{cl:efficient_contraction_check}, we conclude that:
      \begin{claim}
        \label{cl:efficient_contraction_get}
        Given $j \in \{0, 1, 2\}$ and a~tuple $\tup{a} \in D^j$, assume that $\CountDict_j(\tup{a}, \beta) > 0$ for some $\beta \in \Types^{p, \Sigma}([j])$.
        Then, there exists exactly one predicate $\alpha \in \Types^{p, \Sigma}([j])$ of $\Gamma^p$ that contains $\tup{a}$ in its interpretation.
        Moreover, it can be found in $\Oh[p, \Sigma]{\log |H|}$ worst-case time.
        \begin{proof}
          The uniqueness of $\alpha$ follows immediately from (\ref{eq:contract_type_by_count}).
          In order to compute $\alpha$ efficiently, we first query the value of $\CountDict_j(\tup{a}, \beta)$ for each $\beta \in \Types^{p, \Sigma}([j])$.
          This requires $\Oh[p, \Sigma]{\log |H|}$ time in total.

          Then, we compute $\alpha$ from (\ref{eq:contract_type_by_count}).
          The time complexity of this operation is dominated by the evaluations of $\oplus^{p,\Sigma}_{[j],[j]}$.
          Since $f_j$ only returns values smaller than $2m_j$, the join is evaluated no more than $2m_j \cdot |\Types^{p, \Sigma}([j])| = \Oh[p, \Sigma]{1}$ times; thus, (\ref{eq:contract_type_by_count}) can be computed in worst-case $\Oh[p, \Sigma]{1}$ time.
        \cqed\end{proof}
      \end{claim}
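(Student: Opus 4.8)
The plan is to deduce both assertions of the claim from Claim~\ref{cl:efficient_contraction_check}, which is exactly the characterization we need. For existence and uniqueness, recall that by that claim the interpretation (in $\Contract^p(\Xx)$) of a predicate $\alpha\in\Types^{p,\Sigma}([j])$ of $\Gamma^p$ contains a tuple $\tup{a}\in D^j$ precisely when three conditions hold: (i)~$\tup{a}$ is ordered by $\leq$ with pairwise distinct entries; (ii)~$\CountDict_j(\tup{a},\beta)>0$ for some $\beta\in\Types^{p,\Sigma}([j])$; and (iii)~Equation~(\ref{eq:contract_type_by_count}) holds. The hypothesis of the claim is precisely~(ii), and, since $\CountDict_j$ is defined only on tuples that are ordered by $\leq$ with distinct entries, the very meaningfulness of the hypothesis presupposes~(i). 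Condition~(iii) expresses $\alpha$ as a join of types determined entirely by $\tup{a}$, by the values $\CountDict_j(\tup{a},\cdot)$, and by the fixed constants $m_j$ and $f_j$; hence its right-hand side is a single, well-defined element of $\Types^{p,\Sigma}([j])$ --- well-defined and unambiguous because $\oplus^{p,\Sigma}_{[j],[j]}$ is associative and commutative, and nonempty because $\CountDict_j(\tup{a},\beta)>0$ for at least one $\beta$ and $f_j$ sends every positive integer to a positive integer. Consequently there is exactly one $\alpha$ satisfying~(iii), and this $\alpha$ trivially also satisfies (i) and~(ii); this gives both existence and uniqueness.

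For the running time I would compute this unique $\alpha$ in two stages. In the first stage, for each of the $|\Types^{p,\Sigma}([j])|=\Oh[p,\Sigma]{1}$ candidates $\beta$, I look up $\CountDict_j(\tup{a},\beta)$ in the dictionary $\CountDict_j$ maintained by the data structure $\C_0$ of Lemma~\ref{lem:dynamic_ensemble}; since this dictionary stores at most $|\Af|=\Oh[\Sigma]{|H|}$ entries, each lookup costs $\Oh{\log|H|}$ time, so this stage costs $\Oh[p,\Sigma]{\log|H|}$ in total. In the second stage I evaluate the right-hand side of~(\ref{eq:contract_type_by_count}) directly: for every $\beta$ with $\CountDict_j(\tup{a},\beta)>0$ I form the $f_j(\CountDict_j(\tup{a},\beta))$-fold self-join of $\beta$ under $\oplus^{p,\Sigma}_{[j],[j]}$, and then I join the resulting $\Oh[p,\Sigma]{1}$ types together. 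Every type in $\Types^{p,\Sigma}([j])$ lies in a set whose size is bounded in terms of $p$ and $\Sigma$, so a single evaluation of $\oplus^{p,\Sigma}_{[j],[j]}$ (and of $f_j$) is a constant-size table lookup, computable in $\Oh[p,\Sigma]{1}$ time.

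The one step that requires real care --- and the only place a naive argument would break down --- is bounding the number of joins performed in the second stage. The count $\CountDict_j(\tup{a},\beta)$ tallies ensemble elements with boundary $\{u\colon u\in\tup{a}\}$ and type $\beta$, so it can be linear in $|H|$; iterating $\oplus^{p,\Sigma}_{[j],[j]}$ that many times would be far too slow. This is exactly why~(\ref{eq:contract_type_by_count}) is phrased with $f_j(\CountDict_j(\tup{a},\beta))$ in the exponent rather than $\CountDict_j(\tup{a},\beta)$: by the idempotence of types (Lemma~\ref{lem:idempotence_of_types}) the $a$-fold self-join of any type in $\Types^{p,\Sigma}([j])$ equals its $f_j(a)$-fold self-join, and $f_j$ takes values only in $\{0,1,\ldots,2m_j-1\}$ with $m_j=\Oh[p,\Sigma]{1}$. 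Hence each self-join uses $\Oh[p,\Sigma]{1}$ applications of $\oplus^{p,\Sigma}_{[j],[j]}$, there are $\Oh[p,\Sigma]{1}$ of them to combine, and the second stage runs in $\Oh[p,\Sigma]{1}$ time. Summing the two stages yields the claimed $\Oh[p,\Sigma]{\log|H|}$ bound, which completes the proof.
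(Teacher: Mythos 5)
Your proposal is correct and follows essentially the same route as the paper: uniqueness and existence are read off from the characterization in Claim~\ref{cl:efficient_contraction_check} via the well-definedness of the right-hand side of~(\ref{eq:contract_type_by_count}), and the running time is split into $\Oh[p,\Sigma]{1}$ dictionary lookups of $\CountDict_j$ costing $\Oh{\log|H|}$ each, plus $\Oh[p,\Sigma]{1}$ constant-time evaluations of $\oplus^{p,\Sigma}_{[j],[j]}$ thanks to the range bound $f_j < 2m_j$. Your added explanation of why idempotence makes the truncation by $f_j$ legitimate is exactly the point the paper establishes just before the claim, so nothing is missing.
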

      
      Now, we show how to maintain $\Contract^p(\Xx)$ dynamically under the changes of $\CountDict_j$.
      Assume that for some $j \in \{0, 1, 2\}$, $\tup{a} \in D^j$, and $\beta \in \Types^{p, \Sigma}([j])$, the value of $\CountDict_j(\tup{a}, \beta)$ changed.
      We start processing this change by removing $\tup{a}$ from the interpretation of every predicate in $\Contract^p(\Xx)$.
      Then:
      \begin{itemize}
        \item If $\CountDict_j(\tup{a}, \gamma) = 0$ for each $\gamma \in \Types^{p, \Sigma}([j])$, then after the change, $\tup{a}$ will belong to the interpretations of no predicates in the contraction, and we are finished.
        Note that the condition can be verified in worst-case $\Oh[p, \Sigma]{\log |H|}$ time by querying $\CountDict_j$ for each $\gamma$, and checking if any value comes out positive.
        \item Otherwise, we compute the predicate $\alpha$ in $\Oh[p, \Sigma]{\log |H|}$ time using Claim~\ref{cl:efficient_contraction_get}, and we add $\tup{a}$ to the interpretation of $\alpha$ in $\Contract^p(\Xx)$.
      \end{itemize}
      Also, observe that the interpretations of tuples other than $\tup{a}$ will not change under this update.
      Thus, we processed a~single change of $\CountDict_j$ in $\Oh[p, \Sigma]{\log |H|}$ time, performing $\Oh[p, \Sigma]{1}$ updates to $\Contract^p(\Xx)$.
      
      By Lemma~\ref{lem:dynamic_ensemble}, any query to $\C$ (and hence to $\C_0$) causes at most $\Oh[p, \Sigma]{1}$ recalculations of any value of $\CountDict_j$.
      Summing up, we conclude that a~query to $\C$ can be processed in $\Oh[p, \Sigma]{\log |H|}$ worst-case time and causes $\Oh[p, \Sigma]{1}$ updates to the rank-$p$ contraction of $\Xx$.
    \end{proof}

As an end note, we remark that in Lemma~\ref{lem:dynamic_contraction}, the bound on the number of updates made to $\Contract^p(\Xx)$ under each query to $\C$ can be improved to a~universal constant, independent on $p$ or $\Sigma$.
This, however, requires a~more involved analysis of the data structures presented in the proof of the lemma, and does not improve any time complexity bounds presented in this work.
Thus, for the ease of exposition, we have chosen to present a~slightly looser $\Oh[p, \Sigma]{1}$ bound.

\subsection{Conclusion of the proof}
\label{ssec:lemmaA_conclusion}
We now have all necessary tools to finish the proof of the \lemmaA{}.

\begin{proof}[Proof of the \lemmaA{} (Lemma~\ref{lem:lemma_a})]
  We are given an~integer $k \in \N$, a~binary relational signature $\Sigma$, and a~sentence $\varphi \in \msotwo{}[\Sigma]$.
  In the proof of the static variant of Lemma~\ref{lem:lemma_a}, we computed a~new binary relational structure $\Gamma$, a~mapping $\Contract$ from augmented $\Sigma$-structures to augmented $\Gamma$-structures, and a~sentence $\psi \in \msotwo{}[\Gamma]$, with the properties prescribed by the statement of the lemma.
  Recall also that $\Contract((\Af, H))$ was defined as $(\Contract^p(\Xx), \quo{H}{\Fc})$, where $\Fc$ is the fern decomposition of $H$ constructed in Lemma~\ref{lem:fern-decomposition}, and $\Xx$ is the ensemble constructed from $\Af$ and $\Fc$ in Section~\ref{ssec:lemmaA_static}.
  
  Assume that we are given an~efficient dynamic $(\Cc^\star_k, \Gamma, \psi)$-structure $\D^\star$.
  Our aim is to construct an~efficient dynamic $(\Cc_k, \Sigma, \varphi)$-structure.
  To that end, take the dynamic data structure $\C$ constructed in Lemma~\ref{lem:dynamic_contraction}.
  Recall that $\C$, given a~dynamic augmented $\Sigma$-structure $(\Af, H)$, changing under the additions or removals of vertices, edges, and relations, maintains $\Contract^p(\Xx)$.
  Each query to $\C$ is processed in worst-case $\Oh[\varphi, \Sigma]{\log |H|}$ time, producing $\Oh[\varphi, \Sigma]{1}$ updates to $\Contract^p(\Xx)$.
  Let also $\F$ be the dynamic data structure constructed in Lemma~\ref{lem:dynamic_ferns}, which is also given $(\Af, H)$ dynamically, but produces $\quo{H}{\Fc}$ instead.
  We recall that $\Contract^p(\Xx)$ is guarded by $\quo{H}{\Fc}$.
  
  Now, we construct an~efficient dynamic $(\Cc_k, \Sigma, \varphi)$-structure $\D$ by spawning one instance of each of the structures: $\D^\star$, $\C$, and $\F$.
  Assume that a~batch $X$ of operations arrives to our structure.
  This batch will be converted into a~single batch $X'$ of operations supplied to $\D^\star$.
  We process the updates in $X$ one by one.
  Each update is immediately forwarded to both $\C$ and $\F$; then, after $\Oh[\varphi, \Sigma]{\log |H|}$ time, $\C$  produces a~sequence $L_\C$ of $\Oh[\varphi, \Sigma]{1}$ updates to $\Contract^p(\Xx)$, and $\F$ produces a~sequence $L_\D$ of $\Oh{1}$ updates to $\quo{H}{\Fc}$.
  We then filter $L_\C$ and $L_\D$ so that:
  \begin{itemize}
    \item in $L_\C$, everything apart from changes of the interpretations of relations is filtered out, and no tuple is both added to and removed from the same relation; and
    \item in $L_\D$, no edge or vertex is both added to and removed from the quotient graph $\quo{H}{\Fc}$.
  \end{itemize}
  
  We now add the updates from $L_\C$ and $L_\D$ to the batch $X'$ in the following order:
  \begin{itemize}[nosep]
    \item All relation removals in $L_\D$.
    \item All edge removals in $L_\C$.
    \item All vertex removals in $L_\C$.
    \item All vertex additions in $L_\C$.
    \item All edge additions in $L_\C$.
    \item All relation additions in $L_\D$.
  \end{itemize}
  By applying the updates in this specific order, we ensure that at each point of time, if two  distinct vertices $u$, $v$ are bound by a~relation, then $uv$ is an~edge of the guarding multigraph; and that if there exists an~edge incident to a~vertex $v$ of the multigraph, then $v$ is a~vertex of the multigraph.
  
  Let $(\Af, H)$ be the augmented $\Sigma$-structure stored in $\D$ after processing $X$.
  By definition, the multigraph $H$ belongs to $\Cc_k$, i.e., $\fvs{H} \leq k$.
  Let now $(\Af^\star, H^\star)$ be the augmented $\Gamma$-structure stored in $\D^\star$ after processing the batch $X'$.
  Since $(\Af^\star, H^\star) = \Contract((\Af, H))$, we have that $H \in \Cc^\star_k$.
  Thus, $\Af \models \varphi$ if and only if $\Af^\star \models \psi$, so the verification whether $\varphi$ is satisfied in $\Af$ is reduced to the verification whether $\psi$ is satisfied in $\Af^\star$.
  The data structure is thus correct.
  
  It remains to show that $\D$ is efficient.
  Indeed, processing a~single query from a~single batch takes $\Oh[\varphi, \Sigma]{\log |H|}$ worst-case time; and creates $\Oh[\varphi, \Sigma]{1}$ queries that are to be forwarded to~$\D^\star$.
  Since $\D^\star$ processes each query in worst-case $\Oh[\varphi, \Sigma]{\log |H^\star|}$ time, and $|H^\star| \leq |H|$, we conclude that $\D$ processes each query in worst-case $\Oh[\varphi, \Sigma]{\log |H|}$ time.
  Also, $\D$ can be straightforwardly initialized with $(\A, H)$ in $\Oh[\varphi, \Sigma]{|H| \log |H|}$ time by starting with an~empty augmented $\Sigma$-structure and adding the required features (vertices, edges and relations) to the $\Sigma$-structure one by one.
\end{proof}

\section{\lemmaB}%
\label{sec:lemmaB}

\newcommand{\mparacopy}[1]{\medskip\noindent\underline{#1}\hspace{0.2cm}}
\newcommand{\batch}{\mathcal{T}}

The aim of this section is to establish the~\lemmaB (Lemma~\ref{lem:lemma_b}).

Let $k \in \N_+$ be an~integer.
Throughout this section, we say that a~subset $B \subseteq V(H)$ of vertices of a~multigraph~$H$ is \emph{$k$-significant} if $|B| \leq 12k$, and $B$ contains all vertices of $H$ of degree at least $|E(H)|/(3k)$.
We start with showing two properties of $k$-significant subsets.
The first one (Corollary~\ref{cor:significant}) says that any feedback vertex of size~$k$ in~$H$ must contain a~vertex of a~given $k$-significant subset~$B$.
Similar statements were commonly used to design branching algorithms for $\probFVS$, see Lemma~\ref{lem:high} in the Overview for relevant discussion.
The second fact (Lemma~\ref{lem:remains_significant}) says that if $B$ is a~sufficiently large $k$-significant subset of $H$, then $B$ remains $k$-significant after applying $\Theta(|E(H)| / k)$ edge updates to $H$.
This will be crucial in the correctness proof of the weak efficient data structure claimed by Lemma~\ref{lem:lemma_b}.
We note that the same technique was used in the work of Alman et al.~\cite{AlmanMW20}.

\begin{lemma}[\cite{Iwata17, platypus}]
  \label{lem:fvs_degree}
  Let $H \in \Cc^\star_k$ be a~multigraph with $m$~edges, and let $S \subseteq V(H)$ be a~feedback vertex set of $H$ of size at most~$k$.
  Then $S$ must contain a~vertex of degree at least $m / (3k)$.
\end{lemma}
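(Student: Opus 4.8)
The plan is to prove the contrapositive-flavored counting bound directly. Suppose $H \in \Cc^\star_k$ has $m$ edges and $S$ is a feedback vertex set of size at most $k$; I want to show some vertex of $S$ has degree at least $m/(3k)$. The key structural fact I would use is that $H - S$ is a forest, so it has at most $|V(H)| - |S|$ edges, and more importantly, since $H$ has minimum degree at least $3$ (as $H \in \Cc^\star_k$), the number of vertices of $H$ is controlled by $m$: summing degrees gives $2m = \sum_{v} \deg(v) \geq 3|V(H)|$, hence $|V(H)| \leq 2m/3$, and therefore $H - S$ has fewer than $2m/3$ edges.

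The main computation is then a double-counting of the edges of $H$ according to whether they are incident to $S$. Every edge of $H$ is either an edge of $H - S$ or is incident to at least one vertex of $S$. The number of edges incident to $S$ is at most $\sum_{v \in S} \deg(v)$ (this over-counts edges with both endpoints in $S$, and counts each self-loop at a vertex of $S$ once while its degree contribution is two, but the inequality still holds). Combining:
\[ m \leq |E(H-S)| + \sum_{v \in S} \deg(v) < \frac{2m}{3} + \sum_{v \in S} \deg(v), \]
which rearranges to $\sum_{v \in S} \deg(v) > m/3$. Since $|S| \leq k$, by averaging there must be some $v \in S$ with $\deg(v) > m/(3k)$, and in particular $\deg(v) \geq m/(3k)$, which is what we want.

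I do not anticipate a genuine obstacle here — the argument is elementary — but the one place to be careful is handling multigraph subtleties correctly: self-loops and parallel edges. I would double-check that "$H - S$ is a forest" is interpreted with the multigraph notion of cycle from the preliminaries (so $H-S$ has no parallel edges and no self-loops), which justifies $|E(H-S)| \leq |V(H)| - 1 < |V(H)|$ per component, hence $|E(H-S)| < |V(H)|$; and that the minimum-degree-$3$ hypothesis with self-loops counting twice still yields $2m \geq 3|V(H)|$ when we use the handshake identity $\sum_v \deg(v) = 2m$ (valid for multigraphs with the convention that self-loops contribute $2$ to the degree and $1$ to $m$). With these conventions nailed down, the chain of inequalities above goes through verbatim. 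One should also note the degenerate case where $H$ is the null graph (the only element of $\Cc^\star_0$), but since $k \geq 1$ here and the statement is vacuous when $H$ has no feedback-vertex-set constraint to violate, or more simply when $m = 0$ the bound $m/(3k) = 0$ is trivially met; so no separate case analysis is really needed.
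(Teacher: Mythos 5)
Your proof is correct and takes essentially the same approach as the paper's: both bound $|E(H-S)|$ via the forest property, bound the edges touching $S$ by vertex degrees (you sum $\sum_{v\in S}\deg(v)$ and average, the paper uses the max degree $d$ and the bound $d|S|$), and close with the minimum-degree-$3$ inequality $2m\geq 3|V(H)|$. The difference is only bookkeeping — your treatment of multigraph corner cases is a welcome precision but does not change the argument.
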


\begin{proof}
  Denote by $d$ the maximum degree of a~vertex in~$S$.
  Then, there are at most $d|S|$~edges incident to~$S$.
  Since the graph $H - S$ is a~forest, we have
  \[
    m \leq (|V(H)| - |S| - 1) + d|S|  < |V(H)| + d|S|.
  \]
  Moreover, $H$ is of minimum degree~$3$, and thus we obtain
  \[
    2m \geq 3|V(H)| > 3(m - d|S|),
  \]
  which implies that $d > m/(3|S|) \geq m / (3k)$.
\end{proof}

\begin{corollary}
  \label{cor:significant}
  If $H \in \Cc^\star_k$ is a~multigraph, and $B \subseteq V(H)$ is a~$k$-significant subset of vertices, then $\fvs{H - B} \leq k - 1$, that is, $H-B \in \Cc_{k-1}$.
\end{corollary}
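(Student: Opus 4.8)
The plan is to take a minimum-size feedback vertex set of $H$ and argue that it is forced to meet $B$, so that removing $B$ already accounts for one vertex of the optimum and reduces the budget to $k-1$.

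First I would fix a feedback vertex set $S$ of $H$ with $|S| = \fvs{H} \le k$, and dispose of the trivial case $\fvs{H} = 0$ separately: then $H$ is a forest of minimum degree at least $3$, which forces $H$ to be the null graph, since a nonempty forest always has a vertex of degree at most $1$. In that case $H - B$ is also null and $\fvs{H - B} = 0 \le k - 1$, using $k \ge 1$. So from now on I may assume $\fvs{H} \ge 1$, and in particular $S \neq \emptyset$.

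Next I would apply Lemma~\ref{lem:fvs_degree} to $H$ and $S$. Since $H \in \Cc^\star_k$ and $S$ is a feedback vertex set of size at most $k$, there is a vertex $v \in S$ of degree at least $m/(3k)$, where $m \coloneqq |E(H)|$. By the definition of a $k$-significant set, $B$ contains every vertex of $H$ of degree at least $m/(3k)$, so in particular $v \in S \cap B$.

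Finally I would check that $S \setminus B$ is a feedback vertex set of $H - B$: indeed $(H - B) - (S \setminus B) = H - (B \cup S)$ is an induced subgraph of $H - S$, which is acyclic, hence it is acyclic itself. Since $v \in S \cap B$, we have $|S \setminus B| \le |S| - 1 \le k - 1$, so $\fvs{H - B} \le k - 1$, i.e.\ $H - B \in \Cc_{k-1}$, as required. The only point requiring any care is the degenerate case $\fvs{H} = 0$ (equivalently $|E(H)| = 0$), where Lemma~\ref{lem:fvs_degree} does not apply and the ratio $m/(3k)$ carries no information; once that is set aside, the statement is a direct combination of Lemma~\ref{lem:fvs_degree} with the definition of $k$-significance.
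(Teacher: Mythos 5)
Your proof is correct and is exactly the argument the paper intends: the corollary is stated without proof as an immediate consequence of Lemma~\ref{lem:fvs_degree}, namely that any feedback vertex set $S$ with $|S|\le k$ must contain a high-degree vertex, which lies in $B$ by $k$-significance, so $S\setminus B$ witnesses $\fvs{H-B}\le k-1$. Your separate treatment of the degenerate case $\fvs{H}=0$ (where $H$ must be the null graph) is a harmless extra precaution beyond what the paper spells out.
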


\begin{lemma}[\cite{AlmanMW20}]
  \label{lem:remains_significant}
  Let $H$ be a~multigraph with $m$~edges, and let~$B$ be the~set of all vertices of~$H$ with degree at least $m / (6k)$.
  Let~$H'$ be a~multigraph obtained after applying at most $\Delta(m) := \floor{m / (6k+2)}$ edge modifications to~$H$ (i.e. edge insertions and deletions, with possible introduction of new vertices).
  Then $B$ is a~$k$-significant subset of vertices of~$H'$.
\end{lemma}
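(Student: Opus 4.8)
The plan is to verify directly the two defining conditions of a $k$-significant set, now for the multigraph $H'$: that $|B|\le 12k$, and that $B$ contains every vertex of $H'$ whose degree is at least $|E(H')|/(3k)$. The first condition does not involve $H'$ at all and is a one-line handshaking argument; the second is the real content, and it is where the threshold $m/(6k)$ used in the definition of $B$ must be balanced against the bound $\Delta(m)=\lfloor m/(6k+2)\rfloor$ on the number of modifications.

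For the size bound: since $\sum_{v\in V(H)}\deg_H(v)=2m$ — every edge, a self-loop included, contributing exactly $2$ to the sum — and every $v\in B$ satisfies $\deg_H(v)\ge m/(6k)$, we get $|B|\cdot\tfrac{m}{6k}\le\sum_{v\in B}\deg_H(v)\le 2m$, hence $|B|\le 12k$. (If $m=0$ the statement degenerates; in every application $H$ has enough edges for this to be harmless.)

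Now let $t\le\Delta(m)$ be the number of edge modifications carrying $H$ to $H'$, and put $m'\coloneqq|E(H')|$. Each modification changes the number of edges by exactly one and the degree of any fixed vertex by at most one, so $m'\ge m-t\ge m-\Delta(m)$ and $\deg_H(v)\ge\deg_{H'}(v)-t\ge\deg_{H'}(v)-\Delta(m)$ for every $v$. Take a vertex $v$ of $H'$ with $\deg_{H'}(v)\ge m'/(3k)$; we want $v\in B$, i.e.\ $\deg_H(v)\ge m/(6k)$. Chaining the inequalities above,
\[
\deg_H(v)\ \ge\ \frac{m'}{3k}-\Delta(m)\ \ge\ \frac{m-\Delta(m)}{3k}-\Delta(m),
\]
so it suffices to check that $\frac{m-\Delta(m)}{3k}-\Delta(m)\ge\frac{m}{6k}$. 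Multiplying through by $6k$ and rearranging, this is equivalent to $m\ge(6k+2)\,\Delta(m)$, which holds by the very definition $\Delta(m)=\lfloor m/(6k+2)\rfloor\le m/(6k+2)$. Hence $\deg_H(v)\ge m/(6k)$, so $v\in B$, and $B$ is $k$-significant for $H'$.

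I expect no genuine obstacle beyond getting this final arithmetic exactly right: the bound $\Delta(m)=\lfloor m/(6k+2)\rfloor$ is tight for the computation above, which is precisely why the denominator $6k+2$ is chosen, so any sloppiness in the constants would break it. The one bookkeeping point to keep an eye on is that $|E(\cdot)|$ and each individual vertex degree each move by at most the number of performed modifications — the routine behaviour of edge insertions and deletions — and otherwise the proof is a short estimate.
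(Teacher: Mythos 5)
Your proof is correct and follows essentially the same route as the paper's: the handshaking bound for $|B|\le 12k$, the observation that each of the at most $\Delta(m)$ modifications moves both $|E(\cdot)|$ and any vertex degree by at most one, and the same tight arithmetic reducing everything to $\Delta(m)\le m/(6k+2)$. The only cosmetic difference is that you argue the direct implication ($\deg_{H'}(v)\ge m'/(3k)\Rightarrow v\in B$) while the paper proves the contrapositive and, in doing so, explicitly separates the case $v\notin V(H)$; your version covers that case only implicitly (if $v$ were new, $\deg_H(v)=0$ and the chain of inequalities would force $0\ge m/(6k)$), so a one-line remark to that effect would make the argument airtight.
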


\begin{proof}
  First, by the handshaking lemma there are at most $12k$ vertices of~$H$ with degree at least~$m / (6k)$.
  Let us take a~vertex $v \in V(H') \setminus B$.
  It remains to show that the degree of $v$ in $H'$ is smaller than $|E(H')|/(3k)$.
  If $v \in V(H)$, then by the choice of $B$ ($v \not\in B$), the degree of $v$ in~$H$ must be smaller than $m / (6k)$.
  Hence, the~degree of~$v$ in~$H'$ is smaller than
  \[
    \frac{m}{6k} + \Delta(m) \leq \frac{m - \Delta(m)}{3k} \leq \frac{|E(H')|}{3k},
  \]
  where the first inequality is equivalent to $\Delta(m) \leq m / (6k+2)$.
  If $v \not\in V(H)$, then its degree in $H'$ is at most $\Delta(m) < |E(H')|/(3k)$.
  We obtained that $|B| \leq 12k$, and $B$ must contain all vertices of~$H'$ of degree at least $|E(H')| / (3k)$, and therefore $B$ is a~$k$-significant subset of vertices of~$H'$.
\end{proof}

Now, we may proceed with the proof of the~\lemmaB.
Let us fix an~integer $k \in \N_+$, a~binary signature~$\Sigma$, and a~sentence $\varphi \in \msotwo[\Sigma]$.
First, we provide static definitions of a~signature~$\Gamma$, a~mapping~$\Downgrade$, and a~formula~$\psi$ satisfying the requirements of \lemmaB.
Then, we show how to maintain the mapping $\Downgrade$ when the input augmented structure $(\A, H)$ is modified dynamically.

\paragraph*{Signature $\Gamma$.}
  Let $k_B := 12k$.
  Given a~$\Sigma$-structure~$\A$, we are going to find a~set $B \subseteq V(\A)$ of at most $k_B$~vertices and remove it from~$\A$.
  To be able to do so, we need to extend the~signature~$\Sigma$ with predicates which allow us to encode relations satisfied by vertices from~$B$.
  We will label $B$ with a~subset of~$[k_B]$.
  We use a~predicate $\textsf{vertex\_exists}_b$ to indicate whether a~vertex of identifier $b$ exists in $\A$, and predicates $\textsf{vertex\_color}_{R,b}$ to save all unary relations that vertices of~$B$ satisfy.
  To encode missing binary relations we introduce a~nullary predicate $\textsf{inner\_arc}_{R,b,c}$ (for tuples from $B \times B$), and two unary predicates: $\textsf{incoming\_arc}_{R,b}$ (for tuples from $(V(\A)\setminus B)\times B$) and $\textsf{outgoing\_arc}_{R,b}$ (for tuples from $B \times (V(\A)\setminus B)$).
  
  Summarizing, we set
  \begin{alignat*}{2}
    \Gamma := \Sigma & \cup \{ \textsf{vertex\_exists}_b && \mid b \in [k_B]\} \\
                    & \cup \{ \textsf{vertex\_color}_{R,b} && \mid R \in \Sigma^{(1)}, b \in [k_B]\} \\
                    & \cup \{ \textsf{inner\_arc}_{R,b,c} && \mid R \in \Sigma^{(2)}, b,c \in [k_B] \} \\
                    & \cup \{ \textsf{incoming\_arc}_{R,b}(\cdot) && \mid R \in \Sigma^{(2)}, b \in [k_B] \} \\
                    & \cup \{ \textsf{outgoing\_arc}_{R,b}(\cdot) && \mid R \in \Sigma^{(2)}, b \in [k_B] \}.
  \end{alignat*}
Abusing the notation slightly, we will refer to~$B$ as a~subset of~$[k_B]$, when it is convenient.

\paragraph*{Mapping $\Downgrade$.}
  Let $\A$ be a~$\Sigma$-structure, and let $H \in \Cc^\star_k$ be a~multigraph guarding $\A$.
  We define the augmented structure $(\widetilde{\A}, \widetilde{H}) = \Downgrade(\A, H)$ as follows.

  Let $B \subseteq V(H)$ be a~$k$-significant subset of vertices of~$H$.
  We set $\widetilde{H} := H - B$.
  Clearly, $|\widetilde{H}| \leq |H|$, and by Corollary~\ref{cor:significant} we obtain that $\widetilde{H} \in \Cc_{k-1}$.

  Now, we need to define a~$\Gamma$-structure $\widetilde{\A}$ on the universe $V(\A) \setminus B$ so that $\widetilde{H}$ guards $\widetilde{\A}$.
  For every predicate $R \in \Sigma$, the~structure~$\widetilde{\A}$ inherits its interpretation from $\A$, that is, we set $R^{\widetilde{\A}} := R^{\A}|_{V(\widetilde{\A})^{\mathrm{ar}(R)}}$.
  For every predicate from $\Gamma \setminus \Sigma$ we define its interpretation in a~natural way.
  For example, for a~vertex $b \in B$ and a~binary predicate $R \in \Sigma^{(2)}$, we set
  \[
    \textsf{incoming\_arc}_{R,b}^{\widetilde{\A}} \coloneqq \{ x \in V(\A) \setminus B \mid (x, b) \in R^{\A}\}.
    \]

    We remark that the set~$B$ is chosen non-deterministically in the definition above.
    In case we need to use the above transformations for a~fixed $k$-significant subset~$B$, we will denote their results by~$\widetilde{H}^B$ and $\widetilde{\A}^B$.

\paragraph*{Sentence $\psi$.}
  We now construct the promised sentence $\psi \in \msotwo[\Gamma]$.
  \begin{claim}
    \label{claim:sentence-psi}
    There is a~sentence $\psi \in \msotwo[\Gamma]$ such that for every $\Sigma$-structure $\A$,
    \[
      \A \models \varphi \quad \text{if and only if}\quad \widetilde{\A} \models \psi.
    \]
    Moreover, $\psi$ is computable from $k$ and $\varphi$.
  \end{claim}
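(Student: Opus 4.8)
The plan is to exhibit $\psi$ by a purely syntactic translation of $\varphi$, based on the observation that $V(\A)=V(\widetilde{\A})\,\uplus\,B$ where $B$ — which we identify with a subset of $[k_B]$, $k_B=12k$ — consists of at most constantly many ``ghost'' vertices, and that the predicates of $\Gamma\setminus\Sigma$ record exactly the information about $B$, and about the arcs incident to $B$, that is forgotten when passing from $\A$ to $\widetilde{\A}$. Note that the only binary predicates of $\Gamma$ are those of $\Sigma$, so the arcs of $\widetilde{\A}$ are precisely the arcs of $\A$ with both endpoints outside $B$, and every other arc of $\A$ must be simulated. I would define, by induction on $\varphi$, a relativization that simulates the $\A$-world inside $\widetilde{\A}$: an individual vertex variable is simulated either by a genuine element of $V(\widetilde{\A})$ or by one of the ghosts $b\in[k_B]$ (its use guarded by the flag $\textsf{vertex\_exists}_b$); an individual arc variable is simulated by one of four \emph{shapes} according to whether its tail and/or head is a ghost — a genuine arc of $\widetilde{\A}$, a pair $(b,x)$ with $b\in[k_B]$ and $x$ a simulated vertex of $\widetilde{\A}$, a pair $(x,b)$, or a ghost--ghost pair $(b,c)\in[k_B]^2$ (this also covers self-loops $(b,b)$).

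Under this simulation, every atomic formula of $\varphi$ is rewritten into $\msotwo[\Gamma]$ by a finite case analysis on the shapes of the variables occurring in it: $R(x)$ for $R\in\Sigma^{(1)}$ becomes $R(x)$ when $x$ is genuine and the flag $\textsf{vertex\_color}_{R,b}$ when $x$ is the ghost $b$; $R(f)$ for $R\in\Sigma^{(2)}$ becomes, according to the shape of $f$, either $R(f)$, or $\textsf{outgoing\_arc}_{R,b}(x_f)$, or $\textsf{incoming\_arc}_{R,b}(x_f)$, or the flag $\textsf{inner\_arc}_{R,b,c}$; nullary relations and the predicates $\head$, $\tail$, $\in$, together with equality, are redirected by analogous bookkeeping; and $|X|\equiv a\bmod p$ becomes $|X'|\equiv (a-j)\bmod p$, where $X'$ is the genuine part of the simulated set $X$ and $j$ is the fixed cardinality of its ghost part in the disjunct under consideration. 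Individual quantifiers are relativized in the standard way: $\exists x$ becomes $\exists x\in V(\widetilde{\A})$ in disjunction with $\bigvee_{b\in[k_B]}(\textsf{vertex\_exists}_b\wedge\cdots)$, and $\exists f$ becomes a disjunction over the four arc-shapes, with guardedness of arc quantification enforced by requiring, in the ghost shapes, that $\bigvee_R\textsf{outgoing\_arc}_{R,b}(x_f)$ (resp. $\bigvee_R\textsf{incoming\_arc}_{R,b}(x_f)$, resp. $\bigvee_R\textsf{inner\_arc}_{R,b,c}$) hold.

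Monadic quantification is reduced to the above by splitting sets along $B$. A vertex set $X\subseteq V(\A)$ is the disjoint union of $X\cap V(\widetilde{\A})$ — a genuine monadic vertex variable of $\widetilde{\A}$ — and $X\cap B$ — a subset of $[k_B]$, restricted to existing ghosts via the $\textsf{vertex\_exists}$ flags — so $\exists X$ is replaced by one genuine monadic quantifier together with a bounded disjunction. An arc set $F$ decomposes into its part within $V(\widetilde{\A})$ (a genuine monadic arc variable of $\widetilde{\A}$); for each ghost $b$, the set of non-ghost heads of the outgoing arcs of $b$ in $F$ and the set of non-ghost tails of its incoming arcs in $F$ (genuine monadic vertex variables, each constrained to valid arc-endpoints through the $\textsf{outgoing\_arc}/\textsf{incoming\_arc}$ predicates); and its ghost--ghost part (a subset of $[k_B]^2$, handled by a bounded disjunction with the appropriate $\textsf{inner\_arc}$ guards). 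Thus $\exists F$ is replaced by $1+2k_B$ genuine monadic quantifiers and a bounded disjunction, and the atom $f\in F$ is rewritten by the shapes of $f$ and $F$. Taking $\psi$ to be the translation of $\varphi$ (a sentence, since $\varphi$ has no free variables), a straightforward induction matching each simulated valuation over $\widetilde{\A}$ with the corresponding valuation over $\A$ gives $\A\models\varphi\iff\widetilde{\A}\models\psi$; the translation is manifestly computable from $k$, $\Sigma$ and $\varphi$, and the fact that it may increase the quantifier rank is immaterial for Lemma~\ref{lem:lemma_b}. The one place demanding real care is the arc-variable case analysis: since an arc may be incident to $B$ at its tail, its head, or both, each arc variable spawns several shapes, and in every shape one must simultaneously respect the guardedness of $\msotwo$ arc quantification and correctly reroute the atoms $R(f)$, $\head$, $\tail$ and $f\in F$ to the right $\Gamma$-predicate — this is the bulk of the (routine but lengthy) bookkeeping.
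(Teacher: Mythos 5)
Your proposal is the same approach as the paper's: a purely syntactic relativization of $\varphi$ by structural induction, splitting every quantifier along $V(\wt{\A})\uplus B$ and rewriting atoms by a finite case analysis. Your encoding of the crossing part of a monadic arc variable --- $2k_B$ vertex sets $Y_b^{\mathrm{out}}=\{x\mid (b,x)\in F\}$ and $Y_b^{\mathrm{in}}=\{x\mid (x,b)\in F\}$, one per ghost and direction, each guarded by $\textsf{outgoing\_arc}/\textsf{incoming\_arc}$ --- is in fact cleaner than the paper's, which groups the vertices of $\wt{\A}$ into $2^{2k_B}$ classes by their full in/out-neighborhood pattern on $B$; the two encodings are equivalent, and yours needs fewer monadic quantifiers and makes the atom $f\in F$ trivial to rewrite.

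The one real gap is your treatment of modular counting for \emph{arc} sets. You write ``$|X|\equiv a\bmod p$ becomes $|X'|\equiv(a-j)\bmod p$, where $X'$ is the genuine part of the simulated set $X$,'' which is correct when $X$ is a vertex set (there is exactly one genuine monadic variable $X'$). But under your decomposition an arc set $F$ has $1+2k_B$ genuine monadic parts --- $F_{\mathrm{outer}}$ and the $Y_b^{\mathrm{out}},Y_b^{\mathrm{in}}$ --- whose cardinalities add up (together with the fixed size $j=|F_{\mathrm{inner}}|$) to $|F|$. Since a single $\msotwo$ atom $|\cdot|\equiv\cdot\bmod p$ speaks about one set only, you cannot replace $|F|\equiv a\bmod p$ by a single atom; you must take a bounded disjunction over all tuples $(a_0,a_1^{\mathrm{out}},\ldots,a_{k_B}^{\mathrm{out}},a_1^{\mathrm{in}},\ldots,a_{k_B}^{\mathrm{in}})\in\{0,\ldots,p-1\}^{1+2k_B}$ with $a_0+\sum_b a_b^{\mathrm{out}}+\sum_b a_b^{\mathrm{in}}\equiv a-j\bmod p$, and in each disjunct assert $|F_{\mathrm{outer}}|\equiv a_0$, $|Y_b^{\mathrm{out}}|\equiv a_b^{\mathrm{out}}$, $|Y_b^{\mathrm{in}}|\equiv a_b^{\mathrm{in}}$ modulo $p$. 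The paper does exactly this (with its own encoding and the corresponding weights $n_i$). The fix is routine, but as stated your translation of the arc-set modular predicate is not well-formed.
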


\begin{proof}
  We give only a~brief description of $\psi$ as this sentence is just a~syntactic modification of~$\varphi$.
  First, we can assume without loss of generality that the quantification over individual arcs takes the form $\exists_{f \in R}$ or $\forall_{f \in R}$ for some binary relation $R$, and the quantification over arc subsets takes the form $\exists_{F \subseteq R}$ or $\forall_{F \subseteq R}$.
  To obtain $\psi$ we transform sentence $\varphi$ recursively (in other words, we apply transformations by structural induction on $\varphi$).
  If we encounter a~boolean connective or a~negation in $\varphi$, we leave it unchanged.

  Assume that we encounter a~quantifier in $\varphi$.
  Without loss of generality, let it be a~universal quantifier $(\forall)$ as transforming existential quantifiers $(\exists)$ is analogous.
  
  \mparacopy{Quantifications over vertices.}
  \begin{itemize}
    \item Let $\varphi = \forall_x\ \varphi_1$, where $x$ is a~single vertex.
      We need to distinguish two cases: either $x \not\in B$ or $x \in B$.
      First condition is equivalent to $x \in V(\widetilde{\A})$, and thus we can quantify over single vertices $x'$ of~$\widetilde{\A}$ as we did in $\varphi$.
      In the second case, we observe that there are only $k_B = \Oh{k}$ candidates for the valuation of $x$.
      Hence, we can simply check all possibilities for $x$.
      Formally, we transform $\varphi$ to:
      \[
        \forall_{x'}\ \varphi_1\ \wedge\ \bigwedge_{b \in [k_B]} \left( \textsf{vertex\_exists}_b \implies \varphi_1[x \mapsto b]\right),
      \]
      where $\varphi_1[x \mapsto b]$ is defined as the formula $\varphi_1$ with all occurrences of $x$ substituted with $b$.
      Then, to obtain $\psi$, we recursively process $\varphi_1$ and $\varphi_1[x \mapsto b]$ in the formula above.
    \item Let $\varphi = \forall_X\ \varphi_1$, where $X$ is a~subset of vertices.
      For every $X$ we can consider its partition $X = X' \cup X_B$, where $X' \subseteq V(\widetilde{\A})$ and $X_B \subseteq B$.
      Then, we can list all possibilities for $X_B$ and quantify over all subsets $X'$ normally.
      Formally, we write
      \[
        \bigwedge_{X_B \subseteq [k_B]} \forall_{X'}\ \left( \left( \bigwedge_{b \in X_B} \textsf{vertex\_exists}_b \right) \implies \varphi_1[X \mapsto X' \cup X_B] \right),
      \]
      and transform formula $\varphi_1[X \mapsto X' \cup X_B]$ recursively.
  \end{itemize}
 
  \newcommand{\fouter}{F_\text{outer}}
  \newcommand{\finner}{F_\text{inner}}
  \newcommand{\fbound}{F_\text{cross}}
  \newcommand{\neighbors}{\textsf{neighborhood}^B}
  \mparacopy{Quantifications over arcs.}
  \begin{itemize}
    \item Let $\varphi = \forall_{f \in R}\ \varphi_1$, where $f$ is a~single arc.
      There are three possibilities for $f$: either $f$ is an \emph{outer} arc on $V(\widetilde{\A})^{2}$ (and we can quantify over such arcs as in $\varphi$), or it is an \emph{inner} arc on $B^2$ (and again, we can then list all possible valuations of $f$ and make a finite conjunction), or it is a~\emph{crossing} arc from the set $(B \times V(\widetilde{\A})) \cup (V(\widetilde{\A}) \times B)$.
      To consider all crossing arcs of the form $(x', b)$, where $x' \in V(\widetilde{\A})$ and $b$ is a~fixed vertex of~$B$, we quantify over all vertices of $V(\A)$ for which $\textsf{incoming\_arc}_{R,b}$ holds.
      Arcs of the form $(b, x')$ are considered analogously.
    \item Let $\varphi = \forall_{F \subseteq R}\ \varphi_1$, where $F$ is a~subset of arcs.
      We partition $F$ into $\fouter \cup \finner \cup \fbound$ (sets of outer, inner and crossing arcs of $F$, respectively).
      We can deal with sets $\fouter$ and $\finner$ similarly as in the case of subsets of vertices. 
      Recall that to encode elements from $\fbound$ we use unary predicates $\textsf{incoming\_arc}_{R,b}$ and $\textsf{outgoing\_arc}_{R,b}$.
      Hence, instead of quantifying over all subsets $\fbound$ of crossing arcs, we can quantify over all sequences $(X_1, \ldots, X_s)$ of subsets of vertices, where the vertices are grouped by the sets of their in-neighbors and out-neighbors in $B$ (thus, $s = 2^{2k_B}$).
      Formally we assign to each vertex $x \in V(\widetilde{\A})$ a~pair:
      \[
        \neighbors(x) = (\{b \in B \mid (x,b) \in R^\A\},\ \{b \in B \mid (b,x) \in R^\A\}),
      \]
      (there are $2^{2k_B}$ possible values of such a~pair), and we group the elements of $V(\widetilde{\A})$ by the same value of $\neighbors(\cdot)$.
  \end{itemize}

  Finally, let us see how we transform atomic formulas in our procedure.

  \mparacopy{Modular counting checks.}
  \begin{itemize}
    \item If $X$ is a~subset of vertices, and we need to check in $\varphi$ whether $|X| \equiv a \mod p$, we transform it to $|X'| \equiv a - |X_B| \bmod p$, assuming that $X$ was introduced to the transformed formula as $X' \cup X_B$.
    \item If $F$ is a~subset of edges, and we need to check whether $|F| \equiv a \bmod p$, we transform it as follows.
      Recall that after transformation $F$ was introduced as $\fouter \cup \finner \cup \fbound$.
      Denote $a' = |\fouter|$ and observe that the value $a_B = |\finner|$ is known.
      The edges of $\fbound$ are represented by subsets $X_i \subseteq V(\widetilde{\A})$ having the same neighborhood on $B$.
      Define
  \[
    \mathcal{A} \coloneqq \left\{(a', a_1, a_2, \ldots, a_s) \in \{0, 1, \dots, p - 1\}^{s+1}\, \mid \, a' + a_B + \sum_{i=1}^l a_i n_i \equiv a \bmod p \right\},
  \]
    where $n_i$ is the number of arcs between vertices of $X_i$ and $B$.
    Observe that $|\mathcal{A}| = f(k, \varphi)$, for some function $f$.
    We can verify the equality $|F| \equiv a \mod p$ by making a disjunction over all tuples $(a', a_1, \ldots, a_s) \in \mathcal{A}$, and for each checking whether   $|\fouter| \equiv a' \bmod p$ and for all $i = 1, \ldots, s$, $|X_i| \equiv a_i \bmod p$.
  \end{itemize}
  \mparacopy{Equality, membership, incidence tests and relational checks.}

  These are straightforward to transform. Each such an~atomic formula is either unchanged, or it is transformed to an~expression on $\Gamma \setminus \Sigma$, or it is evaluated directly, e.g. if we check whether some outer edge is equal to an~inner edge.
\end{proof}

\paragraph*{Weak efficient dynamic structure.}
  Let $\widetilde{\D}$ be an~efficient dynamic $(\Cc_{k-1}, \Gamma, \psi)$-structure.
  Our goal is to construct a~weak efficient dynamic $(\Cc^\star_k, \Sigma, \varphi)$-structure $\D$.  
  During its run $\D$ will maintain:
  \begin{itemize}
    \item the~$\Sigma$-structure $\A$ and the~multigraph $H \in \Cc^\star_k$ modified by the user (the elements of $\A$ and $H$ are stored in dynamic dictionaries so that we can access and modify them in time $O(\log |H|)$),
    \item a~$k$-significant subset of vertices $B \subseteq V(H)$ (labeled with a~subset of~$[k_B]$); and
    \item a~single instance of $\widetilde{\D}$ running on the~structure~$\widetilde{\A}^B$ and the~multigraph~$\widetilde{H}^B$, where $\wt{\A}^B$ and $\wt{H}^B$ are defined as in the construction of the mapping $\Downgrade$.
  \end{itemize}
  
  $\D$ will be a~weak efficient dynamic structure in the following sense: supposing that $\D$ is initialized with $(\A_0, H_0)$, throughout its life, $\D$ will only accept $\Delta(|H_0|) = \left\lfloor |H_0| / (6k + 2)\right\rfloor \in \Omega(|H_0|)$ updates (and hence $\Omega(|H_0|)$ batches of updates).
  In particular, the multigraph $H$ maintained by $\D$ will undergo at most $\Delta(|H_0|)$ edge modifications.
  
  We initialize the data structure with an~augmented $\Sigma$-structure $(\A_0, H_0)$ as follows: first we determine a~$k$-significant subset of vertices $B \subseteq V(H)$ in time $\Oh{|H|}$ by applying Lemma~\ref{lem:remains_significant}.
  Then we compute $(\widetilde{\A_0}, \widetilde{H_0}) = \Downgrade(\A, H)$ and initialize $\widetilde{\D}$ with $(\widetilde{\A_0}, \widetilde{H_0})$.
  This can be done in total time $\Oh[\varphi, k]{|H| \log |H|}$.
  Also, by construction and Lemma~\ref{lem:remains_significant}, it is guaranteed that throughout the entire life of the data structure, $B$ will remain a~$k$-significant subset of vertices of $H$.

  Let now $\batch$ be a~batch of operations that arrives to~$\D$.
  We start with updating $\A$ and $H$ according to the~operations from~$\batch$.
Next, we create a~batch of updates $\widetilde{\batch}$ for the~data structure~$\widetilde{\D}$ with operations from~$\batch$ translated as follows.
\begin{itemize}
  \item $\mathsf{addVertex}(v)$ / $\mathsf{delVertex}(v)$.
    If we want to remove a~vertex $v \in B$, then we append to the batch $\widetilde{\batch}$ an~operation $\mathsf{delRelation}(\textsf{vertex\_exists}_v)$.
    Otherwise, we copy a~given operation to~$\widetilde{\batch}$.
  \item $\mathsf{addEdge}(u, v)$ / $\mathsf{delEdge}(u, v)$.
    We append a~given operation to $\widetilde{\batch}$ provided that~$u, v \in V(\widetilde{H}^B)$, i.e.~$u, v \not\in B$.
  \item $\mathsf{addRelation}(R, \bar{a})$ / $\mathsf{delRelation}(R, \bar{a})$.
    If at least one of the elements in the tuple~$\bar{a}$ belongs to~$B$, we propagate this modification to~$\widetilde{\D}$ by adding or removing an~appropriate relation from $(\Gamma \setminus \Sigma)^{\widetilde{\A}}$.
    Otherwise, we append a~given operation to~$\widetilde{\batch}$.
\end{itemize}
Note that each operation from $\batch$ is translated to at most one operation in~$\widetilde{\batch}$.
After $\widetilde{\D}$ performs all~operations from~$\widetilde{\batch}$, we query~$\widetilde{\D}$ whether $\widetilde{\A}^B \models \psi$ holds which by Claim~\ref{claim:sentence-psi} is equivalent to verifying whether~$\A \models \varphi$ holds.
Of course, each update can be performed in worst-case $\Oh[\varphi, k]{\log |H|}$ time.
This concludes the proof of the Downgrade Lemma.

\begin{acks}
The authors thank the anonymous reviewers for their valuable input, especially for pointing us to the technique of global rebuilding by Overmars and van~Leeuwen~\cite{DBLP:conf/wg/Overmars81, DBLP:journals/ipl/OvermarsL81a} and to the paper of Bodlaender on the dynamic maintenance of treewidth-$2$ graphs~\cite{Bodlaender93a}.
\end{acks}

\bibliographystyle{ACM-Reference-Format}
\bibliography{references}


\begin{thebibliography}{38}


\ifx \showCODEN    \undefined \def \showCODEN     #1{\unskip}     \fi
\ifx \showDOI      \undefined \def \showDOI       #1{#1}\fi
\ifx \showISBNx    \undefined \def \showISBNx     #1{\unskip}     \fi
\ifx \showISBNxiii \undefined \def \showISBNxiii  #1{\unskip}     \fi
\ifx \showISSN     \undefined \def \showISSN      #1{\unskip}     \fi
\ifx \showLCCN     \undefined \def \showLCCN      #1{\unskip}     \fi
\ifx \shownote     \undefined \def \shownote      #1{#1}          \fi
\ifx \showarticletitle \undefined \def \showarticletitle #1{#1}   \fi
\ifx \showURL      \undefined \def \showURL       {\relax}        \fi
\providecommand\bibfield[2]{#2}
\providecommand\bibinfo[2]{#2}
\providecommand\natexlab[1]{#1}
\providecommand\showeprint[2][]{arXiv:#2}

\bibitem[Alman et~al\mbox{.}(2020)]%
        {AlmanMW20}
\bibfield{author}{\bibinfo{person}{Josh Alman}, \bibinfo{person}{Matthias Mnich}, {and} \bibinfo{person}{Virginia~Vassilevska Williams}.} \bibinfo{year}{2020}\natexlab{}.
\newblock \showarticletitle{Dynamic {P}arameterized {P}roblems and {A}lgorithms}.
\newblock \bibinfo{journal}{\emph{{ACM} Trans. Algorithms}} \bibinfo{volume}{16}, \bibinfo{number}{4} (\bibinfo{year}{2020}), \bibinfo{pages}{45:1--45:46}.
\newblock
\urldef\tempurl%
\url{https://doi.org/10.1145/3395037}
\showDOI{\tempurl}


\bibitem[Alstrup et~al\mbox{.}(2005)]%
        {AlstrupHLT05}
\bibfield{author}{\bibinfo{person}{Stephen Alstrup}, \bibinfo{person}{Jacob Holm}, \bibinfo{person}{Kristian de Lichtenberg}, {and} \bibinfo{person}{Mikkel Thorup}.} \bibinfo{year}{2005}\natexlab{}.
\newblock \showarticletitle{Maintaining information in fully dynamic trees with top trees}.
\newblock \bibinfo{journal}{\emph{{ACM} Trans. Algorithms}} \bibinfo{volume}{1}, \bibinfo{number}{2} (\bibinfo{year}{2005}), \bibinfo{pages}{243--264}.
\newblock
\urldef\tempurl%
\url{https://doi.org/10.1145/1103963.1103966}
\showDOI{\tempurl}


\bibitem[Bodlaender(1993)]%
        {Bodlaender93a}
\bibfield{author}{\bibinfo{person}{Hans~L. Bodlaender}.} \bibinfo{year}{1993}\natexlab{}.
\newblock \showarticletitle{Dynamic Algorithms for Graphs with Treewidth 2}. In \bibinfo{booktitle}{\emph{19th International Workshop on Graph-Theoretic Concepts in Computer Science, {WG} 1993}} \emph{(\bibinfo{series}{Lecture Notes in Computer Science}, Vol.~\bibinfo{volume}{790})}. \bibinfo{publisher}{Springer}, \bibinfo{address}{Utrecht, Netherlands}, \bibinfo{pages}{112--124}.
\newblock
\urldef\tempurl%
\url{https://doi.org/10.1007/3-540-57899-4\_45}
\showDOI{\tempurl}


\bibitem[Chen et~al\mbox{.}(2021)]%
        {ChenCDFHNPPSWZ21}
\bibfield{author}{\bibinfo{person}{Jiehua Chen}, \bibinfo{person}{Wojciech Czerwi\'{n}ski}, \bibinfo{person}{Yann Disser}, \bibinfo{person}{Andreas~Emil Feldmann}, \bibinfo{person}{Danny Hermelin}, \bibinfo{person}{Wojciech Nadara}, \bibinfo{person}{Marcin Pilipczuk}, \bibinfo{person}{Micha\l{} Pilipczuk}, \bibinfo{person}{Manuel Sorge}, \bibinfo{person}{Bart\l{}omiej Wr{\'{o}}blewski}, {and} \bibinfo{person}{Anna Zych{-}Pawlewicz}.} \bibinfo{year}{2021}\natexlab{}.
\newblock \showarticletitle{Efficient fully dynamic elimination forests with applications to detecting long paths and cycles}. In \bibinfo{booktitle}{\emph{Proceedings of the 2021 {ACM-SIAM} Symposium on Discrete Algorithms, {SODA} 2021}}. \bibinfo{publisher}{{SIAM}}, \bibinfo{address}{Virtual Conference}, \bibinfo{pages}{796--809}.
\newblock
\urldef\tempurl%
\url{https://doi.org/10.1137/1.9781611976465.50}
\showDOI{\tempurl}


\bibitem[Cohen and Tamassia(1997)]%
        {CohenT97}
\bibfield{author}{\bibinfo{person}{Robert~F. Cohen} {and} \bibinfo{person}{Roberto Tamassia}.} \bibinfo{year}{1997}\natexlab{}.
\newblock \showarticletitle{Combine and Conquer}.
\newblock \bibinfo{journal}{\emph{Algorithmica}} \bibinfo{volume}{18}, \bibinfo{number}{3} (\bibinfo{year}{1997}), \bibinfo{pages}{324--362}.
\newblock
\urldef\tempurl%
\url{https://doi.org/10.1007/PL00009160}
\showDOI{\tempurl}


\bibitem[Courcelle(1990)]%
        {Courcelle90}
\bibfield{author}{\bibinfo{person}{Bruno Courcelle}.} \bibinfo{year}{1990}\natexlab{}.
\newblock \showarticletitle{The {M}onadic {S}econd-{O}rder Logic of Graphs. {I}. {R}ecognizable Sets of Finite Graphs}.
\newblock \bibinfo{journal}{\emph{Inf. Comput.}} \bibinfo{volume}{85}, \bibinfo{number}{1} (\bibinfo{year}{1990}), \bibinfo{pages}{12--75}.
\newblock
\urldef\tempurl%
\url{https://doi.org/10.1016/0890-5401(90)90043-H}
\showDOI{\tempurl}


\bibitem[Courcelle and Engelfriet(2012)]%
        {CEbook}
\bibfield{author}{\bibinfo{person}{Bruno Courcelle} {and} \bibinfo{person}{Joost Engelfriet}.} \bibinfo{year}{2012}\natexlab{}.
\newblock \bibinfo{booktitle}{\emph{Graph Structure and {M}onadic {S}econd-{O}rder Logic --- {A} Language-Theoretic Approach}}. \bibinfo{series}{Encyclopedia of mathematics and its applications}, Vol.~\bibinfo{volume}{138}.
\newblock \bibinfo{publisher}{Cambridge University Press}.
\newblock
\showISBNx{978-0-521-89833-1}
\urldef\tempurl%
\url{http://www.cambridge.org/fr/knowledge/isbn/item5758776/?site\_locale=fr\_FR}
\showURL{%
\tempurl}


\bibitem[Cygan et~al\mbox{.}(2015)]%
        {platypus}
\bibfield{author}{\bibinfo{person}{Marek Cygan}, \bibinfo{person}{Fedor~V. Fomin}, \bibinfo{person}{\L{}ukasz Kowalik}, \bibinfo{person}{Daniel Lokshtanov}, \bibinfo{person}{D{\'{a}}niel Marx}, \bibinfo{person}{Marcin Pilipczuk}, \bibinfo{person}{Micha\l{} Pilipczuk}, {and} \bibinfo{person}{Saket Saurabh}.} \bibinfo{year}{2015}\natexlab{}.
\newblock \bibinfo{booktitle}{\emph{Parameterized Algorithms}}.
\newblock \bibinfo{publisher}{Springer}.
\newblock
\showISBNx{978-3-319-21274-6}
\urldef\tempurl%
\url{https://doi.org/10.1007/978-3-319-21275-3}
\showDOI{\tempurl}


\bibitem[Dvo{\v{r}}{\'a}k et~al\mbox{.}(2014)]%
        {DvorakKT14}
\bibfield{author}{\bibinfo{person}{Zden{\v{e}}k Dvo{\v{r}}{\'a}k}, \bibinfo{person}{Martin Kupec}, {and} \bibinfo{person}{Vojt{\v{e}}ch T{\r{u}}ma}.} \bibinfo{year}{2014}\natexlab{}.
\newblock \showarticletitle{A Dynamic Data Structure for {MSO} Properties in Graphs with Bounded Tree-Depth}. In \bibinfo{booktitle}{\emph{Proceedings of the 22$^\textrm{nd}$ Annual European Symposium on Algorithms, {ESA} 2014}} \emph{(\bibinfo{series}{Lecture Notes in Computer Science}, Vol.~\bibinfo{volume}{8737})}. \bibinfo{publisher}{Springer}, \bibinfo{address}{Wrocław, Poland}, \bibinfo{pages}{334--345}.
\newblock
\urldef\tempurl%
\url{https://doi.org/10.1007/978-3-662-44777-2\_28}
\showDOI{\tempurl}


\bibitem[Dvo\v{r}\'{a}k and T\r{u}ma(2013)]%
        {DvorakT13}
\bibfield{author}{\bibinfo{person}{Zden\v{e}k Dvo\v{r}\'{a}k} {and} \bibinfo{person}{Vojt\v{e}ch T\r{u}ma}.} \bibinfo{year}{2013}\natexlab{}.
\newblock \showarticletitle{A Dynamic Data Structure for Counting Subgraphs in Sparse Graphs}. In \bibinfo{booktitle}{\emph{Proceedings of the 13$^{\textrm{th}}$ International Symposium on Algorithms and Data Structures, {WADS} 2013}} \emph{(\bibinfo{series}{Lecture Notes in Computer Science}, Vol.~\bibinfo{volume}{8037})}. \bibinfo{publisher}{Springer}, \bibinfo{address}{London, ON, Canada}, \bibinfo{pages}{304--315}.
\newblock
\urldef\tempurl%
\url{https://doi.org/10.1007/978-3-642-40104-6\_27}
\showDOI{\tempurl}


\bibitem[Eppstein et~al\mbox{.}(1996)]%
        {EppsteinGIS96}
\bibfield{author}{\bibinfo{person}{David Eppstein}, \bibinfo{person}{Zvi Galil}, \bibinfo{person}{Giuseppe~F. Italiano}, {and} \bibinfo{person}{Thomas~H. Spencer}.} \bibinfo{year}{1996}\natexlab{}.
\newblock \showarticletitle{Separator Based Sparsification. {I}. {P}lanary Testing and Minimum Spanning Trees}.
\newblock \bibinfo{journal}{\emph{J. Comput. Syst. Sci.}} \bibinfo{volume}{52}, \bibinfo{number}{1} (\bibinfo{year}{1996}), \bibinfo{pages}{3--27}.
\newblock
\urldef\tempurl%
\url{https://doi.org/10.1006/jcss.1996.0002}
\showDOI{\tempurl}


\bibitem[Erd{\H{o}}s and P{\'o}sa(1962)]%
        {erdosPosa}
\bibfield{author}{\bibinfo{person}{Paul Erd{\H{o}}s} {and} \bibinfo{person}{Lajos P{\'o}sa}.} \bibinfo{year}{1962}\natexlab{}.
\newblock \showarticletitle{On the maximal number of disjoint circuits of a graph}.
\newblock \bibinfo{journal}{\emph{Publ. Math. Debrecen}}  \bibinfo{volume}{9} (\bibinfo{year}{1962}), \bibinfo{pages}{3--12}.
\newblock


\bibitem[Frederickson(1985)]%
        {DBLP:journals/siamcomp/Frederickson85}
\bibfield{author}{\bibinfo{person}{Greg~N. Frederickson}.} \bibinfo{year}{1985}\natexlab{}.
\newblock \showarticletitle{Data Structures for On-Line Updating of Minimum Spanning Trees, with Applications}.
\newblock \bibinfo{journal}{\emph{{SIAM} J. Comput.}} \bibinfo{volume}{14}, \bibinfo{number}{4} (\bibinfo{year}{1985}), \bibinfo{pages}{781--798}.
\newblock
\urldef\tempurl%
\url{https://doi.org/10.1137/0214055}
\showDOI{\tempurl}


\bibitem[Frederickson(1997a)]%
        {DBLP:journals/siamcomp/Frederickson97}
\bibfield{author}{\bibinfo{person}{Greg~N. Frederickson}.} \bibinfo{year}{1997}\natexlab{a}.
\newblock \showarticletitle{Ambivalent Data Structures for Dynamic 2-Edge-Connectivity and k Smallest Spanning Trees}.
\newblock \bibinfo{journal}{\emph{{SIAM} J. Comput.}} \bibinfo{volume}{26}, \bibinfo{number}{2} (\bibinfo{year}{1997}), \bibinfo{pages}{484--538}.
\newblock
\urldef\tempurl%
\url{https://doi.org/10.1137/S0097539792226825}
\showDOI{\tempurl}


\bibitem[Frederickson(1997b)]%
        {DBLP:journals/jal/Frederickson97}
\bibfield{author}{\bibinfo{person}{Greg~N. Frederickson}.} \bibinfo{year}{1997}\natexlab{b}.
\newblock \showarticletitle{A Data Structure for Dynamically Maintaining Rooted Trees}.
\newblock \bibinfo{journal}{\emph{J. Algorithms}} \bibinfo{volume}{24}, \bibinfo{number}{1} (\bibinfo{year}{1997}), \bibinfo{pages}{37--65}.
\newblock
\urldef\tempurl%
\url{https://doi.org/10.1006/JAGM.1996.0835}
\showDOI{\tempurl}


\bibitem[Frederickson(1998)]%
        {Frederickson98}
\bibfield{author}{\bibinfo{person}{Greg~N. Frederickson}.} \bibinfo{year}{1998}\natexlab{}.
\newblock \showarticletitle{Maintaining Regular Properties Dynamically in \emph{k}-Terminal Graphs}.
\newblock \bibinfo{journal}{\emph{Algorithmica}} \bibinfo{volume}{22}, \bibinfo{number}{3} (\bibinfo{year}{1998}), \bibinfo{pages}{330--350}.
\newblock
\urldef\tempurl%
\url{https://doi.org/10.1007/PL00009227}
\showDOI{\tempurl}


\bibitem[Giannopoulou et~al\mbox{.}(2016)]%
        {GiannopoulouLSS16}
\bibfield{author}{\bibinfo{person}{Archontia~C. Giannopoulou}, \bibinfo{person}{Daniel Lokshtanov}, \bibinfo{person}{Saket Saurabh}, {and} \bibinfo{person}{Ondrej Such{\'{y}}}.} \bibinfo{year}{2016}\natexlab{}.
\newblock \showarticletitle{Tree Deletion Set Has a Polynomial Kernel but No ${OPT}^{{O}(1)}$ Approximation}.
\newblock \bibinfo{journal}{\emph{{SIAM} J. Discret. Math.}} \bibinfo{volume}{30}, \bibinfo{number}{3} (\bibinfo{year}{2016}), \bibinfo{pages}{1371--1384}.
\newblock
\urldef\tempurl%
\url{https://doi.org/10.1137/15M1038876}
\showDOI{\tempurl}


\bibitem[Grohe and Kreutzer(2009)]%
        {GroheK09}
\bibfield{author}{\bibinfo{person}{Martin Grohe} {and} \bibinfo{person}{Stephan Kreutzer}.} \bibinfo{year}{2009}\natexlab{}.
\newblock \showarticletitle{Methods for Algorithmic Meta Theorems}. In \bibinfo{booktitle}{\emph{Model Theoretic Methods in Finite Combinatorics --- {AMS-ASL} Joint Special Session}} \emph{(\bibinfo{series}{Contemporary Mathematics}, Vol.~\bibinfo{volume}{558})}, \bibfield{editor}{\bibinfo{person}{Martin Grohe} {and} \bibinfo{person}{Johann~A. Makowsky}} (Eds.). \bibinfo{publisher}{American Mathematical Society}, \bibinfo{pages}{181--206}.
\newblock


\bibitem[Immerman(1999)]%
        {Immerman99}
\bibfield{author}{\bibinfo{person}{Neil Immerman}.} \bibinfo{year}{1999}\natexlab{}.
\newblock \bibinfo{booktitle}{\emph{Descriptive complexity}}.
\newblock \bibinfo{publisher}{Springer}.
\newblock
\showISBNx{978-1-4612-6809-3}
\urldef\tempurl%
\url{https://doi.org/10.1007/978-1-4612-0539-5}
\showDOI{\tempurl}


\bibitem[Iwata(2017)]%
        {Iwata17}
\bibfield{author}{\bibinfo{person}{Yoichi Iwata}.} \bibinfo{year}{2017}\natexlab{}.
\newblock \showarticletitle{Linear-Time Kernelization for Feedback Vertex Set}. In \bibinfo{booktitle}{\emph{44th International Colloquium on Automata, Languages, and Programming, {ICALP} 2017}} \emph{(\bibinfo{series}{LIPIcs}, Vol.~\bibinfo{volume}{80})}. \bibinfo{publisher}{Schloss Dagstuhl --- Leibniz-Zentrum f{\"{u}}r Informatik}, \bibinfo{address}{Warsaw, Poland}, \bibinfo{pages}{68:1--68:14}.
\newblock
\urldef\tempurl%
\url{https://doi.org/10.4230/LIPIcs.ICALP.2017.68}
\showDOI{\tempurl}


\bibitem[Iwata and Oka(2014)]%
        {IwataO14}
\bibfield{author}{\bibinfo{person}{Yoichi Iwata} {and} \bibinfo{person}{Keigo Oka}.} \bibinfo{year}{2014}\natexlab{}.
\newblock \showarticletitle{Fast Dynamic Graph Algorithms for Parameterized Problems}. In \bibinfo{booktitle}{\emph{Proceedings of the 14$^{\textrm{th}}$ Scandinavian Symposium and Workshops on Algorithm Theory, {SWAT} 2014}} \emph{(\bibinfo{series}{Lecture Notes in Computer Science}, Vol.~\bibinfo{volume}{8503})}. \bibinfo{publisher}{Springer}, \bibinfo{address}{Copenhagen, Denmark}, \bibinfo{pages}{241--252}.
\newblock
\urldef\tempurl%
\url{https://doi.org/10.1007/978-3-319-08404-6\_21}
\showDOI{\tempurl}


\bibitem[Kammer and Sajenko(2020)]%
        {KammerS20}
\bibfield{author}{\bibinfo{person}{Frank Kammer} {and} \bibinfo{person}{Andrej Sajenko}.} \bibinfo{year}{2020}\natexlab{}.
\newblock \showarticletitle{{FPT}-space Graph Kernelizations}.
\newblock \bibinfo{journal}{\emph{CoRR}}  \bibinfo{volume}{abs/2007.11643} (\bibinfo{year}{2020}).
\newblock
\showeprint[arxiv]{2007.11643}
\urldef\tempurl%
\url{https://arxiv.org/abs/2007.11643}
\showURL{%
\tempurl}


\bibitem[Korhonen et~al\mbox{.}(2023)]%
        {DBLP:conf/focs/KorhonenMNP023}
\bibfield{author}{\bibinfo{person}{Tuukka Korhonen}, \bibinfo{person}{Konrad Majewski}, \bibinfo{person}{Wojciech Nadara}, \bibinfo{person}{Micha{ł} Pilipczuk}, {and} \bibinfo{person}{Marek Soko{ł}owski}.} \bibinfo{year}{2023}\natexlab{}.
\newblock \showarticletitle{Dynamic treewidth}. In \bibinfo{booktitle}{\emph{64th {IEEE} Annual Symposium on Foundations of Computer Science, {FOCS} 2023, Santa Cruz, CA, USA, November 6-9, 2023}}. \bibinfo{publisher}{{IEEE}}, \bibinfo{address}{Santa Cruz, CA, USA}, \bibinfo{pages}{1734--1744}.
\newblock
\urldef\tempurl%
\url{https://doi.org/10.1109/FOCS57990.2023.00105}
\showDOI{\tempurl}


\bibitem[Kosaraju and Pop(1998)]%
        {DBLP:conf/cocoon/KosarajuP98}
\bibfield{author}{\bibinfo{person}{S.~Rao Kosaraju} {and} \bibinfo{person}{Mihai Pop}.} \bibinfo{year}{1998}\natexlab{}.
\newblock \showarticletitle{De-amortization of Algorithms}. In \bibinfo{booktitle}{\emph{Computing and Combinatorics, 4th Annual International Conference, {COCOON} '98, Taipei, Taiwan, R.o.C., August 12-14, 1998, Proceedings}} \emph{(\bibinfo{series}{Lecture Notes in Computer Science}, Vol.~\bibinfo{volume}{1449})}, \bibfield{editor}{\bibinfo{person}{Wen{-}Lian Hsu} {and} \bibinfo{person}{Ming{-}Yang Kao}} (Eds.). \bibinfo{publisher}{Springer}, \bibinfo{address}{Taipei, Taiwan, R.o.C.}, \bibinfo{pages}{4--14}.
\newblock
\urldef\tempurl%
\url{https://doi.org/10.1007/3-540-68535-9\_4}
\showDOI{\tempurl}


\bibitem[Li and Pilipczuk(2020)]%
        {LiP20}
\bibfield{author}{\bibinfo{person}{Shaohua Li} {and} \bibinfo{person}{Marcin Pilipczuk}.} \bibinfo{year}{2020}\natexlab{}.
\newblock \showarticletitle{An Improved {FPT} Algorithm for Independent Feedback Vertex Set}.
\newblock \bibinfo{journal}{\emph{Theory Comput. Syst.}} \bibinfo{volume}{64}, \bibinfo{number}{8} (\bibinfo{year}{2020}), \bibinfo{pages}{1317--1330}.
\newblock
\urldef\tempurl%
\url{https://doi.org/10.1007/s00224-020-09973-w}
\showDOI{\tempurl}


\bibitem[Majewski et~al\mbox{.}(2023)]%
        {DBLP:conf/stacs/MajewskiPS23}
\bibfield{author}{\bibinfo{person}{Konrad Majewski}, \bibinfo{person}{Micha{ł} Pilipczuk}, {and} \bibinfo{person}{Marek Soko{ł}owski}.} \bibinfo{year}{2023}\natexlab{}.
\newblock \showarticletitle{Maintaining {$\mathsf{CMSO}_2$} Properties on Dynamic Structures with Bounded Feedback Vertex Number}. In \bibinfo{booktitle}{\emph{40th International Symposium on Theoretical Aspects of Computer Science, {STACS} 2023, March 7-9, 2023, Hamburg, Germany}} \emph{(\bibinfo{series}{LIPIcs}, Vol.~\bibinfo{volume}{254})}, \bibfield{editor}{\bibinfo{person}{Petra Berenbrink}, \bibinfo{person}{Patricia Bouyer}, \bibinfo{person}{Anuj Dawar}, {and} \bibinfo{person}{Mamadou~Moustapha Kant{\'{e}}}} (Eds.). \bibinfo{publisher}{Schloss Dagstuhl - Leibniz-Zentrum f{\"{u}}r Informatik}, \bibinfo{address}{Hamburg, Germany}, \bibinfo{pages}{46:1--46:13}.
\newblock
\urldef\tempurl%
\url{https://doi.org/10.4230/LIPICS.STACS.2023.46}
\showDOI{\tempurl}


\bibitem[Makowsky(2004)]%
        {Makowsky04}
\bibfield{author}{\bibinfo{person}{Johann~A. Makowsky}.} \bibinfo{year}{2004}\natexlab{}.
\newblock \showarticletitle{Algorithmic uses of the {F}eferman-{V}aught Theorem}.
\newblock \bibinfo{journal}{\emph{Ann. Pure Appl. Log.}} \bibinfo{volume}{126}, \bibinfo{number}{1-3} (\bibinfo{year}{2004}), \bibinfo{pages}{159--213}.
\newblock
\urldef\tempurl%
\url{https://doi.org/10.1016/j.apal.2003.11.002}
\showDOI{\tempurl}


\bibitem[Miller and Reif(1989)]%
        {DBLP:journals/acr/MillerR89}
\bibfield{author}{\bibinfo{person}{Gary~L. Miller} {and} \bibinfo{person}{John~H. Reif}.} \bibinfo{year}{1989}\natexlab{}.
\newblock \showarticletitle{Parallel {T}ree {C}ontraction. {P}art 1: {F}undamentals}.
\newblock \bibinfo{journal}{\emph{Adv. Comput. Res.}}  \bibinfo{volume}{5} (\bibinfo{year}{1989}), \bibinfo{pages}{47--72}.
\newblock


\bibitem[Misra et~al\mbox{.}(2012a)]%
        {MisraPRS12}
\bibfield{author}{\bibinfo{person}{Neeldhara Misra}, \bibinfo{person}{Geevarghese Philip}, \bibinfo{person}{Venkatesh Raman}, {and} \bibinfo{person}{Saket Saurabh}.} \bibinfo{year}{2012}\natexlab{a}.
\newblock \showarticletitle{On Parameterized Independent Feedback Vertex Set}.
\newblock \bibinfo{journal}{\emph{Theor. Comput. Sci.}}  \bibinfo{volume}{461} (\bibinfo{year}{2012}), \bibinfo{pages}{65--75}.
\newblock
\urldef\tempurl%
\url{https://doi.org/10.1016/j.tcs.2012.02.012}
\showDOI{\tempurl}


\bibitem[Misra et~al\mbox{.}(2012b)]%
        {MisraConnected}
\bibfield{author}{\bibinfo{person}{Neeldhara Misra}, \bibinfo{person}{Geevarghese Philip}, \bibinfo{person}{Venkatesh Raman}, \bibinfo{person}{Saket Saurabh}, {and} \bibinfo{person}{Somnath Sikdar}.} \bibinfo{year}{2012}\natexlab{b}.
\newblock \showarticletitle{{FPT} algorithms for Connected Feedback Vertex Set}.
\newblock \bibinfo{journal}{\emph{J. Comb. Optim.}} \bibinfo{volume}{24}, \bibinfo{number}{2} (\bibinfo{year}{2012}), \bibinfo{pages}{131--146}.
\newblock
\urldef\tempurl%
\url{https://doi.org/10.1007/s10878-011-9394-2}
\showDOI{\tempurl}


\bibitem[Niewerth(2018)]%
        {Niewerth18}
\bibfield{author}{\bibinfo{person}{Matthias Niewerth}.} \bibinfo{year}{2018}\natexlab{}.
\newblock \showarticletitle{{MSO} Queries on Trees: {E}numerating Answers under Updates Using Forest Algebras}. In \bibinfo{booktitle}{\emph{Proceedings of the 33rd Annual {ACM/IEEE} Symposium on Logic in Computer Science, {LICS} 2018}}. \bibinfo{publisher}{{ACM}}, \bibinfo{address}{Oxford, England}, \bibinfo{pages}{769--778}.
\newblock
\urldef\tempurl%
\url{https://doi.org/10.1145/3209108.3209144}
\showDOI{\tempurl}


\bibitem[Overmars(1981)]%
        {DBLP:conf/wg/Overmars81}
\bibfield{author}{\bibinfo{person}{Mark~H. Overmars}.} \bibinfo{year}{1981}\natexlab{}.
\newblock \showarticletitle{Transforming Semi-Dynamic Data Structures into Dynamic Structures}. In \bibinfo{booktitle}{\emph{Proceedings of the 7th Conference Graphtheoretic Concepts in Computer Science {(WG} '81), Linz, Austria, June 15-17, 1981}}, \bibfield{editor}{\bibinfo{person}{J{\"{o}}rg~R. M{\"{u}}hlbacher}} (Ed.). \bibinfo{publisher}{Hanser, Munich}, \bibinfo{address}{Linz, Austria}, \bibinfo{pages}{173--182}.
\newblock


\bibitem[Overmars(1983)]%
        {DBLP:books/sp/Overmars83}
\bibfield{author}{\bibinfo{person}{Mark~H. Overmars}.} \bibinfo{year}{1983}\natexlab{}.
\newblock \bibinfo{booktitle}{\emph{The Design of Dynamic Data Structures}}. \bibinfo{series}{Lecture Notes in Computer Science}, Vol.~\bibinfo{volume}{156}.
\newblock \bibinfo{publisher}{Springer}.
\newblock
\showISBNx{3-540-12330-X}
\urldef\tempurl%
\url{https://doi.org/10.1007/BFB0014927}
\showDOI{\tempurl}


\bibitem[Overmars and van Leeuwen(1981)]%
        {DBLP:journals/ipl/OvermarsL81a}
\bibfield{author}{\bibinfo{person}{Mark~H. Overmars} {and} \bibinfo{person}{Jan van Leeuwen}.} \bibinfo{year}{1981}\natexlab{}.
\newblock \showarticletitle{Worst-Case Optimal Insertion and Deletion Methods for Decomposable Searching Problems}.
\newblock \bibinfo{journal}{\emph{Inf. Process. Lett.}} \bibinfo{volume}{12}, \bibinfo{number}{4} (\bibinfo{year}{1981}), \bibinfo{pages}{168--173}.
\newblock
\urldef\tempurl%
\url{https://doi.org/10.1016/0020-0190(81)90093-4}
\showDOI{\tempurl}


\bibitem[P\v{a}tra\c{s}cu and Demaine(2004)]%
        {patrascu_lower_2004}
\bibfield{author}{\bibinfo{person}{Mihai P\v{a}tra\c{s}cu} {and} \bibinfo{person}{Erik~D. Demaine}.} \bibinfo{year}{2004}\natexlab{}.
\newblock \showarticletitle{Lower Bounds for Dynamic Connectivity}. In \bibinfo{booktitle}{\emph{Proceedings of the 36$^{\textrm{th}}$ Annual ACM Symposium on Theory of Computing, STOC 2004}}. \bibinfo{publisher}{ACM}, \bibinfo{address}{Chicago, IL, USA}, \bibinfo{pages}{546--553}.
\newblock
\showISBNx{1581138520}


\bibitem[Raman et~al\mbox{.}(2013)]%
        {RamanSS13}
\bibfield{author}{\bibinfo{person}{Venkatesh Raman}, \bibinfo{person}{Saket Saurabh}, {and} \bibinfo{person}{Ondrej Such{\'{y}}}.} \bibinfo{year}{2013}\natexlab{}.
\newblock \showarticletitle{An {FPT} algorithm for Tree Deletion Set}.
\newblock \bibinfo{journal}{\emph{J. Graph Algorithms Appl.}} \bibinfo{volume}{17}, \bibinfo{number}{6} (\bibinfo{year}{2013}), \bibinfo{pages}{615--628}.
\newblock
\urldef\tempurl%
\url{https://doi.org/10.7155/jgaa.00308}
\showDOI{\tempurl}


\bibitem[Seese(1991)]%
        {Seese91}
\bibfield{author}{\bibinfo{person}{Detlef Seese}.} \bibinfo{year}{1991}\natexlab{}.
\newblock \showarticletitle{The Structure of Models of Decidable Monadic Theories of Graphs}.
\newblock \bibinfo{journal}{\emph{Ann. Pure Appl. Log.}} \bibinfo{volume}{53}, \bibinfo{number}{2} (\bibinfo{year}{1991}), \bibinfo{pages}{169--195}.
\newblock
\urldef\tempurl%
\url{https://doi.org/10.1016/0168-0072(91)90054-P}
\showDOI{\tempurl}


\bibitem[Sleator and Tarjan(1983)]%
        {SleatorT83}
\bibfield{author}{\bibinfo{person}{Daniel~Dominic Sleator} {and} \bibinfo{person}{Robert~Endre Tarjan}.} \bibinfo{year}{1983}\natexlab{}.
\newblock \showarticletitle{A Data Structure for Dynamic Trees}.
\newblock \bibinfo{journal}{\emph{J. Comput. Syst. Sci.}} \bibinfo{volume}{26}, \bibinfo{number}{3} (\bibinfo{year}{1983}), \bibinfo{pages}{362--391}.
\newblock
\urldef\tempurl%
\url{https://doi.org/10.1016/0022-0000(83)90006-5}
\showDOI{\tempurl}


\end{thebibliography}

\appendix
\section{Proof of the Replacement Lemma}\label{sec:replacement-proof}

In this section we provide a proof of the Replacement Lemma (Lemma~\ref{lem:replacement_lemma}).
As mentioned, the proof is based on an application of Ehrenfeucht-Fra\"isse games (EF games, for short). Therefore, we need to first recall those games for the particular variant of the Monadic Second-Order logic that we work with -- $\msotwo$.

\paragraph*{Ehrenfeucht-Fra\"isse games.} Fix a binary signature $\Sigma$. The EF game is played on a pair of boundaried $\Sigma$-structures, $\Af$ and $\Bf$, both with the same boundary $D$. There are two players: Spoiler and Duplicator. Also, there is a parameter $q\in \N$, which is the length of the game. The game proceeds in $q$ rounds, where every round is as follows.

First, the Spoiler chooses either structure $\Af$ or $\Bf$ and makes a move in the chosen structure. There are four possible types of moves, corresponding to four possible types of quantification. These are:
\begin{itemize}
 \item Choose a vertex $u$.
 \item Choose an arc $f$.
 \item Choose a set of vertices $U$.
 \item Choose a set of arcs $F$.
\end{itemize}
In the third type, this can be any subset of vertices of the structure in which the Spoiler plays. In the fourth type, we require that the played set is a subset of all the arcs present in this structure. The first two types of moves are {\em{individual moves}}, and the last two types are {\em{monadic moves}}.

Then, the Duplicator needs to reply in the other structure: $\Bf$ or $\Af$, depending on whether Spoiler played in~$\Af$ or in $\Bf$. The Duplicator replies with a move of the same type as the Spoiler chose. 

Thus, every round results in selecting a {\em{matched pair}} of features, one from $\Af$ and second in~$\Bf$, and one chosen by Spoiler and second chosen by Duplicator. When denoting such matched pairs, we use the convention that the first coordinate is selected in $\Af$ and the second is selected in $\Bf$. The pairs come in four types, named naturally.

The game proceeds in this way for $q$ rounds and at the end, the play is evaluated to determine whether the Duplicator won. The winning condition for the Duplicator is that every atomic formula (with modulus at most $q$, if present) that involves the elements of $D$ and features introduced throughout the play holds in $\Af$ if and only if it holds in $\Bf$. More precisely,  it is the conjunction of the following checks:
\begin{itemize}
 \item $\Af$ and $\Bf$ satisfy the same nullary predicates.
 \item Every element of $D$ satisfies the same unary predicates in $\Af$ and in $\Bf$.
 \item For every matched pair of individual moves $(x,y)$, $x$ and $y$ satisfy the same predicates (unary or binary, depending on whether $x$ is a vertex or an arc). 
 \item For every matched pair of monadic moves $(X,Y)$, $X\cap D=Y\cap D$ and $|X|\equiv |Y|\bmod q'$ for every $1\leq q'\leq q$.
 \item For every matched pair of individual moves $(x,y)$ and a matched pair of monadic moves $(X,Y)$, we have $x\in X$ if and only if $y\in Y$.
 \item For every matched pair of individual vertex moves $(u,v)$ and a matched pair of individual arc moves $(e,f)$, $u$ is the head/tail of $e$ if and only if $v$ is the head/tail of $f$.
 \item For every two matched pairs of individual moves $(x,y)$ and $(x',y')$, $x=x'$ if and only if $y=y'$.
\end{itemize}
It is straightforward to check that this model of an EF game exactly corresponds to our definition of $\msotwo$ and the rank, in the sense that the following statement binds equality of types with the existence of a winning condition for Duplicator. The standard proof is left to the reader (see e.g.~\cite[Chapter~6]{Immerman99} for an analogous proof for ${\mathsf{FO}}$).

\begin{lemma}\label{lem:EF}
 Fix $q\in \N$.
 For every pair of boundaried $\Sigma$-structures $\Af$ and $\Bf$, where $\bnd \Af=\bnd \Bf$, we have $\tp^q(\Af)=\tp^q(\Bf)$ if and only if the Duplicator has a winning strategy in the $q$-round EF game on $\Af$ and $\Bf$.
\end{lemma}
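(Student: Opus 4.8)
\textbf{Proof plan for Lemma~\ref{lem:EF} (the EF theorem).}
The statement is the classical Ehrenfeucht--Fra\"iss\'e correspondence, specialized to our variant of $\msotwo$ with four sorts of quantifiers (individual vertex, individual arc, monadic vertex, monadic arc) and modular counting predicates on monadic variables, over boundaried $\Sigma$-structures with a common boundary $D$. The plan is to prove both implications by induction on the number of rounds $q$, strengthened to a statement about formulas with free variables rather than sentences, so that the induction can proceed through individual quantifications. The hard direction (Duplicator wins $\Rightarrow$ equal types) will be carried out in contrapositive form: if the types differ, there is a distinguishing formula of rank $\le q$, and Spoiler uses its outermost quantifier to pick his first move, reducing to a game of length $q-1$ on structures that now differ on a formula of rank $\le q-1$ (with the newly-introduced feature treated as a fresh boundary element / parameter). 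The easy direction (Duplicator wins $\Rightarrow$ same truth value of every rank-$\le q$ formula) is the dual: given a rank-$q$ formula $\exists_x \chi$, a winning position for Duplicator after round $1$ gives, by induction, equal truth value of $\chi$ under the matched valuation, and one sums over the witness.

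First I would set up the strengthened induction hypothesis. For a finite set of ``pebbled'' features $\bar{a}$ in $\Af$ matched with $\bar{b}$ in $\Bf$ (so $|\bar a|=|\bar b|$, coordinates of the same sort), say the position $(\Af,\bar a)\equiv_q(\Bf,\bar b)$ if Duplicator wins the $q$-round game from this position, and say $(\Af,\bar a)\cong_q(\Bf,\bar b)$ if $\bar a$ and $\bar b$ satisfy exactly the same $\msotwo[\Sigma]$ formulas of rank $\le q$ with free variables matching the sorts of $\bar a,\bar b$ (where ``rank'' counts quantifier nesting and, for modular checks, the modulus). The base case $q=0$ is exactly the winning condition at the end of the play: the list of atomic checks in the definition of the game is, by inspection, precisely the set of quantifier-free atomic formulas (equality of every sort, membership, head/tail incidence, relational checks $R(\cdot)$, and the modular checks $|X|\equiv a\bmod q'$ for $q'\le q$ — which at rank $0$ are vacuous since a rank-$0$ formula admits no modulus, but are carried along for the inductive step). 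So $(\Af,\bar a)\equiv_0(\Bf,\bar b)$ iff $(\Af,\bar a)\cong_0(\Bf,\bar b)$.

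For the inductive step, assume the equivalence for $q-1$. For $\equiv_q \Rightarrow \cong_q$: take a rank-$\le q$ formula; by the standard normal form (driven by Lemma~\ref{lem:formulas-finite}) it suffices to treat $\varphi=\exists_{x}\chi$ (or its variants $\exists_{f\in\bigvee\Sigma^{(2)}}$, $\exists_X$, $\exists_{F\subseteq\bigvee\Sigma^{(2)}}$; universals and booleans are routine), with $\chi$ of rank $\le q-1$. If $\Af,\bar a\models\varphi$ with witness $c$, let Spoiler play $c$ in $\Af$ in round $1$; Duplicator's winning reply $d$ gives a position $(\Af,\bar a c)\equiv_{q-1}(\Bf,\bar b d)$, hence by induction $(\Af,\bar a c)\cong_{q-1}(\Bf,\bar b d)$, so $\Bf,\bar b d\models\chi$ and thus $\Bf,\bar b\models\varphi$; symmetry gives the converse. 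For $\cong_q\Rightarrow\equiv_q$: we describe Duplicator's strategy for one round. Suppose Spoiler plays, say, a vertex $c$ in $\Af$. Consider the finite set $\mathsf{Sentences}^{q-1,\Sigma}(D\cup\{\text{sorts of }\bar a\}\cup\{x\})$ and form the single rank-$(q-1)$ formula $\tau(x)$ that is the conjunction of all such formulas satisfied by $(\bar a,c)$ together with the negations of the rest — a ``$(q-1)$-type formula'' of $c$ over the pebbles; this is expressible with one existential quantifier over $x$, so $\exists_x\tau(x)$ has rank $\le q$ and holds in $(\Af,\bar a)$. Since $(\Af,\bar a)\cong_q(\Bf,\bar b)$, it holds in $(\Bf,\bar b)$, yielding some $d$ with $(\Af,\bar a c)\cong_{q-1}(\Bf,\bar b d)$; Duplicator plays this $d$. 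By induction Duplicator then wins the remaining $q-1$ rounds. The monadic cases are identical except that $\tau$ ranges over a monadic variable and must also record, for each modulus $q'\le q$, the residue $|X|\bmod q'$ — this is exactly why the modular checks are in the winning condition and why ``rank'' must bound the moduli; finiteness of the relevant type set still holds by Lemma~\ref{lem:formulas-finite}. The arc-quantifier cases additionally use the guardedness of arc quantification, which is automatically respected because any played arc lies in $\bigvee\Sigma^{(2)}$ and Duplicator's reply does too. Finally, applying the strengthened statement with $\bar a=\bar b=\emptyset$ (so the only parameters are the boundary $D$, which is common) gives exactly Lemma~\ref{lem:EF}.

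\textbf{Main obstacle.} The only genuinely delicate point is bookkeeping around the modular counting predicates: one must be careful that the notion of ``rank'' simultaneously bounds quantifier nesting \emph{and} all moduli, that the winning condition of the game records $|X|\equiv|Y|\bmod q'$ for \emph{all} $q'\le q$ (not just the moduli literally appearing), and that the ``type formula'' $\tau$ used in Duplicator's strategy is genuinely expressible at rank $q-1$ despite having to pin down residues modulo every $q'\le q-1$. Once this is organized — which is why the finiteness of $\mathsf{Sentences}^{q,\Sigma}(D)$ from Lemma~\ref{lem:formulas-finite} is invoked — the rest is the textbook EF argument, and I would simply cite~\cite[Chapter~6]{Immerman99} for the $\mathsf{FO}$ skeleton and indicate the four-sort and modular modifications as above.
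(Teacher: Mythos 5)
Your overall skeleton is the standard EF induction that the paper itself leaves to the reader (it only cites Immerman's $\mathsf{FO}$ treatment), so the approach is the intended one; the problem lies in the one place you yourself flag as delicate, the modular counting, where the bookkeeping as you set it up would fail. Your strengthened relation $\cong_r$ is agreement on formulas of \emph{rank} at most $r$, and the paper's rank bounds the moduli as well as the quantifier nesting; consequently your round-$(i{+}1)$ type formula $\tau$ only records residues modulo $q'\le q-i-1$. But the winning condition of the $q$-round game demands $|X|\equiv|Y|\bmod q'$ for \emph{every} $q'\le q$ for every matched monadic pair, including pairs played late in the game; in the extreme case of the last round your $\tau$ is a rank-$0$ type and records no residues at all, so the reply it produces need not even match cardinalities modulo $2$, and the terminal check fails. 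Moreover, your ``main obstacle'' paragraph proposes pinning residues modulo every $q'\le q-1$, which is still short of the required bound $q'\le q$, and a formula pinning a residue modulo $q$ has rank $q$, so it genuinely cannot be folded into a rank-$(q-1)$ type formula; the gap is not closed by the remark as stated.

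The standard repair is to decouple the two parameters: run the induction on quantifier nesting depth only, keeping the modulus bound fixed at $q$ throughout. Concretely, the invariant after $i$ rounds should be agreement (over the common boundary $D$ and the pebbles) on all formulas with nesting depth at most $q-i$ and all moduli at most $q$; the type formula $\tau$ then has nesting at most $q-i-1$ and moduli at most $q$, and prefixing a single quantifier yields nesting at most $q-i\le q$ and moduli at most $q$, i.e.\ rank at most $q$ --- exactly what equality of $\tp^q$ (respectively, the invariant at the intermediate round) decides, and whose finiteness up to equivalence is what Lemma~\ref{lem:formulas-finite} supplies. With this two-parameter invariant the base case also matches the game's terminal condition exactly (atomic agreement together with congruences modulo all $q'\le q$), and the rest of your argument --- the easy direction, the four quantifier sorts, and the guardedness of arc quantification --- goes through as you describe.
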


\paragraph*{Setup.} We now prove the existential part of the Replacement Lemma, that is, the existence of the function $\Infer$ for a large enough constant $p$, depending on $\Sigma$ and~$q$. We will later argue that both $p$ and $\Infer$ can be computed from $\Sigma$ and $q$.

To prove the existence of a suitable mapping $\Infer$, it suffices to argue that for a large enough constant $p$, we have the following assertion: if $\Xx$ and $\Yy$ are $\Sigma$-ensembles then
\begin{equation}\label{eq:types-EF}
\tp^p(\Contract^p(\Xx))=\tp^p(\Contract^p(\Yy))\quad\textrm{implies}\quad\tp^q(\Smash(\Xx))=\tp^q(\Smash(\Yy)).
\end{equation}
Let 
\begin{align*}
& \Af\coloneqq \Smash(\Xx), & & \Bf\coloneqq \Smash(\Yy),\\
& \wh{\Af}\coloneqq \Contract^p(\Xx), & & \wh{\Bf}\coloneqq \Contract^p(\Yy),
\end{align*}
where $p$ will be chosen later. By Lemma~\ref{lem:EF}, to prove~\eqref{eq:types-EF} it suffices to argue that if the Duplicator has a winning strategy in the $p$-round EF game on $\wh{\Af}$ and $\wh{\Bf}$, then she also has a winning strategy in the $q$-round EF game on $\Af$ and $\Bf$, provided we choose $p$ large enough.

\newcommand{\Kk}{{\cal K}}
\newcommand{\Ll}{{\cal L}}
\newcommand{\Gm}{{\frak G}}

Let us introduce some notation and simplifying assumptions. By $A(\Af)$ we denote the set of arcs in $\Af$, that is, all pairs $(u,v)\in V(\Af)^2$ that appear in any relation in $\Af$; similarly for other structures.
Denote
\begin{align*}
& C\coloneqq \bigcup_{\Gf\in \Xx} \bnd \Gf=V(\wh{\Af}), & & D\coloneqq \bigcup_{\Hf\in \Yy} \bnd \Hf=V(\wh{\Bf}).
\end{align*}
Further, let $\Kk$ be the set of all subsets $X\subseteq C$ for which there exists $\Gf\in \Xx$ with $X=\bnd \Gf$; note that $\Kk$ consists of sets of size at most $2$. Define $\Ll$ for the ensemble $\Yy$ analogously.

We assume without loss of generality the following assertion: for every $X\in \Kk$ there exists exactly one element $\Gf_X\in \Xx$ with $\bnd \Gf_X=X$. Indeed, if there are multiple such elements, then we can replace them in $\Xx$ with their join; this changes neither $\Smash(\Xx)$ nor $\Contract^p(\Xx)$. We also make the analogous assertion about the elements of $\Ll$ and the ensemble $\Yy$. We extend notation $\Gf_X$ to allow single vertices and arcs in the subscript, treating them in this case as unordered sets of vertices.
Also, by adding to $\Xx$ and $\Yy$ trivial one-element structures with empty relations, we may assume that $\Kk$ contains all singleton sets $\{u\}$ for $u\in C$, and the same for $\Ll$. Similarly, by adding empty structures to $\Xx$ and $\Yy$ if necessary, we may assume that $\emptyset\in \Kk$ and $\emptyset\in \Ll$. As a side note, these assertions actually do hold without any modifications needed in all applications of the Replacement Lemma in this paper.

For convenience of description we define a mapping $\xi$ that maps features (vertices and arcs) in $\Af$ to features in $\wh{\Af}$ as follows. Consider any vertex $u\in V(\Af)$. If $u\in C$, then $\xi(u)\coloneqq u$. Otherwise, there exists a unique $\Gf\in \Xx$ that contains $u$. Let $X=\bnd \Gf$; by the assumption of previous paragraph, $X\in \Kk$ and $\Gf=\Gf_X$. Then we set $\xi(u)$ to be $X$ ordered naturally (by the ordering on $\Omega=\N$). Note that thus, $\xi(u)$ is either an arc, or a vertex, or the empty tuple. Similarly, for any arc $f\in A(\Af)$, we take the unique $\Gf\in \Xx$ that contains $f$ and set $\xi(f)$ to be $\bnd \Gf$ ordered naturally. The mapping is extended to features of $\Bf$ (mapped to features in $\wh{\Bf}$) as expected.


Finally, we define the extended signature $\wt{\Sigma}$ by adding to $\Sigma$ $q$ fresh unary predicates $X_1,\ldots,X_q$ and $q$ fresh binary predicates $F_1,\ldots,F_q$. We set
$$p\coloneqq q\cdot (2|\Types^{q+1,\wt{\Sigma}}([2q+2])|+1)+1.$$
Also, we let $\wt{\Sigma}_i$ be the subset of $\wt{\Sigma}$ where only predicates $X_1,\ldots,X_i$ and $F_1,\ldots,F_i$ are added.

\paragraph*{Designing the strategy: general principles.}
Let $\Gm$ be the $q$-round EF game played on $\Af$ and $\Bf$, and $\wh{\Gm}$ be the $p$-round EF game played on $\wh{\Af}$ and $\wh{\Bf}$. Our goal is to design a strategy for Duplicator in the game $\Gm$, provided she has a strategy in the game $\wh{\Gm}$. 

While playing $\Gm$, the Duplicator simulates in her head a play in the game $\wh{\Gm}$; the choice of moves in the latter game will guide the choice of moves in the former. More precisely, when choosing a move in $\Gm$, the Duplicator will always apply the following general principle.
\begin{itemize}[nosep]
 \item Suppose Spoiler makes some move $m$ in $\Gm$.
 \item The Duplicator translates $m$ into a batch $M$ of moves in the game $\wh{\Gm}$.
 \item In her simulation of $\wh{\Gm}$, the Duplicator executes the batch $M$ and obtains a batch of her responses $N$ (in $\wh{\Gm}$).
 \item The batch of responses $N$ is translated into a single move $n$ in $\Gm$, which is the Duplicator's answer to the Spoiler's move $m$.
\end{itemize}
The size of the batch $M$ will depend on the type of the move $m$, but it will be always the case that 
$$|M|\leq 2|\Types^{q+1,\wt{\Sigma}}([2q+2])|+1.$$
By the choice of $p$, this means that the total length of the simulated game $\wh{\Gm}$ will never exceed~$p$. So by assumption, the Duplicator has a winning strategy in $\wh{\Gm}$.

It will be (almost) always the case that individual moves in $\Gm$ are translated to individual moves in $\wh{\Gm}$, that is, $|M|=1$ whenever $m$ is an individual move. More precisely, the Duplicator will maintain the invariant that if $(m,n)$ is a matched pair of individual moves in $\Gm$, then this pair is simulated by a matched pair of individual moves $(\xi(m),\xi(n))$. (There will be a corner case when $\xi(m)=\xi(n)=\emptyset$, in which case $M=\emptyset$, that is, $m$ is not simulated by any move in~$\wh{\Gm}$.)

Furthermore, during the game, the Duplicator will maintain the following {\em{Invariant $(\star)$}}.
Suppose $i$ moves have already been made in $\Gm$. 
Call a vertex $u\in C$ {\em{similar}} to a vertex $v\in D$ if the following conditions hold:
\begin{itemize}
 \item $u$ and $v$ satisfy the same unary predicates in $\wh{\Af}$ and $\wh{\Bf}$, respectively;
 \item for each matched pair $(L,R)$ of monadic vertex subset moves played before in $\wh{\Gm}$, $u\in L$ if and only if $v\in R$; and
 \item for each matched pair $(\ell,r)$ of individual vertex moves played before in $\wh{\Gm}$, $u=\ell$ if and only if $v=r$.
\end{itemize}
Similarity between arcs of $\wh{\Af}$ and arcs of $\wh{\Bf}$ is defined analogously. Next, for any $\Gf\in \Xx$ we define its {\em{$i$-snapshot}}, which is a $\wt{\Sigma}_i$-structure $\wt{\Gf}$ obtained from $\Gf$ as follows:
 \begin{itemize}
 \item Add the restrictions of previously made monadic moves to the vertex/arc set of $\Gf$, using predicates $X_1,\ldots,X_i$ and $F_1,\ldots,F_i$. (Predicate with subscript $j$ is used for a monadic move from round $j$.)
 \item Add all previously made individual moves to the boundary of $\Gf$, whenever the move was made on a feature present in $\Gf$. In case of individual arc moves, we add both endpoints of the arc.
 \item Reindex the boundary with $[2q+2]$ so that the original boundary of $\Gf$ is assigned indices in $[2]$ in the natural order, while all the vertices added to the boundary in the previous point get indices $3,4,\ldots$ in the order of moves (within every arc move, the head comes before the tail).
 \end{itemize}
Then Invariant $(\star)$ says the following (in all cases, we consider types over signature $\wt{\Sigma}_i$):
\begin{itemize}
 \item Whenever $e\in A(\wh{\Af})$ is similar to $f\in A(\wh{\Bf})$, we have
 $$\tp^{q+1-i}(\wt{\Gf}_e)=\tp^{q+1-i}(\wt{\Hf}_f).$$
 \item Whenever $u\in V(\wh{\Af})$ is similar to $v\in V(\wh{\Bf})$, we have
 $$\tp^{q+1-i}(\wt{\Gf}_u)=\tp^{q+1-i}(\wt{\Hf}_v).$$
 \item We have
 $$\tp^{q+1-i}(\wt{\Gf}_\emptyset)=\tp^{q+1-i}(\wt{\Hf}_\emptyset).$$
\end{itemize}
Observe that when $i=0$, that Invariant~$(\star)$ is satisfied follows directly from the construction of $\wh{\Af}$ and $\wh{\Bf}$ through the $\Contract^p$ operator and the fact that $p\geq q+1$. Further, it is easy to see that if the Duplicator maintains Invariant~($\star$) till the end of $\Gm$, she wins this game.

\paragraph*{Choice of moves.} We now proceed to the description of how the translation of a move $m$ in $\Gm$ to a batch of moves $M$ in $\wh{\Gm}$ and works, and similarly for the translation of $N$ to $n$. This will depend on the type of the Spoiler's move $m$. In every case, $N$ is obtained from $M$ by executing the moves of $M$ in $\wh{\Gm}$ in any order, and obtaining $N$ as the sequence of Duplicator's responses. We also consider the cases when the Spoiler plays in $\Af$, the cases when he plays in $\Bf$ are symmetric. Let $i$ be such that $i$ moves have already been made in $\Gm$, and now we consider move $i+1$.

\mpara{Individual vertex move.} Suppose that $m=u\in V(\wh{\Af})$, that is, the move $m$ is to play a vertex~$u$.

If $u\in C$, then we set $M\coloneqq \{u\}$, that is, the Duplicator simulates $m$ by a single move in $\Gm$ where the same vertex $u$ is played. Then $N\coloneqq \{v\}$ is the batch of Duplicator's responses in $\wh{\Gm}$, and accordingly $n=v$; that is, Duplicator also plays $v$ in $\Gm$. Note that thus $v\in D$.

If $u\notin C$, then there exists a unique $\Gf\in \Xx$ that contains $u$. Let $X\coloneqq \bnd \Gf$; then $X\in \Kk$ and $\Gf=\Gf_X$. We consider cases depending on the cardinality of $X$.

If $|X|=2$, then the elements of $X$ (after ordering w.r.t. the order on $\Omega=\N$) form an arc~$e$. We set $M\coloneqq \{e\}$, and let $N\coloneqq \{f\}$ be the batch of Duplicator's responses in $\wh{\Gm}$. Let $Y$ be the set of (two) endpoints of $f$. Then there is a unique $\Hf\in \Yy$ satisfying $Y=\bnd \Hf$.

As $f$ is Duplicator's response to the Spoiler's move $e$ in $\wh{\Gm}$, it must be the case that $e$ and $f$ are similar. From Invariant~($\star$) it then follows that if $\wt{\Gf}$ and $\wt{\Hf}$ are the $i$-snapshots of $\Gf$ and $\Hf$, respectively, then
\begin{equation}\label{eq:eqtp}
 \tp^{q+1-i}(\wt{\Gf})=
 \tp^{q+1-i}(\wt{\Hf})
\end{equation}
From~\eqref{eq:eqtp} and Lemma~\ref{lem:EF} it now follows that there that Duplicator has a response $v$ in $\Hf$ to Spoiler's move $u$ in $\Gf$ so that after playing $n\coloneqq v$ in game $\Gm$, the Invariant~$(\star)$ remains satisfied. So this is Duplicator's response in $\Gm$. 

The cases when $|X|=1$ and $|X|=0$ are handled analogously, here is a summary of differences.
\begin{itemize}
 \item If $|X|=1$, then $\bnd \Gf=\{c\}$ for a single vertex $c\in C$. We set $M=\{c\}$ and let $N=\{d\}$ be the Duplicator's response in $\wh{\Gm}$. Then $\Hf$ is the unique element of $\Yy$ with $\bnd \Hf=\{d\}$ and the existence of the response $v$ in $\Hf$ to the move $u$ in $\Gf$ follows from the similarity of $c$ and $d$ and Invariant $(\star)$.
 \item If $|X|=0$, then $\Gf$ and $\Hf$ are the unique elements of $\Xx$ and $\Yy$, respectively, with $\bnd \Gf=\bnd \Hf=\emptyset$. Then the existence of the response $v$ in $\Hf$ to the move $u$ in $\Gf$ follows from the last point of Invariant~$(\star)$.
\end{itemize}

\mpara{Individual arc move.} This case is handled completely analogously to the case of an individual vertex move, except that we do not have the case of a move within $C$. That is, if $m=a\in A(\Af)$, then there exists a unique $\Gf\in \Xx$ that contains $a$. Given $\Gf$ and the arc $a$ in it, we proceed exactly as in the case of an individual vertex move.

\mpara{Monadic arc subset move.} We now explain the case of a monadic arc subset move, which will be very similar, but a bit simpler than the case monadic vertex subset move. Suppose then that the Spoiler's move is $m=S\subseteq A(\Af)$.

For every $\Gf\in \Xx$, let $S_\Gf$ be the restriction of $S$ to the arcs of $\Gf$. Then $\{S_\Gf\colon \Gf\in \Xx\}$ is a partition of $S$. For each $\Gf\in \Xx$, let $\alpha_\Gf\in \Types^{\wt{\Sigma},q+1}([2q+2])$ be the rank-$(q+1)$ type of $(i+1)$-snapshot $\wt{\Gf}$ of $\Gf$, that is, a snapshot that {\em{includes}} the move $S$. Now comes the main technical idea of the proof: we set
\begin{eqnarray*}
M & \coloneqq & \{ \{a\in A(\wh{\Af})~|~\alpha_{\Gf_a}=\alpha\}\colon \alpha\in \Types^{\wt{\Sigma}_{i+1},q-i}([2q+2])\}\cup\\
& &\{ \{u\in V(\wh{\Af})~|~\alpha_{\Gf_u}=\alpha\}\colon \alpha\in \Types^{\wt{\Sigma}_{i+1},q-i}([2q+2])\}.
\end{eqnarray*}
Intuitively speaking, the Duplicator breaks $S$ into parts $S_\Gf$ contained in single elements $\Gf\in \Xx$, and observes how playing $S_\Gf$ in each $\Gf$ affects the type of $\wt{\Gf}$, in the sense of what is the resulting type. Then she models the move $S$ in $\Gm$ through a batch $M$ of monadic moves in~$\wh{\Gm}$: one monadic vertex subset move and one monadic arc subset move per each possible type. These moves respectively highlight the sets vertices and arcs of $\wh{\Af}$ where particular resulting types have been observed. Note that $|M|\leq 2|\Types^{\wt{\Sigma},q+1}([2q+2])|$, as promised.

In the simulated game $\wh{\Gm}$, the Duplicator applies the moves from $M$ in any order as Spoiler's moves, and gets in return a sequence $N$ of responses. Denoting the Spoiler's moves in $M$ as $F_\alpha\subseteq A(\wh{\Af})$ and $U_\alpha\subseteq V(\wh{\Af})$ for $\alpha\in \Types^{\wt{\Sigma}_{i+1},q-i}([2q+2])$ naturally, we let $F'_\alpha\subseteq A(\wh{\Bf})$ and $U'_\alpha\subseteq V(\wh{\Bf})$ be the respective responses in $N$. Note that since sets $F_\alpha$ form a partition of $A(\wh{\Af})$ and $U_\alpha$ form a partition of $V(\wh{\Af})$, the analogous must be also true for sets $F'_\alpha$ and $U'_\alpha$ in $\Bf$, for otherwise the Spoiler could win $\wh{\Gm}$ in one move.

We are left with describing how the Duplicator translates the batch of responses $N$ into a single response $n$ in the game $\Gm$ so that Invariant~$(\star)$ is maintained.
Consider any matched pair of moves $(F_\alpha,F'_\alpha)$ in $\wh{\Gm}$ as described above. From Invariant~$(\star)$ and the fact that $F'_\alpha$ was a response to move $F_\alpha$, we observe that for every arc $b\in F'_\alpha$, if $\Hf$ is the unique element of $\Yy$ with $b=\bnd \Hf$, then
$$\tp^{q+1-i}(\wt{\Hf})=\alpha',$$
where $\wt{\Hf}$ is the $i$-snapshot of $\Hf$ and $\alpha'$ is a type over the signature $\wt{\Sigma}_i$ that contains the sentence $\exists_{F_{i+1}} \bigwedge \alpha$. Consequently, within each $\Hf$ with $\bnd \Hf\in F'_\alpha$ the Duplicator may choose a set of arcs $S'_{\alpha,\Hf}$ so that after adding this set to $\wt{\Hf}$ using predicate $F_{i+1}$, the type of $\wt{\Hf}$ becomes $\alpha$.
We may apply an analogous construction to obtain sets $S'_{\alpha,\Hf}$ for all $\alpha\in \Types^{\wt{\Sigma}_{i+1},q+1}([2q+2])$ and $\Hf$ with $\bnd \Hf\in U'_\alpha$. 

Finally, we obtain the Duplicator's response $S'$ by summing the sets $S'_{\alpha,\Hf}$ throughout all $\alpha\in \Types^{\wt{\Sigma}_{i+1},q+1}([2q+2])$ and all $\Hf\in \Yy$ with $\bnd \Hf\in F_\alpha'\cup U_\alpha'$. It is easy to see that in this way, Invariant~$(\star)$ is maintained.

\mpara{Monadic vertex subset move.} Let the Spoiler's move be $S\subseteq V(\Af)$. The strategy for the Duplicator is constructed analogously to the case of a monadic arc subset move, except that we add to the batch $M$ one more move: the set $S\cap C$. The rest of the construction is the same; we leave the details to the reader.

\bigskip

We have shown how that the Duplicator can choose her moves so that we maintains the Invariant~$(\star)$ for $q$ rounds. Consequently, she wins the game $\Gm$ and we are done.

\paragraph*{Computability.} We have already argued the existence of $p$ and function $\Infer$ that is promised in the statement of the Replacement Lemma. We are left with arguing their computability. Clearly, $p$ is computable by Lemma~\ref{lem:formulas-finite}. As for the function $\Infer$, the reasoning follows standard arguments, see e.g.~\cite{Makowsky04} for a reasoning justifying the computability claim in Lemma~\ref{lem:compositionality}. So we only sketch the argument. Later we shall provide also a different argument that might be somewhat more transparent, but relies on the assumption that the Gaifman graphs of all structures involved have treewidth bounded by a given parameter $k$; this assumption is satisfied in our applications.

Let $\beta\in \Types^{\Gamma^p,p}$; we would like to construct $\alpha=\Infer(\beta)$. For this it suffices to decide, for any sentence $\varphi\in \Sentences^{q,\Sigma}$, whether $\varphi\in \alpha$. We translate $\varphi$ to a sentence $\wh{\varphi}$, of rank at most~$p$ and working over $\Gamma^p$-structures, such that $\Smash(\Xx)\models \varphi$ if and only if $\Contract^p(\Xx)\models \wh{\varphi}$. Then to decide whether $\varphi\in \alpha$ it suffices to check whether $\beta$ entails $\wh{\varphi}$.

As $\varphi$ has quantifier rank at most $q$, it is a boolean combination of sentences of the form $\exists_x \psi$ or $\exists_X \psi$, where $\psi$ has quantifier rank at most $q-1$ (here $x$ is an individual vertex/arc variable and $X$ is a monadic vertex/arc subset variable). So it suffices to apply the translation to each such sentence individually, and then take the same boolean combination. To this end, we translate the quantifier ($\exists_x$ or $\exists_X$) to a sequence of quantifiers in the same way as was done in the translation of moves from $\Gm$ to $\wh{\Gm}$:
\begin{itemize}[nosep]
 \item A quantifier $\exists_x$ is translated to a quantifier $\exists_{\wh{x}}$, where $\wh{x}$ is interpreted to be $\xi(x)$. This is followed by a disjunction over possible types that the ensemble element contracted to $\xi(x)$ might get after selecting $x$ in it. 
 \item A quantifier $\exists_X$ is translated to a sequence of existential quantifiers that quantify sets $F_\alpha$ and $U_\alpha$, where $\alpha$ ranges over suitable types as in the translation from $\Gm$ to $\wh{\Gm}$. We verify that these quantifiers select partitions of the vertex set and the arc set, and interpret $F_\alpha$ and $U_\alpha$ as sets of those arcs and vertices of the structure that correspond to those ensemble elements where the choice of $X$ refined the type to $\alpha$.
\end{itemize}
The translation is applied recursively to $\psi$. Once we arrive at atomic formulas, these can be translated to atomic formulas that check unary/binary predicates on variables $\wh{x}$ in question, and their relation to previously quantified sets $F_\alpha$, $U_\alpha$. Then one can argue that $\Smash(\Xx)\models \varphi$ if and only if $\Contract^p(\Xx)\models \wh{\varphi}$, as claimed.

\newcommand{\Rep}{\mathsf{Rep}}

There is another, somewhat simpler argument that can be applied under the assumption that all Gaifman graphs of all structures under consideration have treewidth bounded by a fixed parameter $k$. This is the case in our applications, as even the feedback vertex number is always bounded by $k$. For the remainder of this section we fix $k$.

Let $q\in \N$, $D\subseteq \Omega$, and $\Sigma$ be a binary signature. For a satisfiable type $\alpha\in \Types^{q,\Sigma}(D)$, we fix the {\em{representative}} of $\alpha$ to be any smallest (in terms of the total number of vertices and sizes of relations) boundaried $\Sigma$-structure $\Af$ satisfying the following:
\begin{itemize}[nosep]
 \item $\bnd \Af=D$;
 \item $\tp^q(\Af)=\alpha$;
 \item the treewidth of the Gaifman graph of $\Af$ is at most $k$.
\end{itemize}
The representative of a type $\alpha$ will be denoted by $\Rep(\alpha)$. Clearly, provided $\alpha$ is satisfiable, such a representative always exists. With the restriction on the treewidth present, one can use connections with the notion of recognizability and standard unpumping arguments (see~\cite{CEbook}) to give a computable bound: the following {\em{small model property}} is well-known.

\begin{theorem}\label{thm:small-model}
 For every satisfiable $\alpha\in \Types^{q,\Sigma}$, the size of $\Rep(\alpha)$ is bounded by a computable function of $q$, $k$, and $\Sigma$.
\end{theorem}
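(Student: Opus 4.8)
The statement to prove is Theorem~\ref{thm:small-model}: the small model property for $\msotwo$ types over structures whose Gaifman graph has treewidth at most $k$.

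\medskip

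The plan is to reduce the claim to the classical Feferman–Vaught–style recognizability theory for $\mathsf{CMSO}_2$ over tree decompositions. First I would recall that every structure $\Af$ whose Gaifman graph has treewidth at most $k$ can be encoded, via a width-$k$ tree decomposition, as a labelled tree over a finite alphabet $\Gamma_{k,\Sigma}$ (the bags together with the interpretations of relations within and across bags). This encoding is itself $\msotwo$-interpretable in both directions, in the sense that there is a computable translation taking any $\msotwo[\Sigma]$ sentence $\varphi$ of rank $q$ to an $\mathsf{MSO}$ sentence $\varphi'$ over $\Gamma_{k,\Sigma}$-labelled trees such that $\Af\models\varphi$ iff the tree encoding of $\Af$ satisfies $\varphi'$; moreover the rank of $\varphi'$ is bounded computably in $q$, $k$, $\Sigma$. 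Consequently, the rank-$q$ type of $\Af$ is determined by, and determines, the corresponding bounded-rank $\mathsf{MSO}$-type of its tree encoding. This reduces the problem to the analogous small model statement over finite labelled trees. Some care is needed with the boundary $D$ (the at most two distinguished vertices) and with the $\mathsf{CMSO}$ modular counting predicates: the boundary elements are handled by placing extra labels on the tree encoding, and the modular predicates survive the interpretation because the tree automaton model for $\mathsf{CMSO}$ (finite automata augmented with modular counting on subtree sizes) is still finite-state; all of this is standard, see~\cite{CEbook}.

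\medskip

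The heart of the argument is then the small model property over finite labelled trees, which I would obtain by a standard pumping (``un-pumping'') argument on the run of the deterministic bottom-up tree automaton $\mathcal{A}_{\varphi'}$ that recognizes the $\mathsf{MSO}$-type in question. Concretely: the rank-$q'$ $\mathsf{MSO}$-type of a labelled tree is computed by such an automaton with a computable number $s$ of states. Given any tree $T$ realizing a fixed type $\alpha$, if $T$ has more than, say, $s\cdot(\text{branching bound})$ nodes on some root-to-leaf path, then two nodes on that path carry the same automaton state, and the segment between them can be excised without changing the accepting state, hence without changing the type. Iterating, every satisfiable type is realized by a tree of height bounded computably in $s$; combined with the bound on branching (bags are finite, so the underlying tree can be taken binary after standard normalization), this gives a tree of size bounded computably in $q'$, $k$, $\Sigma$, and hence — pulling back through the interpretation — a structure $\Rep(\alpha)$ of size bounded computably in $q$, $k$, $\Sigma$. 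The treewidth of the resulting structure is automatically at most $k$ because it was produced from a width-$k$ decomposition tree; and the boundary is preserved since the excised segments can be chosen to avoid the finitely many nodes carrying boundary labels.

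\medskip

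The main obstacle, and the place where genuine bookkeeping is required rather than invocation of folklore, is the interaction between three features simultaneously: the boundary $D$ must be encoded faithfully (so that types over $\Types^{q,\Sigma}(D)$ rather than closed types are captured), the modular counting predicates of $\mathsf{CMSO}_2$ must be accommodated in the tree automaton, and the quantification over \emph{arc} subsets in $\msotwo$ (as opposed to plain $\mathsf{MSO}$ over the tree) must be faithfully simulated — recall arcs in our setting live ``inside or across bags'', so an arc subset of $\Af$ corresponds to a subset of a derived labelled structure on the decomposition tree. Each of these is individually routine, but one must check that the composite interpretation is computable and rank-preserving up to a computable bound; once that is in place, Theorem~\ref{thm:small-model} follows immediately from the tree-automaton pumping argument. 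I would therefore structure the write-up as: (i) fix a normalized binary width-$k$ tree decomposition encoding with boundary labels; (ii) state the computable rank-preserving interpretation of $\msotwo[\Sigma,D]$ into $\mathsf{CMSO}$ over the encoding; (iii) invoke finiteness of the $\mathsf{CMSO}$ tree-automaton and run the pumping argument to bound the size of a minimal tree per type; (iv) transport the bound back to $\Rep(\alpha)$, noting that treewidth and boundary are preserved by construction.
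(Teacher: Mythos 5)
Your proposal is correct and follows essentially the same route the paper indicates: it does not prove Theorem~\ref{thm:small-model} in detail but explicitly attributes it to recognizability over width-$k$ tree decompositions combined with standard unpumping arguments (citing~\cite{CEbook}), which is precisely the tree-encoding, finite-state type automaton, and pumping argument you spell out. The bookkeeping points you flag (boundary labels, modular counting, arc-set quantification) are routine and do not change the argument.
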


Observe that from Theorem~\ref{thm:small-model} it follows that given $\alpha$, one can compute $\Rep(\alpha)$. Indeed, it suffices to enumerate all $\Sigma$-structures of sizes up to the (computable) bound provided by Theorem~\ref{thm:small-model}, compute the type of each of them by brute force, and choose the smallest one that has type $\alpha$ (provided it exists; otherwise $\alpha$ is not satisfiable).
As a side note, this proves that some restriction on the structure of Gaifman graphs is necessary to state a result such as Theorem~\ref{thm:small-model}, for on general relational structures the theory of $\msotwo$ is undecidable~\cite{Seese91} (and decidability would follow from the small model property as described above). 

With this observation, we can show how, in the setting of Replacement Lemma, to compute the function $\Infer$ in time bounded by a computable function of $k$, $q$, and $\Sigma$. First compute~$p$. Then, for each $\beta\in \Types^{p,\Gamma^p}$, we would like to compute $\Infer(\beta)$. For this, compute $\wh{\Af}\coloneqq \Rep(\beta)$ and note that vertices and arcs of $\wh{\Af}$ bear the information on rank-$p$ types of ensemble elements that get contracted to them. For every arc $a\in A(\wh{\Af})$, let $\tau(a)$ be this type, and define $\tau(u)$ for $u\in V(\wh{\Af})$ and $\tau(\emptyset)$ analogously. Then for each $a\in A(\wh{\Af})$ we can compute $\Gf_a\coloneqq \Rep(\tau(a))$, and similarly we compute $\Gf_u\coloneqq \Rep(\tau(u))$ for $u\in V(G)$ and $\Gf_\emptyset\coloneqq \Rep(\tau(\emptyset))$. Let $\Xx$ be the ensemble consisting of all structures $\Gf_{\cdot}$ described above. Then from the Replacement Lemma it follows that $\Infer(\beta)=\tp^q(\Smash(\Xx))$. Noting that the total size of $\Smash(\Xx)$ is bounded by a computable function of $k$, $q$, and $\Sigma$, we may simply construct $\Smash(\Xx)$ and compute its rank-$q$ type by brute force.

\end{document}